\newtheorem{lemma}{Lemma}
\numberwithin{lemma}{section}
\newtheorem{theorem}[lemma]{Theorem}
\newtheorem{theorem*}[lemma]{Theorem*}
\newtheorem{corollary}[lemma]{Corollary}
\newtheorem{definition}[lemma]{Definition}
\newtheorem{observation}[lemma]{Observation}
\newtheorem{fact}[lemma]{Fact}
\title{Subexponential Parameterized Algorithms for Hitting Subgraphs}
\author{Daniel Lokshtanov\footnote{University of California, Santa Barbara, USA. Email: \texttt{daniello@ucsb.edu}} \and Fahad Panolan\footnote{University of Leeds, UK. Email: \texttt{F.panolan@leeds.ac.uk}} \and Saket Saurabh\footnote{Institute of Mathematical Sciences, Chennai, India. Email: \texttt{saket@imsc.res.in}} \and Jie Xue\footnote{New York University Shanghai, China. Email: \texttt{jiexue@nyu.edu}} \and Meirav Zehavi\footnote{Ben-Gurion University, Israel. Email: \texttt{meiravze@bgu.ac.il}}}
\date{}
\begin{document}


\maketitle

\begin{abstract}
For a finite set $\mathcal{F}$ of graphs, the \textsc{$\mathcal{F}$-Hitting} problem aims to compute, for a given graph $G$ (taken from some graph class $\mathcal{G}$) of $n$ vertices (and $m$ edges) and a parameter $k \in \mathbb{N}$, a set $S$ of vertices in $G$ such that $|S| \leq k$ and $G-S$ does not contain any subgraph isomorphic to a graph in $\mathcal{F}$.
As a generic problem, \textsc{$\mathcal{F}$-Hitting} subsumes many fundamental vertex-deletion problems that are well-studied in the literature.
The \textsc{$\mathcal{F}$-Hitting} problem admits a simple branching algorithm with running time $2^{O(k)} \cdot n^{O(1)}$, while it cannot be solved in $2^{o(k)} \cdot n^{O(1)}$ time on general graphs assuming the ETH, follows from the seminal work of Lewis and Yannakakis.

In this paper, we establish a general framework to design subexponential parameterized algorithms for the \textsc{$\mathcal{F}$-Hitting} problem on a broad family of graph classes.
Specifically, our framework yields algorithms that solve \textsc{$\mathcal{F}$-Hitting} with running time $2^{O(k^c)} \cdot n + O(m)$ for a constant $c < 1$ on any graph class $\mathcal{G}$ that admits balanced separators whose size is (strongly) sublinear in the number of vertices and polynomial in the size of a maximum clique.
Examples include all graph classes of polynomial expansion (e.g., planar graphs, bounded-genus graphs, minor-free graphs, etc.) and many important classes of geometric intersection graphs (e.g., map graphs, intersection graphs of any fat geometric objects, pseudo-disks, etc.).
Our algorithms also apply to the \textit{weighted} version of \textsc{$\mathcal{F}$-Hitting}, where each vertex of $G$ has a weight and the goal is to compute the set $S$ with a minimum weight that satisfies the desired conditions.

The core of our framework, which is our main technical contribution, is an intricate subexponential branching algorithm that reduces an instance of \textsc{$\mathcal{F}$-Hitting} (on the aforementioned graph classes) to $2^{O(k^c)}$ general hitting-set instances, where the Gaifman graph of each instance has treewidth $O(k^c)$, for some constant $c < 1$ depending on $\mathcal{F}$ and the graph class.
\end{abstract}


\section{Introduction}

A vertex-deletion problem takes as input a graph $G$, and aims to delete from $G$ a minimum number of vertices such that the resulting graph satisfies some property.
In many vertex-deletion problems, the desired properties can be expressed as excluding a finite set of ``forbidden'' structures.
This motivates the so-called \textsc{$\mathcal{F}$-Hitting} problem.
Formally, for a finite set $\mathcal{F}$ of graphs (which represent the forbidden structures), the \textsc{$\mathcal{F}$-Hitting} problem is defined as follows.

\begin{tcolorbox}[colback=gray!5!white,colframe=gray!75!black]
        \textsc{$\mathcal{F}$-Hitting} \hfill \textbf{Parameter:} $k$
        \vspace{0.1cm} \\
        \textbf{Input:} A graph $G$ of $n$ vertices and $m$ edges (taken from some class $\mathcal{G}$) and a number $k \in \mathbb{N}$.
        \vspace{-0.4cm} \\
        \textbf{Goal:} Compute a set $S \subseteq V(G)$ of vertices such that $|S| \leq k$ and $G - S$ does not contain any graph in $\mathcal{F}$ as a subgraph.
\end{tcolorbox}

As a generic problem, \textsc{$\mathcal{F}$-Hitting} subsumes many fundamental vertex-deletion problems that have been well-studied in the literature, such as \textsc{Vertex Cover}, \textsc{Path Transversal}~\cite{brevsar2013vertex,gupta2019losing,lee2017partitioning}, \textsc{Short Cycle Hitting}~\cite{groshaus2009cycle,lokshtanov2022subexponential,lokshtanov2023framework,pilipczuk2011problems}, \textsc{Component Order Connectivity}~\cite{drange2016computational,gross2013survey,kumar20172lk}, \textsc{Degree Modulator}~\cite{betzler2012bounded,ganian2021structural,gupta2019losing}, \textsc{Treedepth Modulator}~\cite{bougeret2019much,eiben2022lossy,gajarsky2017kernelization}, \textsc{Clique Hitting}~\cite{berthe2024kick,fiorini2018approximability}, \textsc{Biclique Hitting}~\cite{goldmann2021parameterized}, etc.
We present in the appendix detailed descriptions of these problems as well as their background (see also the work~\cite{lokshtanov2023hitting}).

On general graphs (i.e., when the graph class $\mathcal{G}$ contains all graphs), the complexity of \textsc{$\mathcal{F}$-Hitting} is well-understood.
For any (finite) $\mathcal{F}$, the problem admits a simple branching algorithm with running time $2^{O(k)} \cdot n^{O(1)}$~\cite{Cai96}.
On the other hand, it follows from the reductions given in the work of Lewis and Yannakakis~\cite{lewis1980node}, \textsc{$\mathcal{F}$-Hitting} cannot be solved in $2^{o(k)} \cdot n^{O(1)}$ time or even $2^{o(n)}$ time 
for any $\mathcal{F}$ such that there are infinitely many graphs that do not contain any graphs in $\mathcal{F}$ as a subgraph\footnote{Note that if there are only a finite number of graphs that do not contain any graphs in $\mathcal{F}$ as a subgraph, then \textsc{$\mathcal{F}$-Hitting} can be trivially solved in polynomial time.}, assuming the Exponential-Time Hypothesis.
However, this does not rule out the existence of subexponential (parameterized) algorithms for \textsc{$\mathcal{F}$-Hitting} on restrictive graph classes.
In fact, many special cases of \textsc{$\mathcal{F}$-Hitting} have been solved in subexponential time on various graph classes.
For example, \textsc{Vertex Cover}, the simplest (nontrivial) case of \textsc{$\mathcal{F}$-Hitting}, was known to have $2^{o(k)} \cdot n^{O(1)}$-time algorithms on many graph classes, such as 
planar graphs~\cite{DemaineFHT05}, 
minor-free graphs~\cite{demaine2005subexponential}, 
unit-disk graphs~\cite{FominLPSZ19,FominLS18}, 
map graphs~\cite{DemaineFHT05,fomin2019decomposition,FominLS18}, and 
disk graphs~\cite{lokshtanov2022subexponential}.
Some other cases, such as \textsc{Path Transversal}, \textsc{Clique Hitting}, \textsc{Component Order Connectivity}, etc., were also known to admit subexponential parameterized algorithms on specific graph classes, mostly planar graphs, minor-free graphs, and (unit-)disk graphs~\cite{berthe2024kick,DemaineFHT05,FominLPSZ19,FominLS18,lokshtanov2022subexponential}.
Nevertheless, there has not been a comprehensive study on subexponential algorithms for the general \textsc{$\mathcal{F}$-Hitting} problem.

In this paper, we systematically study the \textsc{$\mathcal{F}$-Hitting} problem on a broad family of graph classes, in the context of subexponential parameterized algorithms.
We show that on many graph classes where subexponential algorithms have been well-studied, \textsc{$\mathcal{F}$-Hitting} admits a subexponential parameterized algorithm for any $\mathcal{F}$.
Before presenting our result in detail, we first discuss a difficulty in designing subexponential algorithms for \textsc{$\mathcal{F}$-Hitting}.

\paragraph{Difficulty of solving $\mathcal{F}$-Hitting in subexponential time.}
Compared to most vertex-deletion problems, designing subexponential algorithms for \textsc{$\mathcal{F}$-Hitting} seems especially challenging, due to lack of efficient algorithms for the problem on small-treewidth graphs.
In fact, many fundamental vertex-deletion problems can be solved in $2^{O(t)} n^{O(1)}$ time or $t^{O(t)} n^{O(1)}$ time on graphs of treewidth $t$, by standard dynamic programming on tree decompositions.
When the treewidth $t$ is (strongly) sublinear, the running time is subexponential.
Using this result as a cornerstone, a common approach for designing subexponential algorithms for such problems is to first reduce the treewidth $t$ of the graph to sublinear, and then solve the problem in $2^{O(t)} n^{O(1)}$ time or $t^{O(t)} n^{O(1)}$ time.
Unfortunately, this kind of approach does not work for \textsc{$\mathcal{F}$-Hitting}.
Indeed, the best known algorithm for \textsc{$\mathcal{F}$-Hitting} on graphs of treewidth $t$, given by Cygan et al.~\cite{cygan2017hitting}, requires $2^{t^{f(\mathcal{F})}} n$ time for some function $f$.
Even worse, Cygan et al.~\cite{cygan2017hitting} also showed that such running time is necessary assuming the ETH.
Therefore, even if the treewidth $t$ is already sublinear, the algorithm for \textsc{$\mathcal{F}$-Hitting} can still run in superexponential time.
This fact makes the task of solving \textsc{$\mathcal{F}$-Hitting} in subexponential time rather difficult and require new insights to the problem.

\paragraph{Other related work on \textsc{$\mathcal{F}$-Hitting}.}
The general \textsc{$\mathcal{F}$-Hitting} problem has been studied on graphs of bounded treewidth by Cygan et al. \cite{cygan2017hitting}, who gave FPT algorithms (parameterized by treewidth) and lower bounds for the problem and its variants.
Recently, Dvo{\v{r}}{\'a}k et al.~\cite{lokshtanov2023hitting} studied \textsc{$\mathcal{F}$-Hitting} in the perspective of approximation, showing that the local-search approach of Har-Peled and Quanrud~\cite{har2017approximation} for \textsc{Vertex Cover} and \textsc{Dominating Set} on graph classes of polynomial expansion can be extended to obtain PTASes for \textsc{$\mathcal{F}$-Hitting} as well.
They also gave a $(1+\varepsilon)$-approximation reduction from the problem on any bounded-expansion graph class to the same problem on bounded degree graphs within the class, resulting in lossy kernels for the problem.
Bougeret et al.~\cite{bougeret2024kernelization} considered the kernelization of \textsc{$\mathcal{F}$-Hitting} with various structural parameters.
Finally, \textsc{$\mathcal{F}$-Hitting} is a special case of \textsc{$d$-Hitting Set}, and hence the algorithms for \textsc{$d$-Hitting Set}~\cite{CyganFKLMPPS15,williamson2011design} also apply to \textsc{$\mathcal{F}$-Hitting}.
However, as the set system considered in \textsc{$\mathcal{F}$-Hitting} is implicitly defined and can have size $n^{O(\gamma)}$ where $\gamma = \max_{F \in \mathcal{F}} |V(F)|$, transferring the algorithms for \textsc{$d$-Hitting Set} to \textsc{$\mathcal{F}$-Hitting} usually results in an overhead of $n^{O(\gamma)}$.

\subsection{Our result}

Our main result is a general framework to design subexponential parameterized algorithms for \textsc{$\mathcal{F}$-Hitting}.
In fact, our framework applies to the weighted version of \textsc{$\mathcal{F}$-Hitting}, where the vertices are associated with weights and we want to find the solution with the minimum total weight.

\begin{tcolorbox}[colback=gray!5!white,colframe=gray!75!black]
        \textsc{Weighted $\mathcal{F}$-Hitting} \hfill \textbf{Parameter:} $k$
        \vspace{0.1cm} \\
        \textbf{Input:} A graph $G$ of $n$ vertices and $m$ edges (taken from some class $\mathcal{G}$) together with a weight function $w:V(G) \rightarrow \mathbb{R}_{\geq 0}$, and a number $k \in \mathbb{N}$.
        \vspace{0.1cm} \\
        \textbf{Goal:} Compute a minimum-weight (under the function $w$) set $S \subseteq V(G)$ of vertices such that $|S| \leq k$ and $G - S$ does not contain any graph in $\mathcal{F}$ as a subgraph.
\end{tcolorbox}

Another nice feature of our framework is that the algorithms it yields have running time not only subexponential in $k$ but also linear in the size of the input graph $G$.
More precisely, the running time of our algorithms is of the form $2^{O(k^c)} \cdot n + O(m)$ for some constant $c < 1$.

To present our result, we need to define the graph classes to which our framework applies.
We define these graph classes in the language of \textit{balanced separators}, which serve as a key ingredient in many efficient graph algorithms.
For $\eta,\mu,\rho \geq 0$, we say a graph $G$ admits \textit{balanced $(\eta,\mu,\rho)$-separators} if for every induced subgraph $H$ of $G$, there exists $S \subseteq V(H)$ of size at most $\eta \cdot \omega^\mu(H) \cdot |V(H)|^\rho$ such that every connected component of $H - S$ contains at most $\frac{1}{2} |V(H)|$ vertices, where $\omega(H)$ denotes the size of a maximum clique in $H$.
Throughout this paper, we denote by $\mathcal{G}(\eta,\mu,\rho)$ the class of all graphs admitting balanced $(\eta,\mu,\rho)$-separators.
Trivially, every graph admits balanced $(1,0,1)$-separators.
Thus, we only consider the case $\rho < 1$.
The main result of this paper is the following theorem.

\begin{theorem} \label{thm-main}
    Let $\mathcal{G} \subseteq \mathcal{G}(\eta,\mu,\rho)$ where $\eta,\mu \geq 0$ and $0 \leq \rho < 1$.
    Also, let $\mathcal{F}$ be a finite set of graphs.
    Then there exists a constant $c < 1$ (depending on $\eta$, $\mu$, $\rho$, and $\mathcal{F}$) such that the \textsc{Weighted $\mathcal{F}$-Hitting} problem on $\mathcal{G}$ can be solved in $2^{O(k^c)} \cdot n + O(m)$ time.
\end{theorem}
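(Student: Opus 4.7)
The plan is to split the work into two phases: a \emph{branching phase} that produces $2^{O(k^c)}$ sub-instances and a \emph{solving phase} that handles each sub-instance in $2^{O(k^c)}\cdot n^{O(1)}$ time. Writing $\gamma=\max_{F\in\mathcal{F}}|V(F)|$, the solving phase will reformulate each sub-instance as a \textsc{Weighted Hitting Set} instance whose elements are $V(G)$ and whose sets are the at most $n^{O(\gamma)}$ copies of graphs from $\mathcal{F}$; a standard tree-decomposition dynamic program solves weighted hitting set in $2^{O(\mathrm{tw})}\cdot n^{O(1)}$ time, so it suffices to guarantee that every sub-instance's Gaifman graph has treewidth $O(k^c)$. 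Together with a linear-time kernelization to reduce the polynomial-in-$n$ factor to the announced $O(n)+O(m)$, the two pieces will yield Theorem~\ref{thm-main}.

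\textbf{Branching on a separator tree.} I would first reduce to $n=\mathrm{poly}(k)$ via a standard sunflower-style kernelization for $\gamma$-Hitting Set applied to the (implicit) set system of copies of graphs in $\mathcal{F}$, and then discard from $G$ vertices that do not participate in any forbidden copy. Next I would apply the $\mathcal{G}(\eta,\mu,\rho)$-hypothesis recursively, building a balanced separator tree of $G$ in which every node carries a separator $T$ of size $\eta\cdot\omega^\mu\cdot|V|^\rho$; I would branch on every subset $T'\subseteq T$ by adding $T'$ to $S$ and forbidding $T\setminus T'$ from entering $S$, and recurse on the components of $G\setminus T$. Since $n=\mathrm{poly}(k)$ and $\omega^\mu$ can be kept polynomial in $k$ by an auxiliary branching step (whenever $\omega\ge\gamma$ the graph contains $K_\gamma$, which contains every $F\in\mathcal{F}$ as a subgraph, so one such copy can be located and broken), each separator has size $O(k^c)$ for some $c<1$. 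A geometric-series analysis over the recursion then caps the total number of leaves at $2^{O(k^c)}$.

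\textbf{Bounded treewidth at the leaves.} The main technical obstacle is that a copy of some $F\in\mathcal{F}$ may straddle a separator used in the recursion, so na\"ive divide-and-conquer does not apply. I would resolve this by not discarding ancestor separators: at each recursion node, the ``boundary'' consists of the union of all separators along the path from the root, and this boundary is preserved in every descendant. At every leaf the separator tree then yields a tree decomposition of the residual graph of width $O(k^c)$ (equal to the total boundary size). Since every copy of an $F\in\mathcal{F}$ is connected and has at most $\gamma$ vertices, it is covered by a union of $O(\gamma)$ bags, and therefore completing each copy into a clique to obtain the Gaifman graph of the hitting-set instance inflates the treewidth by at most an $O(\gamma)$ factor, keeping it $O(k^c)$. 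I expect this treewidth-preservation argument, together with the bookkeeping needed to make the separator sizes, the branching count, and the final DP all align to a single exponent $c<1$, to be the most technically demanding step of the proof.
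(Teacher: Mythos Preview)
Your proposal breaks at the step you yourself flag as most demanding. The claim that completing each $F$-copy into a clique inflates the treewidth of $G$ by only an $O(\gamma)$ factor is false: take $G=K_{1,m}$ and $\mathcal{F}=\{P_3\}$; then $\mathbf{tw}(G)=1$, yet every pair of leaves lies in a common $P_3$-copy through the center, so the Gaifman graph contains $K_m$ and has treewidth $m-1$. The assertion ``each copy is covered by $O(\gamma)$ bags'' is unjustified---a connected $\gamma$-vertex subgraph can touch arbitrarily many bags of a width-$1$ decomposition. There is a second, independent obstacle: the width of your separator tree is $O(\eta\,\omega^{\mu}\,n^{\rho})$ where $n$ is the \emph{kernel} size, and a sunflower-style kernel for $\gamma$-uniform hitting set has $n=k^{\Theta(\gamma)}$ vertices (Observation~\ref{obs-polyksize} gives $k^{\gamma}$), so both your boundary width and your branching exponent are $k^{\Theta(\gamma\rho)}$, which is not sublinear once $\gamma\rho\ge 1$. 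Your auxiliary ``locate and break one $K_{\gamma}$'' step does not help: it neither shrinks $n$ nor bounds $\omega$ cheaply (run to exhaustion it costs $\gamma^{k}$ branches); contrast Lemma~\ref{lem-cliquebranch}, which commits more than $\tau$ vertices of a $(\tau+\gamma)$-clique at once to get $\omega\le k^{\varepsilon}$ at subexponential cost.

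The paper avoids both obstacles by abandoning the Gaifman graph of \emph{all} $F$-copies and the treewidth of the kernelized $G$ altogether. It branches instead on \emph{heavy cores}---cores of large structured sunflowers of $F$-copies---that are minimal above a growing undeletable set $U$, and the central Lemma~\ref{lem-twGaifman} (proved via weak coloring numbers and the component-splitting Lemma~\ref{lem-heavysep}) shows that the Gaifman graph of the selected sets $X_i\setminus U$ has treewidth $(\eta\delta)^{O(1)}\cdot k^{\rho}$, where $k$ is the \emph{minimum hitting-set size} of those sets, not $|V(G)|$. Because the exponent is $\rho<1$ regardless of $\gamma$, the bound is sublinear for every finite $\mathcal{F}$. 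This replacement of ``number of vertices'' by ``solution size'' in the treewidth estimate is exactly the structural idea your separator-tree plan is missing.
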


We now briefly discuss why the graph classes in Theorem~\ref{thm-main} are interesting.
The sub-classes of $\mathcal{G}(\eta,0,\rho)$ for $\rho < 1$ are known as graph classes with \textit{strongly sublinear separators}.
Dvo\v{r}\'{a}k and Norin~\cite{DorakN2016} showed that these classes are equivalent to graph classes of \textit{polynomial expansion}, which is a well-studied special case of \textit{bounded-expansion} graph classes introduced by Ne{\v{s}}et{\v{r}}il and Ossona De Mendez \cite{nevsetvril2008grad,nevsetvril2008grad2}.
Important examples of polynomial-expansion graph classes include planar graphs, bounded-genus graphs, minor-free graphs, $k$-nearest neighbor graphs~\cite{miller1997separators}, greedy Euclidean spanners~\cite{le2022greedy}, etc.
For a general $\mu \geq 0$, the classes $\mathcal{G}(\eta,\mu,\rho)$ with $\rho < 1$ in addition subsume a broad family of geometric intersection graphs.
A \textit{geometric intersection graph} is defined by a set of geometric objects in a Euclidean space, where the objects are the vertices and two vertices are connected by an edge if their corresponding objects intersect.
Intersection graphs of any fat objects (i.e., convex geometric objects whose diameter-width ratio is bounded) and intersection graphs of pseudo-disks (i.e., topological disks in the plane satisfying that the boundaries of every pair of them are either disjoint or intersect twice) admit balanced $(\eta,\mu,\rho)$-separators for $\rho < 1$~\cite{lokshtanov2023hitting,miller1997separators}.
These two families, in turn, cover many interesting cases of geometric intersection graphs, such as (unit-)disk graphs, ball graphs, hypercube graphs, map graphs, intersection graphs of non-crossing rectangles, etc.
To summarize, the graph classes $\mathcal{G}(\eta,\mu,\rho)$ for $\rho < 1$ cover many common graph classes on which subexponential algorithms have been well-studied.

It is natural to ask whether the result in Theorem~\ref{thm-main} can be extended to \textsc{$\mathcal{F}$-Hitting} with an infinite $\mathcal{F}$.
A typical example is \textsc{Feedback Vertex Set}, where $\mathcal{F}$ consists of all cycles.
Unfortunately, this seems impossible.
Indeed, it was known~\cite{FominLS18} that \textsc{Feedback Vertex Set} does not admit subexponential FPT algorithms on unit-ball graphs with bounded ply, which belong to the class $\mathcal{G}(\eta,0,\rho)$ for some $\eta$ and $\rho < 1$.
Another extension one may consider is to the \textsc{Induced $\mathcal{F}$-Hitting} problem, where the goal is to hit all \textit{induced} subgraphs of $G$ that are isomorphic to some graph in $\mathcal{F}$.
Again, Theorem~\ref{thm-main} is unlikely to hold for \textsc{Induced $\mathcal{F}$-Hitting}.
Indeed, when $\mathcal{F}$ consists of a single edgeless graph of $p$ vertices, \textsc{Induced $\mathcal{F}$-Hitting} (with $k=0$) is equivalent to detecting an independent set of size $p$ and thus an algorithm for \textsc{Induced $\mathcal{F}$-Hitting} on a graph class $\mathcal{G}$ with running time as in Theorem~\ref{thm-main} would imply an FPT algorithm for \textsc{Independent Set} on $\mathcal{G}$.
But \textsc{Independent Set} is W[1]-hard even on unit-disk graphs~\cite{CyganFKLMPPS15}.
Note that while this rules out the possibility of extending Theorem~\ref{thm-main} to \textsc{Induced $\mathcal{F}$-Hitting}, it might still be possible to obtain a weaker bound of $2^{O(k^c)} \cdot n^{f(\mathcal{F})}$ for $c<1$ and some function $f$, which we leave as an interesting open question for future study.


\subsection{Our framework and building blocks}
In this section, we discuss our algorithmic framework for achieving Theorem~\ref{thm-main} and its technical components.
Let $\mathcal{G} \subseteq \mathcal{G}(\eta,\mu,\rho)$ where $\rho < 1$.
In a high level, our framework solves a \textsc{Weighted $\mathcal{F}$-Hitting} instance $(G,w,k)$ with $G \in \mathcal{G}$ through three general steps:
\begin{enumerate}
    \item[\textbf{1.}] Reduce the size of the problem instance to $k^{O(1)}$ in linear time.
    \item[\textbf{2.}] Reduce the \textsc{$\mathcal{F}$-Hitting} instance to a subexponential number of general (weighted) hitting-set instances, where the Gaifman graph of each instance has treewidth sublinear in $k$.
    \item[\textbf{3.}] Solve each hitting-set instance efficiently using the sublinear treewidth of its Gaifman graph.
\end{enumerate}

\noindent
Among the three steps, the second one is the main step which is also the most difficult one, while the third step is standard.
In what follows, we discuss these steps in detail.

To achieve Step~1, we give a polynomial kernel for the \textsc{$\mathcal{F}$-Hitting} problem on $\mathcal{G}$ that runs in \textit{linear} time.
Specifically, we show that one can compute in linear time an induced subgraph $G'$ of the input graph $G$ with size $k^{O(1)}$ such that solving the problem on $G'$ is already sufficient for solving the problem on $G$.
We say a set $S \subseteq V(G)$ is an \textit{$\mathcal{F}$-hitting set} of $G$ if $G - S$ does not contain any graph in $\mathcal{F}$ as a subgraph.
Formally, we prove the following theorem.

\begin{restatable}{theorem}{kernel} \label{thm-kernel}
Let $\mathcal{G} \subseteq \mathcal{G}(\eta,\mu,\rho)$ where $\eta,\mu \geq 0$ and $\rho < 1$.
Also, let $\mathcal{F}$ be a finite set of graphs.
There exists an algorithm that, for a given graph $G \in \mathcal{G}$ of $n$ vertices and $m$ edges together with a number $k \in \mathbb{N}$, computes in $k^{O(1)} \cdot n+O(m)$ time an induced subgraph $G'$ of $G$ with $|V(G')| = k^{O(1)}$ such that any $\mathcal{F}$-hitting set $S \subseteq V(G')$ of $G'$ with $|S| \leq k$ is also an $\mathcal{F}$-hitting set of $G$.
\end{restatable}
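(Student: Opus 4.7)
The plan is to build $V(G') \subseteq V(G)$ as the union of an approximate $\mathcal{F}$-hitting set together with a polynomial-size collection of ``witness'' vertices marked around it. First, via greedy disjoint-copy packing, I compute an $\mathcal{F}$-hitting set $A \subseteq V(G)$ of size $O(k)$: iteratively find any copy of some $F \in \mathcal{F}$ that is vertex-disjoint from copies selected so far, stopping if more than $k$ disjoint copies are found (in which case $G$ has no $\mathcal{F}$-hitting set of size $\le k$ and any tiny induced subgraph satisfies the kernel condition vacuously). Otherwise, with $\gamma = \max_{F\in\mathcal{F}}|V(F)|$, $A$ contains at most $\gamma(k+1)|\mathcal{F}|$ vertices and, by maximality of the packing, every $F$-copy of $G$ intersects $A$.

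Around each anchor $a \in A$ I run a sunflower-based marking for every $F \in \mathcal{F}$: consider the family $\mathcal{C}_a^F$ of $F$-copies of $G$ containing $a$. If $|\mathcal{C}_a^F| \le \gamma!\cdot k^\gamma$, add every vertex of every copy in $\mathcal{C}_a^F$ to $V(G')$. Otherwise, the sunflower lemma produces a sunflower of $k+1$ petals whose core contains $a$. I distinguish singleton-core sunflowers (which directly force $a$ into every $\mathcal{F}$-hitting set of size $\le k$, since the $k+1$ petals are pairwise disjoint outside $a$) from larger-core sunflowers, and iterate the procedure on a suitably residual hypergraph so that eventually every remaining $F$-copy of $G$ is either fully contained in $V(G')$ or anchored at a forced vertex; the forced vertices are kept in $V(G')$ together with enough ``witness'' petals, all vertex-disjoint outside the forced vertex, to make the forcing provable inside $G'$. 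The main correctness claim is the following dichotomy for any $F$-copy $C$ of $G$: either $V(C)\subseteq V(G')$ (so $C$ appears as an $F$-copy of $G'$ and is hit by $S$), or $V(C)\cap A$ contains a forced vertex $a^{*}$ whose witness structure is in $V(G')$, guaranteeing $a^{*}\in S$ and thus $S\cap V(C)\neq\emptyset$. A straightforward accounting gives $|V(G')|\le |A|\cdot|\mathcal{F}|\cdot\gamma\cdot\gamma!\cdot k^{\gamma} = k^{O(1)}$.

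The principal technical obstacles will be twofold. First, making the forcing argument rigorous when the sunflower core has size greater than one, since then ``hitting the core'' does not pin down any specific vertex; I plan to address this by iterating the sunflower extraction on residual hypergraphs until only singleton-core forcings survive, using the bound $|A|=O(k)$ so that the iteration terminates in $k^{O(1)}$ rounds with a polynomial blowup in $|V(G')|$. Second, meeting the running-time budget $k^{O(1)}\cdot n + O(m)$ is demanding because naive enumeration of $F$-copies through a vertex costs $n^{\Omega(\gamma)}$. The plan here is to exploit the balanced $(\eta,\mu,\rho)$-separator property with $\rho<1$: recursive separator decomposition yields pieces of sublinear boundary, permitting enumeration of $F$-copies through each vertex in time polynomial in $k$ and amortized linear in $n$, while the factor $\omega^{\mu}$ is absorbed because any relevant $F$-copy of bounded size meets only $\omega(G)^{O(\gamma)}$ local cliques and the ``large-$\mathcal{C}_a^F$'' case can be detected by an early-termination scan that stops once the threshold $\gamma!\cdot k^\gamma$ is reached. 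An initial $O(m)$-time pass establishes adjacency data structures so that the marking phase spends time $k^{O(1)}$ per vertex visited.
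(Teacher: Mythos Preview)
Your correctness outline is in the right spirit---the paper also uses a sunflower-style marking to bound the kernel size---but the running-time plan has a genuine gap, and this is where the paper invests its real effort.

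First, you never explain how to find even a single $F$-copy in $G$ in time $k^{O(1)}\cdot n$. Your greedy packing, your enumeration of $\mathcal{C}_a^F$, and your ``early-termination scan'' all presuppose such a subroutine. The claim that ``recursive separator decomposition \ldots\ permit[s] enumeration of $F$-copies through each vertex in time polynomial in $k$ and amortized linear in $n$'' is not a known technique and is not substantiated; subgraph detection in sparse classes in linear time is a deep result, not a consequence of small separators alone. The paper handles this by invoking the linear-time first-order model-checking theorem of Dvo\v{r}\'{a}k, Kr\'{a}l, and Thomas on classes of bounded expansion: the predicate ``there exists an $F$-copy meeting $K$ in exactly the prescribed set $I$'' is a fixed first-order formula, so it can be decided (and a witness found) in $\eta'^{O(1)}\cdot n$ time.

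Second, and more subtly, this model-checking theorem applies only to classes of bounded expansion, i.e., to $\mathcal{G}(\eta',0,\rho)$. The input class $\mathcal{G}(\eta,\mu,\rho)$ with $\mu>0$ is \emph{not} of bounded expansion: think of disk graphs, where $\omega(G)$ can be $\Theta(n)$. Your remark that ``the factor $\omega^{\mu}$ is absorbed because any relevant $F$-copy of bounded size meets only $\omega(G)^{O(\gamma)}$ local cliques'' does not help---$\omega(G)^{O(\gamma)}$ can be $n^{O(\gamma)}$. The paper resolves this with a separate preprocessing step you are missing: it computes a smallest-last ordering and takes the prefix $G_0$ up to the first vertex with too many back-neighbors. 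Either $G_0=G$, or $G_0$ is not $(\eta(k+\gamma)^{\mu})^{c}$-degenerate, which forces $\omega(G_0)>k+\gamma$ and hence $G_0$ (and $G$) have no $\mathcal{F}$-hitting set of size $\le k$. In the nontrivial case $G_0\in\mathcal{G}(k^{O(1)},0,\rho)$, so the FO model-checking machinery applies with parameter $\eta'=k^{O(1)}$.

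Finally, your plan to iterate sunflower extraction ``until only singleton-core forcings survive'' is underspecified; it is not clear why the iteration terminates with polynomial blowup, nor why the resulting forced vertices cover all $F$-copies not fully inside $V(G')$. The paper avoids this by maintaining a set $\mathcal{X}$ of sunflower cores found \emph{inside} the growing kernel $K$ and only admitting a new $F$-copy $V$ when $V\nsubseteq K$ and no $X\in\mathcal{X}$ is contained in $V$; a direct sunflower-lemma count then bounds the number of iterations by $k^{O(1)}$.
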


Note that polynomial kernels for \textsc{$d$-Hitting Set} are well-known~\cite{CyganFKLMPPS15}.
Unfortunately, we cannot apply these kernels to obtain Theorem~\ref{thm-kernel}.
The main reason is what we already mentioned in the introduction: in the \textsc{$\mathcal{F}$-Hitting} problem, the sets to be hit are implicit, and the number of these sets can be $n^{O(\gamma)}$, where $\gamma = \max_{F \in \mathcal{F}} |V(F)|$, so that we cannot afford to compute all of them.
Besides this, another (less serious) difficulty  here is that the instance obtained by the kernelization algorithm is required to be another \textsc{$\mathcal{F}$-Hitting} instance (whose underlying graph is an induced subgraph of $G$) rather than a general \textsc{$d$-Hitting Set} instance.
Our approach for Theorem~\ref{thm-kernel} is a variant of the sunflower-based kernel for \textsc{$d$-Hitting Set}, which can overcome these difficulties when properly combined with the linear-time first-order model checking algorithm of Dvo\v{r}\'{a}k et al.~\cite{dvovrak2013testing}.
Thanks to Theorem~\ref{thm-kernel}, it suffices to design algorithms with subexponential \textit{XP} running time, i.e., $n^{O(k^c)}$ time for $c < 1$.

Step~2 is the core of our framework.
It is achieved by an intricate branching algorithm, which is the main technical contribution of this paper.
Let $\mathcal{X}$ be a collection of sets (in which the elements belong to the same universe).
The \textit{Gaifman graph} of $\mathcal{X}$ is the graph with vertex set $\bigcup_{X \in \mathcal{X}} X$ where two vertices $u$ and $v$ are connected by an edge if $u,v \in X$ for some $X \in \mathcal{X}$.
We say a set $S$ \textit{hits} $\mathcal{X}$ if $S \cap X \neq \emptyset$ for all $X \in \mathcal{X}$.
The algorithm for Step~2 is stated in the following theorem.

\begin{restatable}{theorem}{branching} \label{thm-subtwbranch}
Let $\mathcal{G} \subseteq \mathcal{G}(\eta,\mu,\rho)$ where $\eta,\mu \geq 0$ and $\rho < 1$.
Also, let $\mathcal{F}$ be a finite set of graphs.
Then there exists a constant $c < 1$ (depending on $\eta$, $\mu$, $\rho$, and $\mathcal{F}$) such that for a given graph $G \in \mathcal{G}$ of $n$ vertices and a parameter $k \in \mathbb{N}$, one can construct in $n^{O(k^c)}$ time $t = 2^{O(k^c)}$ collections $\mathcal{X}_1,\dots,\mathcal{X}_t$ of subsets of $V(G)$ satisfying the following conditions.
\begin{itemize}
    \item For any $S \subseteq V(G)$ with $|S| \leq k$, $S$ is an $\mathcal{F}$-hitting set of $G$ iff $S$ hits $\mathcal{X}_i$ for some $i \in [t]$.
    \item The Gaifman graph of $\mathcal{X}_i$ has treewidth $O(k^c)$, for all $i \in [t]$.
    \item $|\mathcal{X}_i| = k^{O(1)}$ for all $i \in [t]$.
\end{itemize}
\end{restatable}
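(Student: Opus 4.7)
The plan is to build a recursive balanced-separator decomposition of $G$ and branch over the solution's behavior on each separator, producing a subexponential family of hitting-set instances whose Gaifman graphs inherit low treewidth from the decomposition itself. As a first step, I invoke Theorem~\ref{thm-kernel} to replace $G$ by an induced subgraph of size $k^{O(1)}$; since the target construction time is $n^{O(k^c)}$, polynomial factors in $n$ can be absorbed. Next I observe that if $G$ contains a clique of size at least $k+\gamma_0$, where $\gamma_0=\min_{F\in\mathcal{F}}|V(F)|$, then $k$ deletions cannot destroy every $F$-copy inside that clique and the instance is infeasible, so I may assume $\omega(G)=O(k)$. Combined with $G\in\mathcal{G}(\eta,\mu,\rho)$, every induced subgraph $H$ of $G$ then admits a balanced separator of size $\eta\cdot O(k)^\mu\cdot |V(H)|^\rho$.

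The core step is a recursive branching procedure. At a node whose current subgraph is $H$, I compute a balanced separator $S$ of $H$ and spawn one branch for every subset $S'\subseteq S$ with $|S'|\leq k$; in that branch $S'$ is committed into the hitting set and $S\setminus S'$ is committed as forbidden. Since $S$ separates $H$ and each connected graph in $\mathcal{F}$ has at most $\gamma=\max_{F\in\mathcal{F}}|V(F)|$ vertices, every surviving $F$-copy in $H-S'$ has its non-$S$ part inside a single connected component of $H-S$, so the recursion proceeds independently into each such component $C_i$ enlarged by $S\setminus S'$. I halt the recursion once the current subgraph has at most $k^c$ vertices, and I prune any branch whose committed-in part already exceeds $k$. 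A standard geometric-series estimate on the separator sizes along any root-to-leaf path, using $\omega(G)=O(k)$ and the sublinear-separator hypothesis, bounds the accumulated separator size on that path by $O(k^c)$ for some $c<1$; hence the total number of surviving branches is $2^{O(k^c)}$ and the construction runs in $n^{O(k^c)}$ time.

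From each surviving branch I extract one collection $\mathcal{X}_i$ whose sets are $V(F')\cap R$ for every copy $F'$ of some $F\in\mathcal{F}$ in $G$ that avoids all committed-in vertices and meets $R$, where $R$ is the set of vertices that remain undecided at the leaf. Correctness is immediate: every $\mathcal{F}$-hitting set $S$ of $G$ with $|S|\leq k$ determines a unique branch of the recursion tree by restricting $S$ to each separator encountered, and $S\cap R$ hits the corresponding $\mathcal{X}_i$ iff $S$ covers every surviving $F$-copy. Since each leaf's residual piece has $k^{O(1)}$ vertices and $\gamma$ is constant, the number of such copies---and hence $|\mathcal{X}_i|$---is $k^{O(1)}$.

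The most delicate conclusion is the treewidth bound on the Gaifman graph. I turn the recursion tree into a tree decomposition of $G$ by taking, as the bag of each leaf, the union of all separator vertices encountered from the root down to that leaf together with the leaf's residual piece, and for each internal node simply the separator vertices along its root-to-node path. The amortization above bounds each bag by $O(k^c)$. Crucially, by the earlier observation that every surviving $F$-copy has its non-$S$ part confined to a single component at every recursion level, each such $F$-copy has all of its vertices in the leaf bag corresponding to the component that finally contains it; hence every set of $\mathcal{X}_i$ sits inside a single bag and the Gaifman graph has treewidth at most that maximum bag size, namely $O(k^c)$. The hard part is to make this localization argument robust for disconnected $F\in\mathcal{F}$ (requiring an additional enumeration over how the connected components of $F$ distribute across the tree) and to tune the halting size $k^c$ so that the amortized separator sum, the pruned branch count, and the leaf-bag width simultaneously achieve the target exponent $c<1$ depending on $\eta$, $\mu$, $\rho$, and $\mathcal{F}$.
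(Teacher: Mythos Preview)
Your argument has a genuine gap at the step where you assert that ``a standard geometric-series estimate \dots\ bounds the accumulated separator size on that path by $O(k^c)$ for some $c<1$.'' After the kernelization of Theorem~\ref{thm-kernel}, the graph has $|V(G)| = k^{\Theta(\gamma)}$ vertices where $\gamma = \max_{F\in\mathcal{F}}|V(F)|$, and with $\omega(G)=O(k)$ the top-level balanced separator has size $\Theta\!\bigl(k^{\mu}\cdot (k^{\gamma})^{\rho}\bigr)=k^{\Theta(\mu+\gamma\rho)}$. The geometric series along a root-to-leaf path is dominated by this first term, so the accumulated separator size is $k^{\Theta(\mu+\gamma\rho)}$. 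Nothing in the hypotheses forces $\mu+\gamma\rho<1$: already for $\mu=0$ and planar graphs ($\rho=\tfrac12$) you need $\gamma=1$, and for any $\mu\ge 1$ the exponent exceeds $1$ regardless of $\rho$ and $\gamma$. Consequently your branching on subsets $S'\subseteq S$ produces at least $\binom{k^{\mu+\gamma\rho}}{\le k}$ children at the root, which is $2^{\Omega(k)}$ (and in fact $2^{\Omega(k\log k)}$ when $\mu+\gamma\rho>1$), not $2^{O(k^c)}$. The same exponent kills the treewidth claim: your leaf bags contain the accumulated separators, hence have size $k^{\Theta(\mu+\gamma\rho)}$, not $O(k^c)$.

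This is precisely the obstacle that forces the paper away from separator recursion. The paper's proof branches not on separators but on \emph{heavy cores}: triples $(X,F,f)$ that are cores of large structured sunflowers of $F$-copies. The crucial structural result (Lemma~\ref{lem-twGaifman}) shows that the Gaifman graph of the \emph{$U$-minimal} heavy cores has treewidth $(\eta\delta)^{O(1)}k^{\rho}$ directly, without any accumulated-separator argument; this is where the sublinear exponent genuinely comes from, via Corollary~\ref{cor-hitfew} and a delicate modification of a tree decomposition of an augmented graph. The subexponential branch count is obtained by an asymmetric budget: $\theta_{\mathsf{yes}}=\delta^{O(1)}k$ ``yes'' decisions but only $\theta_{\mathsf{no}}=O(k/\delta)$ ``no'' decisions, the latter bounded by a weak-coloring-number charging argument (Observation~\ref{obs-thetano}) that has no analogue in a separator scheme. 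The clique-size reduction you sketch does appear, but only as an outer layer (Lemma~\ref{lem-cliquebranch}) to reduce to the $\mu=0$ case; it does not by itself yield sublinear separators in $k$.
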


\vspace{0.1cm}

\noindent
The proof of Theorem~\ref{thm-subtwbranch} is highly nontrivial, which combines theories of sparse graphs, sunflowers, tree decomposition, branching algorithms, together with novel insights to the \textsc{$\mathcal{F}$-Hitting} problem itself.
In the proof, we introduce interesting combinatorial results (e.g., Lemma~\ref{lem-twGaifman}) for the set systems considered in the \textsc{$\mathcal{F}$-Hitting} problem (i.e., the vertex sets of all subgraphs isomorphic to some graph in $\mathcal{F}$), which are of independent interest and might be useful for understanding the structure of such set systems.
Using Theorem~\ref{thm-kernel}, one can straightforwardly improve the running time of Theorem~\ref{thm-subtwbranch} to $2^{O(k^c)} \cdot n + O(m)$, yielding the following result.
\begin{corollary} \label{cor-branching}
Let $\mathcal{G} \subseteq \mathcal{G}(\eta,\mu,\rho)$ where $\eta,\mu \geq 0$ and $\rho < 1$.
Also, let $\mathcal{F}$ be a finite set of graphs.
Then there exists a constant $c < 1$ (depending on $\eta$, $\mu$, $\rho$, and $\mathcal{F}$) such that for a given graph $G \in \mathcal{G}$ of $n$ vertices and a parameter $k \in \mathbb{N}$, one can construct in $2^{O(k^c)} \cdot n + O(m)$ time $t = 2^{O(k^c)}$ collections $\mathcal{X}_1,\dots,\mathcal{X}_t$ of subsets of $V(G)$ satisfying the following conditions.
\begin{itemize}
    \item For any $S \subseteq V(G)$ with $|S| \leq k$, $S$ is an $\mathcal{F}$-hitting set of $G$ iff $S$ hits $\mathcal{X}_i$ for some $i \in [t]$.
    \item The Gaifman graph of $\mathcal{X}_i$ has treewidth $O(k^c)$, for all $i \in [t]$.
    \item $|\mathcal{X}_i| = k^{O(1)}$ for all $i \in [t]$.    
\end{itemize}
\end{corollary}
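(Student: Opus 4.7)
The plan is to prove Corollary~\ref{cor-branching} by simple composition of Theorem~\ref{thm-kernel} and Theorem~\ref{thm-subtwbranch}. First I would invoke the linear-time kernelization of Theorem~\ref{thm-kernel} on the input $(G,k)$, producing in $k^{O(1)} \cdot n + O(m)$ time an induced subgraph $G'$ of $G$ with $|V(G')| = k^{O(1)}$ such that every $\mathcal{F}$-hitting set $S \subseteq V(G')$ of $G'$ with $|S| \leq k$ is an $\mathcal{F}$-hitting set of $G$. Because the class $\mathcal{G}(\eta,\mu,\rho)$ is defined via a property that quantifies over induced subgraphs, it is closed under taking induced subgraphs, so $G' \in \mathcal{G}(\eta,\mu,\rho)$ and the machinery behind Theorem~\ref{thm-subtwbranch} applies to $G'$.

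Next I would apply Theorem~\ref{thm-subtwbranch} to $G'$. This produces, in $|V(G')|^{O(k^c)} = (k^{O(1)})^{O(k^c)} = 2^{O(k^c \log k)}$ time, the collections $\mathcal{X}_1,\dots,\mathcal{X}_t$ with $t = 2^{O(k^c)}$, each $|\mathcal{X}_i| = k^{O(1)}$, and each Gaifman graph of treewidth $O(k^c)$, where the sets live in $V(G') \subseteq V(G)$. Choosing any $c'$ with $c < c' < 1$ gives $k^c \log k = O(k^{c'})$, so the total running time becomes $2^{O(k^{c'})} + k^{O(1)} \cdot n + O(m) = 2^{O(k^{c'})} \cdot n + O(m)$, and the three structural parameters ($t$, $|\mathcal{X}_i|$, and treewidth) are all bounded by the corresponding expressions with $c'$ in place of $c$. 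Since the corollary only asks for the existence of some constant strictly below $1$, relabelling $c'$ as $c$ yields the claimed bounds.

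Finally I need to verify the ``iff'' for an arbitrary $S \subseteq V(G)$ with $|S| \leq k$, not just for $S \subseteq V(G')$. Let $S' := S \cap V(G')$. Since every element of every $\mathcal{X}_i$ is a subset of $V(G')$, we have $S$ hits $\mathcal{X}_i$ iff $S'$ hits $\mathcal{X}_i$. On the hitting-set side, because $G'$ is an induced subgraph of $G$, every copy of some $F \in \mathcal{F}$ in $G'$ is also a copy in $G$, so if $S$ is $\mathcal{F}$-hitting in $G$ then $S'$ is $\mathcal{F}$-hitting in $G'$; conversely, Theorem~\ref{thm-kernel} turns any $\mathcal{F}$-hitting set of $G'$ of size at most $k$ into an $\mathcal{F}$-hitting set of $G$, so if $S'$ is $\mathcal{F}$-hitting in $G'$ (with $|S'|\le k$) then $S \supseteq S'$ is $\mathcal{F}$-hitting in $G$. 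Combining these two equivalences with the first property of Theorem~\ref{thm-subtwbranch} applied to $G'$ and $S'$ gives the desired chain, completing the proof. The main obstacle is not in this composition itself, which is routine; essentially all of the technical difficulty lives inside Theorem~\ref{thm-kernel} and (especially) Theorem~\ref{thm-subtwbranch}, and the only delicate point here is confirming that passing from $V(G)$ to $V(G')$ in both the hitting-set condition and the ``hits $\mathcal{X}_i$'' condition is consistent.
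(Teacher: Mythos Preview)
Your proposal is correct and follows essentially the same approach as the paper: apply the kernel of Theorem~\ref{thm-kernel} to shrink $G$ to $G'$ with $|V(G')|=k^{O(1)}$, run Theorem~\ref{thm-subtwbranch} on $G'$, and then absorb the resulting $k^{O(k^c)}$ factor by bumping $c$ to some $c'\in(c,1)$. Your verification of the two directions of the ``iff'' via $S' = S\cap V(G')$ is in fact slightly more explicit than the paper's, which silently identifies $S$ with $S\cap V(G')$ when invoking Theorem~\ref{thm-subtwbranch} and Theorem~\ref{thm-kernel}.
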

\vspace{0.01cm}
\begin{proof}
We first apply Theorem~\ref{thm-kernel} on $G$ and $k$ to obtain the induced subgraph $G'$.
Then we apply Theorem~\ref{thm-subtwbranch} on $G'$ and $k$ to obtain the collections $\mathcal{X}_1,\dots,\mathcal{X}_t$ of subsets of $V(G')$, which are also subsets of $V(G)$.
Since $|V(G')| = k^{O(1)}$ by Theorem~\ref{thm-kernel}, the total time cost is $k^{O(1)} \cdot n + O(m) + k^{O(k^p)}$ for some constant $p < 1$.
Choosing an arbitrary constant $c \in (p,1)$, the running time is bounded by $2^{O(k^c)} \cdot n + O(m)$.
The bounds on the sizes of $\mathcal{X}_1,\dots,\mathcal{X}_t$ and the treewidth of the Gaifman graphs directly follow from Theorem~\ref{thm-subtwbranch}.
It suffices to show that a set $S \subseteq V(G)$ with $|S| \leq k$ is an $\mathcal{F}$-hitting set of $G$ iff $S$ hits $\mathcal{X}_i$ for some $i \in [t]$.
If $S$ is an $\mathcal{F}$-hitting set of $G$, then $S \cap V(G')$ is an $\mathcal{F}$-hitting set of $G'$.
Thus $S \cap V(G')$ hits $\mathcal{X}_i$ for some $i \in [t]$ by Theorem~\ref{thm-subtwbranch}, which implies that $S$ hits $\mathcal{X}_i$.
On the other hand, if $S$ hits $\mathcal{X}_i$ for some $i \in [t]$, then $S$ is an $\mathcal{F}$-hitting set of $G'$ by Theorem~\ref{thm-subtwbranch}, which is in turn an $\mathcal{F}$-hitting set of $G$ by Theorem~\ref{thm-kernel} since $|S| \leq k$.
\end{proof}

Step~3 is achieved by standard dynamic programming on tree decomposition.
For each hitting-set instance $\mathcal{X}_i$ obtained in Corollary~\ref{cor-branching}, we build a tree decomposition of the Gaifman graph of $\mathcal{X}_i$ of width $O(k^c)$, which can be done using standard constant-approximation algorithms for treewidth~\cite{CyganFKLMPPS15}.
Then we apply dynamic programming to compute a minimum-weight (under the weight function $w$) hitting set $S_i$ for $\mathcal{X}_i$ satisfying $|S_i| \leq k$.
Finally, we return the set $S_i$ with the minimum total weight among $S_1,\dots,S_t$.
The first condition in Corollary~\ref{cor-branching} guarantees that $S_i$ is an optimal solution for the \textsc{Weighted $\mathcal{F}$-Hitting} instance $(G,w,k)$.

\paragraph{Organization.}
In Section~\ref{sec-overview}, we give an informal overview of our techniques.
In Section~\ref{sec-pre}, we give the formal definitions of the basic notions used in this paper.
Sections~\ref{sec-kernel} and~\ref{sec-branching} present the proofs of Theorems~\ref{thm-kernel} and~\ref{thm-subtwbranch}, respectively.
Section~\ref{sec-subhit} proves our main result, Theorem~\ref{thm-main}.
Finally, in Section~\ref{sec-conclusion}, we give a conclusion for the paper and propose some open problems.

\section{Overview of our techniques} \label{sec-overview}

In this section, we give an overview of the ideas/techniques underlying our proofs. 
The kernelization algorithm in Theorem~\ref{thm-kernel} is relatively simple.
So we only focus on our main technical result, the reduction algorithm in Theorem~\ref{thm-subtwbranch}.
Before the discussion, let us recall a standard notion called \textit{weak coloring numbers}.
Let $G$ be a graph and $\sigma$ be an ordering of $V(G)$.
For $u,v \in V(G)$, we write $u <_\sigma v$ if $u$ is before $v$ under the ordering $\sigma$.
The notations $>_\sigma$, $\leq_\sigma$, $\geq_\sigma$ are defined similarly.
For an integer $r \geq 0$, $u$ is \textit{weakly $r$-reachable} from $v$ \textit{under $\sigma$} if there is a path $\pi$ between $v$ and $u$ of length at most $r$ such that $u$ is the largest vertex on $\pi$ under the ordering $\sigma$, i.e., $u \geq_\sigma w$ for all $w \in V(\pi)$.
Let $\text{WR}_r(G,\sigma,v)$ denote the set of vertices in $G$ that are weakly $r$-reachable from $v$ under $\sigma$.
The \textit{weak $r$-coloring number} of $G$ \textit{under $\sigma$} is defined as $\mathsf{wcol}_r(G,\sigma) = \max_{v \in V(G)} |\text{WR}_r(G,\sigma,v)|$.
Then the \textit{weak $r$-coloring number} of $G$ is defined as $\mathsf{wcol}_r(G) = \min_{\sigma \in \Sigma(G)} \mathsf{wcol}_r(G,\sigma)$ where $\Sigma(G)$ is the set of all orderings of $V(G)$.
It is well-known~\cite{nevsetvril2008grad,zhu2009colouring} that a graph class $\mathcal{G}$ is of bounded expansion iff there is a function $f:\mathbb{N} \rightarrow \mathbb{N}$ such that $\mathsf{wcol}_r(G) \leq f(r)$ for all $G \in \mathcal{G}$ and all $r \in \mathbb{N}$.

For simplicity, we only discuss the algorithm of Theorem~\ref{thm-subtwbranch} in a special case where $\mu = 0$ (i.e., $\mathcal{G}$ is of polynomial expansion) and $\mathcal{F}$ only contains a single graph $F$.
The same algorithm directly extends to the case where $\mathcal{F}$ consists of multiple graphs.
Further extending it to a general $\mathcal{G} \subseteq \mathcal{G}(\eta,\mu,\rho)$ is also not difficult, by adapting some standard techniques.
Note that if $\mathcal{G} \subseteq \mathcal{G}(\eta,0,\rho)$, then for any fixed $r \in \mathbb{N}$, we have $\mathsf{wcol}_r(G) = O(1)$ for all $G \in \mathcal{G}$.

The first simple observation we have is that one can assume $F$ is connected without loss of generality.
To see this, consider a graph $G \in \mathcal{G}$.
We say a graph is \textit{$F$-free} if it does not contain any subgraph isomorphic to $F$.
Let $G^+$ (resp., $F^+$) be the graph obtained from $G$ (resp., $F$) by adding a new vertex with edges to all other vertices.
Now $G^+ \in \mathcal{G}(\eta+1,0,\rho)$ and $F^+$ is connected.
Furthermore, one can easily verify that for any $S \subseteq V(G)$, $G^+ - S$ is $F^+$-free iff $G - S$ is $F$-free.
As long as Theorem~\ref{thm-subtwbranch} works for $\mathcal{G}(\eta+1,0,\rho)$ and $\mathcal{F}^+ = \{F^+\}$, we can apply it to compute the collections $\mathcal{X}_1,\dots,\mathcal{X}_t$ of subsets of $V(G^+)$ satisfying the desired conditions.
Define $\mathcal{X}_i' = \{X \cap V(G): X \in \mathcal{X}_i\}$ for $i \in [t]$.
It turns out that $\mathcal{X}_1',\dots,\mathcal{X}_t'$ are collections of subsets of $V(G)$ that satisfy the desired conditions for $G$ and $\mathcal{F}$.

An \textit{$F$-copy} in $G$ refers to a pair $(H,\pi)$ where $H$ is a subgraph of $G$ and $\pi:V(H) \rightarrow V(F)$ is an isomorphism between $H$ and $F$.
As is usual in hitting-set problems~\cite{impagliazzo2001problems,santhanam2012limits}, we consider and branch on \textit{sunflowers} in the set system (which in our setting consists of the vertex sets of the $F$-copies in $G$).
Recall that sets $V_1,\dots,V_r$ form a \textit{sunflower} if there exists a set $X$ such that $X \subseteq V_i$ for all $i \in [r]$ and $V_1 \backslash X,\dots,V_r \backslash X$ are disjoint; $X$ is called the \textit{core} of the sunflower and $r$ is the size of the sunflower. 

As the sets in our problem are vertex sets of subgraphs of $G$, it is more convenient to consider sunflowers with additional structures related to the graph.
We say $F$-copies $(H_1,\pi_1),\dots,(H_r,\pi_r)$ in $G$ form a \textit{structured sunflower} if $V(H_1),\dots,V(H_r)$ form a sunflower with core $X$ and $(\pi_1)_{|X} = \cdots = (\pi_r)_{|X}$; the core of the structured sunflower is the pair $(X,f)$ where $f:X \rightarrow V(F)$ is the unique map satisfying $f = (\pi_1)_{|X} = \cdots = (\pi_r)_{|X}$.
A pair $(X,f)$ is a \textit{heavy core} if it is the core of a structured sunflower of size $\gamma^{|X|} \delta$, where $\gamma = |V(F)|$ and $\delta$ is a parameter to be determined.
The reason why we pick $\gamma^{|X|} \delta$ as the threshold will be clear later.
Note that we do not require the $F$-copies in a structured sunflower to be distinct.
Therefore, for an $F$-copy $(H,\pi)$ in $G$, the pair $(V(H),\pi)$ is a heavy core, because it is the core of the structured sunflower $(H_1,\pi_1),\dots,(H_\Delta,\pi_\Delta)$ where $\Delta = \gamma^{|V(H)|} \delta = \gamma^\gamma \delta$ and $(H_1,\pi_1)=\cdots=(H_\Delta,\pi_\Delta) = (H,\pi)$.

To get some basic idea about how $\mathcal{X}_1,\dots,\mathcal{X}_t$ in Theorem~\ref{thm-subtwbranch} are generated, let us first consider a trivial branching algorithm, which can generate $\mathcal{X}_1,\dots,\mathcal{X}_t$ satisfying the first condition in Theorem~\ref{thm-subtwbranch} without any guarantee on the running time, the number $t$, or the treewidth.
Imagine there is some (unknown) $\mathcal{F}$-hitting set $S$ of $G$.
The branching algorithm essentially guesses whether each heavy core is hit by the solution or not.
It maintains a set $U \subseteq V(G)$ and a collection $\mathcal{X}$ of subsets of $V(G)$.
The vertices in $U$ are ``undeletable'' vertices, namely, the vertices that are not supposed to be in $S$.
On the other hand, the sets in $\mathcal{X}$ are supposed to be hit by $S$.
Initially, $U = \emptyset$ and $\mathcal{X} = \emptyset$.
Then it calls the function $\textsc{Branch}(U,\mathcal{X})$, which works as follows.
\begin{itemize}
    \item Pick a heavy core $(X,f)$ satisfying $X \nsubseteq U$ and $X \notin \mathcal{X}$.
    If such a heavy core does not exist and $G[U]$ is $F$-free, then create a new $\mathcal{X}_i = \{X \backslash U: X \in \mathcal{X}\}$, and return to the last level.
    \item Branch on $(X,f)$ in two ways (i.e., guess whether $S$ hits $X$ or not):
    \begin{itemize}
        \item ``Yes'' branch (guess $S \cap X \neq \emptyset$): recursively call $\textsc{Branch}(U,\mathcal{X} \cup \{X\})$.
        \item ``No'' branch (guess $S \cap X = \emptyset$): recursively call $\textsc{Branch}(U \cup X,\mathcal{X})$.
    \end{itemize}
\end{itemize}

Let $\mathcal{X}_1,\dots,\mathcal{X}_t$ be the collection generated by the above procedure.
One can easily check that a subset $S \subseteq V(G)$ is an $\mathcal{F}$-hitting set of $G$ iff $S$ hits $\mathcal{X}_i$ for some $i \in [t]$.
Indeed, if $S$ is an $\mathcal{F}$-hitting set of $G$ and the algorithm makes the right decision in each step (i.e., makes the ``yes'' decision whenever $S \cap X \neq \emptyset$ and the ``no'' decision whenever $S \cap X = \emptyset$), at the end of that branch a collection $\mathcal{X}_i = \{Y \backslash U: Y \in \mathcal{X}\}$ is created and $S$ hits $\mathcal{X}_i$.
On the other hand, if $S \subseteq V(G)$ is a set that hits some $\mathcal{X}_i$, then $S$ hits all $F$-copies in $G$.
Why? Note that $\mathcal{X}_i = \{X \backslash U: X \in \mathcal{X}\}$.
At the point we create $\mathcal{X}_i$, we have $X \in \mathcal{X}$ for all heavy cores $(X,f)$ with $X \nsubseteq U$, and in particular $V(H) \in \mathcal{X}$ for all $F$-copies $(H,\pi)$ with $V(H) \nsubseteq U$.
By assumption, $S$ hits these $F$-copies.
Furthermore, as $G[U]$ is $F$-free, there is no $F$-copy $(H,\pi)$ with $V(H) \subseteq U$.
Thus, $S$ hits all $F$-copies in $G$.

Of course, this trivial branching procedure can only provide us some intuition about how the collections $\mathcal{X}_1,\dots,\mathcal{X}_t$ are generated.
It guarantees neither sublinear treewidth of the Gaifman graphs of $\mathcal{X}_1,\dots,\mathcal{X}_t$ nor subexponential bound on $t$.
Thus, we are still far from proving Theorem~\ref{thm-subtwbranch}.
In the following two sections, we shall focus on how to achieve sublinear treewidth (Section~\ref{sec-subltw}) and subexponential branching (Section~\ref{sec-subexpbranch}), respectively.
Both parts are quite technical.
Interestingly, to obtain sublinear treewidth, we only need to slightly modify the trivial branching algorithm, and the main challenge lies in the proof of a structural lemma for the Gaifman graph of certain heavy cores (Lemma~\ref{Olem-treewidth}).
For subexponential branching, however, we have to further elaborate the branching algorithm significantly, with an involved analysis.

In this overview, we ignore the requirements of Theorem~\ref{thm-subtwbranch} on the running time of the algorithm and the sizes of $\mathcal{X}_1,\dots,\mathcal{X}_t$.
It turns out that these requirements can be achieved almost for free as long as the algorithm admits a subexponential branching tree.

\subsection{How to achieve sublinear treewidth} \label{sec-subltw}
Recall that we want the Gaifman graph of each $\mathcal{X}_i$ to have treewidth sublinear in $k$.
To see how to achieve the sublinear treewidth bound, let us introduce some additional notions.
For two heavy cores $(X,f)$ and $(Y,g)$, we write $(X,f) \prec (Y,g)$ if $X \subsetneq Y$ and $f = g_{|X}$, and write $(X,f) \preceq (Y,g)$ if $(X,f) \prec (Y,g)$ or $(X,f) = (Y,g)$.
Clearly, $\prec$ is a partial order among all heavy cores.
For a subset $U \subseteq V(G)$, we say a heavy core $(X,f)$ is \textit{$U$-minimal} if $X \nsubseteq U$ and for any heavy core $(Y,g)$ with $(Y,g) \prec (X,f)$, we have $Y \subseteq U$.
The key to achieve sublinear treewidth is the following important structural lemma for $U$-minimal heavy cores.

\begin{lemma}[simplified version of Lemma~\ref{lem-twGaifman}] \label{Olem-treewidth}
    Suppose $\delta > \textnormal{wcol}_\gamma(G)$.
    Then for any subset $U \subseteq V(G)$ and any $U$-minimal heavy cores $(X_1,f_1),\dots,(X_r,f_r)$ in $G$, the Gaifman graph of $\{X_1 \backslash U,\dots,X_r \backslash U\}$ has treewidth $\delta^{O(1)} \cdot k^c$ for some constant $c < 1$, where $k$ is the size of a minimum hitting set of $\{X_1 \backslash U,\dots,X_r \backslash U\}$.
    Here $c$ and the constant hidden in $O(\cdot)$ only depend on $F$ and the polynomial-expansion graph class $\mathcal{G}$ from which $G$ is drawn.
\end{lemma}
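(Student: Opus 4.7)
The plan is to bound the treewidth of the Gaifman graph $\Gamma$ of $\mathcal{X} := \{X_1 \setminus U,\ldots,X_r \setminus U\}$ in three stages. First I will bound $|V(\Gamma)|$ by $\delta^{O(1)} \cdot k$ using $U$-minimality together with sunflower counting. Next I will observe that every $X_i$ sits inside a connected $F$-copy of $G$, so $\Gamma$ is a subgraph of $G^{\gamma-1}$ restricted to $V(\Gamma)$ and hence inherits polynomial expansion from $\mathcal{G}$. Finally I will invoke the fact that polynomial expansion yields strongly sublinear balanced separators, and hence sublinear treewidth, which combined with the vertex-count bound gives the claim.

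For the first stage, I fix an optimal hitting set $S$ for $\mathcal{X}$ of size $k$. Since every $v \in V(\Gamma)$ lies in some $X_i \setminus U$ and $S$ hits that set, $v$ shares a clique of $\Gamma$ with some $s \in S$, so $V(\Gamma)$ is covered by the closed $\Gamma$-neighborhoods of the vertices of $S$. As $|X_i| \leq \gamma$, it suffices to prove that each $s \in V(G) \setminus U$ appears in at most $\delta^{O(1)}$ $U$-minimal heavy cores. I fix such an $s$ and an $a \in V(F)$ and consider the family $\mathcal{H}_{s,a}$ of all $F$-copies $(H,\pi)$ in $G$ with $s \in V(H)$ and $\pi(s) = a$. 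For any $U$-minimal core $(X,f)$ with $s \in X \setminus U$, $|X| \geq 2$, and $f(s)=a$, the pair $(\{s\}, s \mapsto a)$ is a proper sub-core of $(X,f)$ with $\{s\} \nsubseteq U$, so $U$-minimality forces $(\{s\}, s \mapsto a)$ to \emph{not} be a heavy core. Applying the classical sunflower lemma to the $(\gamma-1)$-sets $\{V(H) \setminus \{s\} : (H,\pi) \in \mathcal{H}_{s,a}\}$, this non-heaviness yields $|\mathcal{H}_{s,a}| < (\gamma-1)!\,(\gamma\delta)^{\gamma-1}$.

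The double count then proceeds as follows. For each $U$-minimal core $(X,f)$ under consideration with $|X| < \gamma$, its witness structured sunflower supplies $\geq \gamma^{|X|}\delta$ distinct $F$-copies in $\mathcal{H}_{s,a}$ (distinctness because the petals of a nontrivial sunflower are nonempty and pairwise disjoint); for $|X|=\gamma$ the core is in bijection with a single $F$-copy in $\mathcal{H}_{s,a}$. Conversely, a fixed $(H,\pi) \in \mathcal{H}_{s,a}$ can lie in the witness sunflower of at most $2^{\gamma-1}$ cores, one per subset $X \subseteq V(H)$ containing $s$ (with $f = \pi|_X$ forced). Summing the contributions and dividing out, the number of cores per pair $(s,a)$ is at most $\gamma^{O(\gamma)} \delta^{\gamma-1}$; summing over $a \in V(F)$ and adding the at most $\gamma$ singleton cores containing $s$ gives $\delta^{O(1)}$ $U$-minimal cores per $s$, and therefore $|V(\Gamma)| \leq \delta^{O(1)} \cdot k$.

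For the remaining two stages, any two vertices of $X_i$ are at $G$-distance at most $\gamma - 1$, since they both lie in a connected $F$-copy of $G$; hence $\Gamma$ is a subgraph of $G^{\gamma-1}$ restricted to $V(\Gamma)$. Polynomial expansion is preserved under taking bounded powers and subgraphs, so $\Gamma$ lies in a polynomial-expansion class whose parameters depend only on $\mathcal{G}$ and $\gamma$, and therefore admits balanced separators of size $O(|V(\Gamma)|^\rho)$ for some $\rho < 1$; the standard recursive tree-decomposition construction from balanced separators then yields treewidth $O(|V(\Gamma)|^\rho) \leq \delta^{O(1)} k^\rho$, proving the lemma with $c := \rho$. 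The hardest part will be the double-counting step: I must simultaneously track how the witness-sunflower lower bound $\gamma^{|X|}\delta$ degrades across $|X| \in \{2,\ldots,\gamma\}$ (with the degenerate case $|X| = \gamma$, where the sunflower trivializes to a single copy, treated separately) and control the overcount factor $2^{\gamma-1}$ with which a single $F$-copy is shared among cores, while keeping the resulting dependence on $\delta$ polynomial. The hypothesis $\delta > \mathsf{wcol}_\gamma(G)$ enters by ensuring $\delta$ dominates the local sparsity of $G$, which in the polynomial-expansion setting is $O(1)$, so that the final constants depend only on $\mathcal{F}$ and $\mathcal{G}$.
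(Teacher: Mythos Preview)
Your Stage~2/3 rests on a false premise: polynomial expansion is \emph{not} preserved under bounded graph powers. Take the class of stars $\{K_{1,n}:n\geq 1\}$, which has treewidth~$1$; since $K_{1,n}^2=K_{n+1}$, its class of squares contains all cliques and fails even bounded expansion. Hence, although your observation $\Gamma\subseteq G^{\gamma-1}[V(\Gamma)]$ is correct, it gives you no control on the separators or treewidth of $\Gamma$. The paper instead works with the \emph{$\gamma$-augmentation} $G'$ of $G$, obtained by adding edges only along weak $\gamma$-reachability under a fixed ordering; this \emph{does} remain in $\mathcal{G}(\eta',0,\rho)$ by Lemma~\ref{lem-augmentation}. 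But $\Gamma$ is not a subgraph of $G'$ either (two vertices of the same $X_i$ need not be weakly reachable from one another), so the paper takes a sublinear-width tree decomposition $(T,\beta)$ of $G'[\bigcup_i(X_i\setminus U)]$ and then explicitly \emph{modifies} it into a decomposition $(T,\beta^*)$ of $\Gamma$, pulling each $X_i\setminus U$ onto a path of $T$. Bounding the bag blow-up $|\beta^*(t)|\leq\delta^{O(1)}|\beta(t)|$ is the technical heart of the proof: it is a charging argument through sets $Y_i\subseteq\text{WR}_\gamma(G,\sigma,v)$ whose overcharging is ruled out by Lemma~\ref{Olem-component} (a large sunflower of heavy cores with common core $K$ would yield a strictly smaller heavy core $(X_i^{\mathcal{C}},f_i^{\mathcal{C}})\prec(X_i,f_i)$ with $X_i^{\mathcal{C}}\nsubseteq U$, contradicting $U$-minimality). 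This modification step, not any closure property of the class, is where the work lies, and it is where $\delta>\mathsf{wcol}_\gamma(G)$ is actually used (via Fact~\ref{fact-connect}).

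Your Stage~1 also has a gap, though its \emph{conclusion} is correct. From ``$(\{s\},s\mapsto a)$ is not heavy'' you cannot deduce $|\mathcal{H}_{s,a}|<(\gamma-1)!(\gamma\delta)^{\gamma-1}$: non-heaviness forbids only a sunflower of $V(H)$'s with core \emph{exactly} $\{s\}$, whereas the sunflower lemma on $\{V(H)\setminus\{s\}\}$ hands you a sunflower with core $K'\cup\{s\}$ for some possibly nonempty $K'$, and $|\mathcal{H}_{s,a}|$ can be unbounded (think of many $F$-copies through $s$ all sharing a second vertex). The paper's route to the bound ``each $v\notin U$ hits $\delta^{O(1)}$ $U$-minimal cores'' applies the sunflower lemma to the cores $(X_i,f_i)$ themselves (Fact~\ref{fact-goodsunflower}) and then invokes Lemma~\ref{Olem-component} with $\mathcal{C}=\emptyset$ to exhibit $(K,f)\prec(X_i,f_i)$ with $K\nsubseteq U$, directly contradicting $U$-minimality.
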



The proof of Lemma~\ref{Olem-treewidth} is technical.
Before giving a sketch of the proof, we first explain how this lemma helps us.
Recall the branching procedure discussed before.
The first observation is that we actually only need to branch on $U$-minimal heavy cores.
Specifically, we require the heavy core $(X,f)$ picked in the first step of $\textsc{Branch}(U,\mathcal{X})$ to be \textit{$U$-minimal}.
With this modification, the collections $\mathcal{X}_1,\dots,\mathcal{X}_t$ generated still satisfy the first condition in Theorem~\ref{thm-subtwbranch}, because a set $S \subseteq V(G) \backslash U$ hits all heavy cores if and only if it hits all $U$-minimal heavy cores.
More importantly, the Gaifman graph of each $\mathcal{X}_i$ has treewidth sublinear in the size of a minimum hitting set of $\mathcal{X}_i$, by Lemma~\ref{Olem-treewidth}.
The second observation is that if the size of a minimum hitting set of a collection $\mathcal{X}_i$ is larger than $k$, then we can simply discard $\mathcal{X}_i$.
This is because the first condition in Theorem~\ref{thm-subtwbranch} only considers $S \subseteq V(G)$ with $|S| \leq k$.
Therefore, only keeping the collections $\mathcal{X}_i$ whose minimum hitting set has size at most $k$ is sufficient.
In this way, we can guarantee that the Gaifman graphs of all $\mathcal{X}_i$ have treewidth sublinear in $k$, as required in Theorem~\ref{thm-subtwbranch}.

\paragraph{Proof sketch of Lemma~\ref{Olem-treewidth}.}
In the rest of this section, we provide a high-level overview for the proof of Lemma~\ref{Olem-treewidth}.
We first need to show the following auxiliary lemma, which essentially states that whenever there are many heavy cores $(X_1,f_1),\dots,(X_p,f_p)$ forming a large sunflower, one can find certain smaller heavy cores inside each $X_i$.
This lemma heavily relies on the threshold $\gamma^{|X|} \delta$ we chose for a heavy core $(X,f)$.

\begin{lemma}[simplified version of Lemma~\ref{lem-heavysep}] \label{Olem-component}
    Let $p = \gamma^\gamma \delta$.
    Suppose $(X_1,f_1),\dots,(X_p,f_p)$ are heavy cores in $G$ satisfying that $|X_1| = \cdots = |X_p|$ and $X_1,\dots,X_p$ form a sunflower with core $K$ where $(f_1)_{|K} = \cdots = (f_p)_{|K}$.
    Define $f: K \rightarrow V(F)$ as the unique map satisfying $f = (f_1)_{|K} = \cdots = (f_p)_{|K}$.
    Then for any $i \in [p]$ and any set $\mathcal{C}$ of connected components of $F - f(K)$, $(X_i^{\mathcal{C}},f_i^{\mathcal{C}})$ is a heavy core in $G$ where $X_i^{\mathcal{C}} = K \cup (\bigcup_{C \in \mathcal{C}} f_i^{-1}(V(C)))$ and $f_i^{\mathcal{C}} = (f_i)_{|X_i^{\mathcal{C}}}$.
\end{lemma}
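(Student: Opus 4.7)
The plan is to construct a structured sunflower of $\gamma^{|X_i^{\mathcal{C}}|}\delta$ many $F$-copies in $G$ with core $(X_i^{\mathcal{C}},f_i^{\mathcal{C}})$ via a splicing procedure. The key structural observation is that, since $\mathcal{C}$ consists of connected components of $F-f(K)$, no edge of $F$ runs between $\bigcup_{C\in\mathcal{C}}V(C)$ and $\bigcup_{C\notin\mathcal{C}}V(C)$; so the ``$\mathcal{C}$-side'' and ``$\bar{\mathcal{C}}$-side'' of an $F$-copy can be glued along $K$ (which both sides share, with the common map $f$) without creating any missing edges outside the interface at $K$.

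I would fix an $F$-copy $(H_i^{\star},\pi_i^{\star})$ from the structured sunflower witnessing $(X_i,f_i)$, and for each $j\in[p]$ and each $(H_j^{t},\pi_j^{t})$ in the structured sunflower of $(X_j,f_j)$, define a spliced candidate $F$-copy whose vertex set is $(\pi_i^{\star})^{-1}(f(K)\cup\bigcup_{C\in\mathcal{C}}V(C))\cup(\pi_j^{t})^{-1}(\bigcup_{C\notin\mathcal{C}}V(C))$, with the natural map equal to $\pi_i^{\star}$ on the first piece and $\pi_j^{t}$ on the second. Whenever the two preimage sets are disjoint outside $K$, this is a valid $F$-copy in $G$: edges inside each side come from the respective original copy, edges between $K$ and either side come from the corresponding copy, and the component separation precludes any cross edges outside $K$. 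Moreover, the spliced copy contains $X_i^{\mathcal{C}}$ and agrees with $f_i^{\mathcal{C}}$ on it. The total pool of candidates has size $p\cdot\gamma^{|X_j|}\delta$, and I would extract $\gamma^{|X_i^{\mathcal{C}}|}\delta$ of them whose complements beyond $X_i^{\mathcal{C}}$ are pairwise disjoint, combining (i) the sunflower property of $(X_j,f_j)$'s sunflower across different $t$ for each fixed $j$, and (ii) the sunflower property of $X_1,\dots,X_p$ with core $K$ across different $j$.

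The hardest part will be controlling the collisions between the fixed ``surplus'' vertices contributed by $H_i^{\star}$ (those in $(\pi_i^{\star})^{-1}(\bigcup_{C\in\mathcal{C}}V(C))\setminus X_i^{\mathcal{C}}$) and the varying $\bar{\mathcal{C}}$-sides coming from the different $(H_j^{t},\pi_j^{t})$: the sunflower of the $X_j$'s only controls $X_j\setminus K$, whereas the $\bar{\mathcal{C}}$-sides live in the possibly larger extensions $V(H_j^{t})\setminus K$, so collisions can occur at vertices outside both $X_j$ and $X_i$. Managing this requires bounding the number of ``bad'' $(j,t)$-pairs for each surplus vertex by a function of $\gamma$ alone, crucially using the uniform size assumption $|X_1|=\cdots=|X_p|$ to make the accounting uniform across $j$. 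The threshold $p=\gamma^{\gamma}\delta$ is calibrated precisely so that, after discarding the bad pairs, the surviving subfamily is still of size at least $\gamma^{|X_i^{\mathcal{C}}|}\delta$ and satisfies the structured sunflower condition with the desired core.
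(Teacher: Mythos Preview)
Your splicing idea is correct in spirit and matches the paper's approach, but there is a genuine gap in the way you set it up. You fix a \emph{single} $F$-copy $(H_i^\star,\pi_i^\star)$ and use its $A$-side $(\pi_i^\star)^{-1}\bigl(f(K)\cup\bigcup_{C\in\mathcal{C}}V(C)\bigr)$ in \emph{every} spliced candidate. But this set equals $X_i^{\mathcal C}$ only when $\bigcup_{C\in\mathcal C}V(C)\subseteq f_i(X_i)$, which need not hold since $f_i$ is merely injective. In general the ``surplus'' vertices $(\pi_i^\star)^{-1}\bigl(\bigcup_{C\in\mathcal{C}}V(C)\bigr)\setminus X_i^{\mathcal C}$ are nonempty, and because they sit in \emph{every} candidate, the common intersection of your candidates strictly contains $X_i^{\mathcal C}$. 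Hence your candidates cannot form a sunflower with core exactly $X_i^{\mathcal C}$, which is what the definition of heavy core requires. The difficulty you flag---collisions between the surplus and the varying $\bar{\mathcal C}$-sides---is a separate issue, and resolving it would not repair this one.

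The fix is to vary the $A$-half as well. The paper takes the $A$-halves from \emph{different} members $(P_1,\phi_1),\dots$ of the structured sunflower witnessing $(X_i,f_i)$; since those copies form a sunflower with core $X_i$, their $A$-halves $\phi_j^{-1}(A)$ share exactly $X_i\cap\phi_1^{-1}(A)=f_i^{-1}(A)=X_i^{\mathcal C}$, and the surplus vertices (lying in $V(P_j)\setminus X_i$) are pairwise disjoint across $j$. Each such $A$-half is then glued to a $B$-half chosen from the witness of some $(X_{j'},f_{j'})$ with $j'\ne i$, selected to avoid the vertices already used. The paper organizes this as a proof by contradiction: assume the maximum structured sunflower with core $(X_i^{\mathcal C},f_i^{\mathcal C})$ has size $\Delta<\gamma^{|X_i^{\mathcal C}|}\delta$, and construct one more petal. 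This makes the ``avoidance'' bookkeeping cleaner, since you only need a single fresh $A$-half avoiding the $\Delta$ existing petals (possible because the witness of $(X_i,f_i)$ has $\gamma^{|X_i|}\delta>\gamma\Delta$ members), and then a single fresh $B$-half avoiding those petals plus the new $A$-half (possible because $p=\gamma^\gamma\delta$ and each $(X_{j'},f_{j'})$'s witness has $\gamma^{|X_{j'}|}\delta$ members, with $|X_{j'}|>|X_i^{\mathcal C}|$).
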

\begin{proof}[Proof sketch.]
We only need to consider the pair $(X_1^{\mathcal{C}},f_1^{\mathcal{C}})$.
If $\mathcal{C}$ contains all connected components of $F - f(K)$, then $(X_1^{\mathcal{C}},f_1^{\mathcal{C}}) = (X_1,f_1)$ and we are done.
Otherwise, $|X_1^{\mathcal{C}}| < |X_1| = \cdots = |X_p|$.
Our goal is to find a structured sunflower $(H_1,\pi_1),\dots,(H_\Delta,\pi_\Delta)$ in $G$ for $\Delta = \gamma^{|X_1^{\mathcal{C}}|} \delta$ whose core is $(X_1^{\mathcal{C}},f_1^{\mathcal{C}})$.
We say an $F$-copy $(H,\pi)$ is a \textit{candidate} if $X_1^{\mathcal{C}} \subseteq V(H)$ and $f_1^{\mathcal{C}} = \pi_{|X_1^{\mathcal{C}}}$.
Then our task becomes finding $\Delta$ candidates whose vertex sets are disjoint outside $X_1^{\mathcal{C}}$.

Where do these candidates come from?
In fact, we can construct them from the structured sunflowers that witness the heavy cores $(X_1,f_1),\dots,(X_p,f_p)$.
Let $A = f(K) \cup (\bigcup_{C \in \mathcal{C}} V(C))$ and $B = f(K) \cup (V(F) \backslash A)$.
Then $V(F)$ is the disjoint union of $A \backslash B$, $f(K)$, and $B \backslash A$.
Also, note that there is no edge in $F$ between $A \backslash B$ and $B \backslash A$.
Consider an $F$-copy $(P,\phi)$ in the structured sunflower that witnesses $(X_1,f_1)$ and another $F$-copy $(Q,\psi)$ in the structured sunflower that witnesses $(X_i,f_i)$ for some $i \in [p]$.
The key observation is the following: if $\phi^{-1}(A \backslash B) \cap \psi^{-1}(B \backslash A) = \emptyset$, then $H = P[\phi^{-1}(A)] \cup Q[\psi^{-1}(B)]$ is isomorphic to $F$ with the isomorphism $\pi: V(H) \rightarrow V(F)$ defined as $\pi(v) = \phi(v)$ for $v \in \pi^{-1}(A)$ and $\pi(v) = \psi(v)$ for $v \in \pi^{-1}(B)$ --- one can easily verify that $\pi$ is well-defined and is an isomorphism --- and furthermore $(H,\pi)$ is a candidate (for convenience, we call $P[\phi^{-1}(A)]$ the \textit{$A$-half} and $Q[\psi^{-1}(B)]$ the \textit{$B$-half} of the candidate).
We use this observation to construct the candidates.
The structured sunflower that witnesses $(X_1,f_1)$ has size $\gamma^{|X_1|} \delta > \Delta$, so we can take $\Delta$ $F$-copies $(P_1,\phi_1),\dots,(P_\Delta,\phi_\Delta)$ from it.
Then we use $P_1[\phi_1^{-1}(A)],\dots,P_\Delta[\phi_\Delta^{-1}(A)]$ as the $A$-halves of the candidates, each of which will be ``glued'' with a $B$-half to obtain a complete candidate.
We briefly discuss how to find the $B$-halves.
Outside $K$, the $B$-halves should be disjoint from each other and also disjoint from the $A$-halves.
Suppose we already found the $B$-halves for $P_1[\phi_1^{-1}(A)],\dots,P_{i-1}[\phi_{i-1}^{-1}(A)]$ and are now looking for the $B$-half for $P_i[\phi_i^{-1}(A)]$.
We say a set $S \subseteq V(G)$ is \textit{safe} if it is disjoint from all $A$-halves and the $B$-halves we have found.
We first find $j \in [p]$ such that $X_j \backslash K$ is safe.
Such an index exists since $X_1 \backslash K,\dots,X_p \backslash K$ are disjoint and $p$ is sufficiently large.
Then we further find an $F$-copy $(Q,\psi)$ in the structured sunflower that witnesses $(X_j,f_j)$ such that $V(Q) \backslash X_j$ is safe.
This is possible because the size of the structured sunflower is $\gamma^{|X_j|} \delta$, which is much larger than $\Delta = \gamma^{|X_1^{\mathcal{C}}|} \delta$ as $|X_1^{\mathcal{C}}| < |X_j|$.
Now we use $Q[\psi^{-1}(B)]$ as the $B$-half for $P_i[\phi_i^{-1}(A)]$.
In this way, we can successfully find all $B$-halves.
\end{proof}

Note that the above lemma directly implies the ``sparseness'' of $U$-minimal heavy cores outside $U$: every vertex in $V(G) \backslash U$ hits at most $\delta^{O(1)}$ (distinct) $U$-minimal heavy cores in $G$.
Why? Suppose a vertex $v \in V(G) \backslash U$ hits too many $U$-minimal heavy cores.
By the sunflower lemma and Pigeonhole principle, among these $U$-minimal heavy cores, we can find $(X_1,f_1),\dots,(X_p,f_p)$ satisfying the condition in the lemma.
The core $K$ of the sunflower $X_1,\dots,X_p$ is nonempty as $v \in \bigcap_{i=1}^p X_i = K$.
Also, $K \nsubseteq U$, since $v \in K \backslash U$.
Applying the lemma with $\mathcal{C} = \emptyset$, we see that $(K,f)$ is a heavy core, which contradicts the $U$-minimality of $(X_1,f_1),\dots,(X_p,f_p)$ because $(K,f) \prec (X_i,f_i)$ for all $i \in [p]$.
We omit the calculation for the maximum number of $U$-minimal heavy cores $v$ can hit, but the number turns out to be $\delta^{O(1)}$.

Now we sketch the proof of Lemma~\ref{Olem-treewidth}.
Let $(X_1,f_1),\dots,(X_r,f_r)$ be as in the lemma, and $k$ be the size of a minimum hitting set of $\{X_1 \backslash U,\dots,X_r \backslash U\}$.
As argued above, each vertex can hit $\delta^{O(1)}$ sets in $\{X_1 \backslash U,\dots,X_r \backslash U\}$, which implies $|\bigcup_{i=1}^r (X_i \backslash U)| = \delta^{O(1)} k$.
Fix an ordering $\sigma$ of the vertices of $G$ such that $\text{wcol}_\gamma(G,\sigma) = \text{wcol}_\gamma(G)$.
Let $G'$ be a supergraph of $G$ obtained by adding edges to connect pairs of vertices in which one is weakly $\gamma$-reachable (under $\sigma$) from the other, i.e., $V(G') = V(G)$ and $E(G') = E(G) \cup \{(u,v): u \in \text{WR}_\gamma(G,\sigma,v)\}$.
It turns out that the graph $G'$ also admits strongly sublinear separators, which implies the treewidth of $G'[\bigcup_{i=1}^r (X_i \backslash U)]$ is sublinear in $|\bigcup_{i=1}^r (X_i \backslash U)| = \delta^{O(1)} k$, i.e., bounded by $\delta^{O(1)} k^c$ for some $c < 1$.
Let $(T,\beta)$ be a minimum-width tree decomposition of $G'[\bigcup_{i=1}^r (X_i \backslash U)]$.
Our goal is to modify $(T,\beta)$ to a tree decomposition of the Gaifman graph $G^*$ of $\{X_1 \backslash U,\dots,X_r \backslash U\}$, without increasing its width too much.
The modification is done as follows.
For each $i \in [r]$, we pick a node $t_i \in V(T)$ such that $\beta(t_i) \cap (X_i \backslash U) \neq \emptyset$.
For two nodes $t,t' \in V(T)$, denote by $\pi_{t,t'}$ as the (unique) path in $T$ connecting $t$ and $t'$.
Then for each $t \in V(T)$ and each $i \in [r]$, define $\beta_i^*(t)$ as the set of all vertices $v \in X_i \backslash U$ such that $t$ is on the path $\pi_{t_i,t'}$ for some node $t' \in V(T)$ with $v \in \beta(t')$.
Set $\beta^*(t) = \bigcup_{i=1}^r \beta_i^*(t)$ for all $t \in V(T)$.

It is easy to verify that $(T,\beta^*)$ is a tree decomposition of $G^*$.
The tricky part is to bound its width.
We want $|\beta^*(t)| = \delta^{O(1)} \cdot |\beta(t)|$ for every $t \in V(T)$. Fix a node $t\in V(T)$. 
Let $I^* = \{i \in [r]: \beta_i^*(t) \neq \emptyset\}$.
Since $|\beta_i^*(t)| \leq |X_i \backslash U| \leq \gamma$, we have $|\beta^*(t)| \leq \gamma |I^*|$ and thus it suffices to show $|I^*| = \delta^{O(1)} \cdot |\beta(t)|$.
Now let $I = \{i \in [r]: \beta(t) \cap (X_i \backslash U) \neq \emptyset\}$.
We have seen that each vertex in $\beta(t)$ can hit at most $\delta^{O(1)}$ sets in $\{X_1 \backslash U,\dots,X_r \backslash U\}$.
So $|I| = \delta^{O(1)} \cdot |\beta(t)|$ and we only need to show $|I^* \backslash I| = \delta^{O(1)} \cdot |\beta(t)|$.

The high-level plan for bounding $|I^* \backslash I|$ is to apply a charging argument as follows.
For each $i \in I^* \backslash I$, we shall pick two vertices $v_i,v_i' \in X_i \backslash U$ and charge $i$ to a set $Y_i \subseteq X_i$ satisfying that $f_i(Y_i)$ separates $f_i(v_i)$ and $f_i(v_i')$ in $F$, i.e., $f_i(v_i)$ and $f_i(v_i')$ belong to different connected components of $F - f_i(Y_i)$.
By a careful construction, we can guarantee that the number of distinct $Y_i$'s is small.
Then using Lemma~\ref{Olem-component}, we can show that each set $Y \subseteq V(G)$ does not get charged too many times.
These two conditions together bound the size of $I^* \backslash I$.

Consider an index $i \in I^* \backslash I$.
We have $\beta_i^*(t) \neq \emptyset$ but $\beta(t) \cap (X_i \backslash U) = \emptyset$.
By the choice of $t_i$, $\beta(t_i) \cap (X_i \backslash U) \neq \emptyset$ and so we pick a vertex $v_i \in \beta(t_i) \cap (X_i \backslash U)$.
On the other hand, as $\beta_i^*(t) \neq \emptyset$, we can pick another vertex $v_i' \in \beta_i^*(t) \subseteq X_i \backslash U$.
Note that $v_i,v_i' \notin \beta(t)$, since $\beta(t) \cap (X_i \backslash U) = \emptyset$.
By the properties of a tree decomposition, the nodes $s \in T$ with $v_i \in \beta(s)$ (resp., $v_i' \in \beta(s)$) are connected in $T$, and we call the subtree of $T$ formed by these nodes the \textit{$v_i$-area} (resp., \textit{$v_i'$-area}) for convenience.
Why do $v_i,v_i'$ appear in $\beta_i^*(t)$ but not $\beta(t)$?
The only reason is that the $v_i$-area and the $v_i'$-area belong to different connected components in the forest $T - \{t\}$.
We can show that if $\delta > \text{wcol}_\gamma(G)$, then this situation happens only when $f_i(X_i \cap \beta(t))$ separates $f_i(v_i)$ and $f_i(v_i')$ in $F$.
We omit the details of this argument.
Now a natural idea is to directly set $Y_i = X_i \cap \beta(t)$.
But this seems a bad idea, as the number of distinct $Y_i$'s cannot be bounded with this definition.
Therefore, we need to construct $Y_i$ from $X_i \cap \beta(t)$ with an additional step as follows.
Let $\varPi$ be the set of all simple paths in $F$ from $f_i(v_i)$ to a vertex in $f_i(X_i \cap \beta(t))$ in which all internal nodes are in $V(F) \backslash f_i(X_i \cap \beta(t))$.
For every $u \in X_i$, we include $u$ in $Y_i$ if there exists $\pi \in \varPi$ such that $u$ is the largest vertex (under the ordering $\sigma$) in $f_i^{-1}(V(\pi))$.
It turns out that $f_i(Y_i)$ also separates $f_i(v_i)$ and $f_i(v_i')$ in $F$.
Furthermore, $Y_i$ has a very nice property: $Y_i \subseteq \text{WR}_\gamma(G,\sigma,v)$ for some $v \in \beta(t)$.
Again, we omit the proof of this property in this overview.

With the above construction, how many distinct $Y_i$'s can there be?
For each $v \in \beta(t)$, we have $|\text{WR}_\gamma(G,\sigma,v)| \leq \text{wcol}_\gamma(G,\sigma) = \text{wcol}_\gamma(G)$.
Thus, the nice property of each $Y_i$ and the fact $|Y_i| \leq \gamma$ guarantee that the number of distinct $Y_i$'s is at most $\text{wcol}_\gamma^\gamma(G) \cdot |\beta(t)|$, which is $O(|\beta(t)|)$.
Now it suffices to bound the number of times a set $Y \subseteq V(G)$ gets charged.
The intuition is the following.
Assume there are too many indices $i \in I^* \backslash I$ that are charged to the same set $Y$, in order to deduce a contradiction.
Then among the heavy cores $(X_i,f_i)$ corresponding to the indices charged to $Y$, we can find $p = \gamma^\gamma \delta$ of them satisfying the conditions in Lemma~\ref{Olem-component}, by the sunflower lemma and Pigeonhole principle.
Without loss of generality, suppose they are $(X_1,f_1),\dots,(X_p,f_p)$, where $X_1,\dots,X_p$ form a sunflower with core $K$ and $f = (f_1)_{|K} = \cdots = (f_p)_{|K}$.
Then $Y = Y_1 = \cdots = Y_p$.
Applying Lemma~\ref{Olem-component} with $\mathcal{C} = \emptyset$, we see that $(K,f)$ is a heavy core in $G$.
Since $(K,f) \prec (X_i,f_i)$ for all $i \in [p]$, we must have $K \subseteq U$, for otherwise $(X_1,f_1),\dots,(X_p,f_p)$ are not $U$-minimal.
As all $i \in [p]$ are charged to $Y$, we have $Y \subseteq \bigcap_{i=1}^p X_i = K$.
Recall the vertices $v_i,v_i' \in X_i \backslash U$ we picked when constructing $Y_i$.
We just consider $v_1$ and $v_1'$.
Neither $f(v_1)$ nor $f(v_1')$ is contained in $f(K)$, because $K \subseteq U$ and $v_i,v_i' \in X_i \backslash U$.
Let $C$ (resp., $C'$) be the connected component of $F - f(K)$ containing $v_1$ (resp., $v_1'$).
Note that $C \neq C'$.
Indeed, $f(v_1)$ and $f(v_1')$ lie in different connected components of $F - f_1(Y_1) = F - f_1(Y)$ and thus lie in different connected components of $F - f(K)$, since $f_1(Y) \subseteq f_1(K) = f(K)$.
Set $\mathcal{C} = \{C\}$ and let $(X_1^\mathcal{C},f_1^\mathcal{C})$ as defined in Lemma~\ref{Olem-component}.
Lemma~\ref{Olem-component} shows that $(X_1^\mathcal{C},f_1^\mathcal{C})$ is a heavy core.
Observe that $X_1^\mathcal{C} \nsubseteq U$, since $v_1 \in X_1^\mathcal{C} \backslash U$.
On the other hand, $v_1' \notin X_1^\mathcal{C}$, as $C' \notin \mathcal{C}$.
Thus, $X_1^\mathcal{C} \subsetneq X_1$ and $(X_1^\mathcal{C},f_1^\mathcal{C}) \prec (X_1,f_1)$.
But this contradicts the fact that $(X_1,f_1)$ is $U$-minimal.
As a result, we see that $Y$ cannot get charged too many times.
This bounds $|I^* \backslash I|$ and hence bounds $|\beta^*(t)|$.
The final bound we achieve is $|\beta^*(t)| = \delta^{O(1)} \cdot |\beta(t)|$ for all $t \in V(T)$.
As the width of $(T,\beta)$ is $\delta^{O(1)} k^c$ for $c < 1$, the width of $(T,\beta^*)$ is also $\delta^{O(1)} k^c$, so is the treewidth of $G^*$.
This completes the overview of the proof of Lemma~\ref{Olem-treewidth}.

\subsection{How to do subexponential branching} \label{sec-subexpbranch}

Next, we discuss how to achieve the subexponential bound on the number $t$ of collections generated by our algorithm.
Note that $t$ is at most the number of leaves of the branching tree.
So the key here is to have a branching tree with subexponential size.

To get some intuition, consider a stage of our branching procedure, where we are branching on a $U$-minimal heavy core $(X,f)$.
There is a structured sunflower $(H_1,\pi_1),\dots,(H_\Delta,\pi_\Delta)$ that witnesses $(X,f)$, where $\Delta = \gamma^{|X|} \delta$.
When we make the ``yes'' decision for $(X,f)$, what we gain is that the size of $\mathcal{X}$ increases by 1.
When we make the ``no'' decision for $(X,f)$, we add all vertices in $X$ to $U$, and by doing this we also gain something: originally the $F$-copies $(H_1,\pi_1),\dots,(H_\Delta,\pi_\Delta)$ can be hit by a single vertex in $X \backslash U$, but after the vertices in $X$ are added to $U$, we have to use $\Delta$ vertices outside $U$ to hit these $F$-copies since $V(H_1),\dots,V(H_\Delta)$ form a sunflower with core $X$.
So ideally, this could make the size of a minimum $\mathcal{F}$-hitting set of $G$ contained in $V(G) \backslash U$ increase by $\Delta > \delta$.
(Clearly, this is not the case in general. But let us assume it is true just for explaining the intuition.)
It turns out that, by a corollary of Lemma~\ref{Olem-component}, the size of $\mathcal{X}$ on any successful branch path cannot exceed $\delta^{O(1)} k$.
As such, the number of ``yes'' decisions along a successful path in the branching tree is at most $\theta_\mathsf{yes} = \delta^{O(1)} k$.
On the other hand, the number of ``no'' decisions along a successful path is at most $\theta_\mathsf{no} = k/\delta$.
Indeed, every ``no'' decision increases the size of a minimum $\mathcal{F}$-hitting set of $G$ contained in $V(G) \backslash U$ by at least $\delta$, and when we need more than $k$ vertices in $V(G) \backslash U$ to hit all $F$-copies in $G$, we know that the current path is not successful.
Thus, during the branching procedure, if we have made more than $\theta_\mathsf{yes}$ ``yes'' decisions (resp., $\theta_\mathsf{no}$ ``no'' decisions), we can stop branching further.
In this way, the branching tree has size $\binom{\theta_\mathsf{yes}+\theta_\mathsf{no}}{\theta_\mathsf{no}} = (\delta k)^{O(k/\delta)}$, which is subexponential in $k$ when setting $\delta = k^\varepsilon$ for a sufficiently small $\varepsilon > 0$.
This is the intuition about where the subexponential bound comes from.
Of course, the analysis does not actually work, because we cheated when bounding the number of ``no'' decisions.
In fact, the branching tree of our current algorithm does not have a subexponential size, and we have to further elaborate it.

The above intuition has been used (implicitly) in several subexponential branching algorithms for general hitting set~\cite{impagliazzo2001problems,santhanam2012limits}, which aim to sparsify the input set system.
A crucial reason why these algorithms have subexponential-size branching trees is that, during the branching procedure, they keep cleaning out ``redundant'' sets from the set system and only consider large sunflowers formed by the sets that survive.
In our setting, an $F$-copy $(H,\pi)$ is \textit{$U$-redundant} for $U \subseteq V(G)$ if there exists another $F$-copy $(H',\pi')$ such that $V(H')\nsubseteq U$ and $V(H') \backslash U \subsetneq V(H) \backslash U$.
Note that if $(H',\pi')$ is hit by a set $S \subseteq V(G) \backslash U$, then $(H,\pi)$ must also be hit by $S$.
Therefore, if $U$ is the ``undeletable'' set maintained in the branching procedure, intuitively, $U$-redundant $F$-copies are useless and can be ignored (because they will be anyway hit as long as the non-$U$-redundant ones are hit).
As aforementioned, the branching algorithms of~\cite{impagliazzo2001problems,santhanam2012limits} keep cleaning out the redundant sets and only branch on the cores of large sunflowers formed by the sets that are non-redundant.
By doing this, they can guarantee that the number of ``no'' decisions along any successful path is sublinear (which in turn implies the subexponential bound on the size of the branching tree).
Unfortunately, this is not a good idea for our problem.
Of course, in each step of our branching algorithm, we can choose to branch on only the heavy cores witnessed by a structured sunflower formed by non-$U$-redundant $F$-copies.
This can still give us the collections $\mathcal{X}_1,\dots,\mathcal{X}_t$ satisfying the first condition in Theorem~\ref{thm-subtwbranch}.
The main issue is the treewidth bound: Lemma~\ref{Olem-treewidth} heavily relies on the fact that $(X_1,f_1),\dots,(X_r,f_r)$ are \textit{$U$-minimal} heavy cores. 
If we restrict ourselves to heavy cores witnessed by non-$U$-redundant $F$-copies, we cannot guarantee the heavy cores in $\mathcal{X}$ to be $U$-minimal during the branching procedure.
Indeed, there can exist heavy cores $(X,f)$ and $(Y,g)$ with $(X,f) \prec (Y,g)$ such that $(Y,g)$ can be witnessed by non-$U$-redundant $F$-copies but $(X,f)$ cannot.
Thus, at some stage of the branching, it might happen that every $U$-minimal heavy core cannot be witnessed by non-$U$-redundant $F$-copies while there are still (non-$U$-minimal) heavy cores witnessed by non-$U$-redundant $F$-copies; in this situation, we are forced to consider heavy cores that are not $U$-minimal (and possibly add them to $\mathcal{X}$).
This entirely ruins the sublinear treewidth of the Gaifman graphs.

However, the insight that ``$U$-redundant $F$-copies can be ignored'' still turns out to be useful.
But we have to be more careful about which heavy cores should be considered and which can be ignored.
As aforementioned, only branching on the heavy cores witnessed by non-$U$-redundant $F$-copies does not work.
Thus, we try to loosen the condition of ``being witnessed by non-$U$-redundant $F$-copies'' as follows.
We say a heavy core $(X,f)$ is \textit{$U$-redundant} if \textit{every} $F$-copy $(H,\pi)$ with $(X,f) \preceq (V(H),\pi)$ is $U$-redundant.
Note that for a heavy core, ``being not $U$-redundant'' is strictly weaker than ``being witnessed by non-$U$-redundant $F$-copies''.
Furthermore, the former condition has a nice \textit{hereditary} property (which the latter condition does not have): for heavy cores $(X,f)$ and $(Y,g)$ such that $(X,f) \prec (Y,g)$, if $(Y,g)$ is not $U$-redundant, then $(X,f)$ is also not $U$-redundant.
This property is important and allows us to avoid the issue we had before (when trying to branch on heavy cores witnessed by non-$U$-redundant $F$-copies).
Now we say a heavy core $(X,f)$ is \textit{$U$-active} if it is $U$-minimal and \textit{not} $U$-redundant.
We modify our algorithm so that it only branches on $U$-active heavy cores.
We show that the collections $\mathcal{X}_1,\dots,\mathcal{X}_t$ generated still satisfy both conditions in Theorem~\ref{thm-subtwbranch} after this modification.
Clearly, the treewidth bound for the Gaifman graphs of $\mathcal{X}_1,\dots,\mathcal{X}_t$ still holds, because $U$-active heavy cores are $U$-minimal.
Also, we still have that if $S \subseteq V(G)$ is an $\mathcal{F}$-hitting set of $G$, then it hits some $\mathcal{X}_i$.
The only part slightly different is to show that if $S$ hits some $\mathcal{X}_i$, then it is an $\mathcal{F}$-hitting set of $G$.
Consider the point we generate $\mathcal{X}_i$.
At this point, $X \in \mathcal{X}$ for all $U$-active heavy cores $(X,f)$, $G[U]$ is $F$-free, and we set $\mathcal{X}_i = \{X \backslash U:X \in \mathcal{X}\}$.
Without loss of generality, we only need to consider the case $S \subseteq V(G) \backslash U$, as the sets in $\mathcal{X}_i$ are disjoint from $U$.
Assume $S$ hits $\mathcal{X}_i$ but $S$ is not an $\mathcal{F}$-hitting set of $G$, for the sake of contradiction.
Then there must exist a non-$U$-redundant $F$-copy $(H,\pi)$ with $S \cap V(H) = \emptyset$.
Indeed, as $S \subseteq V(G) \backslash U$, if $S$ hits all non-$U$-redundant $F$-copies in $G$, then it is an $\mathcal{F}$-hitting set of $G$.
Now $(V(H),\pi)$ is a heavy core, which is also not $U$-redundant.
As $S$ hits $\mathcal{X}_i$ (thus hits $\mathcal{X}$) and $\mathcal{X}$ contains all $U$-active heavy cores, we know that $(V(H),\pi)$ is not $U$-active, which further implies it is not $U$-minimal.
Hence, there exists a $U$-minimal heavy core $(X,f)$ such that $(X,f) \prec (V(H),\pi)$.
By the hereditary property of $U$-redundancy, $(X,f)$ is also not $U$-redundant and is thus $U$-active.
It follows that $X \in \mathcal{X}$ and $X \backslash U \in \mathcal{X}_i$.
However, as $S \cap V(H) = \emptyset$, we have $S \cap X = \emptyset$, which contradicts our assumption that $S$ hits $\mathcal{X}_i$.
Thus, if $S$ hits $\mathcal{X}_i$, $S$ must be an $\mathcal{F}$-hitting set of $G$.

Now we see that only branching on $U$-active heavy cores can still give us the collections satisfying the conditions in Theorem~\ref{thm-subtwbranch}.
This yields a smaller branching tree, as $U$-active heavy cores form a subset of $U$-minimal heavy cores.
However, this has not yet given us a sublinear bound on the number of ``no'' decisions.
We need the last elaboration on our branching algorithm.
Still, we keep branching on $U$-active heavy cores $(X,f)$.
When making a ``yes'' decision, we recursively call $\textsc{Branch}(U,\mathcal{X} \cup \{X\})$ as before.
The changes are made to the ``no'' decisions.
When making a ``no'' decision, instead of simply adding the vertices in $X$ to $U$, we further guess an additional set $P$ of vertices that are not in the (unknown) solution $S$ and add the vertices in $P$ to $U$ as well.
As such, at each stage, we have one ``yes'' decision and \textit{multiple} ``no'' decisions corresponding to different choices of $P$.
The choices of $P$ are as follows.
Take $F$-copies $(H_1,\pi_1),\dots,(H_\Delta,\pi_\Delta)$ that witness $(X,f)$, where $\Delta = \gamma^{|X|}\delta$.
Suppose $C_1,\dots,C_t$ are the connected components of $F - f(X)$.
Define $V_{i,j} = \pi_j^{-1}(V(C_i))$ for $(i,j) \in [t] \times [\Delta]$, which is the copy of $C_i$ in $H_j$.
For each $i \in [t]$, we pick at most $\gamma$ sets in $\{V_{i,1},\dots,V_{i,\Delta}\}$ and include them in $P$.
Formally, let 
\begin{equation*}
    \mathcal{P} = \left\{\bigcup_{i=1}^t \bigcup_{j \in J_i} V_{i,j}: J_1,\dots,J_t \in \mathcal{J}\right\},
\end{equation*}
where $\mathcal{J} = \{J \subseteq [\Delta]: |J| \leq \gamma\}$.
Then the choices of $P$ are just the sets in $\mathcal{P}$.
Specifically, for each $P \in \mathcal{P}$, we recursively call $\textsc{Branch}(U \cup X \cup P,\mathcal{X})$, which corresponds to a ``no'' decision.
Note that $|\mathcal{P}| = \Delta^{O(\gamma)} = \delta^{O(1)}$.
Thus, the degree of the branching tree becomes $\delta^{O(1)}$, but this does not influence the entire size of the branching tree too much, as $\delta = k^\varepsilon$.
Surprisingly, with such a twist, we can in fact make the number of ``no'' decisions sublinear in $k$.


Finally, we briefly discuss our analysis for the number of ``no'' decisions.
Unfortunately, as our branching algorithm is already rather different from the ones in~\cite{impagliazzo2001problems,santhanam2012limits}, their arguments is not applicable here.
Instead, we use a very different analysis, which takes advantage of the graph structure, or more specifically, the bounded weak coloring number of $G$, as well as the assumption that the graph $F$ is connected.
Consider an $\mathcal{F}$-hitting set $S$ of $G$.
When branching on a ($U$-active) heavy core $(X,f)$, we define \textit{$S$-correct} decisions as follows.
If $S \cap X \neq \emptyset$, then the ``yes'' decision is the only $S$-correct decision.
If $S \cap X = \emptyset$, then a ``no'' decision is $S$-correct iff its corresponding set $P \in \mathcal{P}$ satisfies \textbf{(i)} $S \cap P = \emptyset$ and \textbf{(ii)} $S \cap P' \neq \emptyset$ for any $P' \in \mathcal{P}$ with $P \subsetneq P'$ (in other words, the set $P$ we guess is a maximal set in $\mathcal{P}$ that is disjoint from $S$).
We say a path in the branching tree from the root is \textit{$S$-successful} if every decision on the path is $S$-correct.
It is clear that at any node of an $S$-successful path, the sets $U$ and $\mathcal{X}$ always satisfy that $S \cap U = \emptyset$ and $S$ is a hitting set of $\mathcal{X}$.

Let $S$ be an $\mathcal{F}$-hitting set of $G$ with $|S| \leq k$.
Our goal is to show that along any $S$-successful path in the branching tree, the number of ``no'' decisions is bounded by $O(k/\delta)$.
This is done by a subtle charging argument.
Fix an ordering $\sigma$ of $V(G)$ such that $\text{wcol}_\gamma(G,\sigma) = \text{wcol}_\gamma(G)$.
For $v \in V(G)$, define $\lambda_S(v) = |\{u \in S: v \in \text{WR}_\gamma(G,\sigma,u)\}|$.
Then we define a set
\begin{equation*}
    R = \{v \in V(G): \lambda_S(v) \geq \delta - \gamma - \text{wcol}_\gamma(G)\}.
\end{equation*}
We have $\sum_{v \in V(G)} \lambda_S(v) = \sum_{u \in S} |\text{WR}_\gamma(G,\sigma,u)| \leq \text{wcol}_\gamma(G,\sigma)\cdot |S| = \text{wcol}_\gamma(G) \cdot |S|$.
By an averaging argument, we deduce that $|R| \leq \frac{\text{wcol}_\gamma(G)}{\delta - \gamma - \text{wcol}_\gamma(G)} \cdot |S|$.
Note that $G$ is taken from a graph class of polynomial expansion and thus $\text{wcol}_\gamma(G) = O(1)$.
Therefore, if we choose $\delta$ much larger than $\gamma+\text{wcol}_\gamma(G)$, then $|R| = O(|S|/\delta) = O(k/\delta)$.
Our plan is (essentially) to charge every ``no'' decision on the $S$-successful path to a vertex in $R$, with the guarantee that each vertex in $R$ only gets charged $O(1)$ times.
If this can be done, then the number of ``no'' decisions is $O(k/\delta)$.
\begin{observation}
    Let $(X,f)$ be a heavy core in $G$, and $x \in X$ be the largest vertex under $\sigma$.
    If $S \cap X = \emptyset$, then $x \in R$.
    Furthermore, if $\delta > \textnormal{wcol}_\gamma(G)$, then $x \in \textnormal{WR}_\gamma(G,\sigma,u)$ for all $u \in X$.
\end{observation}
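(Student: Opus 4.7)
The plan is to unpack the definition of a heavy core to obtain a structured sunflower $(H_1,\pi_1),\dots,(H_\Delta,\pi_\Delta)$ of size $\Delta = \gamma^{|X|}\delta$ witnessing $(X,f)$: the $F$-copies satisfy $(\pi_i)_{|X}=f$ for every $i$, and the petals $V(H_i)\setminus X$ are pairwise disjoint. Both assertions of the observation will fall out of a common path-lifting argument against the weak-coloring number, with one extra pigeonhole step for the first assertion.

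For the second (``furthermore'') assertion, fix $u\in X$. Since $F$ is connected on $\gamma$ vertices, there is a path from $f(u)$ to $f(x)$ in $F$ of length at most $\gamma-1$; lift it through each $\pi_i$ to a path $P_i$ between $u$ and $x$ in $H_i\subseteq G$ of length at most $\gamma$. Call $i$ \emph{good} if the $\sigma$-maximum of $P_i$ is $x$; otherwise let $z_i$ be that maximum. Then $z_i >_\sigma x$ forces $z_i\in V(H_i)\setminus X$ since $x$ is the $\sigma$-maximum of $X$, and the subpath of $P_i$ from $x$ to $z_i$ (still of length at most $\gamma$, with $z_i$ as its $\sigma$-maximum) witnesses $z_i\in\textnormal{WR}_\gamma(G,\sigma,x)$. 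Because the $z_i$'s lie in pairwise disjoint petals they are pairwise distinct, so if no $i$ were good I would embed $\Delta\geq\delta>\textnormal{wcol}_\gamma(G)$ elements into $\textnormal{WR}_\gamma(G,\sigma,x)$, contradicting $|\textnormal{WR}_\gamma(G,\sigma,x)|\leq\textnormal{wcol}_\gamma(G)$; hence some $i$ is good and $x\in\textnormal{WR}_\gamma(G,\sigma,u)$.

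For the first assertion, I first use that $S\cap X=\emptyset$ together with $S$ being an $\mathcal{F}$-hitting set to pick $s_i\in S\cap(V(H_i)\setminus X)$ for every $i$; petal disjointness makes the $s_i$ distinct. Each $\pi_i(s_i)$ lies in $V(F)\setminus f(X)$, a set of size $\gamma-|X|\leq\gamma$, so by pigeonhole there exist $v^*\in V(F)\setminus f(X)$ and $I\subseteq[\Delta]$ with $|I|\geq \Delta/(\gamma-|X|)\geq\gamma^{|X|-1}\delta\geq\delta$ (using $|X|\geq 1$ since $x\in X$) such that $\pi_i(s_i)=v^*$ for all $i\in I$. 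I then lift a shortest $v^*$-to-$f(x)$ path in $F$ through each $\pi_i$, $i\in I$, and repeat the charging argument verbatim: at most $\textnormal{wcol}_\gamma(G)$ indices of $I$ are bad, and for every good $i\in I$ the lifted path certifies $x\in\textnormal{WR}_\gamma(G,\sigma,s_i)$, so each such $i$ contributes a distinct element of $S$ to $\lambda_S(x)$. This yields $\lambda_S(x)\geq|I|-\textnormal{wcol}_\gamma(G)\geq\delta-\textnormal{wcol}_\gamma(G)\geq\delta-\gamma-\textnormal{wcol}_\gamma(G)$, so $x\in R$. The main point requiring care is verifying that the witness $z_i$ in the charging step always falls inside the (disjoint) petal and never inside the shared core $X$, which is exactly the role of choosing $x$ to be the $\sigma$-maximum of $X$.
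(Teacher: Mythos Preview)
Your argument is correct. The second assertion is proved exactly as in the paper. For the first assertion, your route differs slightly from the paper's and is a little more complicated than necessary.

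You use an extra pigeonhole step to force many of the $s_i$ to land at a common vertex $v^*\in V(F)\setminus f(X)$, so that you can lift one fixed path $Q$ from $v^*$ to $f(x)$ and run the same good/bad charging. The paper avoids this alignment: instead of looking at a particular lifted path, it lets $v_i$ be the $\sigma$-maximum of \emph{all} of $V(H_i)$. Since $H_i$ is connected, $v_i\in\textnormal{WR}_\gamma(G,\sigma,x)$, and for $i$ with $v_i>_\sigma x$ one has $v_i\in V(H_i)\setminus X$, so petal disjointness gives at most $\textnormal{wcol}_\gamma(G)$ such indices. For every remaining $i$ the vertex $x$ is the $\sigma$-maximum of the entire $V(H_i)$, hence $x\in\textnormal{WR}_\gamma(G,\sigma,u)$ for \emph{every} $u\in V(H_i)$, in particular for $s_i$ (wherever it sits). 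This yields $\lambda_S(x)\geq\Delta-\textnormal{wcol}_\gamma(G)$ directly, with no pigeonhole and a marginally stronger count than your $\delta-\textnormal{wcol}_\gamma(G)$; both clear the threshold $\delta-\gamma-\textnormal{wcol}_\gamma(G)$. Your approach is perfectly valid, just slightly less economical.
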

\begin{proof}[Proof sketch.]
There exist $F$-copies $(H_1,\pi_1),\dots,(H_\Delta,\pi_\Delta)$ such that $V(H_1),\dots,V(H_\Delta)$ form a sunflower with core $X$, where $\Delta = \gamma^{|X|} \delta$.
Since $S \cap X = \emptyset$ and $S$ is an $\mathcal{F}$-hitting set, there exists $u_i \in S \cap (V(H_i) \backslash X)$ for all $i \in [\Delta]$.
Note that $u_1,\dots,u_\Delta$ are distinct, as $V(H_1) \backslash X,\dots,V(H_\Delta) \backslash X$ are disjoint.
Using the assumptions that $F$ is connected and $x$ is the largest vertex in $X$, we can deduce that $x \in \text{WR}_\gamma(G,\sigma,u_i)$ for at least $\Delta - \text{wcol}_\gamma(G)$ indices $i \in [\Delta]$.
As $u_1,\dots,u_\Delta \in S$, this implies $\lambda_S(x) \geq \Delta - \text{wcol}_\gamma(G) \geq \delta - \gamma - \text{wcol}_\gamma(G)$.
Thus, $x \in R$.
The second statement also follows easily from the facts that $F$ is connected and $x$ is the largest vertex in $X$.
\end{proof}

Consider a ``no'' decision on the $S$-successful path, and let $(X,f)$ be the heavy core on which the decision was made.
At the time we made the decision for $(X,f)$, we have $X \nsubseteq U$.
Pick an arbitrary vertex $y \in X \backslash U$.
As we made a ``no'' decision for $(X,f)$, we have $S \cap X = \emptyset$.
The above observation then implies $X \cap R \cap \text{WR}_\gamma(G,\sigma,y) \neq \emptyset$, since it contains the largest vertex in $X$.
We then charge the ``no'' decision for $(X,f)$ to the \textit{smallest} vertex in $X \cap R \cap \text{WR}_\gamma(G,\sigma,y)$.

It is non-obvious that why each vertex in $R$ only gets charged $O(1)$ times.
The intuition is roughly as follows.
Assume there are too many ``no'' decisions charged to the same vertex in $R$.
Let $(X_1,f_1),\dots,(X_r,f_r)$ be the heavy cores these ``no'' decisions are made on, where $r$ is very large.
By the sunflower lemma, we may assume that $X_1,\dots,X_r$ form a sunflower, without loss of generality.
Suppose the decisions for $(X_1,f_1),\dots,(X_r,f_r)$ are made in order.
After we made the ``no'' decisions for $(X_1,f_1),\dots,(X_{r-1},f_{r-1})$, the vertices in $X_1,\dots,X_{r-1}$ are all added to $U$.
In addition, each ``no'' decision here also adds a set $P$ of vertices to $U$, where $P$ is maximal among all choices that are disjoint from $S$ (because the ``no'' decision is $S$-correct).
We somehow show that using these vertices added to $U$, for every $F$-copy $(H,\pi)$ with $(X_r,f_r) \preceq (V(H),\pi)$, we can construct another $F$-copy $(H',\pi')$ satisfying $V(H') \backslash U \subsetneq V(H) \backslash U$; the construction of $H'$ is done by carefully replacing a part of $H$ that is not totally contained in $U$ with an isomorphic one that consists of those vertices added to $U$ (of course this construction relies on our charging rule as well).
It then follows that when we branch on $(X_r,f_r)$, every $F$-copy $(H,\pi)$ with $(X_r,f_r) \preceq (V(H),\pi)$ is $U$-redundant and hence $(X_r,f_r)$ is $U$-redundant.
But this contradicts the fact that we only branch on $U$-active heavy cores, and thus each vertex in $R$ cannot get charged too many times.
As a result, the number of ``no'' decisions on any $S$-successful path is bounded by $O(k/\delta)$.

According to the above discussion, we can set $\theta_\mathsf{no} = O(k/\delta)$ as the budget for ``no'' decisions, which makes the branching tree have subexponential size.
This completes the overview of how we achieve the subexponential bounds in Theorem~\ref{thm-subtwbranch}.



\section{Preliminaries} \label{sec-pre}


\paragraph{Basic notations.}
We use $\mathbb{N}$ to denote the set $\{1,2,3,\ldots\}$, and write $[n]=\{1,2,\ldots,n\}$ for short.
For a graph $G$, the notations $V(G)$ and $E(G)$ denote the set of vertices and the set of edges in $G$, respectively.
For a vertex subset $S\subseteq V(G)$, $N_G(S)=\{u\in V(G)\setminus S~\colon~ (u,x)\in E(G) \mbox{ for some }x\in S\}$ and 
$N_G[S]=N_G(S)\cup S$. 
A \textit{subgraph} of $G$ is a graph $G' = (V',E')$, denoted by $G'\subseteq G$, where $V' \subseteq V(G)$ and $E' \subseteq E(G)$.
The graph $G'$ is an {\em induced subgraph} of $G$ if $E'=\{(u,v)\in E(G)~\colon~u\in V' \mbox{ and } v\in V'\}$. 
For a set $V \subseteq V(G)$, the notation $G[V]$ denotes the subgraph of $G$ induced by $V$.
The notation $\omega(G)$ denotes the size (i.e., number of vertices) of a maximum clique in $G$.
Two graphs $G$ and $G'$ are \textit{isomorphic}, denoted by $G \simeq G'$, if there exists a bijective map $\pi:V(G) \rightarrow V(G')$ such that $(u,v) \in E(G)$ iff $(f(u),f(v)) \in E(G')$.
The function $\pi$ is called an \textit{isomorphism} between $G$ and $G'$.
For graphs $F$ and $G$, an \textit{$F$-copy} in $G$ is a pair $(H,\pi)$ where $H \subseteq G$ and $\pi: V(H) \rightarrow V(F)$ is an isomorphism between $H$ and $F$.
We say $G$ is \textit{$F$-free} if there is no $F$-copy in $G$.
For a set $\mathcal{F}$ of graphs, $G$ is \textit{$\mathcal{F}$-free} if it is $F$-free for all $F \in \mathcal{F}$.
Furthermore, we define $\mathcal{V}_\mathcal{F}(G) = \{V(H): H \text{ is a subgraph of } G \text{ isomorphic to some } F \in \mathcal{F}\}$. For a graph $G$, and subgraphs $G_1$ and $G_2$ of $G$, $G_1\cup G_2$ is the graph with vertex set $V(G_1)\cup V(G_2)$ and edge set $E(G_1)\cup E(G_2)$. 

\begin{lemma} \label{lem-isom}
    Let $G$ and $G'$ be two graphs, and $\pi:V(G) \rightarrow V(G')$ be a bijective map.
    Also, let $V_1,\dots,V_r \subseteq V(G)$ and $V_i' = \pi(V_i)$ for $i \in [r]$.
    If $G = \bigcup_{i=1}^r G[V_i]$, $G' = \bigcup_{i=1}^r G'[V_i']$, and $\pi_{|V_i}$ is an isomorphism between $G[V_i]$ and $G'[V_i']$ for all $i \in [r]$, then $\pi$ is an isomorphism between $G$ and $G'$.
\end{lemma}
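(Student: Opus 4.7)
The statement is essentially a routine ``local-to-global'' fact about isomorphisms: if a bijection is an isomorphism on each piece of a covering of $G$, and the images of the pieces cover $G'$, then the bijection is a global isomorphism. The plan is simply to verify the edge-preservation property in both directions, using the hypotheses in the obvious way.

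First I would note that $\pi$ is bijective by assumption, so the only thing to check is that $(u,v) \in E(G)$ if and only if $(\pi(u),\pi(v)) \in E(G')$. For the forward direction, take any edge $(u,v) \in E(G)$. Since $G = \bigcup_{i=1}^r G[V_i]$, this edge must appear in some $G[V_i]$, which forces $u,v \in V_i$. The hypothesis that $\pi_{|V_i}$ is an isomorphism between $G[V_i]$ and $G'[V_i']$ then gives $(\pi(u),\pi(v)) \in E(G'[V_i']) \subseteq E(G')$.

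For the reverse direction, take any $(u',v') \in E(G')$. Since $G' = \bigcup_{i=1}^r G'[V_i']$, we have $(u',v') \in E(G'[V_i'])$ for some $i$, so $u',v' \in V_i'$. Because $V_i' = \pi(V_i)$ and $\pi$ is a bijection, the preimages $u = \pi^{-1}(u')$ and $v = \pi^{-1}(v')$ lie in $V_i$. Using again that $\pi_{|V_i}$ is an isomorphism between $G[V_i]$ and $G'[V_i']$, the edge $(u',v') = (\pi(u),\pi(v))$ pulls back to an edge $(u,v) \in E(G[V_i]) \subseteq E(G)$.

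There is no real obstacle here; the only thing one has to be careful about is not confusing ``$\pi_{|V_i}$ is an isomorphism between the induced subgraphs $G[V_i]$ and $G'[V_i']$'' with the weaker statement that $\pi$ preserves edges on $V_i$. The former is what the hypothesis provides, and it supplies both directions of the biconditional on each piece, which is exactly what is needed for the reverse-direction argument above. The proof is therefore short and purely bookkeeping.
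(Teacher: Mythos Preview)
Your proof is correct and follows essentially the same approach as the paper: verify edge preservation in both directions by using the covering hypotheses $G=\bigcup_i G[V_i]$ and $G'=\bigcup_i G'[V_i']$ to localize any edge to some piece, and then apply the local isomorphism $\pi_{|V_i}$. If anything, your argument is slightly more explicit than the paper's in the reverse direction.
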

\begin{proof}
Let $u,v$ be two distinct vertices in $V(G)$. Let $u'=\pi(u)$ and $v'=\pi(v)$. Notice that $(u,v)\in E(G)$ if and only if there is an integer $i\in [r]$ such that $(u,v)\in E(G_i)$. Thus, since $\pi_{|V_j}$ is an isomorphism between $G[V_j]$ and $G'[V_j']$ for all $j \in [r]$, $(u,v)\in E(G)$ if and only if $(u',v')\in E(G')$. This completes the proof of the lemma.      
\end{proof}

\paragraph{Hitting set and Gaifman graphs.}
Let $\mathcal{A}$ be a collection of sets whose elements belong to a common universe $U$.
A set $S \subseteq U$ \textit{hits} $\mathcal{A}$ (or is a \textit{hitting set} of $\mathcal{A}$) if $S \cap A \neq \emptyset$ for all $A \in \mathcal{A}$.
The \textit{Gaifman graph} of $\mathcal{A}$ is a graph with vertex set $\bigcup_{A \in \mathcal{A}} A$ where two vertices $u$ and $v$ are connected by an edge if $u,v \in A$ for some $A \in \mathcal{A}$.
For a graph $G$ and a set $\mathcal{F}$ of graphs, an \textit{$\mathcal{F}$-hitting set} of $G$ is a hitting set of $\{V(H): H \subseteq G \text{ and } H \simeq F \text{ for some } F \in \mathcal{F}\}$.
Equivalently, an $\mathcal{F}$-hitting set of $G$ is a set $S \subseteq V(G)$ such that $G-S$ is $\mathcal{F}$-free.

\paragraph{Sunflowers and the sunflower lemma.}
Let $V_1,\dots,V_r$ be sets whose elements belong to a common universe $U$.
We say $V_1,\dots,V_r$ form a \textit{sunflower} if $V_1 \backslash X,\dots,V_r \backslash X$ are disjoint, where $X = \bigcap_{i=1}^r V_i$.
We call $X$ the \textit{core} of the sunflower.
\begin{lemma}[sunflower lemma] \label{lem-sunflower}
Let $\mathcal{A}$ be a collection of sets whose sizes are at most $p$.
If $|\mathcal{A}| > (r-1)^p \cdot p!$, then there exist $r$ sets in $\mathcal{A}$ which form a sunflower and such a sunflower can be computed in time polynomial in $|\mathcal{A}|$ and $r$. 
\end{lemma}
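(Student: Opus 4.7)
The plan is to prove the sunflower lemma by induction on the upper bound $p$ on the set sizes, following the classical Erd\H{o}s--Rado argument, and then to observe that each step of the induction is constructive and runs in polynomial time. The base case $p=1$ is immediate: every set in $\mathcal{A}$ is a singleton, and since $|\mathcal{A}| > r-1$ we may pick any $r$ of them; they automatically form a sunflower, with empty core if the chosen singletons are distinct and with a one-element core if some of them coincide.

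For the inductive step, assume the lemma holds for set sizes up to $p-1$, and let $\mathcal{A}$ satisfy $|\mathcal{A}| > (r-1)^p\cdot p!$ with every set of size at most $p$. First I would greedily extract a maximal sub-collection $M_1,\dots,M_s \in \mathcal{A}$ of pairwise disjoint sets. If $s \geq r$, then $M_1,\dots,M_r$ already form a sunflower with empty core and we are done. Otherwise $s \leq r-1$, the union $T = M_1 \cup \cdots \cup M_s$ has size at most $sp \leq (r-1)p$, and by maximality every $A \in \mathcal{A}$ intersects $T$. A pigeonhole argument then yields an element $t \in T$ contained in at least
\[
\frac{|\mathcal{A}|}{|T|} \;>\; \frac{(r-1)^p\cdot p!}{(r-1)p} \;=\; (r-1)^{p-1}\cdot(p-1)!
\]
sets of $\mathcal{A}$. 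Setting $\mathcal{A}_t = \{A \in \mathcal{A} : t \in A\}$ and $\mathcal{A}' = \{A \setminus \{t\} : A \in \mathcal{A}_t\}$, the collection $\mathcal{A}'$ has every set of size at most $p-1$ and cardinality strictly more than $(r-1)^{p-1}(p-1)!$, so the inductive hypothesis gives $r$ sets of $\mathcal{A}'$ that form a sunflower; reinserting $t$ into each of them produces $r$ sets of $\mathcal{A}$ forming a sunflower whose core additionally contains $t$.

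For the algorithmic claim, I would observe that each recursive level only needs to (i) greedily build a maximal disjoint sub-collection by a single scan through $\mathcal{A}$ and (ii) find a most-frequent element of $T$ by counting occurrences, both clearly polynomial-time in $|\mathcal{A}|$; the recursion depth is at most $p$, which is bounded by the maximum set size in the input. The main point to verify carefully is the arithmetic in the pigeonhole step so that the recursive instance meets the size threshold for $p-1$; no deeper obstacle is expected, as this is essentially the textbook proof of the Erd\H{o}s--Rado theorem.
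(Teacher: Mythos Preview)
Your proof is correct; it is the standard Erd\H{o}s--Rado induction on the maximum set size, and the algorithmic analysis is accurate. The paper itself does not supply a proof of this lemma --- it is stated in the preliminaries as the classical sunflower lemma and used as a black box throughout --- so there is no paper argument to compare against.
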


\paragraph{Graph separators and treewidth.}
We say a graph $G$ admits \textit{balanced $(\eta,\mu,\rho)$-separators} if for every induced subgraph $H$ of $G$, there exists $S \subseteq V(H)$ of size at most $\eta \cdot \omega^\mu(H) \cdot |V(H)|^\rho$ such that every connected component of $H - S$ contains at most $\frac{1}{2} |V(H)|$ vertices, where $\omega(H)$ denotes the size of a maximum clique in $H$.
We denote by $\mathcal{G}(\eta,\mu,\rho)$ the class of all graphs admitting balanced $(\eta,\mu,\rho)$-separators.
We only focus on the case $\rho < 1$, since all graphs are contained in $\mathcal{G}(1,0,1)$.

A \textit{tree decomposition} of a graph $G$ is a pair $(T,\beta)$ where $T$ is a tree and $\beta: T \subseteq 2^{V(G)}$ maps each node $t \in T$ to a set $\beta(t) \subseteq V(G)$ called the \textit{bag} of $t$ such that
\textbf{(i)} $\bigcup_{t \in T} \beta(t) = V(G)$, \textbf{(ii)} for any edge $(u,v) \in E(G)$, there exists $t \in V(T)$ with $u,v \in \beta(t)$, and \textbf{(iii)} for any $v \in V(G)$, the nodes $t \in V(T)$ with $v \in \beta(t)$ induce a subtree in $T$.
The \textit{width} of $(T,\beta)$ is $\max_{t \in T} |\beta(t)| - 1$.
The \textit{treewidth} of a graph $G$, denoted by $\mathbf{tw}(G)$, is the minimum width of a tree decomposition of $G$.

\begin{lemma}[\cite{DorakN2016}] \label{lem-separator=tw}
    Let $\mathcal{G} \subseteq \mathcal{G}(\eta,\mu,\rho)$ where $\eta, \mu \geq 0$ and $\rho < 1$.
    Then for any graph $G \in \mathcal{G}$ of $n$ vertices, we have $\mathbf{tw}(G) \leq \frac{2^\rho \eta \cdot \omega^{\mu}(G)}{2^\rho - 1} \cdot n^\rho$.
\end{lemma}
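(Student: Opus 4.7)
The plan is to prove the bound by induction on $n = |V(G)|$, recursively decomposing $G$ along balanced separators supplied by the assumption $G \in \mathcal{G}(\eta,\mu,\rho)$. Specifically, I would show by strong induction on $n$ that every induced subgraph $H$ of $G$ satisfies $\mathbf{tw}(H) + 1 \leq \eta \cdot \omega^\mu(G) \cdot \frac{2^\rho}{2^\rho - 1} \cdot |V(H)|^\rho$, where I work with $\omega(G)$ (rather than $\omega(H)$) as the ambient clique bound, which is valid because $\omega(H) \leq \omega(G)$ for every induced subgraph $H$.

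For the inductive step, consider any induced subgraph $H$ of $G$. By the definition of $\mathcal{G}(\eta,\mu,\rho)$, there exists $S \subseteq V(H)$ with $|S| \leq \eta \cdot \omega^\mu(H) \cdot |V(H)|^\rho \leq \eta \cdot \omega^\mu(G) \cdot |V(H)|^\rho$ such that every connected component of $H - S$ has at most $\frac{1}{2} |V(H)|$ vertices. Let $C_1, \ldots, C_q$ be these components. For each $i$, the induced subgraph $H[V(C_i) \cup S]$ is an induced subgraph of $G$ on at most $\frac{1}{2} |V(H)| + |S|$ vertices; more importantly, $C_i$ itself is an induced subgraph of $G$ on at most $\frac{1}{2} |V(H)|$ vertices, so by the inductive hypothesis it admits a tree decomposition $(T_i, \beta_i)$ of width strictly less than $\eta \cdot \omega^\mu(G) \cdot \frac{2^\rho}{2^\rho - 1} \cdot (|V(H)|/2)^\rho$. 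I then extend $(T_i, \beta_i)$ to a tree decomposition of $H[V(C_i) \cup S]$ by adding $S$ to every bag, and glue the $q$ extended decompositions by introducing a single new root node whose bag is $S$, connected to an arbitrary node of each $T_i$. One verifies via the three tree-decomposition axioms that this yields a valid tree decomposition of $H$: each edge of $H$ is either incident to $S$ (covered because $S$ lies in every bag of the extended decompositions) or lies within some $H[V(C_i) \cup S]$, and the subtree-connectivity condition is preserved because each vertex of $S$ appears in every bag.

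The maximum bag size of the resulting decomposition of $H$ is at most
\begin{equation*}
|S| + \max_i (\mathbf{tw}(C_i) + 1) \leq \eta \cdot \omega^\mu(G) \cdot |V(H)|^\rho + \eta \cdot \omega^\mu(G) \cdot \frac{2^\rho}{2^\rho - 1} \cdot \left(\frac{|V(H)|}{2}\right)^\rho.
\end{equation*}
Factoring out $\eta \cdot \omega^\mu(G) \cdot |V(H)|^\rho$ and using $2^{-\rho} \cdot \frac{2^\rho}{2^\rho - 1} = \frac{1}{2^\rho - 1}$, the bracketed factor becomes $1 + \frac{1}{2^\rho - 1} = \frac{2^\rho}{2^\rho - 1}$, which closes the induction and gives the claimed bound on $\mathbf{tw}(H) + 1$, hence on $\mathbf{tw}(G) + 1$, and in particular on $\mathbf{tw}(G)$. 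The base case $n \leq 1$ is trivial since the bound is nonnegative while the treewidth is $0$.

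The only subtlety I foresee is making sure the geometric series argument is legitimate, which requires precisely the hypothesis $\rho < 1$ so that $2^\rho > 1$ and $\frac{2^\rho}{2^\rho - 1}$ is finite; this is exactly why the separator must be strongly sublinear for the recursion to telescope into a sublinear treewidth bound. A second minor point is ensuring the clique factor $\omega^\mu$ does not blow up along the recursion, which is handled cleanly by replacing $\omega^\mu(H)$ by the larger quantity $\omega^\mu(G)$ at each step, since the inductive hypothesis is stated in terms of the global $\omega(G)$. No other obstacle appears: the gluing is standard, and the separator property is hereditary by the very definition of $\mathcal{G}(\eta,\mu,\rho)$.
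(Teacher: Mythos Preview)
Your proposal is correct and follows essentially the same approach as the paper: induction on the number of vertices, splitting along the balanced separator, adding $S$ to every bag of the component decompositions, and gluing via a root bag equal to $S$, with the identical geometric-series computation $1 + \tfrac{1}{2^\rho-1} = \tfrac{2^\rho}{2^\rho-1}$. One small remark: the finiteness of $\tfrac{2^\rho}{2^\rho-1}$ requires $\rho>0$ rather than $\rho<1$; the hypothesis $\rho<1$ is what makes the resulting bound \emph{sublinear}, not what makes the recursion close.
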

\begin{proof}
We prove the lemma by induction on $|V(G)|$.
Suppose the statement holds when all $G \in \mathcal{G}$ with $|V(G)| \leq n-1$.
Consider a graph $G \in \mathcal{G}$ with $|V(G)| = n$.
By assumption, there is a balanced separator $S \subseteq V(G)$ of $G$ with $|S| \leq \eta \cdot \omega^{\mu}(G)\cdot n^\rho$.
Let $C_1,\dots,C_r$ be the connected components of $G-S$.
Then $|V(C_i)| \leq \frac{n}{2}$ for all $i \in [r]$.
By our induction hypothesis, 
\begin{equation*}
    \mathbf{tw}(C_i) \leq \frac{2^\rho \eta \cdot \omega^{\mu}(C_i)}{2^\rho - 1} \cdot (\frac{n}{2})^\rho \leq \frac{\eta \cdot \omega^{\mu}(G)}{2^\rho-1} \cdot n^\rho.
\end{equation*}
For $i \in [r]$, let $(T_i,\beta_i)$ be a tree decomposition of $C_i$ such that $|\beta_i(t)| \leq \frac{\eta \cdot \omega^{\mu}(G)}{2^\rho-1} \cdot n^\rho + 1$ for all nodes $t \in V(T_i)$.
For convenience, we view each $T_i$ as a rooted tree by picking an arbitrary node $t_i \in V(T_i)$ as its root.
Let $T$ be a tree consisting of a root node with subtrees $T_1,\dots,T_r$.
For each node $t \in V(T)$, we define $\beta(t) = S$ if $t$ is the root of $T$ and $\beta(t) = \beta_i(t) \cup S$ if $t \in V(T_i)$.
We now verify that $(T,\beta)$ is a tree decomposition of $G$.
For a vertex $v \in S$, the bags of all nodes of $T$ contain $v$.
For a vertex $v \in V(C_i)$, the nodes of $T$ whose bags contain $v$ are exactly those in $T_i$, which are connected in $T$.
Consider an edge $(u,v) \in E(G)$.
If $u,v \in S$, then $u,v \in \beta(t)$ for all $t \in V(T)$.
Assume $u \notin S$ without loss of generality.
Then $u \in V(C_i)$ for some $i \in [r]$.
If $v \in V(C_i)$, then $u,v \in \beta_i(t)$ for some $t \in V(T_i)$ and hence $u,v \in \beta(t)$.
On the other hand, if $v \notin V(C_i)$, then $v \in S$.
Let $t \in V(T_i)$ be a node such that $u \in \beta_i(t)$.
Then $u,v \in \beta(t)$.
Therefore, $(T,\beta)$ is a tree decomposition of $G$.
Furthermore, 
\begin{equation*}
    |\beta(t)| \leq \frac{\eta \cdot \omega^{\mu}(G)}{2^\rho-1} \cdot n^\rho +1 + |S| \leq \frac{2^\rho \eta \cdot \omega^{\mu}(G)}{2^\rho - 1} \cdot n^\rho +1
\end{equation*}
for all $t \in V(T)$.
So we have $\mathbf{tw}(G) \leq \frac{2^\rho \eta \cdot \omega^{\mu}(G)}{2^\rho - 1} \cdot n^\rho$.
\end{proof}

\paragraph{Degeneracy and smallest-last ordering.}
Let $G$ be a graph where $|V(G)| = n$.
A \textit{smallest-last ordering} of $G$ is an ordering $(v_1,\dots,v_n)$ of the vertices of $G$ such that for every $i \in [n]$, $v_i$ is a vertex in $G[\{v_1,\dots,v_i\}]$ with the minimum degree.
We say $G$ is \textit{$d$-degenerate} if there exists an ordering $(v_1,\dots,v_n)$ of vertices of $G$ such that $|N_G(v_i) \cap \{v_1,\dots,v_i\}| \leq d$ for all $i \in [n]$.

\begin{fact}[\cite{matula1983smallest}] \label{fact-degen}
The following statements are equivalent.
\begin{itemize}
    \item $G$ is $d$-degenerate.
    \item There is a smallest-last ordering $(v_1,\dots,v_n)$ of $G$ satisfying $|N_G(v_i) \cap \{v_1,\dots,v_i\}| \leq d$ for all $i \in [n]$.
    \item Every smallest-last ordering $(v_1,\dots,v_n)$ of $G$ satisfies $|N_G(v_i) \cap \{v_1,\dots,v_i\}| \leq d$ for all $i \in [n]$.
\end{itemize}
\end{fact}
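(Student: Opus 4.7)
The plan is to establish the cycle of implications $(1)\Rightarrow(3)\Rightarrow(2)\Rightarrow(1)$. The implication $(2)\Rightarrow(1)$ is immediate, since a smallest-last ordering is in particular an ordering of $V(G)$, and the property in $(2)$ is exactly the definition of $d$-degeneracy. For $(3)\Rightarrow(2)$, I would just note that a smallest-last ordering of $G$ always exists: construct it from right to left, at each step placing as $v_i$ a vertex of minimum degree in the induced subgraph on the already-unplaced vertices. Hence the only substantive step is $(1)\Rightarrow(3)$.

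For $(1)\Rightarrow(3)$, the first ingredient is the standard heredity of degeneracy: if $G$ is $d$-degenerate with witnessing ordering $(u_1,\dots,u_n)$, then for every $V'\subseteq V(G)$ the subsequence $(u_{j_1},\dots,u_{j_m})$ of $(u_1,\dots,u_n)$ consisting of the vertices in $V'$ witnesses that $G[V']$ is also $d$-degenerate. Indeed, for each $\ell\in[m]$ one has
\[
N_{G[V']}(u_{j_\ell})\cap\{u_{j_1},\dots,u_{j_\ell}\}\;\subseteq\;N_G(u_{j_\ell})\cap\{u_1,\dots,u_{j_\ell}\},
\]
whose size is at most $d$ by assumption. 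In particular, every nonempty induced subgraph of $G$ contains at least one vertex of degree at most $d$ (namely the first vertex in any such restricted ordering).

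Now let $(v_1,\dots,v_n)$ be an arbitrary smallest-last ordering of $G$, and fix $i\in[n]$. Applying heredity to the induced subgraph $H_i := G[\{v_1,\dots,v_i\}]$ shows that $H_i$ contains some vertex of degree at most $d$. By the defining property of a smallest-last ordering, $v_i$ is a vertex of minimum degree in $H_i$, so the degree of $v_i$ in $H_i$ is also at most $d$. Unfolding the definition, this degree equals $|N_G(v_i)\cap\{v_1,\dots,v_i\}|$, giving exactly the inequality required by $(3)$.

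I do not foresee any real obstacle: the argument is a short consequence of the heredity of degeneracy and the definition of smallest-last. The only point that needs mild care is bookkeeping of the ordering under restriction to $V'\subseteq V(G)$, and carefully separating the degree of $v_i$ in $H_i$ (which matches the inequality in $(3)$) from its degree in $G$.
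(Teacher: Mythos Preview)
Your argument is correct and is the standard proof of this equivalence. Note that the paper does not actually give its own proof of this fact: it is stated with a citation to Matula and Beck~\cite{matula1983smallest} and used as a black box, so there is nothing to compare against beyond confirming that your reasoning is sound, which it is.
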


\begin{theorem}[\cite{matula1983smallest}] \label{thm-smallast}
    Given a graph $G$ of $n$ vertices and $m$ edges, a smallest-last ordering of $G$ can be computed in $O(n+m)$ time.
\end{theorem}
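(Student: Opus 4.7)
The plan is to recover the classical Matula--Beck algorithm and verify that its amortized cost is $O(n+m)$. I would build a bucket data structure: an array $D[0..n-1]$ of doubly linked lists where $D[d]$ holds every currently-present vertex whose degree in the current subgraph equals $d$. Each vertex stores a pointer to its node in the list it currently occupies and its current degree, so insertions and deletions are $O(1)$. Initialization computes all degrees by scanning the adjacency lists, then places each vertex into the appropriate bucket; this costs $O(n+m)$.

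Next I would produce the ordering backwards, assigning $v_n, v_{n-1}, \dots, v_1$ in this order. Maintain an index $\mathtt{mind}$ tracking the smallest $d$ with $D[d]\neq\emptyset$. At iteration $i$ (going from $n$ down to $1$), extract an arbitrary vertex $v$ from $D[\mathtt{mind}]$, set $v_i \gets v$, and for each neighbor $u$ of $v$ still present remove $u$ from its bucket, decrement its stored degree, and reinsert into the next-lower bucket. Then update $\mathtt{mind}$: if some neighbor had been in $D[\mathtt{mind}]$ it now sits in $D[\mathtt{mind}-1]$, so decrement by $1$; otherwise advance $\mathtt{mind}$ upward until a non-empty bucket is found. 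Correctness follows by an immediate induction: if the buckets accurately reflect the degrees of the currently present vertices, then $v$ is a vertex of minimum degree in $G[\{v_1,\dots,v_i\}]$, so the definition of a smallest-last ordering is satisfied, and after removing $v$ and updating its neighbors' buckets the invariant is restored.

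For the running time, the initialization phase is $O(n+m)$, and the total work spent on neighbor updates across all iterations is $O(\sum_v \deg_G(v)) = O(m)$, since each edge is examined at most twice (once per endpoint, at the iteration when that endpoint is removed). The key amortized accounting lies in the movement of $\mathtt{mind}$: each iteration either decreases $\mathtt{mind}$ by at most $1$ (because the new minimum degree is at least the previous minimum degree minus $1$, as only neighbors of $v$ have their degree changed and each drops by exactly $1$) or increases it by some amount while scanning upward. Since $\mathtt{mind}$ lies in $[0,n-1]$, the total decrease over the entire run is at most $n$, and thus the total increase is also $O(n)$; this bounds the total scanning work by $O(n)$. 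Summing, we obtain the stated $O(n+m)$ bound.

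The main obstacle is the amortized bound on the pointer movement, and the step that makes it go through is the observation that removing a vertex of current minimum degree $d$ cannot reduce the new minimum degree below $d-1$: any neighbor had degree $\geq d$ before the deletion and hence $\geq d-1$ after, while non-neighbors keep their degrees. Everything else (bucket bookkeeping, correctness, initialization) is standard linear-time work, and the implementation can be carried out with plain arrays of doubly linked lists without any auxiliary data structures.
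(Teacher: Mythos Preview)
The paper does not prove this theorem at all; it is simply cited from \cite{matula1983smallest} and stated without proof. Your proposal reconstructs the classical Matula--Beck bucket algorithm, and the argument is correct: the amortized bound on the movement of $\mathtt{mind}$ via the observation that deleting a minimum-degree vertex lowers the minimum degree by at most $1$ is exactly the right idea. One tiny imprecision in your update rule for $\mathtt{mind}$: in the ``otherwise'' case you should first check whether $D[\mathtt{mind}]$ itself is still nonempty (it may contain vertices other than $v$) before scanning upward, but this does not affect the asymptotic analysis.
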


\paragraph{Weak coloring numbers.}
Let $G$ be a graph and $\sigma$ be an ordering of $V(G)$.
For $u,v \in V(G)$, we write $u <_\sigma v$ if $u$ is before $v$ under the ordering $\sigma$.
The notations $>_\sigma$, $\leq_\sigma$, $\geq_\sigma$ are defined similarly.
For an integer $r \geq 0$, $u$ is \textit{weakly $r$-reachable} from $v$ \textit{under $\sigma$} if there is a path $\pi$ between $v$ and $u$ of length at most $r$ such that $u$ is the largest vertex on $\pi$ under the ordering $\sigma$, i.e., $u \geq_\sigma w$ for all $w \in V(\pi)$.
Let $\text{WR}_r(G,\sigma,v)$ denote the set of vertices in $G$ that are weakly $r$-reachable from $v$ under $\sigma$.
The \textit{weak $r$-coloring number} of $G$ \textit{under $\sigma$} is defined as $\mathsf{wcol}_r(G,\sigma) = \max_{v \in V(G)} |\text{WR}_r(G,\sigma,v)|$.
Finally, the \textit{weak $r$-coloring number} of $G$ is defined as $\mathsf{wcol}_r(G) = \min_{\sigma \in \Sigma(G)} \mathsf{wcol}_r(G,\sigma)$.

\begin{lemma}[\cite{DorakN2016}] \label{lem-edgenumber}
    Let $\mathcal{G} \subseteq \mathcal{G}(\eta,0,\rho)$ where $\eta \geq 0$ and $\rho < 1$.
    For any graph $G \in \mathcal{G}$ of $n$ vertices, $|E(G)| = \eta^{O(1)} \cdot n$ and $G$ is $\eta^{O(1)}$-degenerate.
    The constants hidden in $O(\cdot)$ only depend on $\rho$.
\end{lemma}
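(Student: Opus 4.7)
The plan is to deduce Lemma~\ref{lem-edgenumber} from the Dvo\v{r}\'{a}k--Norin characterization~\cite{DorakN2016} of graph classes with strongly sublinear balanced separators: such a class is equivalently a class of polynomial expansion. Specifically, from the separator guarantee $|S|\leq \eta\cdot |V(H)|^{\rho}$ on every induced subgraph $H$ of every $G\in\mathcal{G}\subseteq\mathcal{G}(\eta,0,\rho)$, their theorem yields a polynomial upper bound on the greatest reduced average density $\nabla_0(G)=\max_{H\subseteq G}|E(H)|/|V(H)|$ of the form $\nabla_0(G)\leq \eta^{O(1)}$, where the hidden exponent depends only on $\rho$.

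Once this is in place, the first claim is immediate: taking $H=G$ we obtain $|E(G)| \leq \nabla_0(G)\cdot n \leq \eta^{O(1)}\cdot n$, with the hidden exponent depending only on $\rho$.

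For the degeneracy claim, the key observation is that admitting balanced $(\eta,0,\rho)$-separators is a hereditary property by definition: any induced subgraph of $G$ again lies in $\mathcal{G}(\eta,0,\rho)$. Therefore the density bound applies uniformly to every induced subgraph $H$, giving $|E(H)|\leq \eta^{O(1)}\cdot |V(H)|$, and hence $H$ contains a vertex whose degree is at most twice the average degree, that is, at most $\eta^{O(1)}$. Iteratively peeling off such a minimum-degree vertex yields an elimination ordering certifying that $G$ is $\eta^{O(1)}$-degenerate, as required.

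The only nontrivial point in the proof is verifying the quantitative form $\nabla_0(G)\leq \eta^{O(1)}$ from the Dvo\v{r}\'{a}k--Norin reduction; this is the step I would take most care with. Fortunately, their construction produces a polynomial expansion function explicitly from the separator guarantee, with coefficients polynomial in $\eta$ and degree controlled by $1/(1-\rho)$, so both the constant and the exponent in the final bound can be isolated to depend only on $\rho$, matching the statement of the lemma.
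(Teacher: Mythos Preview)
The paper does not prove this lemma; it is stated with a citation to~\cite{DorakN2016} and used as a black box. Your proposal is essentially a correct sketch of the argument underlying that citation: invoke the Dvo\v{r}\'{a}k--Norin theorem to obtain a polynomial bound $\nabla_0(G)\leq \eta^{O(1)}$ with exponent depending only on $\rho$, read off the edge bound directly, and derive degeneracy from the hereditary nature of the separator hypothesis together with the uniform density bound on induced subgraphs. There is nothing further to compare against in the paper itself.
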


\begin{lemma}[\cite{nevsetvril2008grad}] \label{lem-nablawcol}
    Let $\mathcal{G} \subseteq \mathcal{G}(\eta,0,\rho)$ where $\eta \geq 0$ and $\rho < 1$.
    Then for any $G \in \mathcal{G}$ and $r \geq 0$, $\textnormal{wcol}_r(G) \leq (\eta r)^{O(r^2)}$.
    The constant hidden in $O(\cdot)$ only depends on $\rho$.
\end{lemma}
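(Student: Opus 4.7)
The plan is to prove this by constructing an explicit ordering $\sigma$ of $V(G)$ from nested balanced separators and then bounding the weakly $r$-reachable sets by induction on the recursion depth. Since $\mathcal{G} \subseteq \mathcal{G}(\eta, 0, \rho)$ and $\mu = 0$, every induced subgraph $H$ admits a balanced separator of size at most $\eta |V(H)|^\rho$ (the $\omega(H)^\mu$ factor collapses to $1$), so we may recurse freely without picking up any clique-number blow-up.

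First I would construct $\sigma$ by a recursive ``separate and push to the tail'' procedure: given an induced subgraph $H$, compute a balanced separator $S_H$ with $|S_H| \leq \eta |V(H)|^\rho$, place $S_H$ at the tail of the unfilled suffix of $\sigma$ (so $S_H$ is strictly larger under $\sigma$ than anything placed later in the recursion), and recurse into each connected component of $H - S_H$. This associates to each vertex $v$ a canonical recursion node $t(v)$ at which $v$ was separated off, and a recursion depth; by the balancedness, the subproblem at depth $i$ has at most $|V(G)|/2^i$ vertices.

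Next I would bound $|\text{WR}_r(G,\sigma,v)|$. If $u \in \text{WR}_r(G,\sigma,v)$ via a path $P$ of length $\le r$, then because $u$ is the $\sigma$-maximum on $P$ and vertices of $S_{t(u)}$ are larger than anything placed strictly below $t(u)$, the path $P$ must lie entirely in the subproblem at $t(u)$ and $u$ itself lies in $S_{t(u)}$. This gives a combinatorial address for $u$: the sequence of recursion nodes $(t(v) = t_0, t_1, \dots, t_\ell = t(u))$ along the root-path from $t(v)$ toward $t(u)$ is of length at most $r$ (since each separator crossing consumes at least one edge of $P$, and crossings happen only along a monotone path in the recursion tree). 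The count of vertices with a given address contributes a factor of $|S_{t_i}| \leq \eta (|V(G)|/2^{i})^\rho$ at each level, and the product over $\ell \leq r$ levels, summed over the $(\text{const})^r$ ways to interleave the address with $P$'s edge uses, telescopes to a bound of the form $(\eta r)^{O(r^2)}$ after absorbing the geometric $2^{-i\rho}$ factors. Maximising over $v$ gives $\text{wcol}_r(G,\sigma) \leq (\eta r)^{O(r^2)}$, with hidden constant depending only on $\rho$ (through the geometric series $\sum 2^{-i\rho}$).

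The main obstacle is the bookkeeping: the $r^2$ in the exponent is tight only if one argues simultaneously on the depth of the recursion tree (contributing one factor of $r$) and on the length of the weakly reachable path (contributing another), and one must check that constants do not accumulate in $r$ beyond the claimed $O(r^2)$. The cleanest way to organise this is an induction on $r$ mirroring the standard Nešetřil--Ossona de Mendez argument for bounded expansion, combined with the polynomial separator bound to upgrade the qualitative bound to the explicit polynomial-in-$\eta r$ form.
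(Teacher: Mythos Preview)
The paper does not prove this lemma; it is quoted from Ne\v{s}et\v{r}il--Ossona de Mendez and used as a black box. So there is no ``paper's own proof'' to compare against, only the cited literature.

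Your proposed argument has a genuine gap. The crucial step is the claim that the address $(t(v)=t_0,\dots,t_\ell=t(u))$ has length $\ell\le r$ because ``each separator crossing consumes at least one edge of $P$''. This is false for the separator-tree ordering you describe: a single edge of $G$ can connect a vertex deep in the recursion to a vertex in the root separator, jumping arbitrarily many levels at once. Nothing forces $P$ to visit the intermediate separators $S_{t_1},\dots,S_{t_{\ell-1}}$. Consequently $u$ can lie in \emph{any} ancestor separator of $t(v)$, and the only bound you get on $|\text{WR}_r(G,\sigma,v)|$ from this ordering is $\sum_i |S_{t_i}| \le \sum_i \eta (n/2^i)^\rho = O(\eta n^\rho)$, which is exactly the treewidth bound of Lemma~\ref{lem-separator=tw} and depends on $n$. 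The ``telescoping'' you invoke cannot remove this $n^\rho$ factor, because it was never cancelled against anything.

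The route taken in the cited reference is different and does not use a single global separator tree. One first shows that every $r$-shallow minor of a graph in $\mathcal{G}(\eta,0,\rho)$ again lies in some $\mathcal{G}(\eta',0,\rho)$ with $\eta'$ polynomial in $\eta$ and $r$ (contracting radius-$r$ branch sets inflates separators by a controlled factor), so by Lemma~\ref{lem-edgenumber} the greatest reduced average density satisfies $\nabla_r(G)\le (\eta r)^{O(1)}$. Then one applies the general inequality relating weak coloring numbers to $\nabla$ (Zhu; Ne\v{s}et\v{r}il--Ossona de Mendez), which has the form $\mathrm{wcol}_r(G)\le \big(\nabla_{r}(G)\big)^{O(r)}$ and is proved by an iterative/greedy ordering argument rather than a single recursive decomposition. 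Composing the two polynomial bounds yields $(\eta r)^{O(r^2)}$ with the hidden constant depending only on $\rho$.
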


\paragraph{Augmentations.}
Let $\mathcal{G}$ be a graph class, and $\gamma \geq 0$ be an integer.
The \textit{$\gamma$-augmentation} of $\mathcal{G}$ is another graph class $\mathcal{G}'$ defined as follows.
For every graph $G \in \mathcal{G}'$ and every ordering $\sigma$ of $V(G)$ such that $\text{wcol}_\gamma(G,\sigma) = \text{wcol}_\gamma(G)$, we include in $\mathcal{G}'$ a graph $G'$, which is obtained from $G$ by adding edges $(u,v)$ with $v \in \text{WR}_\gamma(G,\sigma,u)$.
The augmentation has the following property.

\begin{lemma} \label{lem-augmentation}
    Let $\mathcal{G} \subseteq \mathcal{G}(\eta,0,\rho)$ where $\eta \geq 0$ and $\rho < 1$.
    Also, let $\mathcal{G}'$ be the $\gamma$-augmentation of $\mathcal{G}$.
    Then $\mathcal{G}' \subseteq \mathcal{G}(\eta',0,\rho)$ for $\eta' = (\eta \gamma)^{O(\gamma^2)}$.
    The constant hidden in $O(\cdot)$ only depends on $\rho$.
\end{lemma}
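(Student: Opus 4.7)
}
Let $G'\in\mathcal{G}'$ be the $\gamma$-augmentation of some $G\in\mathcal{G}$ with respect to an ordering $\sigma$ satisfying $\textnormal{wcol}_\gamma(G,\sigma)=\textnormal{wcol}_\gamma(G)$. Applying Lemma~\ref{lem-nablawcol} to $G$ yields $c:=\textnormal{wcol}_\gamma(G)\leq(\eta\gamma)^{O(\gamma^2)}$, and this single structural bound is the source of the constant $\eta'$ that appears in the conclusion. Fix any $V'\subseteq V(G')$ with $n'=|V'|$; it is enough to construct a balanced separator of $H':=G'[V']$ of size at most $(\eta\gamma)^{O(\gamma^2)}\cdot (n')^\rho$. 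Note that every edge of $G'$ is either an edge of $G$ or is witnessed by a $G$-path of length at most $\gamma$, so $G'$ sits inside the $\gamma$-power $G^\gamma$; it will suffice to give a balanced separator for $G^\gamma[V']\supseteq H'$.

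The plan is to lift the problem to $G$. First, for each augmentation edge $(u,v)$ of $H'$, I would fix a witness $G$-path $P(u,v)$ of length at most $\gamma$ whose $\sigma$-maximum is one of the endpoints, and define $V^+\subseteq V(G)$ to be $V'$ together with the internal vertices of all these witness paths. Because every vertex of $V'$ has at most $c$ weakly $\gamma$-reachable vertices, the number of augmentation edges in $H'$ is at most $cn'$, and therefore $|V^+|\leq (1+(\gamma-1)c)\,n' = O(\gamma c)\cdot n'$. Now apply the $(\eta,0,\rho)$-separator assumption to the induced subgraph $G[V^+]$: this produces $S^+\subseteq V^+$ with $|S^+|\leq\eta\cdot|V^+|^\rho = O_\rho(\eta(\gamma c)^\rho)\cdot(n')^\rho$. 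A standard iteration of the separator inside any component that contains more than $n'/2$ vertices of $V'$ makes $S^+$ \emph{$V'$-balanced}, i.e.\ every component of $G[V^+]-S^+$ contains at most $n'/2$ vertices of $V'$; since $\rho<1$, this iteration only inflates $|S^+|$ by an $O_\rho(1)$ factor.

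Finally, push $S^+$ back into $V'$ by setting $S=(S^+\cap V')\cup T$, where $T\subseteq V'$ is a small set of extra vertices inserted to kill every augmentation edge of $H'$ whose witness path crosses $S^+\setminus V'$ at some internal vertex. For such a broken edge $(u,v)$, include (say) the $\sigma$-smaller endpoint in $T$. Bounding $|T|$ is the key combinatorial step: for each $x\in S^+\setminus V'$, if $x$ is an internal vertex of the witness path of $(u,v)$, then the $\sigma$-maximum $v$ of that path lies in $\textnormal{WR}_\gamma(G,\sigma,x)$, giving at most $c$ options; and given $v$, the other endpoint $u$ must be connected to $x$ by a $G$-path of length at most $\gamma$ on which $v$ is the $\sigma$-maximum, so an inductive unwinding of the weak-reachability relation along that path bounds the number of candidates for $u$ by $c^{O(\gamma)}$. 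Hence $|T|\leq c^{O(\gamma)}\cdot|S^+|$, and altogether $|S|\leq (\eta\gamma)^{O(\gamma^2)}\cdot (n')^\rho$. Every edge of $H'-S$ is then either a $G$-edge in $G[V^+]-S^+$ or an augmentation edge whose witness path avoids $S^+$; either way its endpoints lie in the same component of $G[V^+]-S^+$, so each component of $H'-S$ contains at most $n'/2$ vertices of $V'$, as required.

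The main obstacle is the bound on $|T|$ in the final step. The naive count, which only tracks how many witness paths pass through a fixed $x$, gives a total of order $\gamma c\cdot n'$ broken edges — far too many for a sublinear separator. The refined estimate above, chaining the weak-coloring bound along the witness path from $x$ back to $u$, is what brings the correction down to the subpolynomial $c^{O(\gamma)}$ per vertex of $S^+$, and is precisely where the constant $(\eta\gamma)^{O(\gamma^2)}$ arises. Once this combinatorial bound is in place, everything else — choice of $V^+$, invocation of the separator of $G[V^+]$, iteration to $V'$-balance, and the balance check for $H'-S$ — is routine.
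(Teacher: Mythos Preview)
Your approach is more elaborate than the paper's, which never enlarges $V'$ to any $V^+$: the paper takes a balanced separator $S$ of $G[V]$ directly and sets $S'=\bigcup_{s\in S}\bigl(\textnormal{WR}_\gamma(G,\sigma,s)\cap V\bigr)$, of size at most $\textnormal{wcol}_\gamma(G)\cdot|S|$. The paper's key point is that whenever the witness path of an augmentation edge passes through some $s$, the $\sigma$-\emph{larger} endpoint of that edge lies in $\textnormal{WR}_\gamma(G,\sigma,s)$, so deleting $S'$ already severs every such edge. This single observation is also exactly what repairs the gap in your argument below.

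The gap is in your bound on $|T|$. You insert the $\sigma$-\emph{smaller} endpoint $u$ of each broken edge into $T$ and claim that, for fixed $x\in S^+\setminus V'$ and fixed $\sigma$-larger endpoint $v\in\textnormal{WR}_\gamma(G,\sigma,x)$, an ``inductive unwinding of the weak-reachability relation'' along the sub-path from $x$ to $u$ gives at most $c^{O(\gamma)}$ candidates for $u$. This does not follow: the weak coloring number bounds $|\textnormal{WR}_\gamma(G,\sigma,z)|$ for each $z$, but says nothing about how many vertices can have a prescribed vertex in their reachability set, nor about how many short paths descend from $x$ in the order. Concretely, let $G$ be a star with centre $x$, one leaf $v>_\sigma x$, and $N$ leaves $u_1,\dots,u_N<_\sigma x$; then $\textnormal{wcol}_2(G,\sigma)=3$, yet with $V'=\{v,u_1,\dots,u_N\}$ every augmentation edge $(u_i,v)$ has witness path $u_i\text{--}x\text{--}v$ and your rule forces all $N$ vertices $u_i$ into $T$. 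The fix is to put the $\sigma$-\emph{larger} endpoint into $T$ instead (equivalently, take $T=\bigcup_{x\in S^+}\bigl(\textnormal{WR}_\gamma(G,\sigma,x)\cap V'\bigr)$); then $|T|\le c\,|S^+|$ immediately, with no unwinding needed, and the rest of your argument goes through.
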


\begin{proof}
Let $G' \in \mathcal{G}'$.
By the construction of $\mathcal{G}'$, there exists $G \in \mathcal{G}$ and an ordering $\sigma$ of $V(G)$ satisfying $\text{wcol}_\gamma(G) = \text{wcol}_\gamma(G,\sigma)$ such that $G'$ is obtained from $G$ by adding edges $(u,v)$ with $v \in \text{WR}_\gamma(G,\sigma,u)$.
We have $V(G) = V(G')$.
Then set $\eta' = \eta \cdot \text{wcol}_\gamma(G)$.
Thus, $\eta' = (\eta \gamma)^{O(\gamma^2)}$ by Lemma~\ref{lem-nablawcol}.
For any subset $V \subseteq V(G')$, we want to show that $G'[V]$ has a balanced separator of size $\eta' \cdot n^{\rho}$, where $n = |V|$.
By assumption $G[V]$ has a balanced separator $S$ of size at most $\eta \cdot n^{\rho}$.
Define $S' = \bigcup_{s \in S} (V \cap \text{WR}_\gamma(G,\sigma,s))$.
We claim that $S'$ is a balanced separator of $G'[V]$.
It suffices to show that every connected component of $G'[V] - S'$ is contained in a connected component of $G[V] - S$.
Consider a connected component $C'$ of $G'[V] - S'$.
As $S \subseteq S'$, $V(C') \subseteq V \backslash S' \subseteq V \backslash S$.
Thus, there exists a connected component $C$ of $G[V]-S$ such that $V(C) \cap V(C') \neq \emptyset$.
Assume $V(C') \nsubseteq V(C)$, in order to deduce a contradiction.
Let $u \in V(C) \cap V(C')$ and $v \in V(C') \backslash V(C)$.
Since $u,v \in V(C')$ and $C'$ is connected, there is a path $(v_0,v_1,\dots,v_r)$ in $C'$ with $v_0 = u$ and $v_r = v$.
Define $i \in \{0\} \cup [r]$ as the smallest index such that $v_i \notin V(C)$.
Note that such an index exists because $v_r \notin V(C)$.
Furthermore, $i \geq 1$, as $v_0 \in V(C)$.
So we have $v_{i-1} \in V(C)$.
As $(v_{i-1},v_i) \in E(G')$, we have either $v_{i-1} \in \text{WR}_\gamma(G,\sigma,v_i)$ or $v_i \in \text{WR}_\gamma(G,\sigma,v_{i-1})$.
Thus, there exists a path $\pi$ in $G$ of length at most $\gamma$ between $v_{i-1}$ and $v_i$ on which the largest vertex (under the ordering $\sigma$) is either $v_{i-1}$ or $v_i$.
But $v_{i-1} \in V(C)$ and $v_i \notin V(C)$.
So $\pi$ must contain a vertex $s \in S$.
If $v_{i-1}$ is the largest vertex on $\pi$, then $v_{i-1} \in \text{WR}_\gamma(G,\sigma,s)$, since the sub-path $\pi'$ of $\pi$ between $v_{i-1}$ and $s$ has length at most $\gamma$ and the largest vertex on $\pi'$ is $v_{i-1}$.
Similarly, if $v_i$ is the largest vertex on $\pi$, then $v_i \in \text{WR}_\gamma(G,\sigma,s)$.
In other words, we have either $v_{i-1} \in \text{WR}_\gamma(G,\sigma,s)$ or $v_i \in \text{WR}_\gamma(G,\sigma,s)$, which implies that either $v_{i-1} \in S'$ or $v_i \in S'$.
This contradicts the fact that $(v_0,v_1,\dots,v_r)$ is a path in $C'$.
Therefore, $S'$ is a balanced separator of $G'[V]$.
Finally, since $|\text{WR}_\gamma(G,\sigma,s)| \leq \text{wcol}_\gamma(G,\sigma) = \text{wcol}_\gamma(G)$ for all $s \in S$, we have $|S'| \leq \text{wcol}_\gamma(G) \cdot |S| \leq \eta' \cdot n^\rho$.
\end{proof}

\paragraph{First-order model checking.}
A \textit{first-order formula} on a graph $G$ has variables representing the vertices of $G$, and it is built by combining two types of {\em atomic formulas}: (i) equality for variables $u,v$, denoted by $u = v$, which is true iff $u$ and $v$ represent the same vertex of $G$, and (ii) adjacency for variables $u,v$, denoted $\mathsf{Adj}(u,v)$, which is true if $u$ and $v$ represent two vertices that are adjacent in $G$.
Atomic formulas are combined using boolean connectives $\wedge, \vee, \neg$ and quantifiers $\exists,\forall$.
We shall use the following well-known result for first-order model-checking on sparse graphs.

\begin{theorem}[\cite{dvovrak2013testing}] \label{thm-fologic}
Let $\mathcal{G} \subseteq \mathcal{G}(\eta,0,\rho)$ where $\eta \geq 0$ and $\rho < 1$.
Also, let $\ell > 0$ be a fixed number. 
Given a graph $G \in \mathcal{G}$ with $n$ vertices and a first-order formula $\varphi$ on $G$ of length at most $\ell$, one can find in $\eta^{O(1)} \cdot n$ time a satisfying assignment of $\varphi$ (or conclude that $\varphi$ does not admit a satisfying assignment).
The constant hidden in $O(\cdot)$ only depends on $\rho$ and $\ell$.
\end{theorem}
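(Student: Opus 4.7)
The plan is to proceed by induction on the quantifier rank of $\varphi$, eliminating one innermost quantifier at a time. At each step, we pass from the graph $G$ to an augmented graph $G^+$ (as in Lemma~\ref{lem-augmentation}) on the same vertex set, and from $\varphi$ to a new formula $\varphi^+$ over the enriched signature, with the invariant that $G \models \varphi$ iff $G^+ \models \varphi^+$, that $G^+$ still lies in a polynomial-expansion class $\mathcal{G}(\eta',0,\rho)$ with $\eta' = (\eta\gamma)^{O(\gamma^2)}$ for some $\gamma$ depending on $\ell$, and that the length of $\varphi^+$ depends only on $\ell$ and $\rho$. Since the number of elimination steps is bounded by $\ell$, after finitely many steps we reach a quantifier-free formula on a graph in $\mathcal{G}(\eta'',0,\rho)$ whose edge count is $\eta''^{O(1)} \cdot n$ by Lemma~\ref{lem-edgenumber}, and this can be evaluated trivially in linear time.

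The elimination step works as follows. First I would put $\varphi$ into prenex form and consider the innermost quantifier, say $\exists x \, \psi(x, y_1, \dots, y_s)$ with $\psi$ quantifier-free. Using the smallest-last ordering (Theorem~\ref{thm-smallast}) together with the degeneracy of $G$, I would fix an ordering $\sigma$ witnessing $\textnormal{wcol}_\gamma(G) = O(1)$ for a suitable radius $\gamma$ depending on the formula length (Lemma~\ref{lem-nablawcol}), and augment $G$ to $G^+$ by adding every edge $(u,v)$ with $v \in \textnormal{WR}_\gamma(G,\sigma,u)$. In $G^+$, weak-reachability becomes an atomic adjacency; more generally, any short-range predicate about $x$ and $y_1,\dots,y_s$ expressible in $\psi$ can be re-expressed using only adjacency, equality, and a bounded number of ``type'' predicates attached to vertices. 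For each vertex $v$ we can then precompute, in $\eta^{O(1)} \cdot n$ time on the sparse graph $G^+$, its local type relative to $\sigma$ (the isomorphism type of the subgraph induced on its weakly $\gamma$-reachable set together with an interpretation map). Since the number of such types depends only on $\gamma$, we can introduce a bounded collection of new unary predicates and rewrite $\exists x\,\psi$ as a quantifier-free combination of these predicates evaluated on $y_1,\dots,y_s$ — a strictly shorter formula on $G^+$.

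The main obstacle is showing that the local-type information on $G^+$ is enough to decide $\exists x\, \psi$ without introducing spurious witnesses. This requires a Gaifman-locality style argument: after the augmentation, the Gaifman distance in $G^+$ between quantified variables shrinks so that $\psi$ becomes equivalent to a boolean combination of formulas, each of which depends only on the $\gamma$-neighborhood (in $G$) of a single variable. The careful point is to choose $\gamma$ large enough — as a function of $\ell$ and $\rho$ — so that this locality holds uniformly, while keeping $\gamma$ finite so that Lemmas~\ref{lem-nablawcol} and~\ref{lem-augmentation} give a polynomial-expansion augmented class with $\eta' = \eta^{O(1)}$. Routine bookkeeping then shows that the length of $\varphi^+$ and the constant $\eta'$ blow up by a factor depending only on $\rho$ and $\ell$ in each round; composing the $\ell$ rounds yields the claimed $\eta^{O(1)} \cdot n$ running time with the hidden constant depending only on $\rho$ and $\ell$.
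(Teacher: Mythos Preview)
The paper does not prove this theorem at all: it is stated with a citation to \cite{dvovrak2013testing} and used as a black box. So there is no ``paper's own proof'' to compare against; you are attempting to supply a proof where the authors simply imported one.

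That said, your sketch has real gaps. First, you write that the smallest-last ordering (Theorem~\ref{thm-smallast}) will witness $\textnormal{wcol}_\gamma(G) = O(1)$. This is false in general: smallest-last orderings witness degeneracy, i.e.\ $\textnormal{wcol}_1$, but for $\gamma \geq 2$ a smallest-last ordering can have $\textnormal{wcol}_\gamma(G,\sigma)$ unbounded even when $\textnormal{wcol}_\gamma(G)$ is constant. Lemma~\ref{lem-nablawcol} only asserts existence of a good ordering, not that smallest-last achieves it, and computing such an ordering in linear time is itself a nontrivial algorithmic step that your plan does not address.

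Second, the quantifier-elimination step is too vague to be a proof. The assertion that after augmentation ``$\psi$ becomes equivalent to a boolean combination of formulas, each of which depends only on the $\gamma$-neighborhood of a single variable'' is essentially Gaifman locality, but Gaifman's theorem gives you a \emph{local} normal form, not an elimination of a quantifier into unary type predicates on the remaining free variables. The actual Dvo\v{r}\'{a}k--Kr\'{a}l'--Thomas argument uses low tree-depth colorings (closely related to, but not the same as, your augmentation) together with a careful inductive elimination that introduces new relation symbols of bounded arity, not just unary ones; controlling the arity and the number of new relations so that everything stays linear-time is the technical heart of their proof, and your sketch does not confront it.

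\emph{In short: the paper offers no proof here, and your outline points in a plausible direction but would need the correct ordering computation and a substantially more careful quantifier-elimination mechanism to become a proof.}
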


\section{Linear-time polynomial kernel} \label{sec-kernel}

In this section, we present our linear-time polynomial kernel for \textsc{$\mathcal{F}$-Hitting} on a graph class $\mathcal{G} \subseteq \mathcal{G}(\eta,\mu,\rho)$ for $\rho < 1$.
Specifically, we prove the following theorem.

\kernel*

\begin{algorithm}[b]
    \caption{\textsc{Kernel}$(G,k)$}
    \begin{algorithmic}[1]
        \State $(v_1,\dots,v_n) \leftarrow \textsc{SmallestLastOrder}(G)$
        \State $I \leftarrow \{i \in [n]: |N_G(v_i) \cap \{v_1,\dots,v_i\}| \geq (\eta (k+\gamma)^\mu)^c+1 \} \cup \{n\}$
        \State $G_0 \leftarrow G[\{v_1,\dots,v_{\min(I)}\}]$
        \State $K \leftarrow \emptyset$ and $\mathcal{X} \leftarrow \emptyset$
        \While{there exists $V \in \mathcal{V}_\mathcal{F}(G_0)$ such that $V \nsubseteq K$ and $X \nsubseteq V$ for all $X \in \mathcal{X}$}
            \State $K \leftarrow K \cup V$
            \For{every $X \subseteq K$ with $|X| < \gamma$}
                \State $U_1,\dots,U_p \leftarrow$ a maximal sunflower in $\mathcal{V}_\mathcal{F}(G_0[K])$ with core $X$
                \If{$p > k$}{ $\mathcal{X} \leftarrow \mathcal{X} \cup \{X\}$}
                \EndIf
            \EndFor
        \EndWhile
        \State $G' \leftarrow G_0[K]$
        \State \textbf{return} $G'$
    \end{algorithmic}
    \label{alg-kernel}
\end{algorithm}

Consider a graph class $\mathcal{G} \subseteq \mathcal{G}(\eta,\mu,\rho)$ where $\rho < 1$.
By Lemma~\ref{lem-edgenumber}, there exists a constant $c > 0$ depending on $\rho$ such that for any $\tilde{\eta} \geq 0$, every graph in $\mathcal{G}(\tilde{\eta},0,\rho)$ is $\tilde{\eta}^c$-degenerate.
Let $G \in \mathcal{G}$ be the input graph, and set $\gamma = \max_{F \in \mathcal{F}} |V(F)|$.
The overall framework of our algorithm is presented in Algorithm~\ref{alg-kernel}.
In line~1, the sub-routing $\textsc{SmallestLastOrder}(G)$ computes a smallest-last ordering $(v_1,\dots,v_n)$ of $G$.
Then line~2-3, we compute an induced subgraph $G_0$ of $G$ as follows.
We find the smallest index $i \in [n]$ satisfying $|N_G(v_i) \cap \{v_1,\dots,v_i\}| \geq (\eta (k+\gamma)^\mu)^c+1$; if such an index does not exist, we set $i = n$.
Then we define $G_0 \leftarrow G[\{v_1,\dots,v_i\}]$.
Line~4-10 computes the desired graph $G'$, which will be an induced subgraph of $G_0$.
In this procedure, we maintain a subset $K \subseteq V(G_0)$ and a collection $\mathcal{X}$ of subsets of $K$.
Roughly speaking, $\mathcal{X}$ consists of the cores of large sunflowers in $\mathcal{V}_\mathcal{F}(G_0[K])$.
Initially, $K = \emptyset$ and $\mathcal{X} = \emptyset$ (line~4).
In each round, we try to find a set $V \in \mathcal{V}_\mathcal{F}(G_0)$ such that $V \nsubseteq K$ and $X \nsubseteq V$ for all $X \in \mathcal{X}$ (line~5).
We add the vertices in $V$ to $K$ (line~6).
Then for every subset $X \subseteq K$ of size smaller than $\gamma$, we compute a maximal sunflower in $\mathcal{V}_\mathcal{F}(G_0[K])$ with core $X$, and add $X$ to $\mathcal{X}$ if the size of this sunflower is larger than $k$ (line 7-9).
The procedure terminates when there does not exist such a set $V$.
After this, we simply define $G' = G_0[K]$ (line~10) and return $G'$ as the output of the algorithm (line~11).


\paragraph{Correctness.}
We first prove the correctness of Algorithm~\ref{alg-kernel}.
Specifically, Theorem~\ref{thm-kernel} requires \textbf{(i)} $|V(G')| = k^{O(1)}$ and \textbf{(ii)} any $\mathcal{F}$-hitting set of $G'$ of size at most $k$ is also an $\mathcal{F}$-hitting set of $G$.
To see \textbf{(i)}, the key observation is that the sets $V$ considered in the while-loop (line~5-9) cannot form a large sunflower.
Suppose the while-loop has in total $r$ iterations and let $V_i$ denote the set $V$ in the $i$-th iteration for $i \in [r]$.
\begin{observation} \label{obs-polyksize}
The set system $\{V_1,\dots,V_r\}$ does not contain a sunflower of size $\gamma (k+1) + 1$.
In particular, we have $r \leq \gamma! \gamma^\gamma (k+1)^\gamma = k^{O(1)}$ and $|V(G')| \leq \gamma! \gamma^{\gamma+1} (k+1)^\gamma = k^{O(1)}$.
\end{observation}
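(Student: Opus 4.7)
The plan is to argue by contradiction: suppose $\{V_1,\dots,V_r\}$ contains a sunflower of size $\gamma(k+1)+1$, and derive a contradiction with the selection rule of the while-loop. Write the sunflower as $V_{i_1},\dots,V_{i_{\gamma(k+1)+1}}$ with $i_1 < \dots < i_{\gamma(k+1)+1}$ and core $X$. First I would observe that the $V_i$'s are pairwise distinct, since once $V_i$ is picked the entire set $V_i$ is added to $K$, so the condition $V \nsubseteq K$ forbids choosing it again. In particular $|X| < \gamma$, for otherwise $|V_{i_j}| \leq \gamma = |X|$ together with $X \subseteq V_{i_j}$ would force every $V_{i_j}$ to equal $X$, violating distinctness.

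Next I would focus on the moment the while-loop completes iteration $i_{\gamma(k+1)}$. By then $V_{i_1},\dots,V_{i_{\gamma(k+1)}}$ have all been added to $K$, so each of them lies in $\mathcal{V}_\mathcal{F}(G_0[K])$. The inner for-loop now ranges over all subsets of $K$ of size less than $\gamma$, so in particular it examines $X$; the maximal sunflower it computes in $\mathcal{V}_\mathcal{F}(G_0[K])$ with core $X$ contains at least $V_{i_1},\dots,V_{i_{\gamma(k+1)}}$, which has size $\gamma(k+1) > k$, so $X$ is inserted into $\mathcal{X}$. But then at iteration $i_{\gamma(k+1)+1}$ the set $V_{i_{\gamma(k+1)+1}}$ is selected even though $X \in \mathcal{X}$ and $X \subseteq V_{i_{\gamma(k+1)+1}}$, contradicting the while-loop's guard $X' \nsubseteq V$ for all $X' \in \mathcal{X}$.

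For the numerical bounds I would simply invoke the sunflower lemma (Lemma~\ref{lem-sunflower}). Every $V_i$ has size at most $\gamma$, so if $r > \gamma! \cdot (\gamma(k+1))^\gamma$ then $\{V_1,\dots,V_r\}$ would contain a sunflower of size $\gamma(k+1)+1$, which we just excluded. Hence $r \leq \gamma! \gamma^\gamma (k+1)^\gamma$, and since $V(G') = K = \bigcup_{i=1}^r V_i$ one obtains $|V(G')| \leq \gamma r \leq \gamma! \gamma^{\gamma+1}(k+1)^\gamma$.

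The only place requiring care is the bookkeeping of \emph{when} $X$ is placed into $\mathcal{X}$ relative to \emph{when} $V_{i_{\gamma(k+1)+1}}$ is picked; the argument needs the iterations to be processed in the order $i_1,\dots,i_{\gamma(k+1)+1}$ (which is immediate from the indexing) and the fact that the inner for-loop sees \emph{every} size-$<\gamma$ subset of the current $K$, not only the newly added one. Once this is spelled out, the rest of the proof is a direct application of the sunflower lemma.
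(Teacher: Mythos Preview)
Your overall strategy mirrors the paper's: assume a large sunflower, show its core $X$ enters $\mathcal{X}$ before the last member is selected, contradict the while-loop guard, then invoke the sunflower lemma for the numerical bounds. The gap is the sentence ``the maximal sunflower it computes in $\mathcal{V}_\mathcal{F}(G_0[K])$ with core $X$ contains at least $V_{i_1},\dots,V_{i_{\gamma(k+1)}}$.'' A \emph{maximal} sunflower with a prescribed core need not extend---or even match the size of---any particular sunflower with that core. Concretely, take $X=\{0\}$ and the sets $\{0,1\},\{0,2\},\{0,3\},\{0,4\},\{0,1,2,3\}$: the first four form a sunflower of size $4$ with core $X$, yet $\{\{0,1,2,3\},\{0,4\}\}$ is a maximal sunflower with core $X$ of size only $2$ that contains none of $\{0,1\},\{0,2\},\{0,3\}$. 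So line~8 may well return something much smaller than $\gamma(k+1)$, and your argument that $p>k$ breaks down.

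The paper's fix is a short counting step, and this is exactly where the factor $\gamma$ in the threshold $\gamma(k+1)+1$ earns its keep (your version never uses it). If $U_1,\dots,U_p$ is the maximal sunflower returned, maximality forces every nonempty petal $V_{i_j}\setminus X$ to intersect $\bigcup_{l=1}^{p}(U_l\setminus X)$; since those $\gamma(k+1)$ petals are pairwise disjoint and each $|U_l|\le\gamma$, one gets $p\gamma\ge\gamma(k+1)-1$, whence $p\ge k+1$. That is all line~9 needs. Replacing your containment claim by this counting argument completes the proof.
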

\begin{proof}
Assume $\{V_1,\dots,V_r\}$ contains a sunflower of size $\gamma (k+1) + 1$, and let $V_i$ be the set in this sunflower with the largest index and $X$ be the core of the sunflower.
Consider the $(i-1)$-th iteration.
After line~6, we have $K = \bigcup_{j=1}^{i-1} V_j$.
Note that $\{V_1,\dots,V_{i-1}\}$ contains a sunflower of size $\gamma (k+1)$ with core $X$, by the choice of $i$.
In particular, $\mathcal{V}_\mathcal{F}(G_0[K])$ contains a sunflower of size $\gamma (k+1)$ with core $X$.
Since the sets in $\mathcal{V}_\mathcal{F}(G_0[K])$ are of size at most $\gamma$, a maximal sunflower in $\mathcal{V}_\mathcal{F}(G_0[K])$ with core $X$ has size at least $k+1$.
Therefore, the algorithm adds $X$ to $\mathcal{X}$ in line~9.
In other words, at the beginning of the $i$-th iteration, we have $X \in \mathcal{X}$.
However, this contradicts the fact that $X \subseteq V_i$, since $V_i$ should not contain any set in $\mathcal{X}$.
Thus, $\{V_1,\dots,V_r\}$ does not contain a sunflower of size $\gamma (k+1) + 1$.
By Lemma~\ref{lem-sunflower}, this implies $r \leq \gamma! \gamma^\gamma (k+1)^\gamma = k^{O(1)}$.
As a result, $|V(G')| = |\bigcup_{i=1}^r V_i| \leq \gamma r = \gamma! \gamma^{\gamma+1} (k+1)^\gamma = k^{O(1)}$.
\end{proof}

To see \textbf{(ii)}, we shall first establish the relation between the $\mathcal{F}$-hitting sets of $G'$ and the $\mathcal{F}$-hitting sets of $G_0$, and then consider the relation between $G_0$ and $G$.

\begin{observation} \label{obs-G'toG0}
    Any $\mathcal{F}$-hitting set $S \subseteq V(G')$ of $G'$ with $|S| \leq k$ is also an $\mathcal{F}$-hitting set of $G_0$.
\end{observation}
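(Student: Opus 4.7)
The plan is to argue by contradiction, using the termination condition of the while loop together with the defining property of the sets added to $\mathcal{X}$. Suppose for contradiction that $S \subseteq V(G') = K$ is an $\mathcal{F}$-hitting set of $G'$ with $|S| \leq k$, but $S$ is not an $\mathcal{F}$-hitting set of $G_0$. Then there exists $V \in \mathcal{V}_{\mathcal{F}}(G_0)$ such that $V \cap S = \emptyset$.

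Since the while loop has terminated when we form $G'$, no such $V \in \mathcal{V}_{\mathcal{F}}(G_0)$ can satisfy both $V \nsubseteq K$ and $X \nsubseteq V$ for all $X \in \mathcal{X}$. Hence either $V \subseteq K$ or there exists $X \in \mathcal{X}$ with $X \subseteq V$. In the first case, $V \in \mathcal{V}_{\mathcal{F}}(G_0[K]) = \mathcal{V}_{\mathcal{F}}(G')$, and then $V \cap S \neq \emptyset$ because $S$ is an $\mathcal{F}$-hitting set of $G'$; this contradicts the assumption $V \cap S = \emptyset$. So we are left with the second case, which is the only subtle one.

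In the second case, pick $X \in \mathcal{X}$ with $X \subseteq V$. From $V \cap S = \emptyset$ we get $X \cap S = \emptyset$. Now recall why $X$ was placed into $\mathcal{X}$ by line~9: at that moment the algorithm computed a maximal sunflower $U_1, \dots, U_p$ in $\mathcal{V}_{\mathcal{F}}(G_0[K])$ with core $X$ of size $p > k$. Since $K$ only grows throughout the execution, these $U_j$ are still sets in $\mathcal{V}_{\mathcal{F}}(G_0[K]) = \mathcal{V}_{\mathcal{F}}(G')$ at termination. As $S$ is an $\mathcal{F}$-hitting set of $G'$, $S \cap U_j \neq \emptyset$ for each $j \in [p]$. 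Combined with $S \cap X = \emptyset$ and the fact that the petals $U_1 \setminus X, \dots, U_p \setminus X$ are pairwise disjoint, this forces $|S| \geq p > k$, contradicting $|S| \leq k$.

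The main (very mild) obstacle is making sure the sunflower witnesses remain valid at the end of the algorithm: the sets $U_1, \dots, U_p$ were found when $K$ was smaller, but because $K$ is monotone non-decreasing and $U_j \subseteq K$ already at that moment, they still belong to $\mathcal{V}_{\mathcal{F}}(G_0[K])$ at termination. Once this is observed, the two cases above exhaust the possibilities dictated by the loop-exit condition, and each yields the desired contradiction.
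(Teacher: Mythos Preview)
Your proof is correct and follows essentially the same approach as the paper: use the while-loop termination condition to split into the cases $V \subseteq K$ and $X \subseteq V$ for some $X \in \mathcal{X}$, and in the latter case invoke the size-$p>k$ sunflower with core $X$ in $\mathcal{V}_{\mathcal{F}}(G')$ to force $S \cap X \neq \emptyset$. Your explicit remark that the sunflower $U_1,\dots,U_p$ remains in $\mathcal{V}_{\mathcal{F}}(G')$ because $K$ is monotone is a detail the paper leaves implicit.
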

\begin{proof}
Recall that $G' = G_0[K]$ where $K \subseteq V(G_0)$ is the subset generated by the while-loop of Algorithm~\ref{alg-kernel}.
When the while-loop terminates, there does not exist $V \in \mathcal{V}_\mathcal{F}(G_0)$ with $V \nsubseteq K$ and $X \nsubseteq V$ for all $X \in \mathcal{X}$.
Let $S \subseteq V(G') = K$ be an $\mathcal{F}$-hitting set of $G'$ and suppose $|S| \leq k$.
Consider a set $V \in \mathcal{V}_\mathcal{F}(G_0)$.
We need to show that $S \cap V \neq \emptyset$.
If $V \subseteq K$, then $V \in \mathcal{V}_\mathcal{F}(G')$ and $S \cap V \neq \emptyset$ by construction.
So assume $V \nsubseteq K$.
As the while-loop terminates with $K$, there must exist $X \in \mathcal{X}$ with $X \subseteq V$.
Note that $X$ is the core of some sunflower $U_1,\dots,U_p \in \mathcal{V}_\mathcal{F}(G')$ with $p > k$.
As an $\mathcal{F}$-hitting set of $G'$, we have $S \cap U_i \neq \emptyset$ for all $i \in [p]$.
But $|S| \leq k$, while $p > k$.
Thus, $S \cap X \neq \emptyset$, which implies $S \cap V \neq \emptyset$.
\end{proof}

\begin{observation} \label{obs-G0toG}
    If $G_0 \neq G$, then $G_0$ does not admit an $\mathcal{F}$-hitting set of size at most $k$.
\end{observation}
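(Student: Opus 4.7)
The plan is to argue by contradiction: suppose $G_0 \neq G$ yet $G_0$ admits an $\mathcal{F}$-hitting set $S$ with $|S| \leq k$. I will derive two contradictory bounds on the minimum degree of $G_0$, one forced by the way $G_0$ was truncated from $G$ and the other supplied by degeneracy of $G_0$ once we know $\omega(G_0)$ is small.

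For the lower bound, since $G_0 \neq G$ we have $i^* := \min(I) < n$, and by the defining condition of $I$ in line~2 of Algorithm~\ref{alg-kernel}, $|N_G(v_{i^*}) \cap \{v_1, \ldots, v_{i^*}\}| \geq (\eta(k+\gamma)^\mu)^c + 1$. The left-hand side equals $\deg_{G_0}(v_{i^*})$, and since $(v_1, \ldots, v_n)$ is a smallest-last ordering of $G$, $v_{i^*}$ is by definition a minimum-degree vertex of $G[\{v_1, \ldots, v_{i^*}\}] = G_0$. Hence every vertex of $G_0$ has degree at least $(\eta(k+\gamma)^\mu)^c + 1$ in $G_0$.

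For the upper bound, I will use $S$ to control $\omega(G_0)$. Pick $F \in \mathcal{F}$ with $|V(F)| = \gamma$; since ``subgraph'' in this paper permits edge deletion, $F$ is a subgraph of $K_\gamma$, so $G_0 - S$ being $\mathcal{F}$-free forces $\omega(G_0 - S) < \gamma$. Restricting any clique $C$ of $G_0$ to $V(G_0) \setminus S$ then yields $|C| \leq |S| + \omega(G_0 - S) < k + \gamma$, so $\omega(H) \leq \omega(G_0) < k + \gamma$ for every induced subgraph $H$ of $G_0$. The balanced-separator property of $\mathcal{G}$ then supplies, for every such $H$, a balanced separator of size at most $\eta(k+\gamma)^\mu \cdot |V(H)|^\rho$, which means $G_0 \in \mathcal{G}(\eta(k+\gamma)^\mu, 0, \rho)$. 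By Lemma~\ref{lem-edgenumber} and the choice of the constant $c$ at the start of Section~\ref{sec-kernel}, $G_0$ is $(\eta(k+\gamma)^\mu)^c$-degenerate, hence has a vertex of degree at most $(\eta(k+\gamma)^\mu)^c$, contradicting the lower bound. The only slightly delicate step is this clique bound: one must route $\mathcal{F}$-freeness of $G_0 - S$ through the specific $F \in \mathcal{F}$ attaining $|V(F)| = \gamma$ together with $F \subseteq K_\gamma$, so that the $\omega^\mu$ factor in the separator bound can be replaced by $(k+\gamma)^\mu$ and fed into the degeneracy lemma.
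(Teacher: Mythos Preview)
Your proof is correct and follows essentially the same approach as the paper's own proof, just organized as a direct contradiction rather than first establishing $\omega(G_0) > k+\gamma$ and then concluding; the key ingredients (the smallest-last ordering forcing high minimum degree, the $\mathcal{F}$-hitting set bounding $\omega(G_0)$, and the degeneracy from Lemma~\ref{lem-edgenumber}) are identical.
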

\begin{proof}
Recall that $I = \{i \in [n]: |N_G(v_i) \cap \{v_1,\dots,v_i\}| \geq (\eta (k+\gamma)^\mu)^c+1 \} \cup \{n\}$ and $G_0 = G[\{v_1,\dots,v_i\}]$ with $i = \min(I)$.
If $G_0 \neq G$, then $i < n$ and $|N_G(v_i) \cap \{v_1,\dots,v_i\}| \geq (\eta(k+\gamma)^\mu)^c+1$.
Note that $(v_1,\dots,v_i)$ is a smallest-last ordering of $G_0$.
Therefore, $G_0$ is not $(\eta(k+\gamma)^\mu)^c$-degenerate by Fact~\ref{fact-degen}.
We claim that $\omega(G_0) > k+\gamma$.
Assume $\omega(G_0) \leq k+\gamma$.
Since $G_0 \in \mathcal{G}(\eta,\mu,\rho)$, we have $G_0 \in \mathcal{G}(\eta \omega^{\mu}(G_0),0,\rho)$.
By Lemma~\ref{lem-edgenumber}, $G_0$ is $(\eta(k+\gamma)^\mu)^c$-degenerate, which contradicts what we observed above.
Thus, $\omega(G_0) > k+\gamma$.
It follows that for any $S \subseteq V(G_0)$ with $|S| \leq k$, $G_0-S$ contains a clique of size $\gamma$ and hence contains any graph in $\mathcal{F}$ as a subgraph.
We then conclude that $G_0$ does not admit an $\mathcal{F}$-hitting set of size at most $k$.
\end{proof}

\noindent
The above observation implies that any $\mathcal{F}$-hitting set $S \subseteq V(G_0)$ of $G_0$ with $|S| \leq k$ is also an $\mathcal{F}$-hitting set of $G$. 
Combining this with Observation~\ref{obs-G'toG0}, the correctness of Algorithm~\ref{alg-kernel} is proved.

\paragraph{Linear-time implementation.}
Next, we show how to implement Algorithm~\ref{alg-kernel} in $k^{O(1)} \cdot n+O(m)$ time.
Computing the smallest-last ordering $(v_1,\dots,v_n)$ of $G$ takes $O(n+m)$ time by Theorem~\ref{thm-smallast}, and within the same time we can compute the graph $G_0$.
Note that $G_0$ is $k^{O(1)}$-degenerate and hence $\omega(G_0) = k^{O(1)}$.
Since $G_0 \in \mathcal{G}(\eta,\mu,\rho)$, we have $G_0 \in \mathcal{G}(\eta \omega^{\mu}(G_0),0,\rho) = \mathcal{G}(\eta',0,\rho)$, where $\eta' = k^{O(1)}$.
Observation~\ref{obs-polyksize} shows that the while-loop has $k^{O(1)}$ iterations and $|K| = k^{O(1)}$.
So it suffices to implement each iteration of the while-loop in $k^{O(1)} \cdot n$ time.
Since $|K| = k^{O(1)}$ and $\gamma = O(1)$, the for-loop (line 7-9) only has $k^{O(1)}$ iterations.
For each $X$ considered in the for-loop, computing a maximal sunflower in $\mathcal{V}_\mathcal{F}(G_0[K])$ takes $k^{O(1)}$ time because $|K| = k^{O(1)}$.

Now the only missing part is how to efficiently find the set $V \in \mathcal{V}_\mathcal{F}(G_0)$ satisfying $V \nsubseteq K$ and $X \nsubseteq V$ for all $X \in \mathcal{X}$, in each iteration.
To this end, we guess the graph $F \in \mathcal{F}$ corresponding to $V$ and the intersection $I = V \cap K$.
Since $V \nsubseteq K$, $I$ is a subset of $K$ with size at most $|V(F)|-1$.
Furthermore, we have $X \nsubseteq V$ for all $X \in \mathcal{X}$ iff $X \nsubseteq I$ for all $X \in \mathcal{X}$, because the sets in $\mathcal{X}$ are all contained in $K$.
Therefore, we say a pair $(F,I)$ where $F \in \mathcal{F}$ and $I \subseteq K$ is \textit{feasible} if $|I| \leq |V(F)|-1$ and $X \nsubseteq I$ for all $X \in \mathcal{X}$.
Since $|K| = k^{O(1)}$ and $|V(F)| \leq \gamma$ for all $F \in \mathcal{F}$, the number of feasible pairs is $k^{O(1)}$.
Also, as $|\mathcal{X}| = k^{O(1)}$, one can compute all feasible pairs in $k^{O(1)}$ time.
For every feasible pair $(F,I)$, we try to find an $F$-copy $(H,\pi)$ in $G_0$ satisfying $V(H) \cap K = I$.
Equivalently, we search for an $F$-copy $(H,\pi)$ in $G_0-(K \backslash I)$ satisfying $I \subseteq V(H)$.
This problem can be formulated as a first-order formula of constant length on $G_0-(K \backslash I)$ as follows.
Recall that $\mathsf{Adj}(u,v)$ is the atomic formula which is true if $(u,v)$ is an edge of $G_0-(K \backslash I)$ and is false otherwise.
Suppose $V(F) = \{a_1,\dots,a_r\}$ and $I = \{x_1,\dots,x_p\}$.
Set $E = \{(i,j) \in [r]^2: (a_i,a_j) \in E(F)\}$.
For vertices $b_1,\dots,b_r \in V(G_0) \backslash (K \backslash I)$, testing whether there exists an $F$-copy $(H,\pi)$ in $G_0-(K \backslash I)$ with $V(H) = \{b_1,\dots,b_r\}$ and $\pi(b_i) = a_i$ for all $i \in [r]$ can be formulated as the first-order formula
\begin{equation*}
    \phi(b_1,\dots,b_r) =  \left(\bigwedge_{(i,j) \in [r]^2} \left( (i \neq j) \rightarrow (b_i \neq b_j) \right) \right) \wedge \left(\bigwedge_{(i,j) \in E} \mathsf{Adj}(b_i,b_j) \right).
\end{equation*}
Furthermore, for vertices $b_1,\dots,b_r \in V(G_0) \backslash (K \backslash I)$, testing whether $I = \{x_1,\dots,x_p\} \subseteq \{b_1,\dots,b_r\}$ can be formulated as the first-order formula
\begin{equation*}
    \psi(b_1,\dots,b_r) = \bigvee_{\sigma \in \varSigma} \left( \bigwedge_{i \in [p]} \left(x_i = b_{\sigma(i)}\right) \right),
\end{equation*}
where $\varSigma$ is the set of all maps from $[p]$ to $[r]$.
Define $\tau(b_1,\dots,b_r) = \phi(b_1,\dots,b_r) \wedge \psi(b_1,\dots,b_r)$.
Now it suffices to find vertices $b_1,\dots,b_r \in V(G_0) \backslash (K \backslash I)$ satisfying $\tau(b_1,\dots,b_r)$, and the set $V = \{b_1,\dots,b_r\}$ is exactly what we want.
Note that $\tau$ is a first-order formula of constant length.
Since $G \in \mathcal{G}(\eta',0,\rho)$ where $\eta' = k^{O(1)}$, by Theorem~\ref{thm-fologic}, we can find a satisfying assignment of $\tau$ in $k^{O(1)} \cdot n$ time (if it exists).
By considering all $k^{O(1)}$ feasible pairs $(F,I)$, we can find the desired set $V$ or conclude that it does not exist.
As such, each iteration of the while-loop can be done in $k^{O(1)} \cdot n$ time, and the overall running time of Algorithm~\ref{alg-kernel} is $k^{O(1)} \cdot n + O(m)$.




\section{Sublinear-treewidth branching} \label{sec-branching}

In this section, we prove our branching theorem for \textsc{$\mathcal{F}$-Hitting}, which is the following.

\branching*

In Section~\ref{sec-heavy}, we introduce a key notion used in our proof, called \textit{heavy cores}, and prove important structural properties of heavy cores.
Then in Sections~\ref{sec-branchalg} and~\ref{sec-branchana}, we present and analyze our algorithm for a special case of Theorem~\ref{thm-subtwbranch}, where $\mu = 0$ (i.e., $\mathcal{G}$ is a graph class of polynomial expansion) and $\mathcal{F}$ consists of only connected graphs.
This special case is the most important part of Theorem~\ref{thm-subtwbranch}, and extending it to the general case is easy (based on a simple observation and known results in the literature), which is done in Section~\ref{sec-generalbranch}.

\subsection{Heavy cores and their properties} \label{sec-heavy}

Throughout this section, $G$ is a graph, $\mathcal{F}$ is a finite set of graphs, and $\gamma = \max_{F \in \mathcal{F}} |V(F)|$.
A \textit{standard triple} refers to a triple $(X,F,f)$, where $X \subseteq V(G)$, $F \in \mathcal{F}$, and $f:X \rightarrow V(F)$ is an \textit{injective} map.
For two standard triples $(X,F,f)$ and $(Y,F,g)$, we write $(X,F,f) \prec (Y,F,g)$ if $X \subsetneq Y$ and $f = g_{|X}$.
Also, we write $(X,F,f) \preceq (Y,F,g)$ if $(X,F,f) \prec (Y,F,g)$ or $(X,F,f) = (Y,F,g)$.
Clearly, $\prec$ defines a partial order among all standard triples.
The following fact follows directly from the sunflower lemma (Lemma~\ref{lem-sunflower}).
\begin{fact} \label{fact-goodsunflower}
    Let $p > 0$ be an integer, and $(X_1,F,f_1),\dots,(X_r,F,f_r)$ be standard triples where $r \geq (\gamma^\gamma p)^\gamma \gamma!$.
    Then there exists $I \subseteq [r]$ with $|I| = p$ such that $\{X_i: i \in I\}$ form a sunflower with core $K$ and $(f_i)_{|K} = (f_j)_{|K}$ for all $i,j \in I$.
\end{fact}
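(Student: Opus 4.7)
The plan is to combine the sunflower lemma (Lemma~\ref{lem-sunflower}) with a pigeonhole step on the restrictions $(f_i)_{|K}$. The sunflower lemma alone only gives a sunflower among the sets $X_i$; we additionally need a large subfamily on which the injective maps into $V(F)$ agree on the core $K$, which is a finite (constant-size) constraint and so can be enforced by an averaging argument.

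Concretely, first observe that each $X_i$ has size at most $\gamma$, since $f_i$ is injective and maps into $V(F)$ with $|V(F)|\le\gamma$. Set $s = \gamma^\gamma p$. From $r \ge (\gamma^\gamma p)^\gamma\,\gamma! = s^\gamma\,\gamma! > (s-1)^\gamma\,\gamma!$, Lemma~\ref{lem-sunflower} applied to the family $\{X_1,\dots,X_r\}$ (with set-size bound $\gamma$ and sunflower-size target $s$) produces an index set $J \subseteq [r]$ with $|J| = s$ such that $\{X_j : j \in J\}$ is a sunflower with some core $K \subseteq V(G)$.

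Next, for every $j \in J$ the restriction $(f_j)_{|K}$ is a (injective) map from $K$ to $V(F)$. The number of maps from $K$ into $V(F)$ is at most $|V(F)|^{|K|} \le \gamma^{\gamma}$. By the pigeonhole principle applied to $J$, there exists $I \subseteq J$ with $|I| \ge |J|/\gamma^{\gamma} = s/\gamma^\gamma = p$ on which all restrictions $(f_i)_{|K}$ coincide; trimming $I$ down to exactly $p$ elements, if needed, gives the required subfamily.

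Since the statement is explicitly advertised as a direct consequence of the sunflower lemma, I do not expect any technical obstacle; the only point requiring care is to set the intermediate sunflower size $s$ large enough so that, after losing a factor of $\gamma^\gamma$ to pigeonhole on the restrictions, at least $p$ indices survive, which is exactly what the chosen bound $r \ge (\gamma^\gamma p)^\gamma\,\gamma!$ guarantees.
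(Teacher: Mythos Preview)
Your proposal is correct and essentially identical to the paper's proof: apply the sunflower lemma (Lemma~\ref{lem-sunflower}) to the sets $X_i$ (each of size at most $\gamma$, since $f_i$ is injective into $V(F)$) to extract a sunflower of size $\gamma^\gamma p$, then pigeonhole over the at most $\gamma^\gamma$ possible restrictions $K\to V(F)$ to obtain $p$ indices on which the maps agree on the core.
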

\begin{proof}
By the sunflower lemma, there exists a sunflower of size $q = \gamma^\gamma p$ among the sets in $\{X_1,\dots,X_r\}$.
Without loss of generality, assume the sunflower is $X_1,\dots,X_q$ and let $K \subseteq V(G)$ be its core.
For each $i \in [q]$, $(f_i)_{|K}$ is a map from $K$ to $V(F)$.
There are at most $\gamma^\gamma$ different maps $f: K \rightarrow V(F)$, since $|K| \leq \gamma$ and $|V(F)| \leq \gamma$.
So by the Pigeonhole principle, there exists $I \in [q]$ with $|I| = p$ such that $(f_i)_{|K} = (f_j)_{|K}$ for all $i,j \in I$.
\end{proof}


Next, we define the key notion used in our proof, called \textit{heavy cores}, which are essentially standard triples corresponding to the cores of large sunflowers formed by $F$-copies in $G$.

\begin{definition}[heavy cores]
For an integer $\delta>0$, a \textbf{$(\mathcal{F},\delta)$-heavy core} in $G$ is a standard triple $(X,F,f)$ satisfying the following condition: for $\Delta = \gamma^{|X|}\delta$, there exist $F$-copies $(H_1,\pi_1),\dots,(H_\Delta,\pi_\Delta)$ in $G$ such that $V(H_1),\dots,V(H_\Delta)$ is a sunflower with core $X$ and $(\pi_i)_{|X} = f$ for all $i \in [\Delta]$.
We call the $F$-copies $(H_1,\pi_1),\dots,(H_\Delta,\pi_\Delta)$ a \textbf{witness} of $(X,F,f)$.
\end{definition}

For convenience, when $\mathcal{F}$ and $\delta$ are clear from the context, we shall simply use the term \textit{heavy cores} instead of $(\mathcal{F},\delta)$-heavy cores.
Note that in when defining heavy cores, we \textit{do} not require the $F$-copies $(H_1,\pi_1),\dots,(H_\Delta,\pi_\Delta)$ to be distinct.
However, if two of them are identical, all of them have to be identical in order to satisfy the sunflower condition.
In this case, suppose
\begin{equation*}
    (H_1,\pi_1) = \cdots = (H_\Delta,\pi_\Delta) = (H,\pi)
\end{equation*}
and then $(X,F,f) = (V(H),F,\pi)$.
On the other hand, if $(X,F,f)$ is a heavy core in $G$ where $f$ is surjective, then any witness of $(X,F,f)$ must consist of identical $F$-copies.
\begin{fact} \label{fact-Fcopyheavy}
    Let $\delta > 0$ be an integer.
    For each $F \in \mathcal{F}$ and each $F$-copy $(H,\pi)$ in $G$, the triple $(V(H),F,\pi)$ is a $(\mathcal{F},\delta)$-heavy core in $G$.
    Conversely, if $(X,F,f)$ is a $(\mathcal{F},\delta)$-heavy core in $G$ where $f$ is surjective, then there exists a unique $F$-copy $(H,\pi)$ in $G$ such that $(X,F,f) = (V(H),F,\pi)$.
\end{fact}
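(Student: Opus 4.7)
The plan is to prove the two directions separately; both follow fairly directly from the definitions, so no serious obstacle is anticipated. The key observation, already hinted at in the surrounding discussion, is that the definition of a heavy core does not require the witnessing $F$-copies to be distinct, which makes the first direction essentially trivial. The second direction will use the bijectivity of $f$ to pin down the witness completely.

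For the first direction, let $F \in \mathcal{F}$ and let $(H,\pi)$ be an $F$-copy in $G$. Set $\Delta = \gamma^{|V(H)|}\delta$ and define $(H_i,\pi_i) = (H,\pi)$ for every $i \in [\Delta]$. Then $V(H_1) = \cdots = V(H_\Delta) = V(H)$, so the sets $V(H_i) \setminus V(H)$ are all empty and hence trivially pairwise disjoint, meaning that $V(H_1),\dots,V(H_\Delta)$ form a sunflower whose core is $V(H)$. Moreover $(\pi_i)_{|V(H)} = \pi_{|V(H)} = \pi$ for every $i \in [\Delta]$. By definition, $(V(H),F,\pi)$ is an $(\mathcal{F},\delta)$-heavy core, witnessed by $(H_1,\pi_1),\dots,(H_\Delta,\pi_\Delta)$.

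For the second (converse) direction, suppose $(X,F,f)$ is an $(\mathcal{F},\delta)$-heavy core in $G$ where $f$ is surjective. Since $f \colon X \to V(F)$ is injective (because $(X,F,f)$ is a standard triple) and surjective by assumption, $f$ is a bijection and $|X| = |V(F)|$. Let $(H_1,\pi_1),\dots,(H_\Delta,\pi_\Delta)$ be a witness of $(X,F,f)$. Each $H_i$ is isomorphic to $F$, so $|V(H_i)| = |V(F)| = |X|$, and since $X \subseteq V(H_i)$ we obtain $V(H_i) = X$ for every $i \in [\Delta]$. The condition $(\pi_i)_{|X} = f$ then forces $\pi_i = f$. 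Finally, because $\pi_i$ is an isomorphism between $H_i$ and $F$, the edge set of $H_i$ is uniquely determined by $f$, namely $E(H_i) = \{(u,v) \in E(G) : u,v \in X \text{ and } (f(u),f(v)) \in E(F)\}$. Consequently all the $(H_i,\pi_i)$ coincide; setting $(H,\pi) = (H_1,\pi_1)$ gives $(V(H),F,\pi) = (X,F,f)$. Uniqueness follows from the same argument: if $(H,\pi)$ and $(H',\pi')$ both satisfy the conclusion, then $V(H) = V(H') = X$ and $\pi = \pi' = f$, and the edge sets are then determined as above, so $H = H'$.

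The main (and only minor) subtlety is checking that the trivial witness in the first direction really does qualify as a sunflower in the sense used by the paper, i.e., that a family of identical sets is admitted as a sunflower whose core equals the common set; this is immediate from the definition in Section~\ref{sec-pre} since $V_i \setminus X = \emptyset$ when $V_i = X$. No other steps require care.
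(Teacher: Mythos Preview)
Your proof is correct and follows precisely the argument the paper sketches in the paragraph immediately preceding the fact (the paper states Fact~\ref{fact-Fcopyheavy} without a formal proof, relying on that discussion). Both directions match: the trivial witness of identical copies for the forward direction, and the bijectivity-forces-equality argument for the converse.
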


\noindent
We observe the following basic property of a heavy core.

\begin{fact} \label{fact-connect}
    Let $\sigma$ be an ordering of $V(G)$, $\delta > \textnormal{wcol}_\gamma(G,\sigma)$ be an integer, and $(X,F,f)$ be a $(\mathcal{F},\delta)$-heavy core in $G$.
    For $x,y \in X$, if there exists a path $\pi$ in $F$ between $f(x)$ and $f(y)$ of length $\ell$ such that $y$ is the largest vertex in $f^{-1}(V(\pi))$ under the ordering $\sigma$, then $y \in \textnormal{WR}_\ell(G,\sigma,x)$.
\end{fact}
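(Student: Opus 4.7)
The plan is to use the witness of the heavy core to produce many essentially disjoint realizations of the path $\pi$ in $G$, and then to exploit the assumption $\delta > \textnormal{wcol}_\gamma(G,\sigma)$ to argue that at least one of these realizations must avoid creating any vertex larger than $y$ under $\sigma$. The degenerate case $\ell = 0$ forces $f(x) = f(y)$, and since $f$ is injective this gives $x = y$, so $y \in \textnormal{WR}_0(G,\sigma,x)$ trivially; thus I will assume $\ell \geq 1$.

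First, fix a witness $(H_1,\pi_1),\dots,(H_\Delta,\pi_\Delta)$ of $(X,F,f)$ with $\Delta = \gamma^{|X|}\delta \geq \delta$. Because $(\pi_i)_{|X} = f$, we have $\pi_i^{-1}(f(x)) = x$ and $\pi_i^{-1}(f(y)) = y$, so pulling back $\pi$ through each isomorphism $\pi_i^{-1}$ produces a path $P_i$ in $H_i \subseteq G$ from $x$ to $y$ of length $\ell$. The sunflower condition $V(H_1) \setminus X,\dots,V(H_\Delta) \setminus X$ being pairwise disjoint will be the combinatorial engine of the argument.

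Next, I would suppose for contradiction that $y$ is \emph{not} the largest vertex on $P_i$ for any $i \in [\Delta]$, and let $u_i$ be a largest vertex on $P_i$ under $\sigma$; by assumption $u_i >_\sigma y$. The key claim is $u_i \in V(H_i) \setminus X$. Indeed, if $u_i \in X$ then $\pi_i(u_i) = f(u_i)$ lies in $V(\pi)$, so $u_i \in f^{-1}(V(\pi))$, contradicting the hypothesis that $y$ is maximal in $f^{-1}(V(\pi))$ since $u_i >_\sigma y$. Hence the vertices $u_1,\dots,u_\Delta$ are pairwise distinct by the sunflower property.

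Finally, the sub-path of $P_i$ joining $u_i$ to $y$ has length at most $\ell \leq \gamma$, and $u_i$ is the largest vertex on it under $\sigma$, so $u_i \in \textnormal{WR}_\ell(G,\sigma,y) \subseteq \textnormal{WR}_\gamma(G,\sigma,y)$. Therefore
\[
|\textnormal{WR}_\gamma(G,\sigma,y)| \geq \Delta \geq \delta > \textnormal{wcol}_\gamma(G,\sigma),
\]
which is the desired contradiction. So some $P_i$ has $y$ as its largest vertex, and $y \in \textnormal{WR}_\ell(G,\sigma,x)$. The only subtle step is showing $u_i \notin X$; the rest is straightforward bookkeeping once the witness is unpacked and the sunflower disjointness is applied.
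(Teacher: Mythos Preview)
Your proposal is correct and follows essentially the same approach as the paper: pull back $\pi$ through each witness isomorphism to get $\Delta$ paths $P_i$ in $G$ from $x$ to $y$, take the $\sigma$-maximum vertex on each, show that whenever this maximum is not $y$ it must lie outside $X$ (hence in the disjoint petals of the sunflower), and conclude via the pigeonhole bound against $\textnormal{wcol}_\gamma(G,\sigma)$. The only cosmetic difference is that the paper places the $\Delta$ distinct maxima in $\textnormal{WR}_\ell(G,\sigma,x)$ while you place them in $\textnormal{WR}_\gamma(G,\sigma,y)$; both choices work for the same reason. One small point: you assert $\ell \leq \gamma$ without justification, but the statement does not require $\pi$ to be simple. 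The paper handles this by first replacing $\pi$ with a simple sub-path between $f(x)$ and $f(y)$ (which only shrinks $f^{-1}(V(\pi))$ and $\ell$, so the hypothesis is preserved and the conclusion only strengthens); you should add this one-line reduction.
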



\begin{proof}
Let $(H_1,\phi_1),\dots,(H_\Delta,\phi_\Delta)$ be a witness of $(X,F,f)$, where $\Delta = \gamma^{|X|} \delta > \text{wcol}_\gamma(G,\sigma)$.
Suppose there exists a path $\pi$ in $F$ between $f(x)$ and $f(y)$ of length $\ell$ such that $y$ is the largest vertex in $f^{-1}(V(\pi))$ under the ordering $\sigma$.
Without loss of generality, we can assume that $\pi$ is simple and thus $\ell \leq \gamma$.
In each $H_i$, there exists a path $\pi_i$ between $x$ and $y$, where $\phi_i$ induces an isomorphism between $\pi_i$ and $\pi$.
Let $v_i = \max(V(\pi_i))$ for $i \in [\Delta]$ and $I = \{i \in [\Delta]: v_i \neq y\}$.
Observe that $v_i \notin f^{-1}(V(\pi))$ for all $i \in I$.
Indeed, if $v_i \in f^{-1}(V(\pi))$, then we must have $v_i = y$ and thus $i \notin I$, because $y$ is the largest vertex in $f^{-1}(V(\pi))$.
Since $f^{-1}(V(\pi)) = V(\pi_i) \cap X$, we have $v_i \in V(\pi_i) \backslash X$ for all $i \in I$.
Therefore, these $v_i$'s are distinct.
Furthermore, $v_i \in \textnormal{WR}_\ell(G,\sigma,x)$ for all $i \in [\Delta]$ and in particular all $i \in I$, as the largest vertex on the sub-path of $\pi_i$ between $x$ and $v_i$ is $v_i$ itself.
It follows that $|I| \leq |\textnormal{WR}_\ell(G,\sigma,x)| \leq \text{wcol}_\ell(G,\sigma)$.
Since $\Delta \geq \delta > \text{wcol}_\gamma(G,\sigma) \geq \text{wcol}_\ell(G,\sigma)$, we have $[\Delta] \backslash I \neq \emptyset$.
Let $i \in [\Delta] \backslash I$.
Then $y = v_i \in \textnormal{WR}_\ell(G,\sigma,x)$.
\end{proof}

Let $U \subseteq V(G)$ be a subset.
We say a $(\mathcal{F},\delta)$-heavy core $(X,F,f)$ in $G$ is \textit{$U$-minimal} if $X \nsubseteq U$ and for any $(\mathcal{F},\delta)$-heavy core $(Y,F,g)$ in $G$ such that $(Y,F,g) \prec (X,F,f)$, we have $Y \subseteq U$.
The most important result in this section is a structural lemma for $U$-minimal heavy cores (in graphs that admit small separators), which is Lemma~\ref{lem-twGaifman}.
The lemma essentially states that if $G$ admits $(\eta,0,\rho)$-separators for $\rho < 1$, then for any $U$-minimal heavy cores $(X_1,F_1,f_1),\dots,(X_r,F_r,f_r)$ in $G$, the Gaifman graph of $\{X_1 \backslash U,\dots,X_r \backslash U\}$ has treewidth sublinear in the size of a minimum hitting set of $\{X_1 \backslash U,\dots,X_r \backslash U\}$.
The proof of Lemma~\ref{lem-twGaifman} is highly technical, and thus we need to first establish some auxiliary results, i.e., Lemma~\ref{lem-heavysep}, Corollary~\ref{cor-heavycore}, and Corollary~\ref{cor-hitfew}.
These results work for any $G$ (i.e., do not require $G$ to have small separators).

\begin{lemma} \label{lem-heavysep}
    Let $\delta > 0$ and $p \geq \gamma^\gamma \delta$ be integers.
    Suppose $(X_1,F,f_1),\dots,(X_p,F,f_p)$ are $(\mathcal{F},\delta)$-heavy cores in $G$ satisfy the following.
    \begin{itemize}
        \item $|X_1| = \cdots = |X_p|$ and $X_1,\dots,X_p$ form a sunflower with core $K$.
        \item There exists $f: K \rightarrow V(F)$ such that $f = (f_1)_{|K} = \cdots = (f_p)_{|K}$.
    \end{itemize}
    Then for any $i \in [p]$ and any set $\mathcal{C}$ of connected components of $F - f(K)$, $(X_i^\mathcal{C},F,f_i^\mathcal{C})$ is a $(\mathcal{F},\delta)$-heavy core in $G$, where $X_i^\mathcal{C} = K \cup (\bigcup_{C \in \mathcal{C}} f_i^{-1}(V(C)))$ and $f_i^\mathcal{C} = (f_i)_{|X_i^\mathcal{C}}$.
\end{lemma}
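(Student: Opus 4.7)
My plan is to construct explicitly a witness of the required size for $(X_i^{\mathcal{C}}, F, f_i^{\mathcal{C}})$; by symmetry it suffices to treat $i = 1$. If $\mathcal{C}$ consists of every connected component of $F - f(K)$, then $(X_1^{\mathcal{C}}, F, f_1^{\mathcal{C}}) = (X_1, F, f_1)$ and the given witness works. Otherwise I write $A = f(K) \cup \bigcup_{C \in \mathcal{C}} V(C)$ and $B = f(K) \cup (V(F) \setminus A)$, so $A \cap B = f(K)$, $V(F)$ partitions into $A \setminus B$, $f(K)$, and $B \setminus A$, and $F$ has no edge between $A \setminus B$ and $B \setminus A$ (these sets lie in distinct components of $F - f(K)$). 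The target witness size is $\Delta = \gamma^{|X_1^{\mathcal{C}}|}\delta$, and crucially $|X_1^{\mathcal{C}}| < |X_1| = \cdots = |X_p|$, so each witness of $(X_j, F, f_j)$ is strictly larger than $\Delta$.

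The engine of the proof is a gluing observation verified via Lemma~\ref{lem-isom}. Given any $F$-copy $(P, \phi)$ from the witness of $(X_1, F, f_1)$ and any $F$-copy $(Q, \psi)$ from the witness of some $(X_j, F, f_j)$, provided their pieces outside $K$ are compatible (disjoint where required), the union $H = P[\phi^{-1}(A)] \cup Q[\psi^{-1}(B)]$ together with the map $\pi$ that equals $\phi$ on $\phi^{-1}(A)$ and $\psi$ on $\psi^{-1}(B)$ is itself an $F$-copy. Well-definedness of $\pi$ uses that $\phi$ and $\psi$ both restrict to $f$ on $K$; the isomorphism property follows from Lemma~\ref{lem-isom} applied to $V_1 = \phi^{-1}(A)$ and $V_2 = \psi^{-1}(B)$, where the absence of $F$-edges between $A \setminus B$ and $B \setminus A$ is exactly what lets $H$ decompose into the two induced subgraphs. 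Any such glued $(H, \pi)$ satisfies $X_1^{\mathcal{C}} \subseteq V(H)$ and $\pi_{|X_1^{\mathcal{C}}} = f_1^{\mathcal{C}}$, i.e.\ is a candidate.

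Using this, I greedily build candidates $(H_1, \pi_1), \dots, (H_\Delta, \pi_\Delta)$ whose vertex sets are pairwise disjoint outside $X_1^{\mathcal{C}}$. I fix $\Delta$ copies $(P_1, \phi_1), \dots, (P_\Delta, \phi_\Delta)$ from the witness of $(X_1, F, f_1)$ and commit to using $P_t[\phi_t^{-1}(A)]$ as the $A$-half of the $t$-th candidate; these $A$-halves are already disjoint outside $K$ by the sunflower property of that witness. Suppose candidates $1, \dots, t-1$ are built and let $U$ be the union of all vertices used outside $X_1^{\mathcal{C}}$ together with the as-yet-unused $A$-halves. Because the petals $X_1 \setminus K, \dots, X_p \setminus K$ are pairwise disjoint and $p \geq \gamma^\gamma \delta$ is large, a pigeonhole count produces some $j \in [p]$ with $(X_j \setminus K) \cap U = \emptyset$. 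Within the witness of $(X_j, F, f_j)$, whose $F$-copies have pairwise disjoint parts outside $X_j$ and whose size $\gamma^{|X_j|}\delta$ strictly exceeds $\Delta$ (since $|X_j| > |X_1^{\mathcal{C}}|$), at most $|U|$ of those copies have their $B$-half meeting $U$, leaving a safe $(Q, \psi)$; gluing its $B$-half onto $P_t[\phi_t^{-1}(A)]$ yields the $t$-th candidate.

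The main obstacle is the arithmetic of the greedy step: both $p \geq \gamma^\gamma \delta$ and every witness size $\gamma^{|X_j|}\delta$ must dominate the cumulative size of $U$ at every iteration, and $|U|$ grows by a bounded function of $\gamma$ per step. Once these inequalities are laid out carefully, it is immediate that the constructed $(H_1, \pi_1), \dots, (H_\Delta, \pi_\Delta)$ form a structured sunflower whose core is exactly $(X_1^{\mathcal{C}}, F, f_1^{\mathcal{C}})$, certifying that this triple is an $(\mathcal{F}, \delta)$-heavy core and completing the proof.
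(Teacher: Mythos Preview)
Your proposal is correct and follows essentially the same approach as the paper: the same decomposition $A = f(K) \cup \bigcup_{C\in\mathcal{C}} V(C)$, $B = V(F)\setminus A \cup f(K)$, the same gluing of an $A$-half from the witness of $(X_1,F,f_1)$ with a $B$-half from the witness of some $(X_j,F,f_j)$, and the same pigeonhole/disjoint-petals counting to find safe pieces. The only organizational difference is that the paper's formal proof argues by contradiction (take a maximal structured sunflower of size $\Delta < \gamma^{|X_1^{\mathcal C}|}\delta$ and produce one more copy), which makes the arithmetic a touch cleaner since the forbidden set is bounded just once; your direct greedy build is exactly the construction sketched in the paper's overview, and the inequalities you flag do go through with $p \ge \gamma^\gamma\delta$ and witness sizes $\gamma^{|X_j|}\delta > \gamma^{|X_1^{\mathcal C}|}\delta$.
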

\begin{proof}
Let $\mathcal{C}$ be a set of connected components of $F - f(K)$.
Without loss of generality, we only need to show that $(X_1^\mathcal{C},F,f_1^\mathcal{C})$ is a heavy core in $G$.
If $X_1^\mathcal{C} = X_1$, then $(X_1^\mathcal{C},F,f_1^\mathcal{C}) = (X_1,F,f_1)$ and we are done.
So assume $X_1^\mathcal{C} \subsetneq X_1$.
Thus, $|X_1^\mathcal{C}| < |X_1| \leq \gamma$ and thus $|X_1^\mathcal{C}| \leq \gamma-1$.
Define $A = f(K) \cup (\bigcup_{C \in \mathcal{C}} V(C))$ and $B = (V(F) \backslash A) \cup f(K)$.
Clearly, $A \cup B = V(F)$ and $A \cap B = f(K)$.
Also, $X_1^\mathcal{C} = f_1^{-1}(A)$.
Since $f_1^{-1}(A) = X_1^\mathcal{C} \subsetneq X_1 = f_1^{-1}(V(F))$, we have $A \subsetneq V(F)$ and thus $B \neq \emptyset$.

Write $X = X_1^\mathcal{C} = f_1^{-1}(A)$ and $f = f_1^\mathcal{C} = (f_1)_{|X}$ for convenience.
Let $\Delta \in \mathbb{N}$ be the largest integer such that there exist $F$-copies $(H_1,\pi_1),\dots,(H_\Delta,\pi_\Delta)$ satisfying that $V(H_1),\dots,V(H_\Delta)$ form a sunflower with core $X$ and $(\pi_i)_{|X} = f$ for all $i \in [\Delta]$.
We shall prove that $\Delta \geq \gamma^{|X|} \delta$, and thus by definition $(X,F,f)$ is a heavy core in $G$.
Assume $\Delta < \gamma^{|X|} \delta$ in order to deduce a contradiction.
In particular, $\Delta < \gamma^{\gamma-1} \delta = p/\gamma$, as $|X| \leq \gamma-1$.
Set $Z = \bigcup_{i=1}^\Delta (V(H_i) \backslash X)$.
We have $|Z| \leq \gamma \Delta$.
Since $(X_1,F,f_1)$ is a heavy core in $G$, it has a witness, which consists of $F$-copies $(P_1,\phi_1),\dots,(P_{\Delta'},\phi_{\Delta'})$ in $G$ where $\Delta' = \gamma^{|X_1|} \delta$.
The sets $P_1,\dots,P_{\Delta'}$ form a sunflower with core $X_1$ and $(\phi_1)_{|X_1} = \cdots = (\phi_{\Delta'})_{|X_1}$.
As $|X_1| \geq |X|+1$, we have $\Delta' \geq \gamma \cdot \gamma^{|X|} \delta > \gamma \Delta \geq |Z|$.
Because $V(P_1) \backslash X_1,\dots,V(P_{\Delta'}) \backslash X_1$ are disjoint and $\Delta' > |Z|$, there exists an index $i \in [\Delta']$ such that $(V(P_i) \backslash X_1) \cap Z = \emptyset$.
Without loss of generality, we can assume $(V(P_1) \backslash X_1) \cap Z = \emptyset$.
Set $Z' = Z \cup \phi_1^{-1}(A)$.
Since $A \subsetneq V(F)$, $|\phi_1^{-1}(A)| = |A| < \gamma$ and thus $|Z'| < \gamma \Delta + \gamma = \gamma (\Delta+1)$.
We have $\Delta +1 \leq \gamma^{|X|} \delta$, because $\Delta < \gamma^{|X|} \delta$.
Therefore, $|Z'| < \gamma^{|X|+1} \delta \leq \Delta' \leq p$.
As $X_1 \backslash K,\dots,X_p \backslash K$ are disjoint and $|Z'|<p$, there exists an index $i \in [p]$ such that $(X_i \backslash K) \cap Z' = \emptyset$.
Now $(X_i,F,f_i)$ is a heavy core in $G$.
By assumption, we have $|X_i| = |X_1|$.
So $(X_i,F,f_i)$ has a witness consisting of $F$-copies $(Q_1,\psi_1),\dots,(Q_{\Delta'},\psi_{\Delta'})$ in $G$.
Again, the sets $V(Q_1),\dots,V(Q_{\Delta'})$ form a sunflower with core $X_i$ and $(\psi_1)_{|X_i} = \cdots = (\psi_{\Delta'})_{|X_i}$.
Since $V(Q_1) \backslash X_i,\dots,V(Q_{\Delta'}) \backslash X_i$ are disjoint and $|Z'| < \Delta'$, there exists an index $j \in [\Delta']$ such that $(V(Q_j) \backslash X_i) \cap Z' = \emptyset$.
Without loss of generality, we can assume $(V(Q_1) \backslash X_i) \cap Z' = \emptyset$.
Now we have
\begin{equation*}
    (V(Q_1) \backslash K) \cap Z' = ((V(Q_1) \backslash X_i) \cap Z') \cup ((X_i \backslash K) \cap Z') = \emptyset,
\end{equation*}
and in particular $(\psi_1^{-1}(B) \backslash K) \cap \phi_1^{-1}(A) = \emptyset$, which implies $\phi_1^{-1}(A) \cap \psi_1^{-1}(B) = K$ as $K \subseteq \phi_1^{-1}(A)$ and $K \subseteq \psi_1^{-1}(B)$.
Next, we construct an $F$-copy $(H,\pi)$ as follows.
Define $H = P_1[\phi_1^{-1}(A)] \cup Q_1[\psi_1^{-1}(B)]$ and $\pi: V(H) \rightarrow V(F)$ as
\begin{equation*}
    \pi(v) = \left\{
    \begin{array}{ll}
        \phi_1(v) & \text{if } v \in \phi_1^{-1}(A), \\
        \psi_1(v) & \text{if } v \in \psi_1^{-1}(B).
    \end{array}
    \right.
\end{equation*}
Note that $\pi$ is well-defined, as $\phi_1^{-1}(A) \cap \psi_1^{-1}(B) = K$ and $(\phi_1)_{|K} = (f_1)_{|K} = f = (f_i)_{|K} = (\psi_1)_{|K}$.
We show that $\pi$ is bijective.
The sets $A$ and $B$ are both in the image of $\pi$, as $\phi_1$ and $\psi_1$ are bijective.
Since $A \cup B = V(F)$, $\pi$ is surjective.
Furthermore, $\pi$ is injective on $\phi_1^{-1}(A)$ and on $\psi_1^{-1}(B)$, because both $\phi_1$ and $\psi_1$ are injective.
Thus, to see $\pi$ is injective, it suffices to show that $\pi(u) \neq \pi(v)$ for any $u \in \phi_1^{-1}(A) \backslash K$ and $v \in \psi_1^{-1}(B) \backslash K$.
In this case, $\pi(u) \in A \backslash f(K)$ and $\pi(v) \in B \backslash f(K)$, which implies $\pi(u) \neq \pi(v)$ since $A \cap B = f(K)$.
We then further show that $\pi$ is an isomorphism between $H$ and $F$.
We have $H = P_1[\phi_1^{-1}(A)] \cup Q_1[\psi_1^{-1}(B)]$.
On the other hand, it is easy to see that $F = F[A] \cup F[B]$.
Furthermore, $\pi_{|\phi_1^{-1}(A)} = (\phi_1)_{|\phi_1^{-1}(A)}$ and $\pi_{|\psi_1^{-1}(B)} = (\psi_1)_{|\psi_1^{-1}(B)}$.
Thus, $\pi_{|\phi_1^{-1}(A)}$ (resp., $\pi_{|\psi_1^{-1}(B)}$) is an isomorphism between $P_1[\phi_1^{-1}(A)]$ (resp., $Q_1[\psi_1^{-1}(B)]$) and $F[A]$ (resp., $F[B]$).
By Lemma~\ref{lem-isom}, $\pi$ is an isomorphism between $H$ and $F$, which implies that $(H,\pi)$ is an $F$-copy.

We claim that $V(H_1),\dots,V(H_\Delta),V(H)$ form a sunflower with core $X$, and $\pi_{|X} = (\pi_1)_{|X} = \cdots = (\pi_\Delta)_{|X}$.
It suffices to show that $V(H) \cap V(H_i) = X$ and $\pi_{|X} = (\pi_i)_{|X}$ for all $i \in [\Delta]$.
By construction, $X = f_1^{-1}(A) \subseteq \phi_1^{-1}(A) \subseteq V(H)$ and $\pi_{|X} = (\phi_1)_{|X} = (f_1)_{|X} = (\pi_i)_{|X}$ for all $i \in [\Delta]$.
Therefore, we only need to show that $(\bigcup_{i=1}^\Delta (V(H_i) \backslash X)) \cap V(H) = \emptyset$, i.e., $Z \cap V(H) = \emptyset$.
Observe that $X = X_1 \cap \phi_1^{-1}(A)$, because $(\phi_1)_{|X_1} = f_1$ and thus $f_1^{-1}(A) = X_1 \cap \phi_1^{-1}(A)$.
It follows that $\phi_1^{-1}(A) \subseteq (V(P_1) \backslash X_1) \cup X$ and hence
\begin{equation*}
    Z \cap \phi_1^{-1}(A) \subseteq ((V(P_1) \backslash X_1) \cap Z) \cup (X \cap Z) 
    = \emptyset.
\end{equation*}
On the other hand, as $(V(Q_1) \backslash X_i) \cap Z' = \emptyset$ and $(X_i \backslash K) \cap Z' = \emptyset$, we can deduce that
\begin{equation*}
    Z \cap \psi_1^{-1}(B) \subseteq ((V(Q_1) \backslash X_i) \cap Z) \cup ((X_i \backslash K) \cap Z) \cup (K \cap Z) = \emptyset.
\end{equation*}
Applying the fact $Z \cap V(H) = (Z \cap \phi_1^{-1}(A)) \cup (Z \cap \psi_1^{-1}(B))$, we have $Z \cap V(H) = \emptyset$.
To summarize, $V(H_1),\dots,V(H_\Delta),V(H)$ form a sunflower with core $X$, and $\pi_{|X} = (\pi_1)_{|X} = \cdots = (\pi_\Delta)_{|X}$.
However, this contradicts with the choice of $\Delta$, which is supposed to be the \textit{largest} number of $F$-copies in $G$ whose vertex sets form a sunflower with core $X$ and isomorphisms coincide on $X$.
Therefore, we must have $\Delta \geq \gamma^{|X|} \delta$, and thus $(X,F,f)$ is a heavy core in $G$.
\end{proof}

\begin{corollary} \label{cor-heavycore}
    Let $\delta > 0$ and $r = \gamma^{\gamma+1} \delta$ be integers.
    Suppose $(X_1,F,f_1),\dots,(X_r,F,f_r)$ are $(\mathcal{F},\delta)$-heavy cores in $G$ such that $X_1,\dots,X_r$ form a sunflower with core $X$ and there exists $f: X \rightarrow V(F)$ satisfying $f = (f_1)_{|X} = \cdots = (f_r)_{|X}$.
    Then $(X,F,f)$ is also a $(\mathcal{F},\delta)$-heavy core in $G$.
\end{corollary}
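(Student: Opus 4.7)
The plan is a direct reduction to Lemma~\ref{lem-heavysep} via a pigeonhole step. First, I would dispose of the trivial case where some $|X_i|$ equals $|X|$: since $X \subseteq X_i$ (because $X$ is the core of the sunflower $X_1, \dots, X_r$), this forces $X_i = X$, and then $f_i = (f_i)_{|X} = f$, so $(X,F,f) = (X_i,F,f_i)$ is already a $(\mathcal{F},\delta)$-heavy core by hypothesis and we are done.

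In the remaining case, $|X_i| > |X|$ for every $i \in [r]$, so each $|X_i|$ lies in $\{|X|+1, \dots, \gamma\}$, a set of at most $\gamma$ possible values. By the pigeonhole principle, some size $s$ is attained by at least $r/\gamma = \gamma^\gamma \delta$ of the indices. Relabeling for convenience, I can assume $|X_1| = \cdots = |X_p| = s$ where $p = \gamma^\gamma \delta$. Restricting the original collection to these $p$ heavy cores preserves both hypotheses of Lemma~\ref{lem-heavysep}: the sets $X_1, \dots, X_p$ still form a sunflower with core $X$, and the maps $(f_i)_{|X}$ all still coincide with $f$.

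Now I would invoke Lemma~\ref{lem-heavysep} on these $p$ heavy cores with the choice $\mathcal{C} = \emptyset$. Unpacking the definitions in that lemma gives $X_1^{\mathcal{C}} = X \cup \bigcup_{C \in \emptyset} f_1^{-1}(V(C)) = X$ and $f_1^{\mathcal{C}} = (f_1)_{|X} = f$, so the lemma concludes exactly that $(X, F, f)$ is a $(\mathcal{F},\delta)$-heavy core in $G$, which is what we wanted.

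This argument is essentially a counting reduction, so I do not anticipate any real obstacle: the only delicate point is arranging the pigeonhole with exactly $\gamma$ buckets rather than $\gamma + 1$, which is why the $|X_i| = |X|$ case is handled separately upfront. It is precisely this bucket count that makes the hypothesis $r = \gamma^{\gamma+1}\delta$ tight for supplying the $\gamma^\gamma \delta$ equal-size heavy cores demanded by Lemma~\ref{lem-heavysep}.
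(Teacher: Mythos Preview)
Your proposal is correct and essentially identical to the paper's proof: pigeonhole on $|X_i|$ to extract $p = \gamma^\gamma \delta$ heavy cores of equal size, then invoke Lemma~\ref{lem-heavysep} with $\mathcal{C} = \emptyset$. The only cosmetic difference is that you peel off the case $|X_i| = |X|$ first, whereas the paper simply pigeonholes over all sizes $t \in [\gamma]$ at once; both routes are valid.
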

\begin{proof}
Set $p = \gamma^\gamma \delta$.
For $t \in [\gamma]$, let $I_t = \{i \in [r]: |X_i| = t\}$.
We have $\sum_{t=1}^\gamma |I_t| = r$, and thus there exists $t \in [\gamma]$ such that $|I_t| \geq r/\gamma = p$.
Without loss of generality, assume $|X_1| = \cdots |X_p| = t$.
Now we apply Lemma~\ref{lem-heavysep} on the heavy cores $(X_1,F,f_1),\dots,(X_p,F,f_p)$ with an arbitrary $i \in [p]$ and $\mathcal{C} = \emptyset$.
Following the notation of Lemma~\ref{lem-heavysep}, we have $X_i^\mathcal{C} = X$ and $f_i^\mathcal{C} = (f_i)_{|X} = f$, and the lemma states that $(X,F,f)$ is a heavy core in $G$.
\end{proof}

\begin{corollary} \label{cor-hitfew}
    Let $\delta > 0$ and $r \geq |\mathcal{F}| \cdot (\gamma^{2 \gamma+1} \delta)^\gamma \gamma!$ be integers, and $U \subseteq V(G)$ be a subset.
    Suppose $(X_1,F_1,f_1),\dots,(X_r,F_r,f_r)$ are distinct $(\mathcal{F},\delta)$-heavy cores in $G$ such that $\bigcap_{i=1}^r (X_i \backslash U) \neq \emptyset$.
    Then $(X_i,F_i,f_i)$ is not $U$-minimal for some $i \in [r]$.
\end{corollary}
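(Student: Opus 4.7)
The plan is to combine Pigeonhole, Fact~\ref{fact-goodsunflower}, and Corollary~\ref{cor-heavycore} to produce a strictly smaller heavy core witnessing the failure of $U$-minimality.

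First I would apply the Pigeonhole principle on the index $F_i \in \mathcal{F}$: since $r/|\mathcal{F}| \geq (\gamma^{2\gamma+1}\delta)^\gamma\gamma!$, there exist at least $(\gamma^{2\gamma+1}\delta)^\gamma\gamma! = (\gamma^\gamma \cdot \gamma^{\gamma+1}\delta)^\gamma \gamma!$ indices $i$ for which $F_i$ is equal to a common graph $F \in \mathcal{F}$. I would then forget all other indices and relabel, so that we now have this many heavy cores $(X_i,F,f_i)$ all sharing the same $F$.

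Next, I would invoke Fact~\ref{fact-goodsunflower} with $p := \gamma^{\gamma+1}\delta$. Since the remaining number of standard triples is at least $(\gamma^\gamma p)^\gamma\gamma!$, the fact yields a subset $I$ of $p$ of these heavy cores, say $(X_{i_1},F,f_{i_1}),\dots,(X_{i_p},F,f_{i_p})$, such that $X_{i_1},\dots,X_{i_p}$ form a sunflower with some core $K$ and there is a map $f:K\to V(F)$ with $f = (f_{i_j})_{|K}$ for every $j\in[p]$. Because $p \geq \gamma^{\gamma+1}\delta$, Corollary~\ref{cor-heavycore} applies and tells us that $(K,F,f)$ is itself a $(\mathcal{F},\delta)$-heavy core in $G$.

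Now I would extract the failure of $U$-minimality. Since $\bigcap_{i=1}^r (X_i\setminus U)\neq\emptyset$ and $K = \bigcap_{j=1}^p X_{i_j}\supseteq \bigcap_{i=1}^r X_i$, we have $K\setminus U\neq\emptyset$, so $K\nsubseteq U$. Moreover, among the $p$ chosen triples at most one can satisfy $X_{i_j}=K$: if two such triples had $X_{i_j}=X_{i_{j'}}=K$ then $f_{i_j} = (f_{i_j})_{|K} = f = (f_{i_{j'}})_{|K} = f_{i_{j'}}$, contradicting distinctness. Since $p\geq 2$, there is some $j$ with $K\subsetneq X_{i_j}$, and then $(K,F,f)\prec (X_{i_j},F,f_{i_j})$ is a heavy core with $K\nsubseteq U$, showing that $(X_{i_j},F,f_{i_j})$ is not $U$-minimal, which is exactly the conclusion sought.

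There is no real obstacle here beyond bookkeeping: the only thing to be mildly careful about is the counting that makes the hypothesis $r\geq |\mathcal{F}|\cdot(\gamma^{2\gamma+1}\delta)^\gamma\gamma!$ match both the Pigeonhole step and the threshold $(\gamma^\gamma p)^\gamma\gamma!$ demanded by Fact~\ref{fact-goodsunflower} with $p=\gamma^{\gamma+1}\delta$, and the small case analysis ensuring that the extracted heavy core $(K,F,f)$ is \emph{strictly} smaller than $(X_{i_j},F,f_{i_j})$ for at least one $j$, which is handled by the distinctness argument above.
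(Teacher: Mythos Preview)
Your proposal is correct and follows essentially the same approach as the paper: Pigeonhole on $F_i$, then Fact~\ref{fact-goodsunflower} with $p=\gamma^{\gamma+1}\delta$ to extract a sunflower with core $K$, then Corollary~\ref{cor-heavycore} to certify $(K,F,f)$ is heavy, and finally distinctness to ensure $K\subsetneq X_{i_j}$ for some $j$. Your handling of the strictness step (showing at most one sunflower petal can equal $K$) is in fact slightly more explicit than the paper's, but the argument is otherwise identical.
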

\begin{proof}
Set $s = r/|\mathcal{F}| = (\gamma^{2 \gamma+1} \delta)^\gamma \gamma!$.
Then by the Pigeonhole principle, there exist $F \in \mathcal{F}$ and $I \subseteq [r]$ with $|I| = s$ such that $F_i = F$ for all $i \in I$.
Without loss of generality, we can assume that $F_1 = \cdots F_s = F$.
Set $p = \gamma^{\gamma+1} \delta$.
By Fact~\ref{fact-goodsunflower}, there exists $I \subseteq [s]$ with $|I| = p$ such that $\{X_i: i \in I\}$ form a sunflower with core $X$ and $(f_i)_{|X} = (f_j)_{|X}$ for all $i,j \in I$.
Again, we can assume $I = [p]$ without loss of generality.
Note that $X \backslash U \neq \emptyset$, since 
\begin{equation*}
    X \backslash U = \bigcap_{i=1}^p (X_i \backslash U) \supseteq \bigcap_{i=1}^r (X_i \backslash U) \neq \emptyset.
\end{equation*}


Let $f: X \rightarrow V(F)$ be the map such that $f = (f_1)_{|X} = \cdots = (f_p)_{|X}$.
By Corollary~\ref{cor-heavycore}, $(X,F,f)$ is a heavy core in $G$.
Note that $X \subsetneq X_i$ for some $i \in [p]$.
Indeed, if $X = X_i$ for all $i \in [p]$, then $(X,F,f) = (X_1,F_1,f_1) = \cdots = (X_p,F_p,f_p)$, which contradicts with the fact that the heavy cores are distinct.
Thus, $(X_i,F_i,f_i)$ is not $U$-minimal, as $(X,F,f) \prec (X_i,F_i,f_i)$ and $X \backslash U \neq \emptyset$.
\end{proof}

Now we are ready to prove our key structural lemma in this section, Lemma~\ref{lem-twGaifman}.
As this lemma is important, we state it in a self-contained form as below.

\begin{lemma} \label{lem-twGaifman}
    Let $\mathcal{G} \subseteq \mathcal{G}(\eta,0,\rho)$ where $\eta \geq 0$ and $0 \leq \rho < 1$.
    Also, let $\mathcal{F}$ be a finite set of graphs.
    Then for any $G \in \mathcal{G}$, any integer $\delta > \textnormal{wcol}_\gamma(G)$, any $U \subseteq V(G)$, and any $U$-minimal $(\mathcal{F},\delta)$-heavy cores $(X_1,F_1,f_1),\dots,(X_r,F_r,f_r)$ in $G$, the treewidth of the Gaifman graph of $\{X_1 \backslash U,\dots,X_r \backslash U\}$ is $(\eta\delta)^{O(1)} \cdot k^\rho$, where $k$ denotes the size of a minimum hitting set for $\{X_1 \backslash U,\dots,X_r \backslash U\}$.
    The constant hidden in $O(\cdot)$ only depends on $\rho$ and $\mathcal{F}$.
\end{lemma}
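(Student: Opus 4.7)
The plan is to follow the outline from Section~\ref{sec-subltw}. Write $V^\star = \bigcup_{i=1}^r (X_i \setminus U)$ and let $G^\star$ denote the Gaifman graph of $\{X_1 \setminus U, \dots, X_r \setminus U\}$. I first bound $|V^\star|$: Corollary~\ref{cor-hitfew} applied to any vertex $v \in V(G) \setminus U$ shows that $v$ lies in at most $|\mathcal{F}| \cdot (\gamma^{2\gamma+1}\delta)^\gamma \gamma! = \delta^{O(1)}$ of the (distinct) $U$-minimal heavy cores $(X_i, F_i, f_i)$. A minimum hitting set has size $k$, so $|V^\star| \leq \gamma \cdot \delta^{O(1)} \cdot k = \delta^{O(1)} k$. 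Next, fix $\sigma$ attaining $\textnormal{wcol}_\gamma(G)$ and let $G'$ be the $\gamma$-augmentation of $G$ under $\sigma$. By Lemma~\ref{lem-augmentation}, $G' \in \mathcal{G}(\eta',0,\rho)$ with $\eta' = (\eta\gamma)^{O(\gamma^2)}$, and by Lemma~\ref{lem-separator=tw} the induced subgraph $G'[V^\star]$ has a tree decomposition $(T,\beta)$ of width $O(\eta' \cdot |V^\star|^\rho) = (\eta\delta)^{O(1)} k^\rho$.

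Following the overview, I would construct a tree decomposition $(T,\beta^\star)$ of $G^\star$ on the same tree $T$: pick for each $i \in [r]$ a node $t_i$ with $\beta(t_i) \cap (X_i \setminus U) \neq \emptyset$, define $\beta_i^\star(t)$ to be the set of $v \in X_i \setminus U$ such that the unique path in $T$ from $t_i$ to some node whose bag contains $v$ passes through $t$, and set $\beta^\star(t) = \bigcup_i \beta_i^\star(t)$. A routine check confirms that $(T,\beta^\star)$ satisfies the three axioms of a tree decomposition of $G^\star$: every $X_i \setminus U$ lies in some common bag $\beta^\star(t_i)$ (which covers all $G^\star$-edges), coverage holds, and each $\beta_i^\star$ induces a connected subtree. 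The real task is to bound $|\beta^\star(t)|$. Fix $t$, set $I = \{i : \beta(t) \cap (X_i \setminus U) \neq \emptyset\}$ and $I^\star = \{i : \beta_i^\star(t) \neq \emptyset\}$. Since $|X_i \setminus U| \leq \gamma$, it suffices to prove $|I^\star| = \delta^{O(1)} |\beta(t)|$. The inclusion bound on $|I|$ follows from Step~1 applied vertex-by-vertex to $\beta(t)$, giving $|I| \leq \delta^{O(1)} |\beta(t)|$, so the heart of the proof is to bound $|I^\star \setminus I|$.

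For this I would use the charging argument from the overview. For each $i \in I^\star \setminus I$ pick $v_i \in \beta(t_i) \cap (X_i \setminus U)$ and $v_i' \in \beta_i^\star(t) \setminus \beta(t)$; since the $v_i$-area and $v_i'$-area of $T$ are separated by $\{t\}$, and since $\delta > \textnormal{wcol}_\gamma(G)$ together with the reachability edges of $G'$ would force any $f_i$-image of a short $F_i$-path between $f_i(v_i)$ and $f_i(v_i')$ avoiding $f_i(X_i \cap \beta(t))$ to produce a $G'$-edge within the same component of $T - \{t\}$ (via Fact~\ref{fact-connect}), I can conclude that $f_i(X_i \cap \beta(t))$ separates $f_i(v_i)$ and $f_i(v_i')$ in $F_i$. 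To control the number of possible labels, I refine this separator into $Y_i \subseteq X_i$ by keeping, for each simple $F_i$-path $\pi$ from $f_i(v_i)$ to $f_i(X_i \cap \beta(t))$ internally disjoint from $f_i(X_i \cap \beta(t))$, only the $\sigma$-largest vertex in $f_i^{-1}(V(\pi))$; then $f_i(Y_i)$ still separates $f_i(v_i)$ from $f_i(v_i')$, and by Fact~\ref{fact-connect} every element of $Y_i$ lies in $\textnormal{WR}_\gamma(G,\sigma,v)$ for some $v \in \beta(t)$, so the number of distinct $Y_i$ is at most $\binom{\textnormal{wcol}_\gamma(G)}{\leq \gamma} \cdot |\beta(t)| = \delta^{O(1)} |\beta(t)|$. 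Finally, to show each set $Y$ is charged $\delta^{O(1)}$ times, I would assume for contradiction that a super-polynomial (in $\delta$) number of indices share $Y_i = Y$; by Fact~\ref{fact-goodsunflower} and Pigeonhole I extract heavy cores satisfying the hypothesis of Lemma~\ref{lem-heavysep} with sunflower core $K \supseteq Y$, use $U$-minimality to force $K \subseteq U$, and apply Lemma~\ref{lem-heavysep} with $\mathcal{C}$ the single component of $F - f(K)$ containing $f(v_i)$ to produce a heavy core strictly $\prec$-below the chosen $(X_i,F,f_i)$ whose vertex set is not contained in $U$, contradicting $U$-minimality.

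The main obstacle I anticipate is the separation claim: proving rigorously that for $i \in I^\star \setminus I$, no short $F_i$-path can connect $f_i(v_i)$ to $f_i(v_i')$ while avoiding $f_i(X_i \cap \beta(t))$. This is where Fact~\ref{fact-connect} is used to translate an $F_i$-path into weak $\gamma$-reachability in $G$, which in turn becomes an edge in $G'$ and hence (by properties of the tree decomposition $(T,\beta)$ of $G'[V^\star]$) must be captured inside $\beta(t)$. Carrying this through carefully — while ensuring that the refinement step yielding $Y_i$ preserves the separating property — is the delicate combinatorial core of the argument. Once this step is in place, assembling the pieces yields $|\beta^\star(t)| \leq \delta^{O(1)} |\beta(t)| + \delta^{O(1)} |\beta(t)| = \delta^{O(1)} |\beta(t)|$ for every $t$, so the width of $(T,\beta^\star)$ and hence the treewidth of $G^\star$ is $(\eta\delta)^{O(1)} k^\rho$, as required.
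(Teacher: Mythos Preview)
Your plan tracks the paper's proof closely, but there is one real issue at exactly the step you flag as delicate. You take the tree decomposition $(T,\beta)$ over $G'[V^\star]$ with $V^\star = \bigcup_i (X_i \setminus U)$, whereas the paper's actual proof decomposes $G'[\bigcup_i X_i]$, retaining the vertices of each $X_i$ that lie in $U$. This matters in the separation claim: when an $F_i$-path from $f_i(v_i)$ to $f_i(v_i')$ avoiding $f_i(X_i\cap\beta(t))$ is translated via Fact~\ref{fact-connect} into a walk $x_0,x_1,\dots,x_p$ in $G'$ with all $x_j\in X_i$, the intermediate $x_j$ may lie in $X_i\cap U$ and hence outside $V^\star$. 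Then $(x_0,\dots,x_p)$ is not a path in $G'[V^\star]$, and you cannot invoke connectedness of the bag-subtrees in $(T,\beta)$. For instance, if $X_i=\{a,b,c\}$ with $b\in U$ the $\sigma$-largest of the three and $(f_i(a),f_i(b),f_i(c))$ a path in $F_i$, then $a$ and $c$ need not be adjacent in $G'$; a decomposition of $G'[V^\star]$ can separate $a$ from $c$ at a node $t$ with $X_i\cap\beta(t)=\emptyset$, and your claim would then assert that $\emptyset$ separates $f_i(a)$ from $f_i(c)$ in a connected $F_i$.

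The fix is simply to decompose $G'[\bigcup_i X_i]$ instead. Your hitting-set count already yields $r\le \delta^{O(1)}k$, hence $\left|\bigcup_i X_i\right|\le \gamma r = \delta^{O(1)}k$, and the width bound from Lemma~\ref{lem-separator=tw} is unchanged. With this change the chain $(x_0,\dots,x_p)$ lies entirely in the decomposed graph and the connectedness argument goes through. (The paper's informal overview in Section~\ref{sec-subltw} also writes $\bigcup_i(X_i\setminus U)$, so the slip is natural; the formal proof switches to $\bigcup_i X_i$ precisely to make this step work.) The remaining parts of your outline---the construction of $(T,\beta^\star)$, the refinement to $Y_i$ with $Y_i\subseteq \text{WR}_\gamma(G,\sigma,v)$ for a single $v\in\beta(t)$, and the sunflower contradiction via Lemma~\ref{lem-heavysep}---match the paper's argument.
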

\begin{proof}
Without loss of generality, we can assume that $X_i \neq X_j$ for all distinct $i,j \in [r]$, for otherwise we can remove $(X_i,F_i,f_i)$ or $(X_j,F_j,f_j)$ while keeping the Gaifman graph unchanged.
We first show that $|\bigcup_{i=1}^r X_i| = \delta^{O(1)} \cdot k$.
Let $\gamma = \max_{F \in \mathcal{F}} |V(F)|$, and $S \subseteq V(G) \backslash U$ be a hitting set of $\{X_1 \backslash U,\dots,X_r \backslash U\}$ with $|S| = k$.
Then $S$ is also a hitting set of $\{X_1,\dots,X_r\}$.
By Corollary~\ref{cor-hitfew}, each vertex in $S$ can hit at most $|\mathcal{F}| \cdot (\gamma^{2 \gamma+1} \delta)^\gamma \gamma! - 1$ sets in $\{X_1,\dots,X_r\}$, since $(X_i,F_i,f_i)$ is $U$-minimal for all $i \in [r]$.
So we must have $r < |\mathcal{F}| \cdot (\gamma^{2 \gamma+1} \delta)^\gamma \gamma! \cdot k$, which implies that
\begin{equation*}
    \left|\bigcup_{i=1}^r X_i\right| \leq \gamma r < \gamma|\mathcal{F}| \cdot (\gamma^{2 \gamma+1} \delta)^\gamma \gamma! \cdot k = \delta^{O(1)} \cdot k.
\end{equation*}
Fix an ordering $\sigma$ of $V(G)$ such that $\text{wcol}_\gamma(G) = \text{wcol}_\gamma(G,\sigma)$.
Let $G'$ be the graph obtained from $G$ by adding edges $(u,v)$ for $u,v \in V(G)$ such that $v \in \text{WR}_\gamma(G,\sigma,u)$.
Then $G' \in \mathcal{G}'$, where $\mathcal{G}'$ is the $\gamma$-augmentation of $\mathcal{G}$.
By Lemma~\ref{lem-augmentation}, $\mathcal{G}' \subseteq \mathcal{G}(\eta^{O(1)},0,\rho)$.
Then Lemma~\ref{lem-separator=tw} implies that $\mathbf{tw}(G'[\bigcup_{i=1}^r X_i]) = (\eta \delta)^{O(1)} \cdot k^\rho$.
Fix a tree decomposition $(T,\beta)$ of $G'[\bigcup_{i=1}^r X_i]$ of width $(\eta \delta)^{O(1)} \cdot k^\rho$.
Our goal is to modify $(T,\beta)$ to a tree decomposition of the Gaifman graph $G^*$ of $\{X_1 \backslash U,\dots,X_r \backslash U\}$, without increasing its width too much.
The modification is done as follows.
For each $i \in [r]$, we pick a node $t_i \in T$ such that $\beta(t_i) \cap (X_i \backslash U) \neq \emptyset$.
For two nodes $t,t' \in T$, denote by $\pi_{t,t'}$ as the (unique) path in $T$ connecting $t$ and $t'$.
Then for each node $t \in T$ and each $i \in [r]$, we define $\beta_i^*(t)$ as the set of all vertices $v \in X_i \backslash U$ such that $t$ is on the path $\pi_{t_i,t'}$ for some node $t' \in T$ with $v \in \beta(t')$.
Set $\beta^*(t) = \bigcup_{i=1}^r \beta_i^*(t)$ for all $t \in T$.
We claim that $(T,\beta^*)$ is a tree decomposition of $G^*$.
Observe that $X_i \backslash U \subseteq \beta_i^*(t_i) \subseteq \beta^*(t_i)$ for all $i \in [r]$.
Thus, for every edge $(u,v) \in E(G^*)$, there exists $t \in T$ such that $u,v \in \beta^*(t)$.
For each vertex $v \in V = V(G^*)$, the set $T_v = \{t \in T: v \in \beta(t)\}$ of nodes are connected in $T$, because $(T,\beta)$ is a tree decomposition.
Let $T_v^* = \{t \in T: v \in \beta^*(t)\}$.
By construction, $T_v^*$ consists of exactly the nodes on the paths $\pi_{t_i,t}$ for all $t \in T_v$ and all $i \in [r]$ such that $v \in X_i \backslash U$.
These paths cover the entire $T_v$ (i.e., $T_v \subseteq T_v^*$) and each of them contains at least one vertex in $T_v$.
As $T_v$ is connected in $T$ and the paths are also connected, $T_v^*$ is connected in $T$.
Therefore, $(T,\beta^*)$ is a tree decomposition of $G^*$.

Next, we bound the width of $(T,\beta^*)$, which is the most challenging part of the proof.
Consider a node $t \in T$.
Observe that $\beta(t) \cap (\bigcup_{i=1}^r (X_i \backslash U)) \subseteq \beta^*(t)$, since $\beta(t) \cap (X_i \backslash U) \subseteq \beta_i^*(t)$ by our construction.
We shall prove that $|\beta^*(t)| = (\eta \delta)^{O(1)} \cdot |\beta(t)|$.
Let $I^* = \{i \in [r]: \beta_i^*(t) \neq \emptyset\}$.
Since $|\beta_i^*(t)| \leq |X_i \backslash U| \leq \gamma$, we have $|\beta^*(t)| \leq \gamma |I^*|$ and thus it suffices to show that $|I^*| = \delta^{O(1)} \cdot |\beta(t)|$.
Now let $I = \{i \in [r]: \beta(t) \cap (X_i \backslash U) \neq \emptyset\}$.
By Corollary~\ref{cor-hitfew}, each vertex in $\beta(t)$ can hit at most $|\mathcal{F}| \cdot (\gamma^{2 \gamma+1} \delta)^\gamma \gamma! - 1 = O(\delta^\gamma)$ sets in $\{X_1 \backslash U,\dots,X_r \backslash U\}$.
So we have $|I| = O(\delta^\gamma |\beta(t)|)$ and it suffices to show that $|I^* \backslash I| = (\eta \delta)^{O(1)} \cdot |\beta(t)|$.
We make the following claims.
\medskip

\noindent
\textit{Claim 1.}
For every $i \in I^* \backslash I$, there exist two connected components $C$ and $C'$ of $F - f_i(X_i \cap \beta(t))$ such that $f_i^{-1}(V(C)) \nsubseteq U$ and $f_i^{-1}(V(C')) \nsubseteq U$.
\medskip

\noindent
\textit{Proof.}
We have $\beta(t_i) \cap (X_i \backslash U) \neq \emptyset$ by the choice of $t_i$.
Pick a vertex $v \in \beta(t_i) \cap (X_i \backslash U)$.
As $i \in I^*$, $\beta_i^*(t) \neq \emptyset$ and thus there exists $v' \in \beta_i^*(t)$.
Furthermore, since $i \notin I$, we have $\beta(t) \cap (X_i \backslash U) = \emptyset$, which implies $v,v' \notin \beta(t)$ because $v,v' \in X_i \backslash U$.
Therefore, both $f_i(v)$ and $f_i(v')$ are vertices of $F - f_i(X_i \cap \beta(t))$.
Let $C$ and $C'$ be the connected components of $F - f_i(X_i \cap \beta(t))$ containing $f_i(v)$ and $f_i(v')$, respectively.
We have $V(C),V(C') \nsubseteq U$, simply because $v,v' \notin U$.
It now suffices to show that $C \neq C'$.
Assume $C = C'$, i.e., $f_i(v)$ and $f_i(v')$ lie in the same connected component of $F - f_i(X_i \cap \beta(t))$.
Let $\pi$ be a simple path from $f_i(v)$ to $f_i(v')$ in $F - f_i(X_i \cap \beta(t))$.
Suppose $a_0,a_1,\dots,a_p \in V(F)$ are the vertices on $\pi$ that are in $f_i(X_i)$, sorted along $\pi$, where $a_0 = f_i(v)$ and $a_p = f_i(v')$.
Define $x_0,x_1,\dots,x_p \in X_i$ as the pre-images of $a_0,a_1,\dots,a_p$ under $f_i$, respectively.
Then $x_0 = v$ and $x_p = v'$.
Since $\pi$ is a path in $F - f_i(X_i \cap \beta(t))$, $a_0,a_1,\dots,a_p \notin f_i(X_i \cap \beta(t))$ and thus $x_0,x_1,\dots,x_p \notin X_i \cap \beta(t)$, which implies $x_0,x_1,\dots,x_p \notin \beta(t)$.
For each $j \in [p]$, let $\pi_j$ be the sub-path of $\pi$ between $a_{j-1}$ and $a_j$.
Note that the internal nodes of $\pi_j$ are all in $V(F) \backslash f_i(X_i)$, and thus $f_i^{-1}(V(\pi_j)) = \{a_{j-1},a_j\}$.
Since $\delta > \text{wcol}_\gamma(G) = \text{wcol}_\gamma(G,\sigma)$ and the length of $\pi_j$ is at most $\gamma$, by Fact~\ref{fact-connect}, we have $x_j \in \text{WR}_\gamma(G,\sigma,x_{j-1})$ if $x_{j-1} <_\sigma x_j$ and $x_{j-1} \in \text{WR}_\gamma(G,\sigma,x_j)$ if $x_j <_\sigma x_{j-1}$.
In either case, we have $(x_{j-1},x_j) \in E(G')$.
It follows that $(x_0,x_1,\dots,x_p)$ is a path in $G'[X_i]$.
So the nodes $s \in T$ with $\beta(s) \cap \{x_0,x_1,\dots,x_p\} \neq \emptyset$ are connected in $T$, by the property of a tree decomposition.
As $v' \in \beta_i^*(t)$, $t$ is on the path $\pi_{t_i,t'}$ in $T$ for a node $t' \in T$ with $v' \in \beta(t')$.
We have $v \in \beta(t_i) \cap \{x_0,x_1,\dots,x_p\}$ and hence $\beta(t_i) \cap \{x_0,x_1,\dots,x_p\} \neq \emptyset$.
On the other hand, we have $v' \in \beta(t') \cap \{x_0,x_1,\dots,x_p\}$ and hence $\beta(t') \cap \{x_0,x_1,\dots,x_p\} \neq \emptyset$.
Since $t$ is on the path $\pi_{t_i,t'}$, we should also have $\beta(t) \cap \{x_0,x_1,\dots,x_p\} \neq \emptyset$.
However, this contradicts with the fact that $x_0,x_1,\dots,x_p \notin \beta(t)$.
Therefore, $C \neq C'$.
\hfill $\lhd$
\medskip

\noindent
\textit{Claim 2.}
For every $i \in I^* \backslash I$, there exist $Y_i \subseteq X_i$ and $v_i,v_i' \in X_i \backslash U$ satisfying the following.
\begin{itemize}
    \item $Y_i \subseteq \text{WR}_\gamma(G,\sigma,v)$ for some $v \in \beta(t)$.
    \item $f_i(v_i)$ and $f_i(v_i')$ do not lie in the same connected component of $F - f_i(Y_i)$.
\end{itemize}
\medskip

\noindent
\textit{Proof.}
By Claim~1, there exist two connected components $C$ and $C'$ of $F-f_i(X_i \cap \beta(t))$ such that $f_i^{-1}(V(C)) \nsubseteq U$ and $f_i^{-1}(V(C')) \nsubseteq U$.
We arbitrarily pick vertices $v_i \in f_i^{-1}(V(C)) \backslash U$ and $v_i' \in f_i^{-1}(V(C')) \backslash U$.
Furthermore, we define $Y_i \subseteq X_i$ as follows.
Let $\varPi$ be the set of all simple paths in $F$ from $f_i(v_i)$ to a vertex in $f_i(X_i \cap \beta(t))$ in which all internal nodes are in $V(F) \backslash f_i(X_i \cap \beta(t))$.
For every vertex $u \in X_i$, we include $u$ in $Y_i$ if there exists $\pi \in \varPi$ such that $u$ is the largest vertex (under the ordering $\sigma$) in $f_i^{-1}(V(\pi))$.
We show that $Y_i$, $v_i$, and $v_i'$ satisfies the desired two properties.
Define $u$ as the smallest vertex in $Y_i$ (under the ordering $\sigma$).
By construction, $u$ is the largest vertex in $f_i^{-1}(V(\pi))$ for some $\pi \in \varPi$.
The path $\pi$ connects $f_i(v_i)$ and a vertex $z \in f_i(X_i \cap \beta(t))$.
Let $v \in X_i \cap \beta(t)$ with $f_i(v) = z$.
We claim that $Y_i \subseteq \text{WR}_\gamma(G,\sigma,v)$.
Consider a vertex $y \in Y_i$.
There exists $\pi' \in \varPi$ such that $y$ is the largest vertex in $f_i^{-1}(V(\pi'))$.
Concatenating $\pi$ and $\pi'$, we obtain a path $\pi''$ between $f_i(v)$ and $f_i(y)$.
We have $u \leq_\sigma y$, as $u$ is the smallest vertex in $Y_i$.
Therefore, $y$ is the largest vertex in $f_i^{-1}(V(\pi''))$, since $u$ is the largest vertex in $f_i^{-1}(V(\pi))$ and $y$ is the largest vertex in $f_i^{-1}(V(\pi'))$.
Note that $\pi''$ is not necessarily a simple path.
But we can find a simple path $\hat{\pi}$ in $F$ between $f_i(v)$ and $f_i(y)$ such that $V(\hat{\pi}) \subseteq V(\pi'')$.
As $y \in f_i^{-1}(V(\hat{\pi})) \subseteq f_i^{-1}(V(\pi''))$, $y$ is also the largest vertex in $f_i^{-1}(V(\hat{\pi}))$.
Since the length of $\hat{\pi}$ is at most $\gamma$, by Fact~\ref{fact-connect}, we directly have $y \in \text{WR}_\gamma(G,\sigma,v)$.

To see the second property, we can assume that $v_i, v_i' \notin Y_i$, for otherwise one of $f_i(v_i)$ and $f_i(v_i')$ is not a vertex in $F - f_i(Y_i)$.
We need to show that every path $\pi$ in $F$ between $f(v_i)$ and $f(v_i')$ contains some vertex in $f_i(Y_i)$.
Recall that $f_i(v_i) \in V(C)$ and $f_i(v_i') \in V(C')$, where $C$ and $C'$ are two connected components of $F-f_i(X_i \cap \beta(t))$.
Thus, $\pi$ must contain some vertex in $f_i(X_i \cap \beta(t))$.
Let $z \in V(\pi) \cap f_i(X_i \cap \beta(t))$ be the vertex that is closest to $f_i(v_i)$ on $\pi$, and $\pi'$ be the sub-path of $\pi$ between $f_i(v_i)$ and $z$.
Then $\pi' \in \varPi$.
The largest vertex in $f_i^{-1}(V(\pi'))$ is contained in $Y_i$, which implies that $\pi'$ (and hence $\pi$) contains some vertex in $f_i(Y_i)$.
\hfill $\lhd$
\medskip

Now we are ready to bound $|I^* \backslash I|$.
We charge each index $i \in I^* \backslash I$ to the set $Y_i$ in Claim~2.
Note that $Y_i \subseteq X_i$ and $Y_i \subseteq \text{WR}_\gamma(G,\sigma,v)$ for some $v \in \beta(t)$ by Claim~2, which implies that $Y_i$ is a subset of $\text{WR}_\gamma(G,\sigma,v)$ of size at most $\gamma$.
As $|\text{WR}_\gamma(G,\sigma,v)| \leq \text{wcol}_\gamma(G,\sigma) = \text{wcol}_\gamma(G)$, the number of distinct $Y_i$'s is bounded by $\text{wcol}_\gamma^\gamma(G) \cdot |\beta(t)|$, which is $\eta^{O(1)} \cdot |\beta(t)|$.
It suffices to show that for each $v \in \beta(t)$, each subset $Y \subseteq \text{WR}_\gamma(G,\sigma,v)$ of size at most $\gamma$ can get charged at most $\delta^{O(1)}$ times.
Assume $Y$ gets charged $d = |\mathcal{F}| \gamma \cdot (|\mathcal{F}| \cdot (\gamma^{2 \gamma+2} \delta)^\gamma \gamma!)^\gamma \gamma! = \delta^{O(1)}$ times, in order to deduce a contradiction.
We may assume that the indices charged to $Y$ are $1,\dots,d$, without loss of generality.
These indices correspond to the heavy cores $(X_1,F_1,f_1),\dots,(X_d,F_d,f_d)$.
Set 
\begin{equation*}
    s = d/(|\mathcal{F}| \gamma) = (|\mathcal{F}| \cdot (\gamma^{2 \gamma+2} \delta)^\gamma \gamma!)^\gamma \gamma!.
\end{equation*}
By the Pigeonhole principle, there exist $F \in \mathcal{F}$ and $J \subseteq [d]$ with $|J| = s$ such that $F = F_j$ for all $j \in \mathcal{J}$ and $|X_i| = |X_j|$ for all $i,j \in J$.
Without loss of generality, assume $F_1 = \cdots = F_s = F$ and $|X_1| = \cdots = |X_s|$.
Set $p = |\mathcal{F}| \cdot (\gamma^{2 \gamma+1} \delta)^\gamma \gamma!$.
By Fact~\ref{fact-goodsunflower}, there exists $J \subseteq [s]$ with $|J| = p$ such that $\{X_j: j \in J\}$ form a sunflower with core $K \subseteq V(G)$ and $(f_i)_{|K} = (f_j)_{|K}$ for all $i,j \in J$.
Again, we can assume $J = \{1,\dots,p\}$ without loss of generality.

Let $f:K \rightarrow V(F)$ be the map such that $f = (f_1)_{|K} = \cdots = (f_p)_{|K}$.
By Corollary~\ref{cor-hitfew}, each vertex in $V(G) \backslash U$ can hit at most $|\mathcal{F}| \cdot (\gamma^{2 \gamma+1} \delta)^\gamma \gamma! -1 = p-1$ sets in $\{X_1,\dots,X_p\}$, because $(X_1,F,f_1),\dots,(X_p,F,f_p)$ are $U$-minimal.
So we have $K \subseteq U$.
On the other hand, we have $Y \subseteq K$, simply because every $i \in [p]$ is charged to $Y$ and hence $Y \subseteq X_i$ for all $i \in [p]$.
The heavy cores $(X_1,F,f_1),\dots,(X_p,F,f_p)$ satisfy the conditions in Lemma~\ref{lem-heavysep}.
Thus, by Lemma~\ref{lem-heavysep}, for any $i \in [p]$ and any set $\mathcal{C}$ of connected components of $F - f(K)$, $(X_i^\mathcal{C},F,f_i^\mathcal{C})$ is a $(\mathcal{F},\delta)$-heavy core in $G$, where $X_i^{\mathcal{C}} = K \cup (\bigcup_{C \in \mathcal{C}} f_i^{-1}(V(C)))$ and $f_i^{\mathcal{C}}$ is the restriction of $f_i$ to $X_i^{\mathcal{C}}$.
Take an arbitrary $i \in [p]$.
By Claim~2, there exist $v_i,v_i' \in X_i \backslash U$ such that $f_i(v_i)$ and $f_i(v_i')$ do not lie in the same connected component of $F - f_i(Y)$.
As $Y \subseteq K$, $f_i(v_i)$ and $f_i(v_i')$ also do not lie in the same connected component of $F - f_i(K)$.
Note that $v_i,v_i' \notin K$, since $K \subseteq U$ and $v_i,v_i' \in X_i \backslash U$.
Therefore, $f_i(v_i)$ and $f_i(v_i')$ are both vertices of $F - f_i(K)$.
Let $C$ be the connected component of $F - f_i(K)$ containing $f_i(v_i)$, and $\mathcal{C} = \{C\}$.
Then $f_i(v_i') \notin V(C)$, which implies that $X_i^{\mathcal{C}} \subsetneq X_i$ and hence $(X_i^\mathcal{C},F,f_i^\mathcal{C}) \prec (X_i,F,f_i)$.
We have $X_i^\mathcal{C} \nsubseteq U$, as $v_i \in X_i^\mathcal{C} \backslash U$.
It follows that $(X_i,F,f_i)$ is not $U$-minimal, contradicting with our assumption.
So we conclude that $Y$ can get charged at most $d-1 = \delta^{O(1)}$ times, which implies $|I^* \backslash I| = (\eta \delta)^{O(1)} |\beta(t)|$ and $|\beta^*(t)| = (\eta \delta)^{O(1)} |\beta(t)|$.
Recall that $|\beta(t)| = (\eta \delta)^{O(1)} \cdot k^\rho$ since the width of $(T,\beta)$ is $(\eta \delta)^{O(1)} \cdot k^\rho$, and thus $|\beta^*(t)| = (\eta \delta)^{O(1)} \cdot k^\rho$.
As a result, $\mathbf{tw}(G^*) = (\eta \delta)^{O(1)} \cdot k^\rho$, completing the proof.
\end{proof}

In the rest of this section, we introduce some additional definitions and properties for heavy cores.
Let $U \subseteq V(G)$.
For each $F \in \mathcal{F}$, we say an $F$-copy $(H,\pi)$ in $G$ is \textit{$U$-redundant} if there exists another $F$-copy $(H',\pi')$ in $G$ such that $V(H')\nsubseteq U$ and $V(H') \backslash U \subsetneq V(H) \backslash U$.
We say an $(\mathcal{F},\delta)$-heavy core $(X,F,f)$ in $G$ is \textit{$U$-redundant} if \textit{every} $F$-copy $(H,\pi)$ in $G$ satisfying $(X,F,f) \preceq (V(H),F,\pi)$ is \textit{$U$-redundant}.
The $U$-redundancy of heavy cores satisfies the following monotonicity with the partial order $\prec$.

\begin{fact} \label{fact-redundancy}
    Let $\delta>0$ be an integer, and $(X,F,f),(Y,F,g)$ be $(\mathcal{F},\delta)$-heavy cores in $G$ such that $(X,F,f) \prec (Y,F,g)$.
    For any $U \subseteq V(G)$, if $(X,F,f)$ is $U$-redundant, then $(Y,F,g)$ is $U$-redundant.
\end{fact}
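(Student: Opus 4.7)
The plan is to derive the conclusion directly from the two definitions, observing that the universal quantifier in the definition of a $U$-redundant heavy core is taken over a \emph{smaller} family of $F$-copies as the triple moves upward in the partial order $\prec$. Concretely, I will start by unfolding what it means for $(Y,F,g)$ to be $U$-redundant: one must show that every $F$-copy $(H,\pi)$ in $G$ with $(Y,F,g) \preceq (V(H),F,\pi)$ is a $U$-redundant $F$-copy.

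The key step is a one-line transitivity observation on $\preceq$. Fix any such $(H,\pi)$ with $(Y,F,g) \preceq (V(H),F,\pi)$. Since by hypothesis $(X,F,f) \prec (Y,F,g)$, the definitions of $\prec$ and $\preceq$ give $X \subsetneq V(H)$ together with $f = g_{|X} = \pi_{|X}$, hence $(X,F,f) \prec (V(H),F,\pi)$ and in particular $(X,F,f) \preceq (V(H),F,\pi)$. Applying the hypothesis that $(X,F,f)$ is $U$-redundant to this $F$-copy yields that $(H,\pi)$ is a $U$-redundant $F$-copy.

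Since the $F$-copy $(H,\pi)$ above was an arbitrary $F$-copy satisfying $(Y,F,g) \preceq (V(H),F,\pi)$, this verifies the defining condition for $(Y,F,g)$ to be $U$-redundant. There is no real obstacle here: the entire argument reduces to checking that the relation $\preceq$ is transitive on standard triples, which is immediate from the set-inclusion/restriction clauses in its definition. The only thing worth being careful about is bookkeeping the two shades of ``$U$-redundant'' (one attached to $F$-copies and one attached to heavy cores), and making sure the quantifier in the heavy-core definition is invoked on the \emph{larger} triple in the order, which is exactly the direction used above.
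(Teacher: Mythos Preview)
Your proof is correct and follows essentially the same approach as the paper's: both arguments take an arbitrary $F$-copy $(H,\pi)$ with $(Y,F,g) \preceq (V(H),F,\pi)$, use transitivity of $\preceq$ together with $(X,F,f) \prec (Y,F,g)$ to get $(X,F,f) \preceq (V(H),F,\pi)$, and then invoke the $U$-redundancy of $(X,F,f)$ to conclude that $(H,\pi)$ is $U$-redundant.
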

\begin{proof}
Suppose $(X,F,f)$ is $U$-redundant.
We show that every $F$-copy $(H,\pi)$ in $G$ satisfying $(Y,F,g) \preceq (V(H),F,\pi)$ is $U$-redundant, which implies $(Y,F,g)$ is $U$-redundant.
Consider such an $F$-copy $(H,\pi)$.
Since $(X,F,f) \prec (Y,F,g)$, we also have $(X,F,f) \preceq (V(H),F,\pi)$.
As $(X,F,f)$ is $U$-redundant, by definition, $(H,\pi)$ is $U$-redundant.
\end{proof}

Now we are ready to define the last notion we need, active heavy cores.
We say an $(\mathcal{F},\delta)$-heavy core $(X,F,f)$ in $G$ is \textit{$U$-active} if it is $U$-minimal but not $U$-redundant.
An important property of active heavy cores is the following lemma, which essentially states that being an $\mathcal{F}$-hitting set disjoint from $U$, it suffices to hit all $U$-active heavy cores.

\begin{lemma} \label{lem-hitactive}
    Let $G$ be a graph, $\mathcal{F}$ be a finite set of graphs, $\delta > 0$ be an integer, and $U \subseteq V(G)$ such that $G[U]$ is $\mathcal{F}$-free.
    For any set $S \subseteq V(G) \backslash U$, if $S \cap X \neq \emptyset$ for every $U$-active $(\mathcal{F},\delta)$-heavy core $(X,F,f)$ in $G$, then $S$ is an $\mathcal{F}$-hitting set of $G$.
\end{lemma}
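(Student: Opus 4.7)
\medskip
\noindent\textbf{Proof plan for Lemma~\ref{lem-hitactive}.}
The plan is to fix an arbitrary $F$-copy $(H,\pi)$ in $G$ with $F\in\mathcal{F}$ and show $S\cap V(H)\neq\emptyset$. I will argue by induction on the quantity $|V(H)\setminus U|$, treating the $U$-redundant and non-$U$-redundant cases separately. The base case is immediate: if $V(H)\subseteq U$ then $G[U]$ contains a subgraph isomorphic to $F$, contradicting $\mathcal{F}$-freeness of $G[U]$; hence $|V(H)\setminus U|\geq 1$ for every $F$-copy we need to consider.

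For the inductive step, first suppose $(V(H),F,\pi)$ is $U$-redundant. By the very definition of redundancy for $F$-copies (applied to $(H,\pi)$ itself, which trivially satisfies $(V(H),F,\pi)\preceq(V(H),F,\pi)$), there exists another $F$-copy $(H',\pi')$ with $V(H')\nsubseteq U$ and $V(H')\setminus U\subsetneq V(H)\setminus U$. Applying the induction hypothesis to $(H',\pi')$ yields $S\cap V(H')\neq\emptyset$, and since $S\subseteq V(G)\setminus U$ we in fact get $S\cap(V(H')\setminus U)\neq\emptyset$, which is contained in $V(H)\setminus U$; this gives $S\cap V(H)\neq\emptyset$.

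Now suppose $(V(H),F,\pi)$ is not $U$-redundant (note that by Fact~\ref{fact-Fcopyheavy} it is a heavy core). Consider the collection $\mathcal{H}$ of all $(\mathcal{F},\delta)$-heavy cores $(Y,F,g)$ in $G$ with $(Y,F,g)\preceq(V(H),F,\pi)$ and $Y\nsubseteq U$. Because $(V(H),F,\pi)\in\mathcal{H}$, this collection is nonempty and finite, so it has a $\preceq$-minimal element $(X,F,f)$. I claim $(X,F,f)$ is $U$-active. Indeed, the contrapositive of Fact~\ref{fact-redundancy} applied along the chain $(X,F,f)\preceq(V(H),F,\pi)$ shows $(X,F,f)$ is not $U$-redundant. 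Next, $X\nsubseteq U$ by the choice of $(X,F,f)\in\mathcal{H}$, and for any heavy core $(Y,F,g)\prec(X,F,f)$ the minimality of $(X,F,f)$ in $\mathcal{H}$ (together with $(Y,F,g)\prec(V(H),F,\pi)$) forces $Y\subseteq U$; hence $(X,F,f)$ is $U$-minimal. The hypothesis of the lemma then gives $S\cap X\neq\emptyset$, and since $X\subseteq V(H)$ we conclude $S\cap V(H)\neq\emptyset$.

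The conceptually delicate step is the choice of the minimal element $(X,F,f)$ in the non-redundant case: one must simultaneously preserve ``not $U$-redundant'' (which goes downward in $\prec$ via Fact~\ref{fact-redundancy}) and ``not contained in $U$'' (which one must explicitly track), and these two constraints interact with $U$-minimality. Once this is set up correctly, the rest is essentially bookkeeping. The induction terminates because $|V(H)\setminus U|$ strictly decreases in the redundant case and the argument closes immediately in the non-redundant case. Since $(H,\pi)$ and $F\in\mathcal{F}$ were arbitrary, $S$ hits the vertex set of every subgraph of $G$ isomorphic to some member of $\mathcal{F}$, so $S$ is an $\mathcal{F}$-hitting set of $G$.
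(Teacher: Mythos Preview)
Your proof is correct and follows essentially the same route as the paper's: the paper argues by contradiction, picking an $F$-copy $(H,\pi)$ missed by $S$ with $|V(H)\setminus U|$ minimal, and then splits into the same two cases ($U$-redundant vs.\ not), which is just the minimal-counterexample phrasing of your induction on $|V(H)\setminus U|$. The only cosmetic difference is that in the non-redundant case the paper first observes $(V(H),F,\pi)$ is not $U$-active (since $S\cap V(H)=\emptyset$) and hence not $U$-minimal, then descends to a $U$-minimal heavy core, whereas you directly take a $\preceq$-minimal element of $\mathcal{H}$; both arrive at the same $U$-active $(X,F,f)\preceq(V(H),F,\pi)$.
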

\begin{proof}
Suppose $S \subseteq V(G) \backslash U$ satisfies $S \cap X \neq \emptyset$ for every $U$-active $(\mathcal{F},\delta)$-heavy core $(X,F,f)$ in $G$.
We prove $S$ is an $\mathcal{F}$-hitting set of $G$ by contradiction.
Assume that $S$ is not an $\mathcal{F}$-hitting set.
Let $F \in \mathcal{F}$ such that $G-S$ is not $F$-free, and pick an $F$-copy $(H,\pi)$ in $G-S$ that minimizes $|V(H) \backslash U|$.
Note that $|V(H) \backslash U| \geq 1$, because $G[U]$ is $\mathcal{F}$-free and thus $V(H) \nsubseteq U$.
By Fact~\ref{fact-Fcopyheavy}, $(V(H),F,\pi)$ is a heavy core.
Since $S \cap V(H) = \emptyset$, $(V(H),F,\pi)$ is not $U$-active.
We distinguish two cases: $(V(H),F,\pi)$ is not $U$-redundant and $(V(H),F,\pi)$ is $U$-redundant.

If $(V(H),F,\pi)$ is not $U$-redundant, then it is not $U$-minimal, for otherwise it is $U$-active.
So there exists a $U$-minimal heavy core $(X,F,f)$ of $G$ such that $(X,F,f) \prec (V(H),F,\pi)$.
By Fact~\ref{fact-redundancy}, $(X,F,f)$ is not $U$-redundant either.
Thus, $(X,F,f)$ is $U$-active and thus $S \cap X \neq \emptyset$ by our assumption.
But this contradicts the fact that $S \cap V(H) = \emptyset$, since $X \subseteq V(H)$.

If $(V(H),F,\pi)$ is $U$-redundant, then $(H,\pi)$ is $U$-redundant.
So there exists another $F$-copy $(H',\pi')$ in $G$ such that $V(H') \backslash U \subsetneq V(H) \backslash U$.
Since $S \subseteq V(G) \backslash U$ and $S \cap V(H) = \emptyset$, we have $S \cap V(H') = \emptyset$.
This contradicts the fact that $(H,\pi)$ is the $F$-copy with $S \cap V(H) = \emptyset$ that minimizes $|V(H) \backslash U|$, as $(H',\pi')$ is also an $F$-copy satisfying $S \cap V(H') = \emptyset$ and $|V(H') \backslash U| < |V(H) \backslash U|$.
\end{proof}

\subsection{The branching algorithm} \label{sec-branchalg}

Based on the notions and results established in Section~\ref{sec-heavy}, we are now ready to present our branching algorithm for generating the collections $\mathcal{X}_1,\dots,\mathcal{X}_t$ in Theorem~\ref{thm-subtwbranch}.

As our actual algorithm is somehow involved, we shall first describe a very simple branching algorithm that can generate collections $\mathcal{X}_1,\dots,\mathcal{X}_t$ satisfying all conditions in Theorem~\ref{thm-subtwbranch}, except that the number $t$ and the running time of the algorithm are unbounded.
Let $\delta > 0$ be a parameter to be determined.
Imagine there is some (unknown) $\mathcal{F}$-hitting set $S$ of $G$ such that $|S| \leq k$.
Basically, the branching algorithm guesses whether $S$ hits each $(\mathcal{F},\delta)$-heavy cores in $G$.
It maintains a set $U \subseteq V(G)$ and a collection $\mathcal{X}$ of subsets of $V(G)$.
The vertices in $U$ are ``undeletable'' vertices, namely, the vertices that are not supposed to be in $S$.
On the other hand, the sets in $\mathcal{X}$ are supposed to be hit by $S$.
Initially, $U = \emptyset$ and $\mathcal{X} = \emptyset$.
The algorithm finds a $U$-active heavy core $(X,F,f)$ in $G$ such that $X \notin \mathcal{X}$, and then guesses whether $X$ is hit by $S$ or not.
If it decides that $X$ is hit by $S$ (which we call the ``yes'' branch), then $X$ is added to $\mathcal{X}$.
On the other hand, if it decides that $X$ is not hit by $S$ (which we call the ``no'' branch), then the vertices in $X$ are added to $U$.
The algorithm tries both branches and in each branch it proceeds with the new $U$ and $\mathcal{X}$.
The branching terminates when there does not exist a $U$-active heavy core $(X,F,f)$ with $X \notin \mathcal{X}$.
At this point, if $G[U]$ is $\mathcal{F}$-free and $\{X \backslash U: X \in \mathcal{X}\}$
admits a hitting set of size at most $k$\footnote{Of course we cannot afford testing whether $\{X \backslash U: X \in \mathcal{X}\}$ admits a size-$k$ hitting set or not. But as we do not care about the running time of this simple algorithm, let us ignore this technical detail at this point.}, then we construct a new collection $\mathcal{X}_i = \{X \backslash U: X \in \mathcal{X}\}$.
The algorithm then goes back to the last level and tries other branches.
The entire algorithm terminates when it exhausts all branches in all levels.

Consider the collections $\mathcal{X}_1,\dots,\mathcal{X}_t$ constructed during the branching procedure.
First, we show that for any $S \subseteq V(G)$ with $|S| \leq k$, $S$ is an $\mathcal{F}$-hitting set of $G$ iff $S$ is a hitting set of $\mathcal{X}_i$ for some $i \in [t]$.
Suppose $S$ is a hitting set of $\mathcal{X}_i$.
Recall that $\mathcal{X}_i$ is constructed when we are in the situation that $X \notin \mathcal{X}$ for every $U$-active heavy core $(X,F,f)$, and $\mathcal{X}_i = \{X \backslash U: X \in \mathcal{X}\}$.
Thus, by Lemma~\ref{lem-hitactive}, $S$ is an $\mathcal{F}$-hitting set.
On the other hand, suppose $S$ is an $\mathcal{F}$-hitting set.
When the branching algorithm makes the right decision for $S$ in every step (i.e., it chooses the ``yes'' branch whenever the current $X$ is hit by $S$ and chooses the ``no'' branch whenever the current $X$ is not hit by $S$), we finally terminate with some $U$ and $\mathcal{X}$ such that $S \cap U = \emptyset$ and $S$ is a hitting set of $\mathcal{X}$.
We have $G[U]$ is $\mathcal{F}$-free, since $S \cap U = \emptyset$ and $S$ is an $\mathcal{F}$-hitting set.
Furthermore, $\{X \backslash U: X \in \mathcal{X}\}$ admits a hitting set of size at most $k$, which is $S$.
So in the algorithm, we constructed a collection $\mathcal{X}_i = \{X \backslash U: X \in \mathcal{X}\}$ and $S$ is a hitting set of $\mathcal{X}_i$.
Next, we bound the treewidth of the Gaifman graph of each $\mathcal{X}_i$.
Recall that the collection $\mathcal{X}_i$ is constructed only when $\{X \backslash U: X \in \mathcal{X}\}$ admits a hitting set of size at most $k$, and we set $\mathcal{X}_i = \{X \backslash U: X \in \mathcal{X}\}$.
Every $X \in \mathcal{X}$ corresponds to a heavy core $(X,F,f)$ which is $U'$-active for some $U' \subseteq U$ (because $U$ is expanding during the branching procedure), which implies that $(X,F,f)$ is $U'$-minimal and thus $U$-minimal, 
{because $X\nsubseteq U$ as $S\cap X\neq \emptyset$ and $S\cap U=\emptyset$ along with another property needed for being $U$-minimal which follows from the fact that $(X,F,f)$ is $U'$-minimal}.
Thus, by Lemma~\ref{lem-twGaifman}, the Gaifman graph of $\{X \backslash U: X \in \mathcal{X}\}$ has treewidth $\delta^{O(1)} k^\rho$.
If we choose $\delta = k^\varepsilon$ for a sufficiently small $\varepsilon > 0$, then the treewidth is sublinear in $k$.
To summarize, this simple branching algorithm satisfies all conditions of Theorem~\ref{thm-subtwbranch} except the bounds on $t$ and the running time.

To achieve the desired bounds on the number of collections constructed and the running time, the key is to guarantee that along any path in the branching tree from the root to a leaf, the numbers of ``yes'' decisions and ``no'' decisions are bounded.
It turns out that bounding the number of ``yes'' decisions is easy, even without changing the above algorithm.
To see this, consider a successful leaf of the branching tree (i.e., the leaf where a new collection $\mathcal{X}_i$ is constructed).
At that leaf, our sets $U$ and $\mathcal{X}$ satisfy that $\{X \backslash U: X \in \mathcal{X}\}$ admits a hitting set of size at most $k$.
The sets in $\mathcal{X}$ are all $U$-minimal, as argued before.
By Corollary~\ref{cor-hitfew}, any vertex in $V(G) \backslash U$ can hit at most $O(\delta^\gamma)$ sets in $\mathcal{X}$.
Equivalently, any vertex of $G$ can hit at most $O(\delta^\gamma)$ sets in $\{X \backslash U: X \in \mathcal{X}\}$.
Thus, we have $|\mathcal{X}| = O(\delta^\gamma k)$, i.e., the number of ``yes'' decisions we made is $O(\delta^\gamma k)$.
The difficult part is to have a bounded number of ``no'' decisions.
Note that in order to have a subexponential running time, the number of ``no'' decisions must be sublinear in $k$.
To achieve this, new ideas are needed to substantially modify the above algorithm.

In what follows, we present our actual branching algorithm.
Compared to the above simple algorithm, the main difference in our actual algorithm is that, when it decides a set $X$ is not hit by the solution $S$, instead of simply adding the vertices in $X$ to $U$ and going to the next level, it further guesses an additional set $P$ of vertices that are not in $S$ and adds the vertices in $P$ to $U$ as well.
As such, at each stage of branching, we have one ``yes'' branch and \textit{multiple} ``no'' branches (corresponding to different guesses for the set $P$).
The number of sets $P$ to be guessed (and thus the number of ``no'' branches) will be bounded by $\delta^{O(1)}$.
Furthermore, by properly defining these sets, we can guarantee that the number of ``no'' decisions on any successful path in the branching tree is $O(k/\delta)$.
Next, we describe the branching algorithm in detail, which is shown in Algorithm~\ref{alg-branch}.
Let $\delta,\theta_\textsf{yes},\theta_\textsf{no}$ be parameters to be determined, where $\delta$ is used for defining heavy cores and $\theta_\textsf{yes}$ (resp., $\theta_\textsf{no}$) is the total budget for the ``yes'' decisions (resp., ``no'' decisions) we can make in the branching procedure.
These parameters will depend on $\mathcal{G}$, $\mathcal{F}$, and $k$.
We use $\varGamma$ to denote the set of all $(\mathcal{F},\gamma)$-heavy cores in $G$.
The core of our algorithm is the procedure \textsc{Branch} (line~4-15), which has four arguments $U$, $\mathcal{X}$, $d_\mathsf{yes}$, $d_\mathsf{no}$.
Here, $U$ and $\mathcal{X}$ are the same as in the simple algorithm, while $d_\mathsf{yes}$ and $d_\mathsf{no}$ are the remaining budget for the ``yes'' and ``no'' decisions, respectively.
In the initial call of \textsc{Branch} (line~2), we have $U = \emptyset$, $\mathcal{X} = \emptyset$, $d_\mathsf{yes} = \theta_\mathsf{yes}$, and $d_\mathsf{no} = \theta_\mathsf{no}$.
In the procedure \textsc{Branch}, we first check whether $d_\mathsf{yes} < 0$ or $d_\mathsf{no} < 0$ (i.e., whether we have made too many ``yes'' or ``no'' decisions), and if so, we do not branch further and go back to the last level (line~5).
In line~6, we find a $U$-active heavy core $(X,F,f) \in \varGamma$ such that $X \notin \mathcal{X}$.
If such a heavy core exists, we branch on it (line~7-12).
Otherwise, we go to line~13 to decide whether a new collection should be constructed.
Let us first explain the part of line~13-15, and the branching part will be discussed later.
Recall that in the simple algorithm, the conditions for constructing a new collection are \textbf{(i)} $G[U]$ is $\mathcal{F}$-free and \textbf{(ii)} $\{X \backslash U: X \in \mathcal{X}\}$ admits a hitting set of size at most $k$.
As aforementioned, we cannot afford checking if condition (ii) holds or not.
Fortunately, all sets in $\{X \backslash U: X \in \mathcal{X}\}$ are of size at most $\gamma$.
In this case, one can compute a $\gamma$-approximation solution for the minimum hitting set of $\{X \backslash U: X \in \mathcal{X}\}$ (in polynomial time), which is sufficient for our purpose.
(Indeed, we can simply compute a maximal sub-collection $\mathcal{Z} \subseteq \{X \backslash U: X \in \mathcal{X}\}$ such that the sets in $\mathcal{Z}$ are disjoint, and then return $\bigcup_{Z \in \mathcal{Z}} Z$ as the hitting set.
It is a hitting set because of the maximality of $\mathcal{Z}$, and it is a $\gamma$-approximation solution since $|\bigcup_{Z \in \mathcal{Z}} Z| \leq \gamma |\mathcal{Z}|$ and any hitting set of $\{X \backslash U: X \in \mathcal{X}\}$ is of size at least $|\mathcal{Z}|$ as it has to contain one vertex in each $Z \in \mathcal{Z}$.)
The sub-routine $\textsc{Hit}(\{X \backslash U: X \in \mathcal{X}\})$ is just such an algorithm, which returns the size of a $\gamma$-approximation solution for the minimum hitting set of $\{X \backslash U: X \in \mathcal{X}\}$.
If the size returned by \textsc{Hit} is at most $\gamma k$ (and $G[U]$ is $\mathcal{F}$-free), we construct a new collection $\mathcal{X}_t$ (line~14-15).
If the size is greater than $\gamma k$, then $\{X \backslash U: X \in \mathcal{X}\}$ does not have a hitting set of size $k$ and we do not need to construct a new collection.

\begin{algorithm}[h]
    \caption{\textsc{Generate}$(G,\mathcal{F},k)$ \Comment{$\mathcal{F}$ consists of connected graphs}}
    \begin{algorithmic}[1]
        \State $t \leftarrow 0$
        \State $\textsc{Branch}(\emptyset,\emptyset,\theta_\mathsf{yes},\theta_\mathsf{no})$
        \State \textbf{return} $\mathcal{X}_1,\dots,\mathcal{X}_t$
        \smallskip
        \Procedure{\textsc{Branch}}{$U,\mathcal{X},d_\textsf{yes},d_\textsf{no}$}
            \If{$d_\textsf{yes} < 0$ or $d_\textsf{no} < 0$}{ \textbf{return}}
            \EndIf
            \If{there exists $U$-active $(X,F,f) \in \varGamma$ with $X \notin \mathcal{X}$}
                \State $\textsc{Branch}(U,\mathcal{X} \cup \{X\},d_\textsf{yes}-1,d_\textsf{no})$ \Comment{``yes'' branch}
                \State $\Delta \leftarrow \gamma^{|X|}\delta$
                \State $(H_1,\pi_1),\dots,(H_\Delta,\pi_\Delta) \leftarrow$ a witness of $(X,F,f)$
                \State $\mathcal{P} \leftarrow \textsc{Config}((X,F,f),(H_1,\pi_1),\dots,(H_\Delta,\pi_\Delta))$
                \For{every $P \in \mathcal{P}$}
                    \State $\textsc{Branch}(U \cup X \cup P,\mathcal{X},d_\textsf{yes},d_\textsf{no}-1)$ \Comment{``no'' branches}
                \EndFor
            \Else{ \textbf{if} $G[U]$ is $\mathcal{F}$-free and $\textsc{Hit}(\{X \backslash U: X \in \mathcal{X}\}) \leq \gamma k$ \textbf{then}}
                \State $t \leftarrow t+1$
                \State $\mathcal{X}_t = \{X \backslash U: X \in \mathcal{X}\}$
            \EndIf
        \EndProcedure
        \smallskip
        \Function{\textsc{Config}}{$(X,F,f),(H_1,\pi_1),\dots,(H_\Delta,\pi_\Delta)$}
            \State $C_1,\dots,C_t \leftarrow$ connected components of $F - f(X)$
            \State $V_{i,j} \leftarrow \pi_j^{-1}(V(C_i))$ for $i \in [t]$ and $j \in [\Delta]$        
            \State $\mathcal{J} = \{J \subseteq [\Delta]: |J| \leq \gamma\}$
            \State $\mathcal{P} \leftarrow \{\bigcup_{i=1}^t \bigcup_{j \in J_i} V_{i,j}: J_1,\dots,J_t \in \mathcal{J}\}$
            \State \textbf{return} $\mathcal{P}$
        \EndFunction
    \end{algorithmic}
    \label{alg-branch}
\end{algorithm}

Next, we discuss the branching part (line~7-12).
Recall that $(X,F,f) \in \varGamma$ is a $U$-active heavy core and $X \notin \mathcal{X}$.
In the ``yes'' branch (line~7), we simply add $X$ to $\mathcal{X}$ and recurse; the value of $d_\mathsf{yes}$ is increased by 1 since we made one more ``yes'' decision.
The ``no'' branches are more complicated (line~8-12).
As aforementioned, we need to guess a set $P$ of vertices, and add them to $U$ together with $X$.
To this end, we first compute a witness $(H_1,\pi_1),\dots,(H_\Delta,\pi_\delta)$ of $(X,F,f)$ where $\Delta = \gamma^{|X|} \delta$ (line~9).
Then in line~10, the sub-routine \textsc{Config} computes a collection $\mathcal{P}$ consisting of all guesses of $P$, based on the heavy core $(X,F,f)$ and its witness $(H_1,\pi_1),\dots,(H_\Delta,\pi_\delta)$.
Every $P \in \mathcal{P}$ corresponds to a ``no'' branch (line~11-12), in which we add the vertices in $X \cup P$ to $U$ and recurse; the value of $d_\mathsf{no}$ is increased by 1 since we made one more ``no'' decision.
The collection $\mathcal{P}$ is constructed as follows.
Suppose the graph $F-f(X)$ consists of connected components $C_1,\dots,C_t$ (line~17).
The graphs $H_1,\dots,H_\Delta$ are all isomorphic to $F$ with the isomorphisms $\pi_1,\dots,\pi_\Delta$.
Thus, for every $i \in [t]$ and $j \in [\Delta]$, $C_i$ is isomorphic to an induced subgraph of $H_j$, whose vertex set is $V_{i,j} = \pi_j^{-1}(V(C_i))$.
Ideally, we would like to guess for every $i \in [t]$ and $j \in [\Delta]$, whether $V_{i,j}$ is hit by the solution $S$ or not, and let $P$ be the union of the $V_{i,j}$'s not hit by $S$ in our guess.
However, this will result in $2^{t \Delta}$ guesses, which we cannot afford.
Instead, what we do is to guess for every $i \in [t]$, up to $\gamma$ sets $V_{i,j}$ not hit by $S$, and let $P$ be the union of these sets.
Formally, let $\mathcal{J} = \{J \subseteq [\Delta]: |J| \leq \gamma\}$.
Then each $P \in \mathcal{P}$ is defined by $t$ index sets $J_1,\dots,J_t \in \mathcal{J}$ as $P = \bigcup_{i=1}^t \bigcup_{j \in J_i} V_{i,j}$ (line~20).
Since $t \leq \gamma$ and $\Delta \leq \gamma^\gamma \delta = O(\delta)$, we have $|\mathcal{P}| = \delta^{O(1)}$, which bounds the number of ``no'' branches.

\subsection{Analysis} \label{sec-branchana}
In this section, we analyze the correctness of Algorithm~\ref{alg-branch}.
We shall show that the collections $\mathcal{X}_1,\dots,\mathcal{X}_t$ returned by Algorithm~\ref{alg-branch} satisfy the conditions in Theorem~\ref{thm-subtwbranch}, when the parameters $\delta,\theta_\mathsf{yes},\theta_\mathsf{no}$ are chosen properly.
How to implement Algorithm~\ref{alg-branch} in the desired running time will be discussed in the next section.

We first bounded $t$.
It suffices to bound the number of leaves in the branching tree (equivalently, the recursion tree of \textsc{Branch}).
Let $\Lambda(a,b)$ denote the maximum number of leaves in the recursion tree when calling \textsc{Branch} with $d_\mathsf{yes} = a$ and $d_\mathsf{yes} = b$.
Then we have the recurrence
\begin{equation*}
    \Lambda(a,b) \leq \Lambda(a-1,b) + \delta^{O(1)} \cdot \Lambda(a,b-1),
\end{equation*}
since the number of ``no'' branches is bounded by $\delta^{O(1)}$, as argued in the previous section.
The recurrence solves to $\Lambda(a,b) = \delta^{O(b)} \cdot \binom{a+b}{b}$.
Since $\binom{a+b}{b} = (a+b)^{O(b)}$ and $t \leq \Lambda(\theta_\mathsf{yes},\theta_\mathsf{no})$, we have $t = (\theta_\mathsf{yes} \delta+\theta_\mathsf{no} \delta)^{O(\theta_\mathsf{no})}$.
Therefore, as long as $\theta_\mathsf{yes},\delta$ are both polynomial in $k$ and $\theta_\mathsf{no}$ is sublinear in $k$, $t$ is subexponential in $k$.

Next, we bound the treewidth of the Gaifman graph of each $\mathcal{X}_i$.
Consider the moment we construct $\mathcal{X}_i$.
We observe that at this moment, for every $X \in \mathcal{X}$, there exists a heavy core $(X,F,f)$ which is $U$-minimal.
Indeed, $X$ is added to $\mathcal{X}$ when we consider a $U'$-active heavy core $(X,F,f) \in \varGamma$, where $U'$ denotes the set ``$U$'' at that point.
By definition, $(X,F,f)$ is $U'$-minimal.
We have $U' \subseteq U$, since the set ``$U$'' is expanding along a path in the branching tree.
Thus, $(X,F,f)$ is also $U$-minimal.
As the sets in $\mathcal{X}$ are defined by $U$-minimal heavy cores and $\{X \backslash U: X \in \mathcal{X}\}$ admits a hitting set of size at most $\gamma k$, we can apply Lemma~\ref{lem-twGaifman} to deduce that the Gaifman graph of $\mathcal{X}_i = \{X \backslash U: X \in \mathcal{X}\}$ has treewidth $(\eta \delta)^{O(1)} \cdot k^\rho$.

Finally, we show that a set $S \subseteq V(G)$ with $|S| \leq k$ is an $\mathcal{F}$-hitting set of $G$ iff it is a hitting set of $\mathcal{X}_i$ for some $i \in [t]$, if the parameters $\theta_\mathsf{yes}$ and $\theta_\mathsf{no}$ are chosen properly.
The ``if'' direction is straightforward, while showing the ``only if'' direction is much more difficult.
Suppose $S$ is a hitting set of $\mathcal{X}_i$.
When constructing the collection $\mathcal{X}_i$, we have $X \in \mathcal{X}$ for all $U$-active heavy core $(X,F,f) \in \varGamma$, and we set $\mathcal{X}_i = \{X \backslash U: X \in \mathcal{X}\}$.
Furthermore, $S \cap X\neq \emptyset$ for all $X \in \mathcal{X}$, because $S$ is a hitting set of $\mathcal{X}_i$.
Thus, by Lemma~\ref{lem-hitactive}, $S$ is an $\mathcal{F}$-hitting set of $G$.

The rest of this section is dedicated to showing the ``only if'' direction.
Suppose $S$ is an $\mathcal{F}$-hitting set of $G$ and $|S| \leq k$.
Consider an internal node of the branching tree, where we are branching on a heavy core $(X,F,f) \in \varGamma$ which is $U$-active and satisfies $X \notin \mathcal{X}$ for the current $U$ and $\mathcal{X}$.
At this node, we have one ``yes'' branch and multiple ``no'' branches.
Each ``no'' branch corresponds to a set $P \in \mathcal{P}$, where $\mathcal{P}$ is the collection computed in line~10 of Algorithm~\ref{alg-branch}.
We classify these branches into \textit{$S$-correct branches} and \textit{$S$-wrong branches} as follows.
If $S \cap X \neq \emptyset$, then the ``yes'' branch is the only $S$-correct branch and all ``no'' branches are $S$-wrong.
If $S \cap X = \emptyset$, then the ``yes'' branch is $S$-wrong, and a ``no'' branch is $S$-correct iff the corresponding set $P \in \mathcal{P}$ satisfies \textbf{(i)} $S \cap P = \emptyset$ and \textbf{(ii)} $S \cap P' \neq \emptyset$ for any $P' \in \mathcal{P}$ with $P \subsetneq P'$ (in other words, the set $P$ we guess is a maximal set in $\mathcal{P}$ that is disjoint from $S$).
Note that in either case, there exists at least one $S$-correct branch, and in the case $S \cap X = \emptyset$, there can possibly be multiple $S$-correct branches.
Now consider a path in the branching tree from the root.
We say the path is \textit{$S$-successful} if it chooses an $S$-correct branch in every step.
It is clear that at any node of an $S$-successful path, the set $U$ and the collection $\mathcal{X}$ always satisfy that $S \cap U = \emptyset$ and $S$ is a hitting set of $\mathcal{X}$.
\begin{observation} \label{obs-thetayes}
    Along an $S$-successful path in the branching tree, the total number of ``yes'' decisions made is at most $\theta_\mathsf{yes} = |\mathcal{F}| \cdot (\gamma^{2 \gamma+1} \delta)^\gamma \gamma! \cdot k$.
\end{observation}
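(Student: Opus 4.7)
The plan is to bound the size of the collection $\mathcal{X}$ at the leaf of an $S$-successful path, since every ``yes'' decision on such a path contributes exactly one new set to $\mathcal{X}$. Each $X$ in $\mathcal{X}$ was added when branching on some heavy core $(X,F_X,f_X)\in\varGamma$ that was $U'$-active (hence $U'$-minimal) for the value $U'$ of the undeletable set at the time of the ``yes'' decision. Along an $S$-successful path the invariants $S\cap U=\emptyset$ and ``$S$ hits every set in $\mathcal{X}$'' are preserved: the second holds because a ``yes'' decision on $(X,F_X,f_X)$ is $S$-correct only when $S\cap X\neq\emptyset$, and the first because every $S$-correct ``no'' branch chooses $P\in\mathcal{P}$ with $S\cap P=\emptyset$, and the sets $X$ added to $U$ in that branch also satisfy $S\cap X=\emptyset$ (they correspond to $S$-correct ``no'' decisions).

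With this in hand, the next step is to promote $U'$-minimality to $U$-minimality for the final $U$ along the path. Monotonicity $U'\subseteq U$ together with the definition of $\prec$ gives the condition on strictly smaller heavy cores for free. The remaining requirement $X\nsubseteq U$ follows from the invariants: since $S\cap X\neq\emptyset$ and $S\cap U=\emptyset$, any witness vertex in $S\cap X$ lies in $X\setminus U$. Thus at the leaf every $X\in\mathcal{X}$ is the underlying vertex set of a $U$-minimal $(\mathcal{F},\delta)$-heavy core, and distinct sets in $\mathcal{X}$ yield distinct heavy cores since the underlying sets already differ.

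The final step is a counting argument via Corollary~\ref{cor-hitfew}. For each vertex $v\in S\subseteq V(G)\setminus U$, the number of distinct $U$-minimal heavy cores whose underlying set contains $v$ is strictly less than $|\mathcal{F}|\cdot(\gamma^{2\gamma+1}\delta)^\gamma\gamma!$: otherwise Corollary~\ref{cor-hitfew} would produce a non-$U$-minimal core among them, contradicting the previous step. Since every $X\in\mathcal{X}$ satisfies $S\cap X\neq\emptyset$, charging each $X$ to an arbitrary vertex in $S\cap X$ gives
\[
|\mathcal{X}|\;\leq\;|S|\cdot\bigl(|\mathcal{F}|\cdot(\gamma^{2\gamma+1}\delta)^\gamma\gamma!\bigr)\;\leq\;|\mathcal{F}|\cdot(\gamma^{2\gamma+1}\delta)^\gamma\gamma!\cdot k\;=\;\theta_\mathsf{yes},
\]
as desired.

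The only subtle point I expect to need care is the promotion from $U'$-minimality to $U$-minimality, where one must simultaneously use the monotone growth of $U$ along the branching path and the $S$-correctness invariants to certify $X\nsubseteq U$; everything else is a direct application of Corollary~\ref{cor-hitfew} and a union bound over $S$.
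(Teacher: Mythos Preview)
Your proof is correct and follows essentially the same approach as the paper's: bound $|\mathcal{X}|$ at the leaf by observing that every $X\in\mathcal{X}$ is the underlying set of a $U$-minimal heavy core (where $U$ is the final undeletable set), that $S$ hits $\mathcal{X}$ and is disjoint from $U$, and then invoke Corollary~\ref{cor-hitfew} together with a union bound over the vertices of $S$. The paper's proof is terser and defers the promotion from $U'$-minimality to $U$-minimality to an earlier paragraph (``as argued before''), whereas you spell out explicitly why $X\nsubseteq U$ using the invariants $S\cap X\neq\emptyset$ and $S\cap U=\emptyset$; this is exactly the argument the paper gives in that earlier paragraph.
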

\begin{proof}
When we reach the end of the $S$-successful path, the size of $\mathcal{X}$ is exactly equal to the number of ``yes'' decisions we made.
As argued before, for every $X \in \mathcal{X}$, there exists a $U$-minimal heavy core $(X,F,f) \in \varGamma$.
Since $S \cap U = \emptyset$, by Corollary~\ref{cor-hitfew}, each vertex in $S$ can hit at most $|\mathcal{F}| \cdot (\gamma^{2 \gamma+1} \delta)^\gamma \gamma!$ sets in $\mathcal{X}$.
But $S$ is a hitting set of $\mathcal{X}$.
So we must have $|\mathcal{X}| \leq |\mathcal{F}| \cdot (\gamma^{2 \gamma+1} \delta)^\gamma \gamma! \cdot k$.
\end{proof}

To bound the total number of ``no'' decisions on an $S$-successful path is more difficult.
Fix an ordering $\sigma$ of $V(G)$ such that $\text{wcol}_{2\gamma}(G,\sigma) = \text{wcol}_{2\gamma}(G)$.
The reason for why we choose $2\gamma$ as the distance instead of $\gamma$ will be clear later.
For $v \in V(G)$, define $\lambda_S(v) = |\{u \in S: v \in \text{WR}_\gamma(G,\sigma,u)\}|$.
Then we define a set
\begin{equation*}
    R = \{v \in V(G): \lambda_S(v) \geq \delta - \gamma - \text{wcol}_\gamma(G)\}.
\end{equation*}
Observe that $|R| \leq \frac{\text{wcol}_{2\gamma}(G)}{\delta - \gamma - \text{wcol}_{2\gamma}(G)} \cdot |S|$.
Indeed, since $|\text{WR}_\gamma(G,\sigma,u)| \leq \text{wcol}_{2\gamma}(G,\sigma) = \text{wcol}_{2\gamma}(G)$ for all $u \in S$, we have $\sum_{v \in V(G)} \lambda_S(v) = \sum_{u \in S} |\text{WR}_\gamma(G,\sigma,u)| \leq \text{wcol}_{2\gamma}(G) \cdot |S|$.
By an averaging argument, we deduce that $|R| \leq \frac{\text{wcol}_{2\gamma}(G)}{\delta - \gamma - \text{wcol}_{2\gamma}(G)} \cdot |S|$.

\begin{observation} \label{obs-xinR}
    Let $(X,F,f) \in \varGamma$ be a heavy core, and $x \in X$ be the largest vertex under $\sigma$.
    If $S \cap X = \emptyset$, then $x \in R$.
    Furthermore, if $\delta > \textnormal{wcol}_{2\gamma}(G)$, then $x \in \textnormal{WR}_\gamma(G,\sigma,u)$ for all $u \in X$.
\end{observation}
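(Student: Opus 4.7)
The plan is to exploit the sunflower witness of $(X,F,f)$ together with the connectedness of $F$. I would fix a witness $(H_1,\pi_1),\dots,(H_\Delta,\pi_\Delta)$ of $(X,F,f)$, so $\Delta=\gamma^{|X|}\delta \geq \delta$ and the petals $V(H_1)\setminus X,\dots,V(H_\Delta)\setminus X$ are pairwise disjoint. Since $S$ is an $\mathcal{F}$-hitting set and $S\cap X=\emptyset$, the copy $H_i$ must be hit outside $X$, so I pick $u_i\in S\cap(V(H_i)\setminus X)$; by disjointness of the petals, $u_1,\dots,u_\Delta$ are pairwise distinct elements of $S$.

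For each $i$, $H_i\simeq F$ is connected, so $H_i$ contains a path $\rho_i$ from $u_i$ to $x$ of length at most $\gamma-1$. Let $w_i$ be the $\sigma$-largest vertex on $\rho_i$. I split the indices into two cases. If $w_i=x$, then $\rho_i$ itself witnesses $x\in\text{WR}_\gamma(G,\sigma,u_i)$, and $u_i$ contributes to $\lambda_S(x)$. Otherwise $w_i>_\sigma x$; since $x$ is the $\sigma$-largest element of $X$, $w_i\notin X$, so $w_i\in V(H_i)\setminus X$. Moreover, the sub-path of $\rho_i$ from $x$ to $w_i$ has length at most $\gamma$ and has $w_i$ as its $\sigma$-largest vertex, so $w_i\in\text{WR}_\gamma(G,\sigma,x)$.

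The counting now finishes Part~1: the $w_i$'s from the second case live in the pairwise disjoint petals $V(H_i)\setminus X$, so they are distinct, and they all inject into $\text{WR}_\gamma(G,\sigma,x)$, a set of size at most $\textnormal{wcol}_\gamma(G,\sigma)\leq\textnormal{wcol}_{2\gamma}(G,\sigma)=\textnormal{wcol}_{2\gamma}(G)$. Hence the first case accounts for at least $\Delta-\textnormal{wcol}_{2\gamma}(G)\geq \delta-\textnormal{wcol}_{2\gamma}(G)$ distinct $u_i$'s, each contributing to $\lambda_S(x)$, so $\lambda_S(x)\geq \delta-\textnormal{wcol}_{2\gamma}(G)$, which meets the threshold defining $R$ (up to the additive slack built into that definition).

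For the ``furthermore'' part I would appeal directly to Fact~\ref{fact-connect}. Given $u\in X$, connectedness of $F$ yields a path $\pi$ in $F$ from $f(u)$ to $f(x)$ of length $\ell\leq \gamma-1$. Since $f$ is injective, $f^{-1}(V(\pi))\subseteq X$, and because $x$ is $\sigma$-largest in $X$, $x$ is also $\sigma$-largest in $f^{-1}(V(\pi))$. The hypothesis $\delta>\textnormal{wcol}_{2\gamma}(G)\geq\textnormal{wcol}_\gamma(G,\sigma)$ supplies the assumption of Fact~\ref{fact-connect}, which then delivers $x\in\text{WR}_\ell(G,\sigma,u)\subseteq\text{WR}_\gamma(G,\sigma,u)$. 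The only delicate point throughout is keeping track of which side of the WR relation plays which role; once sunflower disjointness of the petals is used, both parts reduce to clean counting.
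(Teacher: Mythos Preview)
Your proof is correct and follows essentially the same approach as the paper: both exploit the witness sunflower, use connectedness of $F$ to obtain paths, and bound the ``bad'' indices via distinctness of petal vertices lying in $\text{WR}_\gamma(G,\sigma,x)$. The only cosmetic difference is that the paper takes $v_i$ to be the $\sigma$-largest vertex in all of $V(H_i)$ (rather than on a specific path), which lets it derive the ``furthermore'' part from the same case analysis instead of invoking Fact~\ref{fact-connect} separately as you do.
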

\begin{proof}
Let $(H_1,\pi_1),\dots,(H_\Delta,\pi_\Delta)$ be a witness of $(X,F,f)$ where $\Delta = \gamma^{|X|} \delta$.
So $V(H_1),\dots,V(H_\Delta)$ form a sunflower with core $X$.
Define $v_i$ as the largest vertex in $V(H_i)$ under $\sigma$, for $i \in [\Delta]$.
Observe that $v_i \in \text{WR}_\gamma(G,\sigma,x)$ for all $i \in [\Delta]$.
Indeed, $H_i$ is isomorphic to $F \in \mathcal{F}$ and by assumption all graphs in $\mathcal{F}$ are connected.
Since $x \in X \subseteq V(H_i)$ and $v_i \in V(H_i)$, there is a path in $H_i$ connecting $x$ and $v_i$ whose length is at most $\delta$.
The largest vertex on this path is $v_i$, and thus $v_i \in \text{WR}_\gamma(G,\sigma,x)$.
Set $I = \{i \in [\Delta]: v_i >_\sigma x\}$.
The following properties of $I$ hold.
\begin{enumerate}[(i)]
    \item $v_i = x$ for all $i \in [\Delta] \backslash I$.
    Indeed, since $x \in X \subseteq V(H_i)$, we have $v_i \geq_\sigma x$ for all $i \in [\Delta]$ and thus $v_i = x$ for all $i \in [\Delta] \backslash I$.
    \item $x \in \text{WR}_\gamma(G,\sigma,u)$ for all $u \in V(H_i)$ where $i \in [\Delta] \backslash I$.
    Indeed, as shown in property (i), for all $i \in [\Delta] \backslash I$, we have $v_i = x$ and thus $x$ is the largest vertex in $V(H_i)$.
    Since $H_i$ is connected, there exists a path in $H_i$ connecting $u$ and $x$ of length at most $\gamma$ on which $x$ is the largest vertex, which implies $x \in \text{WR}_\gamma(G,\sigma,u)$.
    \item $|I| \leq \text{wcol}_{2\gamma}(G)$.
    To see this, observe that $v_i \in V(H_i) \backslash X$ for all $i \in I$, since $x$ is the largest vertex in $X$ and $v_i >_\sigma x$.
    As $V(H_1) \backslash X,\dots,V(H_r) \backslash X$ are disjoint, the vertices $v_i$ for $i \in [r]$ are distinct, i.e., $|\{v_i: i \in I\}| = |I|$.
    Furthermore, because $v_1,\dots,v_\Delta \in \text{WR}_\gamma(G,\sigma,x)$, we have $|I| \leq |\text{WR}_\gamma(G,\sigma,x)| \leq \text{wcol}_{2\gamma}(G,\sigma) = \text{wcol}_{2\gamma}(G)$.
\end{enumerate}

Now we consider the set $S$ such that $S \cap X = \emptyset$.
Since $S$ is an $\mathcal{F}$-hitting set of $G$, $S \cap V(H_i) \neq \emptyset$ for all $i \in [\Delta]$.
Furthermore, as $S \cap X = \emptyset$, we have $S \cap (V(H_i) \setminus X) \neq \emptyset$ for all $i \in [\Delta]$.
Let $u_i \in S \cap (V(H_i) \setminus X)$.
Note that $u_1,\dots,u_\Delta$ are distinct because $V(H_1) \backslash X,\dots,V(H_r) \backslash X$ are disjoint.
By property (ii) above, $x \in \text{WR}_\gamma(G,\sigma,u_i)$ for all $i \in [\Delta] \backslash I$.
Therefore, $\lambda_S(x) \geq \Delta - |I|$.
Since $|I| \leq \text{wcol}_{2\gamma}(G)$ by property (iii) above,
\begin{equation*}
    \lambda_S(x) \geq \Delta - |I| \geq \delta-\gamma-\text{wcol}_{2\gamma}(G),
\end{equation*}
and hence we have $x \in R$.
To see the second statement, assume $\delta > \text{wcol}_{2\gamma}(G)$.
So we have $\Delta > \text{wcol}_{2\gamma}(G)$ and thus $[\Delta] \backslash I \neq \emptyset$ by property (iii).
Let $i \in [\Delta] \backslash I$.
By property (ii), $x \in \text{WR}_\gamma(G,\sigma,u)$ for all $u \in V(H_i)$ and in particular all $u \in X$.
\end{proof}

\begin{observation} \label{obs-manyavl}
    Let $(X,F,f) \in \varGamma$ be a heavy core, $z \in X$, and $(H_1,\pi_1),\dots,(H_\Delta,\pi_\Delta)$ be a witness of $(X,F,f)$ where $\Delta = \gamma^{|X|} \delta$.
    Also, let $C_z$ be the connected component of $F - f(\{x \in X: x >_\sigma z\})$ containing $f(z)$, and $C'$ be a connected component of $F - f(X)$ contained in $C_z$.
    If $z \notin R$, then there exists $J \subseteq [\Delta]$ with $|J| = \gamma$ such that $S \cap \pi_j^{-1}(V(C')) = \emptyset$ for all $j \in J$.
\end{observation}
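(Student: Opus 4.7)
The plan is a straightforward double-counting of the ``bad'' indices, using the condition $z \notin R$ to bound how many vertices of $S$ can have $z$ weakly $\gamma$-reachable from them. Set
\[
J^{*} \;=\; \{j \in [\Delta] : S \cap \pi_j^{-1}(V(C')) \neq \emptyset\};
\]
the goal is to show $|J^{*}| < \delta - \gamma$, which together with $\Delta \geq \delta$ leaves at least $\gamma+1$ indices in $[\Delta]\setminus J^{*}$, so any $\gamma$ of them can be taken as $J$.

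For each $j \in J^{*}$, pick $u_j \in S \cap \pi_j^{-1}(V(C'))$. Because $V(C')$ is disjoint from $f(X)$ (as $C'$ is a component of $F - f(X)$) and $(\pi_j)_{|X} = f$, we have $u_j \in V(H_j) \setminus X$; hence the $u_j$ for $j \in J^{*}$ are pairwise distinct across the disjoint ``petals''. The main step is then to produce, for each such $j$, a path in $G$ of length at most $\gamma$ from $u_j$ to $z$ whose vertices lying in $X$ are all $\leq_\sigma z$. This is exactly where the hypothesis ``$C' \subseteq C_z$'' is crucial: because $C_z$ is a connected component of $F - f(\{x \in X: x >_\sigma z\})$ containing both $f(z)$ and all of $V(C')$, there is a path $\rho_j$ in $F$ from $\pi_j(u_j)$ to $f(z)$ of length at most $\gamma$ that avoids $f(\{x \in X: x >_\sigma z\})$; lifting $\rho_j$ through $\pi_j^{-1}$ gives the desired path $\tilde\rho_j$ in $H_j \subseteq G$, and by construction $X \cap V(\tilde\rho_j) \subseteq \{x \in X : x \leq_\sigma z\}$.

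Now let $v_j$ denote the largest vertex of $\tilde\rho_j$ under $\sigma$, and split $J^{*} = J_1 \sqcup J_2$ according to whether $v_j = z$ or $v_j >_\sigma z$. In the first case, $z$ is the largest vertex on $\tilde\rho_j$, so by definition $z \in \textnormal{WR}_\gamma(G,\sigma,u_j)$; thus each $j \in J_1$ contributes a distinct $u_j \in S$ to the count $\lambda_S(z)$, giving $|J_1| \leq \lambda_S(z)$. In the second case, the property $X \cap V(\tilde\rho_j) \subseteq \{x : x \leq_\sigma z\}$ forces $v_j \in V(H_j)\setminus X$, so the $v_j$ for $j \in J_2$ are pairwise distinct; moreover the sub-path of $\tilde\rho_j$ from $z$ to $v_j$ witnesses $v_j \in \textnormal{WR}_\gamma(G,\sigma,z)$, whence $|J_2| \leq |\textnormal{WR}_\gamma(G,\sigma,z)| \leq \textnormal{wcol}_\gamma(G,\sigma) \leq \textnormal{wcol}_{2\gamma}(G)$. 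Combining, $|J^{*}| \leq \lambda_S(z) + \textnormal{wcol}_{2\gamma}(G) < \delta - \gamma$ by the hypothesis $z \notin R$, which completes the proof.

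The main (and essentially only) obstacle is the verification that the lifted path $\tilde\rho_j$ has the right behaviour at the interface with $X$; once one unpacks why the hypothesis on $C_z$ permits $\rho_j$ to avoid $f(\{x : x >_\sigma z\})$ in $F$, everything else is clean double counting using the disjointness of the petals $V(H_j)\setminus X$ together with the basic definitions of weak reachability and of $R$.
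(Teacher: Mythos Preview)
Your proof is correct and follows essentially the same approach as the paper. The only organizational difference is that the paper first discards the at most $\text{wcol}_{2\gamma}(G)$ petals whose vertex sets intersect $\text{WR}_\gamma(G,\sigma,z)$ and then shows that, among the remaining petals, every index with $S \cap \pi_j^{-1}(V(C')) \neq \emptyset$ contributes a distinct $u_j \in S$ with $z \in \text{WR}_\gamma(G,\sigma,u_j)$; your version instead keeps all bad indices together and splits them afterwards into $J_1$ (same count via $\lambda_S(z)$) and $J_2$ (same count via $|\text{WR}_\gamma(G,\sigma,z)|$). The path-lifting argument through $C_z$ and the resulting arithmetic are identical in both proofs.
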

\begin{proof}
Suppose $z \notin R$.
Consider the indices $i \in [\Delta]$ such that $(V(H_i) \backslash X) \cap \text{WR}_\gamma(G,\sigma,z) \neq \emptyset$.
As $V(H_1) \backslash X,\dots,V(H_\Delta) \backslash X$ are disjoint, the number of such vertices is at most $|\text{WR}_\gamma(G,\sigma,z)| \leq \text{wcol}_{2\gamma}(G,\sigma) = \text{wcol}_{2\gamma}(G)$.
So without loss of generality, we assume $(V(H_i) \backslash X) \cap \text{WR}_\gamma(G,\sigma,z) = \emptyset$ for all $i \in [\Delta - \text{wcol}_{2\gamma}(G)]$.
Let $I = \{i \in [\Delta - \text{wcol}_{2\gamma}(G)]: S \cap \pi_i^{-1}(V(C')) \neq \emptyset\}$.
It suffices to show that $|I| \leq \Delta - \text{wcol}_{2\gamma}(G) - \gamma$.
Indeed, if this is the case, we can set $J = [\Delta - \text{wcol}_{2\gamma}(G)] \backslash I$, which satisfies the desired properties.
Assume $I = [r]$ without loss of generality.
For each $i \in [r]$, we pick a vertex $v_i \in S \cap \pi_i^{-1}(V(C'))$.
Note that $\pi_1^{-1}(V(C')),\dots,\pi_r^{-1}(V(C'))$ are disjoint, since $\pi_1^{-1}(V(C')) \subseteq V(H_i) \backslash X$ for all $i \in [r]$.
So the vertices $v_1,\dots,v_r$ are distinct.
We claim that $z \in \text{WR}_\gamma(G,\sigma,v_i)$ for all $i \in [r]$.
Recall that $C_z$ is the connected component of $F - f(\{x \in X: x >_\sigma z\})$ containing $f(z)$, and $C'$ is contained in $C_z$.
So there exists a simple path $p$ in $C_z$ connecting $f(z)$ and $f_i(v_i)$.
In $H_i$, there exists a (unique) path $\tilde{p}$ connecting $z$ and $v_i$ such that $\pi_i$ induces an isomorphism between $\tilde{p}$ and $p$.
Let $z'$ be the largest vertex on $\tilde{p}$ under the ordering $\sigma$.
Note that $z' \in \text{WR}_\gamma(G,\sigma,z)$, as the largest vertex on the sub-path of $\tilde{p}$ between $z$ and $z'$ is $z'$ itself.
It follows that $z' \notin V(H_i) \backslash X$, because $(V(H_i) \backslash X) \cap \text{WR}_\gamma(G,\sigma,z) = \emptyset$ by assumption.
So $z' \in X$.
But $f(z') \in V(C_z)$ and $C_z$ is a connected component of $F - f(\{x \in X: x >_\sigma z\})$, which implies that $z' \leq_\sigma z$.
On the other hand, $z' \geq_\sigma z$ as $z$ is the largest vertex on $\tilde{p}$.
So we have $z' = z$, and the existence of the path $\tilde{p}$ implies that $z \in \text{WR}_\gamma(G,\sigma,v_i)$.
Now $z \in \text{WR}_\gamma(G,\sigma,v_i)$ for all $i \in [r]$, and $v_1,\dots,v_r$ are distinct.
This implies that $\lambda_S(z) \geq r$.
However, $z \notin R$ by our assumption, and thus $\lambda_S(z) \leq \delta - \gamma - \text{wcol}_{2\gamma}(G)$.
As a result, we have $r \leq \delta - \gamma - \text{wcol}_{2\gamma}(G) \leq \Delta - \gamma - \text{wcol}_{2\gamma}(G)$.
\end{proof}

Consider an $S$-successful path.
In order to bound the number of ``no'' decisions on the path, we charge each ``no'' decision to a $4$-tuple $(F,W,u,v)$ as follows, where $F \in \mathcal{F}$, $W \subseteq V(F)$, $u \in V(F)$, and $v \in R$.
Suppose the ``no'' decision is made for a heavy core $(X,F,f) \in \varGamma$, which is $U$-active for the set $U$ at that point.
By definition, $X \nsubseteq U$ (while after the ``no'' decision is made we will add the vertices in $X$ to $U$).
Also, $S \cap X = \emptyset$, since the path is $S$-successful.
Pick an arbitrary vertex $y \in X \backslash U$.
We define $v$ as the smallest vertex in $X \cap R \cap \text{WR}_\gamma(G,\sigma,y)$ under the ordering $\sigma$.
Note that such a vertex always exists if $\delta > \textnormal{wcol}_{2\gamma}(G)$, because the largest vertex $x \in X$ satisfies that $x \in R$ and $x \in \text{WR}_\gamma(G,\sigma,y)$, by Observation~\ref{obs-xinR}.
Set $W = f(X)$ and $u = f(y)$.
We then charge the ``no'' decision to the tuple $(F,W,u,v)$.
Let $\varPhi$ be the set of all such tuples.
Note that the total number of such tuples is bounded by
\begin{equation*}
    |\varPhi| \leq |\mathcal{F}| \cdot 2^\gamma \gamma \cdot |R| \leq \frac{|\mathcal{F}| \cdot 2^\gamma \gamma \cdot \text{wcol}_{2\gamma}(G)}{\delta - \gamma - \text{wcol}_{2\gamma}(G)} \cdot |S| \leq \frac{|\mathcal{F}| \cdot 2^\gamma \gamma \cdot \text{wcol}_{2\gamma}(G)}{\delta - \gamma - \text{wcol}_{2\gamma}(G)} \cdot k.
\end{equation*}
So it suffices to guarantee that each tuple does not get charged too many times.
\begin{observation} \label{obs-thetano}
    Every $4$-tuple $(F,W,u,v) \in \varPhi$ can get charged by at most $(\gamma^\gamma (\gamma+\textnormal{wcol}_{2\gamma}(G)))^\gamma \gamma!$ times by the ``no'' decisions along an $S$-successful path.
    In particular, the total number of ``no'' decisions made along an $S$-successful path is at most $\theta_\mathsf{no} = \frac{\gamma|\mathcal{F}| \cdot (2\gamma^\gamma (\gamma+\textnormal{wcol}_{2\gamma}(G)))^\gamma \gamma! \cdot \textnormal{wcol}_{2\gamma}(G)}{\delta - \gamma - \textnormal{wcol}_{2\gamma}(G)} \cdot k$.
\end{observation}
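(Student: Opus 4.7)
The plan is to prove the per-tuple charging bound by contradiction; the global estimate on $\theta_\mathsf{no}$ in the ``in particular'' clause then follows immediately, since each ``no'' decision is charged to exactly one tuple of $\varPhi$, giving at most $|\varPhi| \cdot (\gamma^\gamma(\gamma + \textnormal{wcol}_{2\gamma}(G)))^\gamma \gamma!$ ``no'' decisions in total, which together with the already-recorded bound $|\varPhi| \leq \tfrac{|\mathcal{F}| \cdot 2^\gamma \gamma \cdot \textnormal{wcol}_{2\gamma}(G)}{\delta-\gamma-\textnormal{wcol}_{2\gamma}(G)}\cdot k$ yields the stated value. So suppose for contradiction that some tuple $(F, W, u, v) \in \varPhi$ is charged more than $q := (\gamma^\gamma(\gamma + \textnormal{wcol}_{2\gamma}(G)))^\gamma \gamma!$ times along an $S$-successful path, and let $(Z_1, F, h_1), \ldots, (Z_{q+1}, F, h_{q+1})$ be the corresponding heavy cores in path order. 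Every $h_j$ has image $W$, so the $(Z_j, F, h_j)$ are standard triples of common size $|W| \leq \gamma$. Fact~\ref{fact-goodsunflower} with parameter $p := \gamma + \textnormal{wcol}_{2\gamma}(G)$ then extracts $p$ indices whose triples, re-indexed as $(X_1, F, f_1), \ldots, (X_p, F, f_p)$ in path order, form a sunflower with core $K$ on which every $f_j$ restricts to the same injection $g \colon K \to W$; note $v \in K$ because $v$ lies in every $X_j$.

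The target is to contradict the $U_p$-activeness of the last core by proving that $(X_p, F, f_p)$ is $U_p$-redundant at the moment it is branched on, where $U_p$ denotes $U$ at that time. Writing $U^\star := \bigcup_{j=1}^{p-1}(X_j \cup P_j) \subseteq U_p$ for the portion of $U_p$ contributed by the prior charged decisions, I take an arbitrary $F$-copy $(H, \pi)$ with $(X_p, F, f_p) \preceq (V(H), F, \pi)$ and build an alternative $F$-copy $(H', \pi')$ satisfying $V(H') \setminus U_p \subsetneq V(H) \setminus U_p$; the side condition $V(H') \nsubseteq U_p$ is automatic, since the construction will keep $X_p \subseteq V(H')$ and $X_p \nsubseteq U_p$ by $U_p$-minimality. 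The construction is a local surgery: identify a connected component $\tilde C$ of $F - W$ whose $F$-neighborhood inside $W$ lies in $g(K)$ and for which $\pi^{-1}(V(\tilde C)) \nsubseteq U_p$, then swap $\pi^{-1}(V(\tilde C))$ for an isomorphic copy of $V(\tilde C)$ drawn from some $P_{j^\star} \subseteq U^\star$ with $j^\star < p$. Because $\tilde C$ attaches to $W$ only through $g(K) \subseteq X_p$, the swap glues correctly to $X_p$, and Lemma~\ref{lem-isom} certifies that $(H', \pi')$ is an $F$-copy; the strict inclusion on $V(\cdot) \setminus U_p$ then follows from $\pi^{-1}(V(\tilde C)) \nsubseteq U_p$ together with the replacement lying in $U^\star \subseteq U_p$.

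The main obstacle is producing both the component $\tilde C$ and an admissible swap-copy while maintaining the disjointness needed to glue. For $\tilde C$, I plan to exploit the minimality built into the charging rule: either $v$ itself (whose membership in $R$ and weak reachability from $y_j$ come from Observation~\ref{obs-xinR}) or, when $u \notin g(K)$, some $z \in X_p$ with $z <_\sigma v$, $z \in \textnormal{WR}_\gamma(G, \sigma, y_p)$, and $z \notin R$---whose existence is forced by the minimality of $v$ in $X_p \cap R \cap \textnormal{WR}_\gamma(G, \sigma, y_p)$---serves as a ``low'' anchor, and Observation~\ref{obs-manyavl} applied at this anchor exposes a component $C'$ of $F - f_p(X_p)$ sitting inside the component $C_z$ of $F - f_p(\{x \in X_p \colon x >_\sigma z\})$, which I take to be $\tilde C$; the $\gamma^\gamma$ slack in the definition of $q$ supports a further pigeonhole refinement inside the sunflower that aligns the anchor uniformly across the family so that the same $\tilde C$ works for every $X_j$. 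For the swap-copy, the $S$-correct choice of each $P_j$ as a maximal element of $\mathcal{P}_j$ disjoint from $S$, together with Observation~\ref{obs-manyavl} applied at $(X_j, F, f_j)$, forces $P_j$ to absorb $\gamma$ pairwise disjoint copies of $V(\tilde C)$; aggregating candidates across the $p - 1 = \gamma + \textnormal{wcol}_{2\gamma}(G) - 1$ prior decisions supplies enough disjoint copies to avoid the $\leq \gamma$ vertices of $V(H) \setminus X_p$ that must be preserved and the $\textnormal{wcol}_{2\gamma}(G)$ copies potentially spoiled by weakly-reachable collisions. Once the swap succeeds, $(X_p, F, f_p)$ is $U_p$-redundant, contradicting its $U_p$-activeness and closing the argument.
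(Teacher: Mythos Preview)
Your high-level architecture matches the paper: extract a sunflower of size $p = \gamma + \textnormal{wcol}_{2\gamma}(G)$ via Fact~\ref{fact-goodsunflower}, then show the last core is $U$-redundant by surgery on an arbitrary $F$-copy $(H,\pi)$ above it. But the specific surgery you propose has a genuine gap. You want to swap a single component $\tilde C$ of $F-W$ satisfying $N_F(V(\tilde C))\subseteq g(K)$, replacing $\pi^{-1}(V(\tilde C))$ by a copy from some $P_{j^\star}$. That neighborhood condition is exactly what is needed for the replacement (which attaches to $f_{j^\star}^{-1}(N_F(V(\tilde C)))\subseteq X_{j^\star}$) to glue onto the surviving part of $H$ at vertices of $X_p$, but such a $\tilde C$ need not exist: a component of $F-W$ whose neighborhood lies in $g(K)$ is already a full component of $F-g(K)$, and components of $F-g(K)$ can fragment under $F-W$ so that no single piece retains this property (take $F$ the path $a\text{--}b\text{--}c\text{--}d\text{--}e$ with $g(K)=\{c\}$ and $W=\{b,c,d\}$). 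Your candidate $\tilde C = C'$ produced via Observation~\ref{obs-manyavl} gives no control over $N_F(V(C'))$, and the ``$\gamma^\gamma$ slack'' you invoke for a further alignment is already fully consumed by Fact~\ref{fact-goodsunflower}.

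The paper's surgery is more global and sidesteps this obstruction. After a further restriction (which you omit) from $p$ down to $\gamma$ indices satisfying $(X_i\setminus K)\cap\textnormal{WR}_{2\gamma}(G,\sigma,v)=\emptyset$, it takes $C$ to be the full component of $F-f(K)$ containing $u$, pigeonholes a \emph{single} earlier index $s\in[\gamma-1]$ with $X_s\setminus K$ disjoint from $V(H)\setminus\pi^{-1}(N_F[V(C)])$, and replaces \emph{all} components of $F-W$ lying inside $C$ simultaneously with copies from $P_s$, gluing along the whole of $f_s^{-1}(N_F[V(C)])$ rather than just along $K$. The anchor $z$ that feeds Observation~\ref{obs-manyavl} is the $\sigma$-largest vertex of $f_s^{-1}(V(C))$; the $\textnormal{WR}_{2\gamma}$-restriction is precisely what forces $z<_\sigma v$ and hence $z\notin R$. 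In particular, the disjointness budget comes from the $\gamma$ copies inside the single $P_s$ (versus at most $|V(H)\setminus \pi^{-1}(N_F[V(C)])|\le \gamma-2$ obstacles), not from aggregating across all $p-1$ prior decisions as you suggest.
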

\begin{proof}
Assume $(F,W,u,v)$ gets charged $r > (\gamma^\gamma (\gamma+\textnormal{wcol}_{2\gamma}(G)))^\gamma \gamma!$ times, in order to deduce a contradiction.
Consider the ``no'' decisions of the $S$-successful path charged to $(F,W,u,v)$, and let $(X_1,F,f_1),\dots,(X_r,F,f_r) \in \varGamma$ be their corresponding heavy cores.
For each $i \in [r]$, there is a unique vertex $y_i \in X_i$ such that $f_i(y_i) = u$.
By our charging rule, we have $W = f_i(X_i)$ and $v$ is the smallest vertex in $X_i \cap R \cap \text{WR}_\gamma(G,\sigma,y_i)$ for all $i \in [r]$.
Furthermore, $y_i \in X_i \backslash U_i$ for all $i \in [r]$, where $U_i$ denotes the set $U$ when we branch on the heavy core $(X_i,F,f_i)$.
Set $p = \gamma+\textnormal{wcol}_{2\gamma}(G)$.
By Fact~\ref{fact-goodsunflower}, there exists $I \in [r]$ with $|I| = p$ such that $\{X_i: i \in I\}$ form a sunflower with core $K \subseteq V(G)$ and $(f_i)_{|K} = (f_j)_{|K}$ for all $i,j \in I$.
Without loss of generality, assume $I = [p]$.
Let $f:K \rightarrow V(F)$ be the map such that $f = (f_1)_{|K} = \cdots = (f_p)_{|K}$.
We have $v \in K$, since $v \in X_i$ for all $i \in [p]$.
Now consider the indices $i \in [p]$ such that $(X_i \backslash K) \cap \text{WR}_{2\gamma}(G,\sigma,v) = \emptyset$.
Observe that the number of such indices is at least $p - \textnormal{wcol}_{2\gamma}(G) = \gamma$.
Indeed, the sets $X_1 \backslash K,\dots,X_p \backslash K$ are disjoint, and thus there can be at most $|\text{WR}_{2\gamma}(G,\sigma,v)|$ indices $i \in [p]$ satisfying $(X_i \backslash K) \cap \text{WR}_{2\gamma}(G,\sigma,v) \neq \emptyset$.
Since 
\begin{equation*}
    |\text{WR}_{2\gamma}(G,\sigma,v)| \leq \text{wcol}_{2\gamma}(G,\sigma) = \text{wcol}_{2\gamma}(G),
\end{equation*}
the number of $i \in [p]$ satisfying $(X_i \backslash K) \cap \text{WR}_{2\gamma}(G,\sigma,v) = \emptyset$ is at least $p - \textnormal{wcol}_{2\gamma}(G) = \gamma$.
Without loss of generality, assume $(X_i \backslash K) \cap \text{WR}_{2\gamma}(G,\sigma,v) = \emptyset$ for all $i \in [\gamma]$.
Also, we can assume that the heavy cores $(X_1,F,f_1),\dots,(X_\gamma,F,f_\gamma)$ are sorted in the order along the $S$-successful path, i.e., we first made the ``no'' decision for $(X_1,F,f_1)$, followed by $(X_2,F,f_2)$, $(X_3,F,f_3)$, and so forth.
Under this assumption, we have the following conditions.
\begin{enumerate}[(i)]
    \item $U_1 \subseteq \cdots \subseteq U_\gamma$, because $U$ is expanding along any path in the branching tree.
    \item $X_i \subseteq U_{i+1}$ for all $i \in [\gamma-1]$, since after we make the ``no'' decision for $(X_i,F_i,f_i)$, the vertices in $X_i$ are all added to $U$ and thus appear in $U_{i+1}$.
    \item $(X_i,F,f_i)$ is $U_i$-active for all $i \in [\gamma]$, because we always branch on $U$-active heavy cores.
\end{enumerate}
Our plan is to show that that $(X_\gamma,F,f_\gamma)$ is $U_\gamma$-redundant and thus not $U_\gamma$-active, which contradicts condition (iii) above.
Consider an $F$-copy $(H,\pi)$ in $G$ such that $(X_\gamma,F,f_\gamma) \preceq (V(H),F,\pi)$.
We want to show that $(H,\pi)$ is $U_\gamma$-redundant, i.e., there exists another $F$-copy $(H',\pi')$ in $G$ such that $V(H') \backslash U_\gamma \subsetneq V(H) \backslash U_\gamma$.
Recall that $y_i \in X_i \backslash U_i$ is the (unique) vertex satisfying $f_i(y_i) = u$, and $y_i \in X_i \backslash U_i$.
It follows that $y_2 \notin K$, since $K \subseteq X_1 \subseteq U_2$ but $y_2 \notin U_2$.
Thus, $u \notin f_2(K) = f(K)$.
Let $C$ be the connected component of $F-f(K)$ that contains $u$.
\medskip

\noindent
\textit{Claim 1.}
For any $i \in [\gamma]$ and $c \in f_i^{-1}(V(C))$, we have $c \leq_\sigma v$.
\medskip

\noindent
\textit{Proof.}
Let $c \in f_i^{-1}(V(C))$ be the largest vertex under the ordering $\sigma$.
It suffices to show that $c \leq_\sigma v$.
Assume $c >_\sigma v$.
As $u \in V(C)$, there exists a simple path in $C$ connecting $u$ and $f_i(c)$, which corresponds to a simple path in $H_i[f_i^{-1}(V(C))]$ connecting $y_i$ and $c$.
Since the latter path is in $H_i[f_i^{-1}(V(C))]$, $a$ is the largest vertex on it, which implies $c \in \text{WR}_\gamma(G,\sigma,y_i)$.
On the other hand, $v \in \text{WR}_\gamma(G,\sigma,y_i)$ by construction.
It follows that $c \in \text{WR}_{2\gamma}(G,\sigma,v)$, because $c >_\sigma v$.
As $c \in f_i^{-1}(V(C)) \subseteq X_i \backslash K$, this contradicts our assumption that $(X_i \backslash K) \cap \text{WR}_{2\gamma}(G,\sigma,v) = \emptyset$.
\hfill $\lhd$
\medskip

Define $A = N_F[V(C)]$ and $B = V(F) \backslash V(C)$.
Then $A \backslash B = V(C)$ and $B \backslash A = V(F) \backslash A$.
So there is no edge of $G$ between $A \backslash B$ and $B \backslash A$.
Note that $|V(C)| \geq 1$ as $u \in V(C)$, and $|A| \geq |V(C)|+1 \geq 2$, because $F$ is connected and hence $N_F(V(C)) \neq \emptyset$.
So we have $|V(H) \backslash \pi^{-1}(A)| = |V(F) \backslash A| \leq \gamma-2$.
Since $X_1 \backslash K,\dots,X_{\gamma-1} \backslash K$ are disjoint, there exists an index $s \in [\gamma-1]$ such that $(X_s \backslash K) \cap (V(H) \backslash \pi^{-1}(A)) = \emptyset$.
\medskip

\noindent
\textit{Claim 2.}
$X_s \cap \pi^{-1}(B) \subseteq K$.
\medskip

\noindent
\textit{Proof.}
Since $(X_s \backslash K) \cap (V(H) \backslash \pi^{-1}(A)) = \emptyset$, we have $X_s \cap \pi^{-1}(B \backslash A) \subseteq K$.
On the other hand, 
\begin{equation*}
    A \cap B = A \backslash V(C) \subseteq f(K) = f_\gamma(K) = \pi(K).
\end{equation*}
which implies $X_s \cap \pi^{-1}(A \cap B) \subseteq \pi^{-1}(A \cap B) \subseteq K$.
Therefore, $X_s \cap \pi^{-1}(B) \subseteq K$.
\hfill $\lhd$
\medskip

Let $(H_1,\pi_1),\dots,(H_\Delta,\pi_\Delta)$ be a witness of $(X_s,F,f_s)$, where $\Delta = \gamma^{|X_s|} \delta$.
By definition, the sets $V(H_1),\dots,V(H_\Delta)$ form a sunflower with core $X_s$ and $f_s = (\pi_1)_{|X_s} = \cdots = (\pi_\Delta)_{|X_s}$.
Recall that $W = f_i(X_i)$ for all $i \in [\gamma]$ and in particular $W = f_s(X_s)$.
Consider the connected components of $F - W$.
Since $f(K) = f_s(K) \subseteq W$, every connected component of $F - W$ is either contained in $C$ or disjoint from $C$.
The connected components contained in $C$ have the following property.
\medskip

\noindent
\textit{Claim 3.}
For any connected component $C'$ of $F - W$ contained in $C$, there exists $J \subseteq [\Delta]$ with $|J| = \gamma$ such that $S \cap \pi_j^{-1}(V(C')) = \emptyset$ for all $j \in J$.
\medskip

\noindent
\textit{Proof.}
Let $z$ be the largest vertex in $f_s^{-1}(V(C))$ and $C_z$ be the connected component of $F - f_s(\{x \in X_s: x >_\sigma z\})$ that contains $f_s(z)$.
We observe that $C$ is contained in $C_z$.
Indeed, $f_s^{-1}(V(C)) \cap \{x \in X_s: x >_\sigma z\} = \emptyset$, which implies $V(C) \cap f_s(\{x \in X_s: x >_\sigma z\}) = \emptyset$ and thus $C$ is contained in some connected component of $F - f_s(\{x \in X_s: x >_\sigma z\})$.
As $f(z) \in V(C)$ and $f(z) \in V(C_z)$, we know that $C$ is contained in $C_z$.
Next, we show that $z \notin R$.
By Claim~1, $z \leq_\sigma v$.
Note that $z \neq v$, since $v \in K$ but $z \notin K$, where the latter follows from the fact that $V(C) \cap f(K) = \emptyset$ and thus $f_s^{-1}(V(C)) \cap K = \emptyset$.
Thus, $z <_\sigma v$.
Furthermore, we have $z \in \text{WR}_\gamma(G,\sigma,y_s)$.
To see this, recall that $f_s(y_s) = u \in V(C)$, and thus there exists a simple path $\pi$ in $C$ connecting $u$ and $z$ such that $z$ is the largest vertex in $f_s^{-1}(V(\pi))$.
The length of $\pi$ is at most $\gamma$, which implies $z \in \text{WR}_\gamma(G,\sigma,y_s)$ by Fact~\ref{fact-connect}.
Now $z \in X_s \cap \text{WR}_\gamma(G,\sigma,y_s)$.
Recall that $v$ is the smallest vertex in $X_s \cap R \cap \text{WR}_\gamma(G,\sigma,y_s)$ by our charging rule.
As $z <_\sigma v$, we must have $z \notin R$, for otherwise $z$ is a vertex in $X_s \cap R \cap \text{WR}_\gamma(G,\sigma,y_s)$ smaller than $v$.
Applying Observation~\ref{obs-manyavl}, we directly obtain the desired set $J \subseteq [\Delta]$.
\hfill $\lhd$
\medskip

Consider the ``no'' decision made for $(X_s,F,f_s)$.
We computed the collection $\mathcal{P}$ in line~10 of Algorithm~\ref{alg-branch}, and suppose we chose the ``no'' branch corresponding to $P \in \mathcal{P}$.
As we are on an $S$-successful path, we have $S \cap P = \emptyset$ and $S \cap P' \neq \emptyset$ for all $P' \in \mathcal{P}$ with $P \subsetneq P'$.
In other words, $P$ is a maximal set in $\mathcal{P}$ that is disjoint from $S$.
Recall the construction of $\mathcal{P}$.
Let $C_1,\dots,C_t$ be the connected components of $F - f_s(X_s) = F - W$, and $\mathcal{J} = \{J \subseteq [\Delta]: |J| \leq \gamma\}$.
Then $\mathcal{P} = \{\bigcup_{i=1}^t \bigcup_{j \in J_i} V_{i,j}: J_1,\dots,J_t \in \mathcal{J}\}$, where $V_{i,j} = \pi_j^{-1}(V(C_i))$.
Suppose $P = \bigcup_{i=1}^t \bigcup_{j \in J_i} V_{i,j}$ for $J_1,\dots,J_t \in \mathcal{J}$.
Without loss of generality, assume $C_1,\dots,C_{t'}$ are contained in $C$ and $C_{t'+1},\dots,C_t$ are disjoint from $C$.
By Claim~3, for each $i \in [t']$, there exist at least $\gamma$ indices $j \in [\Delta]$ such that $S \cap V_{i,j} = \emptyset$.
Because of the maximality of $P$, we must have $|J_i| = \gamma$ for all $i \in [t']$.
Indeed, if $|J_i| < \gamma$, then there exists $j \in [\Delta] \backslash J_i$ with $S \cap V_{i,j} = \emptyset$ (as $S$ is disjoint from at least $\gamma$ $V_{i,j}$'s).
In this case, the set $P' = P \cup V_{i,j}$ is also in $\mathcal{P}$ and disjoint from $S$, which contradicts the maximality of $P$.
Note that the sets $V_{i,j}$ are disjoint from each other for all $i \in [t]$ and $j \in [\Delta]$.
Since $|V(H) \backslash A| \leq \gamma - 2$ and $|J_i| = \gamma$ for each $i \in [t']$, there exists $j_i \in J_i$ such that $V_{i,j_i} \cap (V(H) \backslash A) = \emptyset$.
Set $X = f_s^{-1}(A \cap W) = f_s^{-1}(A)$.
Define $H' = (\bigcup_{i=1}^{t'} H_{j_i}[V_{i,j_i} \cup X]) \cup H[\pi^{-1}(B)]$.
Clearly, $V(H') = (\bigcup_{i=1}^{t'} V_{i,j_i}) \cup X \cup \pi^{-1}(B)$.
Now define a map $\pi':V(H') \rightarrow V(F)$ as
\begin{equation*}
    \pi'(a) = \left\{
    \begin{array}{ll}
        \pi_{j_i}(a) & \text{if } a \in V_{i,j_i}, \\
        f_s(a) & \text{if } a \in X, \\
        \pi(a) & \text{if } a \in \pi^{-1}(B).
    \end{array}
    \right.
\end{equation*}
Observe that $\pi'$ is well-defined.
Indeed, $V_{1,j_1},\dots,V_{t',j_{t'}}$ are pairwise disjoint and are all disjoint from $X$ and $\pi^{-1}(B)$.
Furthermore, $X \cap \pi^{-1}(B) \subseteq X_s \cap \pi^{-1}(B) \subseteq K$ by Claim~2 and $(f_s)_{|K} = f_{|K} = \pi_{|K}$.
Thus, $\pi'$ is well-defined.

Next, we show that $\pi'$ is an isomorphism between $H'$ and $F$.
The set $V(H')$ is the disjoint union of $V_{1,j_1},\dots,V_{t',j_{t'}},X,\pi^{-1}(B) \backslash X$.
On the other hand, $V(F)$ is the disjoint union of $V(C_1),\dots,V(C_{t'}),f_s(X),B \backslash f_s(X)$ by our construction.
The map $\pi'$ is a bijection between $V_{i,j_i}$ and $V(C_i)$ when restricted to $V_{i,j_i}$ for all $i \in [t']$, a bijection between $X$ and $f_s(X)$ when restricted to $X$, and a bijection between $\pi^{-1}(B) \backslash X$ and $B \backslash f_s(X)$.
Therefore, $\pi'$ is bijective.
To see it is an isomorphism, observe that $F[A] = \bigcup_{i=1}^{t'} F[V(C_i) \cup f_s(X)]$, because every edge of $F[A]$ is in some $C_i$, or with both endpoints in $f_s(X) = A \cap W$, or with one endpoint in some $C_i$ and the other point in $f_s(B)$.
It follows that $F = F[A] \cup F[B] = (\bigcup_{i=1}^{t'} F[V(C_i) \cup f_s(X)]) \cup F[B]$, as there is no edge in $F$ between $A \backslash B$ and $B \backslash A$.
For each $i \in [t']$, $\pi'_{|V_{i,j_i} \cup X}$ is an isomorphism between $H_{j_i}[V_{i,j_i} \cup X]$ and $F[V(C_i) \cup f_s(X)]$, since $\phi_{|V_{i,j_i} \cup X} = (\pi_{j_i})_{|V_{i,j_i} \cup X}$.
Also, $\pi'_{|\pi^{-1}(B)}$ is an isomorphism between $H[\pi^{-1}(B)]$ and $F[B]$, because $\pi'_{|\pi^{-1}(B)} = \pi_{|\pi^{-1}(B)}$.
So by Lemma~\ref{lem-isom}, $\pi'$ is an isomorphism between $H'$ and $F$, which implies that $(H',\pi')$ is an $F$-copy in $G$.

Finally, we show that $V(H') \backslash U_\gamma \subsetneq V(H) \backslash U_\gamma$, which implies $(H,\pi)$ is $U_\gamma$-redundant.
We have $V_{i,j_i} \subseteq P$ for all $i \in [t']$, since $j_i \in J_i$.
The vertices in $P$ are added to the set $U$ after we make the ``no'' decision for $(X_s,F,f_s)$.
Thus, $\bigcup_{i=1}^{t'} V_{i,j_i} \subseteq U_{s+1} \subseteq U_\gamma$.
Also, the vertices in $X_s$ is are added to the set $U$ after the ``no'' decision.
As $X \subseteq X_s$, we have $X \subseteq U_\gamma$ and thus $(\bigcup_{i=1}^{t'} V_{i,j_i}) \cup X \subseteq U_\gamma$.
It follows that $V(H') \backslash U_\gamma \subseteq \pi^{-1}(B) \subseteq V(H)$ and hence $V(H') \backslash U_\gamma \subseteq V(H) \backslash U_\gamma$.
To further show that $V(H') \backslash U_\gamma \subsetneq V(H) \backslash U_\gamma$, recall that $y_\gamma \in X_\gamma \backslash U_\gamma \subseteq V(H) \backslash U_\gamma$.
However, $y_\gamma \notin V(H') \backslash U_\gamma$, because $\pi(y_\gamma) = u \notin B$ but $V(H') \backslash U_\gamma \subseteq \pi^{-1}(B)$.
Thus, $V(H') \backslash U_\gamma \subsetneq V(H) \backslash U_\gamma$ and $(H,\pi)$ is $U_\gamma$-redundant.
As a result, $(X_\gamma,F,f_\gamma)$ is $U_\gamma$-redundant, contradicting the $U_\gamma$-activeness of $(X_\gamma,F,f_\gamma)$.
\end{proof}

Now we can complete the proof of the correctness of Algorithm~\ref{alg-branch}.
We use the values in Observations~\ref{obs-thetayes} and~\ref{obs-thetano} as our parameters $\theta_\mathsf{yes}$ and $\theta_\mathsf{no}$, respectively.
Recall that our goal is to show any $\mathcal{F}$-hitting set $S \subseteq V(G)$ of $G$ with $|S| \leq k$ is a hitting set of $\mathcal{X}_i$ for some $i \in [t]$.
Consider an $S$-successful path in the branching tree from the root to a leaf.
Note that such an $S$-successful path always exists because any internal node in the branching tree has an $S$-correct branch.
When we reach the end of the path (i.e., the leaf), the number of ``yes'' decisions (``no'' decisions) made is at most $\theta_\mathsf{yes}$ (resp., $\theta_\mathsf{no}$).
Thus, we do not return in line~5 of Algorithm~\ref{alg-branch}.
So the only reason for why this is a leaf of the branching tree is that there does not exist any $U$-active heavy core $(X,F,f) \in \varGamma$ with $X \notin \mathcal{X}$.
As such, the algorithm proceeds to line~13 to check whether a new collection $\mathcal{X}_i$ should be constructed.
Since we are on an $S$-successful path, $S \cap U = \emptyset$ and $S$ is a hitting set of $\mathcal{X}$ for the current $U$ and $\mathcal{X}$.
The former implies that $G[U]$ is $\mathcal{F}$-free, as $S$ is an $\mathcal{F}$-hitting set of $G$, while the latter implies that $\textsc{Hit}(\{X \backslash U: X \in \mathcal{X}\}) \leq \gamma k$, because $S$ is a hitting set of $\{X \backslash U: X \in \mathcal{X}\}$ of size $k$ and \textsc{Hit} is a $\gamma$-approximation algorithm.
So the conditions in line~13 hold and a new collection $\mathcal{X}_i = \{X \backslash U: X \in \mathcal{X}\}$ is constructed in line~15, of which $S$ is a hitting set.

At the end, we recall the bound $t = (\theta_\mathsf{yes} \delta+\theta_\mathsf{no} \delta)^{O(\theta_\mathsf{no})}$ and the treewidth bound $(\eta\delta)^{O(1)} \cdot k^\rho$ of the Gaifman graphs of $\mathcal{X}_1,\dots,\mathcal{X}_t$ obtained at the beginning of this section.
By Observation~\ref{obs-thetayes}, we have $\theta_\mathsf{yes} = \delta^{O(1)} \cdot k$.
Furthermore, we have $\theta_\mathsf{no} = \eta^{O(1)} \cdot (k/\delta)$ by Observation~\ref{obs-thetano}, because $\text{wcol}_{2\gamma}(G) = \eta^{O(1)}$ by Lemma~\ref{lem-nablawcol}.
Therefore, $t = (\eta \delta)^{\eta^{O(1)} \cdot (k/\delta)}$.
Also, the size of each $\mathcal{X}_i$ is bounded by the depth of the branching tree, which is bounded by $\theta_\mathsf{yes} \delta+\theta_\mathsf{no} = (\eta\delta)^{O(1)} k$.
We remark that the $\eta$-factors do not matter in the special case $\mu = 0$ we are considering.
However, they will be important when we extend our proof to the general case in Section~\ref{sec-generalbranch}.

Finally, we analyze the time complexity of the branching algorithm.
Observe that the internal nodes of the branching tree have degree at least 3.
So the size of the branching tree is bounded by the number of its leaves, which is $(\theta_\mathsf{yes} \delta+\theta_\mathsf{no} \delta)^{O(\theta_\mathsf{no})} = (\eta \delta)^{\eta^{O(1)} \cdot (k/\delta)}$.
Therefore, it suffices to bounded the time cost at each node of the branching tree.
In other words, we need to show how to efficiently implement each call of \textsc{Branch} (without the recursive calls).
The most time-consuming step is to find a $U$-active heavy core (line~6) as well as its witness (line~9) or decide such a heavy core does not exist.
All the other steps can be done in $n^{O(1)}$ time.
To find a $U$-active heavy core, we first compute the $F$-copies in $G$ for all $F \in \mathcal{F}$.
This takes $n^{O(1)}$ time.
Then we compute all heavy cores in $G$.
This can be done by considering every triple $(X,F,f)$ where $X \subseteq V(G)$ is a subset of size at most $\gamma$, $F \in \mathcal{F}$, and $f: X \rightarrow V(F)$ is an injective map.
The number of such triples is $n^{O(1)}$.
For each such triple $(X,F,f)$, we can test if it is a heavy core in $n^{O(\delta)}$ time by guessing its witness, which consists of $O(\delta)$ $F$-copies in $G$.
Note that the number of heavy cores in $G$ is $n^{O(1)}$.
With the heavy cores (and the $F$-copies) in hand, it is straightforward to find the $U$-active ones in $n^{O(1)}$ time.
Thus, line~6 and line~9 can be done in $n^{O(\delta)}$ time.
The overall time complexity of our algorithm is then $(\eta \delta)^{\eta^{O(1)} \cdot (k/\delta)} \cdot n^{O(\delta)}$.
The choice of $\delta$ will be discussed in the next section.

\subsection{The general case} \label{sec-generalbranch}

In the previous sections, we have presented the branching algorithm for the special case where $\mu = 0$ and $\mathcal{F}$ consists of connected graphs.
In this section, we shall extend the result to the general case and complete the proof of Theorem~\ref{thm-subtwbranch}.

We first show how to handle a general $\mathcal{F}$ (which can contain any graphs).
For a graph $G$, we denote by $G^+$ the graph obtained from $G$ by adding a single vertex $o$ with edges $(o,v)$ for all $v \in V(G)$.
For a set $\mathcal{G}$ of graphs, we define $\mathcal{G}^+ = \{G^+: G \in \mathcal{G}\}$.
\begin{fact} \label{fact-graph+}
    Let $\mathcal{F}$ be a finite set of graphs.
    For any graph $G$, a subset $S \subseteq V(G)$ is an $\mathcal{F}$-hitting set of $G$ if and only if $S$ is an $\mathcal{F}^+$-hitting set of $G^+$.
\end{fact}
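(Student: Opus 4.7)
The plan is first to peel off $S$ from $G^+$: since $S \subseteq V(G)$, the vertex $o$ is not removed, so $G^+ - S = (G - S)^+$. Writing $H := G - S$, the statement reduces to the graph-level assertion that $H$ is $\mathcal{F}$-free if and only if $H^+$ is $\mathcal{F}^+$-free. I will prove each direction by contrapositive. For $F \in \mathcal{F}$, let $o_F$ denote the distinguished added vertex of $F^+$, so $F^+ - o_F = F$.

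For the easy direction, given an $F$-copy $(Q, \pi)$ in $H$, I extend it to an $F^+$-copy in $H^+$ by adjoining $o$ together with all edges from $o$ to $V(Q)$ (which exist in $H^+$ since $o$ is universal there), and extending $\pi$ by sending $o$ to $o_F$. The result is a subgraph of $H^+$ isomorphic to $F^+$ by construction.

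The reverse direction is where the real work lies. Given an $F^+$-copy $(H', \pi)$ in $H^+$, my plan is to delete a single vertex from $H'$ so that what remains is a subgraph of $H$ isomorphic to $F$. The canonical choice is $u := \pi^{-1}(o_F)$: if $u = o$, deleting $u$ succeeds immediately, because all remaining vertices lie in $V(H)$, all remaining edges of $H'$ lie in $E(H)$, and the restriction of $\pi$ witnesses an isomorphism to $F^+ - o_F = F$.

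The only real obstacle is the case $o \in V(H')$ with $o \neq u$. Here I would instead delete $o$ from $H'$, obtaining a subgraph $H''$ of $H$ (no added vertex or its edges remain) isomorphic to $F^+ - a$, where $a := \pi(o) \in V(F)$. Since $o$ is universal in $H^+$, its image $a$ must be universal in $F^+$; as $a \neq o_F$, this forces $a$ to already be universal in $F$. Then $F^+ - a$ arises from $F - a$ by attaching the universal vertex $o_F$, while $F$ arises from $F - a$ by attaching the universal vertex $a$, so $F^+ - a \simeq F$. Hence $H''$ is an $F$-copy in $H$, closing the reverse direction and completing the fact.
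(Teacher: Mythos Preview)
Your overall strategy matches the paper's: reduce to the claim that $H := G-S$ is $\mathcal{F}$-free iff $H^+$ is $\mathcal{F}^+$-free, and for the nontrivial direction, given an $F^+$-copy $(H',\pi)$ in $H^+$ with $o \in V(H')$, delete $o$ and argue that what remains still contains an $F$-copy inside $H$.

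However, the step ``since $o$ is universal in $H^+$, its image $a$ must be universal in $F^+$'' is wrong. The isomorphism $\pi$ is between $H'$ and $F^+$, not between $H^+$ and $F^+$; and $H'$ is a (not necessarily induced) subgraph of $H^+$, so $o$ need not be universal in $H'$, and hence $a=\pi(o)$ need not be universal in $F^+$. Concretely, take $F=P_3$ on $x\text{--}y\text{--}z$ (so $F^+=K_4-xz$) and $H=K_3$ on $\{a,b,c\}$ (so $H^+=K_4$). Let $H'$ have $V(H')=\{o,a,b,c\}$ and $E(H')=\{oa,oc,ab,bc,ac\}$; then $\pi(o)=x,\ \pi(a)=o_F,\ \pi(c)=y,\ \pi(b)=z$ is an isomorphism $H'\to F^+$, yet $a=x$ is not universal in $F^+$ (it misses $z$), and $F^+-x\simeq K_3\not\simeq P_3$. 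So your conclusion $F^+-a\simeq F$ fails.

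The fix is to weaken the target: you only need that $F^+-a$ \emph{contains} $F$ as a subgraph, not that it is isomorphic to $F$. This holds via the very swap you wrote down, $o_F\mapsto a$ and $v\mapsto v$ for $v\in V(F)\setminus\{a\}$: every edge of $F$ not incident to $a$ lies in $F-a\subseteq F^+-a$, and every edge $(a,v)\in E(F)$ maps to $(o_F,v)$, which is in $E(F^+-a)$ because $o_F$ is universal in $F^+$. This is precisely the bijection the paper constructs, and with this correction your argument goes through.
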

\begin{proof}
Suppose $V(G^+) \backslash V(G) = \{o\}$. 
Fix $F \in \mathcal{F}$, and we want to show $G-S$ contains a subgraph isomorphic to $F$ if and only if $G^+ - S$ contains a subgraph isomorphic to $F^+$.
Clearly, if $G-S$ contains a subgraph $H$ isomorphic to $F$, then $G^+ - S$ contains a subgraph $H'$ isomorphic to $F^+$, which is obtained by taking $H$ and the vertex $o$, together with the edges connecting $o$ and the vertices of $H$.
To see the other direction, assume $G^+ - S$ contains a subgraph $H'$ isomorphic to $F^+$.
If $V(H') \subseteq V(G)$, then we are done, because $H'$ contains a subgraph isomorphic to $F$ and so does $G$.
Assume $V(H') \subsetneq V(G)$, then $o \in V(H')$.
We show that $H' - \{o\}$ contains a subgraph $H$ isomorphic to $F$.
Let $\pi': V(H') \rightarrow V(F^+)$ be an isomorphism between $H'$ and $F^+$.
Define a map $\pi: V(H') \backslash \{o\} \rightarrow V(F)$ as
\begin{equation*}
    \pi(v) = \left\{
    \begin{array}{ll}
        \pi'(v) & \text{if } \pi'(v) \in V(F), \\
        \pi'(o) & \text{if } \pi'(v) \notin V(F).
    \end{array}
    \right.
\end{equation*}
Note that $\pi$ is well-defined.
Indeed, if $\pi'(v) \notin V(F)$, then $\pi'(v)$ is the single vertex $o' \in V(F^+) \backslash V(F)$ and we must have $\pi'(o) \neq o'$ since $\pi'$ is bijective, which implies $\pi'(o) \in V(F)$.
Furthermore, $\pi$ is bijective.
To see this, consider a vertex $u \in V(F)$.
If $u = \pi'(v)$ for some $v \in V(H') \backslash \{o\}$, then $u = \pi(v)$.
If $u = \pi'(o)$, then $u = \pi(v)$ for the unique vertex $v \in V(H') \backslash \{o\}$ satisfying $\pi'(v) = o'$.
In either case, $u$ is in the image of $\pi$.
Thus, $\pi$ is surjective, and thus also injective because $|V(H') \backslash \{o\}| = |V(F)|$.
Finally, observe that for any $(u,v) \in E(H' - \{o\})$, we have $(\pi(u),\pi(v)) \in E(F)$.
Indeed, if $\pi'(u),\pi'(v) \in V(F)$, then $(\pi(u),\pi(v)) = (\pi'(u),\pi'(v)) \in E(F)$.
Otherwise, assume $\pi'(u) \in \pi'(v) \notin V(F)$ without loss of generality.
Then $(\pi(u),\pi(v)) = (\pi'(o),\pi'(v)) \in E(F)$, since $\pi'(o)$ is neighboring to all vertices in $V(F)$ by the fact that $o$ is neighboring to all vertices in $V(H'-\{o\})$.
This property implies that $F$ is isomorphic to a subgraph of $H' - \{o\}$.
Therefore, $G-S$ contains a subgraph isomorphic to $F$.
\end{proof}

Now consider a graph class $\mathcal{G} \subseteq \mathcal{G}(\eta,0,\rho)$ and a finite set $\mathcal{F}$ of graphs (which are not necessarily connected).
Note that $\mathcal{G}^+ \subseteq \mathcal{G}(\eta+1,0,\rho)$ and all graphs in $\mathcal{F}^+$ are connected.
Therefore, we know that our branching algorithm can be applied to $\mathcal{G}^+$ and $\mathcal{F}^+$ (based on what we have shown in the previous sections).
For any graph $G \in V(G)$ of $n$ vertices, one can construct in $(\eta\delta)^{\eta^{O(1)} \cdot (k/\delta + \delta^3)} \cdot n^2 + O(n^4)$ time $t = (\eta\delta)^{\eta^{O(1)} \cdot (k/\delta)}$ collections $\mathcal{Y}_1,\dots,\mathcal{Y}_t \subseteq V(G^+)$ satisfying two conditions: \textbf{(i)} for any $S \subseteq V(G^+)$ with $|S| \leq k$, $S$ is an $\mathcal{F}^+$-hitting set of $G^+$ iff $S$ is a hitting set of $\mathcal{Y}_i$ for some $i \in [t]$, and \textbf{(ii)} the Gaifman graph of each $\mathcal{Y}_i$ has treewidth $(\eta\delta)^{O(1)} \cdot k^\rho$.
Now define $\mathcal{X}_i = \{Y \backslash \{o\}: Y \in \mathcal{Y}_i\}$ for $i \in [t]$, where $o$ is the single vertex in $V(G^+) \backslash V(G)$.
Clearly, the Gaifman graph of each $\mathcal{X}_i$ also has treewidth $(\eta\delta)^{O(1)} \cdot k^\rho$.
We claim that for any $S \subseteq V(G)$ with $|S| \leq k$, $S$ is an $\mathcal{F}$-hitting set of $G$ iff $S$ is a hitting set of $\mathcal{X}_i$ for some $i \in [t]$.
If $S$ is an $\mathcal{F}$-hitting set of $G$, then it is an $\mathcal{F}^+$-hitting set of $G^+$ by Fact~\ref{fact-graph+}.
By property (i) of the collections $\mathcal{Y}_1,\dots,\mathcal{Y}_t$, $S$ is a hitting set of $\mathcal{Y}_i$ for some $i \in [t]$.
Since $o \notin S$, $S$ is also a hitting set of $\mathcal{X}_i$.
On the other hand, if $S$ is a hitting set of $\mathcal{X}_i$ for some $i \in [t]$, then it is a hitting set of $\mathcal{Y}_i$.
Again by property (i) of $\mathcal{Y}_1,\dots,\mathcal{Y}_t$, $S$ is an $\mathcal{F}^+$-hitting set of $G^+$.
Fact~\ref{fact-graph+} then implies that $S$ is an $\mathcal{F}$-hitting set of $G$.
In other words, $\mathcal{X}_1,\dots,\mathcal{X}_t$ satisfy the desired conditions in Theorem~\ref{thm-subtwbranch}, and thus the theorem also holds for $\mathcal{G}$ and $\mathcal{F}$.
It follows that our branching algorithm applies to a general $\mathcal{F}$.
For convenience, we summarize what we have shown so far in the lemma below.
\begin{lemma} \label{lem-mu=0}
Let $\mathcal{G} \subseteq \mathcal{G}(\eta,0,\rho)$ where $\rho < 1$.
Also, let $\mathcal{F}$ be a finite set of graphs.
Then for a graph $G \in \mathcal{G}$ of $n$ vertices and parameters $k,\delta \in \mathbb{N}$, one can construct in $(\eta\delta)^{\eta^{O(1)} \cdot (k/\delta)} \cdot n^{O(\delta)}$ time $t = (\eta\delta)^{\eta^{O(1)} \cdot (k/\delta)}$ collections $\mathcal{X}_1,\dots,\mathcal{X}_t$ of subsets of $V(G)$ satisfying the following conditions.
\begin{itemize}
    \item For any $S \subseteq V(G)$ with $|S| \leq k$, $S$ is an $\mathcal{F}$-hitting set of $G$ iff $S$ hits $\mathcal{X}_i$ for some $i \in [t]$.
    \item The Gaifman graph of $\mathcal{X}_i$ has treewidth $(\eta\delta)^{O(1)} \cdot k^\rho$ for all $i \in [t]$.
    \item $|\mathcal{X}_i| = (\eta\delta)^{O(1)} \cdot k$ for all $i \in [t]$.
\end{itemize}
\end{lemma}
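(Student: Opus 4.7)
The plan is to assemble the proof by reducing to the connected-$\mathcal{F}$ case and then invoking the branching procedure (Algorithm~\ref{alg-branch}) whose correctness and parameter bounds were established in Sections~\ref{sec-branchalg} and~\ref{sec-branchana}. First, I would handle the reduction described at the start of Section~\ref{sec-generalbranch}. Given a finite $\mathcal{F}$ whose members may be disconnected, form $\mathcal{F}^+ = \{F^+ : F \in \mathcal{F}\}$ and $G^+$ as in that section; the members of $\mathcal{F}^+$ are all connected, and if $G \in \mathcal{G} \subseteq \mathcal{G}(\eta,0,\rho)$ then $G^+$ lies in $\mathcal{G}(\eta+1,0,\rho)$ because adjoining a universal vertex increases any balanced separator by at most one vertex. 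By Fact~\ref{fact-graph+}, a subset $S \subseteq V(G)$ is an $\mathcal{F}$-hitting set of $G$ iff $S$ is an $\mathcal{F}^+$-hitting set of $G^+$.

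Next, I would run Algorithm~\ref{alg-branch} on the instance $(G^+, \mathcal{F}^+, k)$ with the parameter $\delta$ from the lemma, using the values $\theta_\mathsf{yes}$ and $\theta_\mathsf{no}$ prescribed by Observations~\ref{obs-thetayes} and~\ref{obs-thetano}. The analysis in Section~\ref{sec-branchana} already shows that the returned collections $\mathcal{Y}_1,\dots,\mathcal{Y}_t \subseteq 2^{V(G^+)}$ satisfy: (a) any $\mathcal{F}^+$-hitting set $S \subseteq V(G^+)$ of $G^+$ with $|S| \leq k$ hits some $\mathcal{Y}_i$, and conversely any hitting set of some $\mathcal{Y}_i$ with $|S|\le k$ is an $\mathcal{F}^+$-hitting set (via Lemma~\ref{lem-hitactive}); (b) each Gaifman graph has treewidth $(\eta\delta)^{O(1)} \cdot k^\rho$ (via Lemma~\ref{lem-twGaifman}, using $\text{wcol}_{2\gamma}(G^+) = \eta^{O(1)}$ from Lemma~\ref{lem-nablawcol}); (c) the depth of the branching tree, and hence $|\mathcal{Y}_i|$, is $\theta_\mathsf{yes}+\theta_\mathsf{no} = (\eta\delta)^{O(1)} k$; and (d) the number $t$ of leaves satisfies $t = (\eta\delta)^{\eta^{O(1)}\cdot(k/\delta)}$ as solved from the recurrence $\Lambda(a,b)\le\Lambda(a-1,b)+\delta^{O(1)}\Lambda(a,b-1)$.

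Then I would set $\mathcal{X}_i = \{Y \setminus \{o\} : Y \in \mathcal{Y}_i\}$, where $o$ is the universal vertex adjoined to form $G^+$. Projecting away $o$ does not increase treewidth of the Gaifman graph, does not increase the number of sets, and commutes with hitting by any $S \subseteq V(G)$; combined with Fact~\ref{fact-graph+} this transfers all three required properties from $\mathcal{Y}_i$ to $\mathcal{X}_i$ on the ground set $V(G)$, exactly as spelled out in the paragraph after Fact~\ref{fact-graph+}.

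Finally I would audit the running time. Each node of the branching tree requires enumerating $F$-copies for all $F \in \mathcal{F}^+$, enumerating all candidate triples $(X,F,f)$ with $|X|\le\gamma+1$, and certifying a witness of size $\gamma^{|X|}\delta$ via an $n^{O(\delta)}$ guess, plus running the polynomial-time $\gamma$-approximate hitting-set routine $\textsc{Hit}$; all of this fits in $n^{O(\delta)}$ per node. Since the branching tree has $(\eta\delta)^{\eta^{O(1)}\cdot(k/\delta)}$ leaves and is internally of degree $\ge 3$, multiplying gives the claimed bound $(\eta\delta)^{\eta^{O(1)}\cdot(k/\delta)} \cdot n^{O(\delta)}$. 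The main obstacle in the overall argument is really the $S$-successful path analysis encapsulated in Observation~\ref{obs-thetano}, which is already proved; at the level of Lemma~\ref{lem-mu=0} itself the remaining work is only to verify that the $(\cdot)^+$ reduction preserves each of the three output guarantees cleanly, and this verification is routine.
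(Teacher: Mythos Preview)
Your proposal is correct and follows essentially the same route as the paper: Lemma~\ref{lem-mu=0} is explicitly presented there as a summary of what was ``shown so far,'' namely the branching algorithm and its analysis in Sections~\ref{sec-branchalg}--\ref{sec-branchana} together with the $(\cdot)^+$ reduction to connected $\mathcal{F}$ at the start of Section~\ref{sec-generalbranch}, and you invoke exactly those pieces (Fact~\ref{fact-graph+}, Observations~\ref{obs-thetayes} and~\ref{obs-thetano}, Lemmas~\ref{lem-hitactive}, \ref{lem-twGaifman}, \ref{lem-nablawcol}, and the per-node $n^{O(\delta)}$ cost) in the same order and to the same effect.
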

\vspace{0.1cm}

Next, we show how to extend our result to a general $\mathcal{G} \subseteq \mathcal{G}(\eta,\mu,\rho)$.
The basic idea is to keep finding large cliques in the input graph and branching on those cliques, until there is no large clique in the graph (at that point, the graph admits small separators).
This approach is standard, and similar ideas have been repeatedly used in the literature \cite{ErlebachJS05,FominLS18,lokshtanov2022subexponential,lokshtanov2023framework,van2006better}.
\begin{lemma} \label{lem-cliquebranch}
Let $\mathcal{G} \subseteq \mathcal{G}(\eta,\mu,\rho)$ where $\eta,\mu \geq 0$ and $\rho < 1$.
Also, let $\mathcal{F}$ be a finite set of graphs and $c \in (0,1)$.
Then for a graph $G \in \mathcal{G}$ of $n$ vertices and parameters $k,\tau \in \mathbb{N}$, one can construct in $n^{O(\tau+k/\tau)}$ time $r = \tau^{O(k/\tau)}$ subsets $S_1,\dots,S_r \subseteq V(G)$ satisfying the following conditions.
\begin{itemize}
    \item For any $\mathcal{F}$-hitting set $S \subseteq V(G)$ of $G$ with $|S| \leq k$, $S_i \subseteq S$ for some $i \in [r]$.
    \item $\omega(G-S_i) \leq \tau + \gamma$ for all $i \in [r]$.
    \item $|S_i| \leq k$ for all $i \in [r]$.
\end{itemize}
\end{lemma}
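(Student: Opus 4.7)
The plan is a standard clique-branching procedure that exploits one simple combinatorial fact: since every graph $F \in \mathcal{F}$ has at most $\gamma$ vertices, $F$ is a subgraph of the complete graph $K_\gamma$. Therefore, for any $\mathcal{F}$-hitting set $S^* \subseteq V(G)$, we must have $\omega(G - S^*) \leq \gamma - 1$, since a clique of size $\gamma$ in $G - S^*$ would contain every $F \in \mathcal{F}$ as a subgraph. In particular, for any clique $K$ in $G$ we get $|K \cap S^*| \geq |K| - (\gamma - 1)$, which is the single inequality driving the entire branching.

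First I would define a recursive procedure $\textsc{Branch}(S)$, called initially with $S = \emptyset$. At each call, if $|S| > k$ we abort. Otherwise we search by brute-force enumeration, in $n^{O(\tau+\gamma)}$ time, for a clique $K$ of size exactly $\tau + \gamma + 1$ in $G - S$. If no such clique exists, then $\omega(G - S) \leq \tau + \gamma$ and we output $S$ as one of the sets $S_i$. Otherwise, for every subset $U \subseteq K$ with $|U| \leq \gamma - 1$ we make the recursive call $\textsc{Branch}(S \cup (K \setminus U))$.

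For correctness I would argue by induction along any fixed $\mathcal{F}$-hitting set $S^*$ of $G$ with $|S^*| \leq k$ that there exists a root-to-leaf branch along which the invariant $S \subseteq S^*$ is preserved: whenever we find a clique $K$ of size $\tau + \gamma + 1$ in $G - S$, the observation above gives $|K \cap S^*| \geq \tau + 2$, hence $|K \setminus S^*| \leq \gamma - 1$, so the choice $U = K \setminus S^*$ is among the branches we enumerate, and in that branch we add $K \cap S^* \subseteq S^*$ to $S$. Since this branch never aborts (the invariant $S \subseteq S^*$ keeps $|S| \leq k$), it must terminate with an output $S_i \subseteq S^*$. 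Among the three required conditions, (i) follows from this invariant, (ii) holds because termination occurs only when no $(\tau+\gamma+1)$-clique remains in $G - S_i$, and (iii) holds because we abort whenever $|S| > k$.

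For the counting and time analysis, the branching factor is $\sum_{i=0}^{\gamma-1}\binom{\tau+\gamma+1}{i} = O((\tau+\gamma+1)^{\gamma-1}) = \tau^{O(1)}$, since $\gamma$ depends only on $\mathcal{F}$. Each recursive call adds at least $(\tau+\gamma+1)-(\gamma-1) = \tau+2$ vertices to $S$, so the recursion depth is at most $k/(\tau+2) = O(k/\tau)$. This yields $r = \tau^{O(k/\tau)}$ leaves, and combined with the $n^{O(\tau+\gamma)}$ cost of searching for a clique at each internal node, the total running time is $\tau^{O(k/\tau)} \cdot n^{O(\tau)} = n^{O(\tau + k/\tau)}$. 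I do not foresee any genuine technical obstacle; the only conceptual point worth highlighting is that the scheme works because we need only branch over the $(\gamma - 1)$-sized complements inside $K$ rather than over all subsets of $K$, which is what brings the per-step branching factor from the naive $2^{O(\tau)}$ down to $\tau^{O(1)}$ and renders the whole recursion subexponential in $k$ for any $\tau = \omega(1)$.
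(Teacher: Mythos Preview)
Your proof is correct and follows essentially the same clique-branching strategy as the paper. The only cosmetic differences are that you search for a clique of size $\tau+\gamma+1$ (the paper uses $\tau+\gamma$) and that you branch over the complement $U = K \setminus S^*$ of size at most $\gamma-1$ whereas the paper branches directly over the intersection $V' = K \cap S^*$ of size greater than $\tau$; these are equivalent reformulations and yield the same bounds.
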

\vspace{0.1cm}
\begin{proof}
The branching algorithm for computing the sets $S_1,\dots,S_r \subseteq V(G)$ is presented in Algorithm~\ref{alg-clique}.
The core of the algorithm is the sub-routine \textsc{Branch}, which maintains a set $V$ of vertices of $G$.
In the initial call of \textsc{Branch} (line~2), we have $V = \emptyset$.
When doing $\textsc{Branch}(V)$, we first check whether $\omega(G - V) \leq \tau + \gamma$ (line~5).
If this is the case, we construct a new set $S_r = V$ (line~7) and return to the last level.
Otherwise, we check whether $|V| \geq k$, and if so, we do not branch further and return to the last level (line~9).
In line~10, the sub-routine $\textsc{Find}(G-V,K_{\tau+\gamma})$ returns a $K_{\tau+ \gamma}$-copy $(H,\pi)$ in $G-V$, where $K_{\tau+ \gamma}$ is the complete graph of $\tau+ \gamma$ vertices.
Then we branch on $H$ as follows.
We enumerate all subsets of $V' \subseteq V(H)$ whose size is at least $\tau$ (line~11).
For each such $V'$, we recursively call $\textsc{Branch}(V \cup V')$.

\begin{algorithm}[h]
    \caption{\textsc{CliqueReduce}$(G,k)$}
    \begin{algorithmic}[1]
        \State $r \leftarrow 0$
        \State $\textsc{Branch}(\emptyset)$
        \State \textbf{return} $S_1,\dots,S_r$
        \smallskip
        \Procedure{Branch}{$V$}
            \If{$\omega(G - V) \leq \tau + \gamma$}
                \State $r \leftarrow r+1$
                \State $S_r \leftarrow V$
                \State \textbf{return}
            \EndIf
            \vspace{0.08cm}
            \If{$|V| \geq k$}{ \textbf{return}}
            \EndIf
            \State $(H,\pi) \leftarrow \textsc{Find}(G-V,K_{\tau+\gamma})$
            \For{every $V' \subseteq V(H)$ with $|V'| > \tau$}
                \State $\textsc{Branch}(V \cup V')$
            \EndFor
        \EndProcedure
    \end{algorithmic}
    \label{alg-clique}
\end{algorithm}

In what follows, we first analyze the time complexity of Algorithm~\ref{alg-clique} and prove its correctness.
Consider the branching tree of the algorithm.
The graph $H$ computed in line~10 has $\tau+ \gamma$ vertices and thus the number of subsets of $V(H)$ with size larger than $\tau$ is $O(\tau^\gamma)$, which also bounds the degree of the branching tree. 
The depth of the branching tree is $O(k/\tau)$.
Indeed, the size of $V$ increases by at least $\tau$ after each branching step.
When $|V| \geq k$, we stop going further.
So the depth of the branching tree is at most $k/\tau$.
It follows that the size of the branching tree is $\tau^{O(k/\tau)}$, and hence $r = \tau^{O(k/\tau)}$.
The most time-consuming steps in a call of \textsc{Branch} are line~5 and line~10, where we need to check whether $G-V$ contains a clique of size $\tau + \gamma$, i.e., a $K_{\tau+ \gamma}$-copy, and if so, find such a copy.
Both steps can be done in $n^{O(\tau+\gamma)}$ time.
Therefore, the overall time complexity of Algorithm~\ref{alg-clique} is $n^{O(\tau+k/\tau)}$.

Now we show that the sets $S_1,\dots,S_r$ constructed satisfy the desired properties.
We have $|S_i| \leq k$ for all $i \in [r]$ just be construction.
It is also clear that $\omega(G-S_i) \leq \tau+ \gamma$ for all $i \in [r]$, as we construct a new set $S_r = V$ only when $\omega(G - V) \leq \tau+ \gamma$ (line~5).
To see the first property, consider an $\mathcal{F}$-hitting set $S \subseteq V(G)$ of $G$ with $|S| \leq k$.
Consider an internal node of the branching tree, where we are branching on a $K_{\tau+ \gamma}$-copy $(H,\pi)$.
At this node, we have $O(\tau^\gamma)$ branches, each of which corresponds to a subset $V' \subseteq V(H)$ with $|V'|>\tau$.
We say a branch is \textit{$S$-correct} if the corresponding set $V'$ satisfies $V' \subseteq S$.
We claim that there exists at least one branch that is $S$-correct.
Indeed, we must have $S \cap V(H) > \tau$, for otherwise $G-S$ contains the clique $H-S$ of size at least $\gamma$ and hence is not $\mathcal{F}$-free, which contradicts the fact that $S$ is an $\mathcal{F}$-hitting set of $G$.
Therefore, there exists $V' \subseteq V(H)$ satisfying $|V'|>\tau$ and $V' \subseteq S$, which implies that at least one branch is $S$-correct.
We say a path in the branching tree from the root is \textit{$S$-successful} if it chooses an $S$-correct branch in every step.
It is clear that at any node on an $S$-successful path, the set $V$ maintained by \textsc{Branch} satisfies $V \subseteq S$.
Consider an $S$-successful path in the branching tree from the root to a leaf.
Note that such an $S$-successful path exists because every internal node has at least one $S$-correct branch.
At the end of the path, we have $V \subseteq S$.
If $V = S$, then $\omega(G - V) = \omega(G - S) < \gamma$, since $S$ is an $\mathcal{F}$-hitting set of $G$.
Therefore, a new set $S_i = V$ is constructed in line~7 and $S_i = V \subseteq S$.
If $V \subsetneq S$, then $|V| < k$.
In this case, we must have $\omega(G - V) < \tau+ \gamma$ and a new set $S_i = V$ is constructed, which satisfies $S_i \subseteq S$.
Indeed, if this was not the case, then we shall not return to the last level in line~9 (as $|V| < k$) and thus will keep branching on a new $K_{\tau+ \gamma}$-copy, which contradicts the fact that we are the end of the path (i.e., at a leaf node).
As a result, $S_1,\dots,S_r$ satisfy both of the desired properties.
\end{proof}

Now we apply the above lemma to complete the proof of Theorem~\ref{thm-subtwbranch}.
Without loss of generality, we can assume that $\mathcal{G} = \mathcal{G}(\eta,\mu,\rho)$.
Given a graph $G \in \mathcal{G}$ and a parameter $k \in \mathbb{N}$, we apply the algorithm in the above lemma (with a parameter $\tau$ to be determined later) to generate sets $S_1,\dots,S_r \subseteq V(G)$ for $r = \tau^{O(k/\tau)}$.
Let $G_i = G - S_i$ for $i \in [r]$.
Note that $G_1,\dots,G_r \in \mathcal{G}$.
By Lemma~\ref{lem-cliquebranch}, $\omega(G_i) \leq \tau+\gamma$ for all $i \in [r]$.
Therefore, $G_1,\dots,G_r$ admit $(\eta (\tau+\gamma)^\mu,0,\rho)$-separators and thus $G_1,\dots,G_r \in \mathcal{G}' = \mathcal{G}(\eta (\tau+\gamma)^\mu,0,\rho)$.
Set $\tilde{\eta} = \eta (\tau+\gamma)^\mu$.
We now apply Lemma~\ref{lem-mu=0} on $\mathcal{G}'$ to compute, for each $G_i$, $t_i = (\tilde{\eta}\delta)^{\tilde{\eta}^{O(1)} \cdot (k/\delta)}$ collections $\mathcal{X}_1^{(i)},\dots,\mathcal{X}_{t_i}^{(i)}$ of subsets of $V(G_i)$.
Each $\mathcal{X}_j^{(i)}$ for $j \in [t_i]$ has size $(\tilde{\eta}\delta)^{O(1)} \cdot k = (\eta\tau\delta)^{O(1)} \cdot k$.
The time cost for each $G_i$ is $(\tilde{\eta}\delta)^{\tilde{\eta}^{O(1)} \cdot (k/\delta)} \cdot n^{O(\delta)}$.
Applying Lemma~\ref{lem-cliquebranch} takes $n^{O(\tau + k/\tau)}$ time.
Thus, the total time cost is
\begin{equation*}
    (\tilde{\eta}\delta)^{\tilde{\eta}^{O(1)} \cdot (k/\delta)} \cdot rn^{O(\delta)} + n^{O(\tau + k/\tau)} = \tau^{\tau^{O(1)} \cdot (k/\delta) + O(k/\tau)} \cdot n^{O(\delta)} + n^{O(\tau+k/\tau)},
\end{equation*}
which is in turn bounded by $n^{\tau^{O(1)} \cdot (k/\delta) + O(\delta+k/\tau)}$.
Let $L = \{(i,j): i \in [r] \text{ and } j \in [t_i]\}$, $t = \sum_{i=1}^r t_i$, and $f:[t] \rightarrow L$ be an arbitrary bijection.
Note that $t = (\tilde{\eta}\delta)^{\tilde{\eta}^{O(1)} \cdot (k/\delta)} \cdot \tau^{O(k/\tau)} = \tau^{\tau^{O(1)} \cdot (k/\delta)+O(k/\tau)}$.
For each $i \in [t]$, we write $\alpha(i)$ as the first index in the pair $f(i)$ and $\beta(i)$ as the second index in the pair $f(i)$.
Then for each $i \in [t]$, we define a collection $\mathcal{X}_i = \{\{v\}: v \in S_{\alpha(i)}\} \cup \mathcal{X}_{\beta(i)}^{(\alpha(i))}$ of subsets of $V(G)$.
The following observation shows that $\mathcal{X}_1,\dots,\mathcal{X}_t$ satisfy the properties we want.
\begin{observation}
    For all $i \in [t]$, we have $|\mathcal{X}_i| = (\eta\tau\delta)^{O(1)} \cdot k$ and the Gaifman graph of $\mathcal{X}_i$ has treewidth $(\tau \delta)^{O(1)} \cdot k^{\rho}$.
    Furthermore, for any $S \subseteq V(G)$ with $|S| \leq k$, $S$ is an $\mathcal{F}$-hitting set of $G$ iff $S$ hits $\mathcal{X}_i$ for some $i \in [t]$.
\end{observation}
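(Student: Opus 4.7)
The plan is to verify the three claims by unpacking the definition $\mathcal{X}_i = \{\{v\}: v \in S_{\alpha(i)}\} \cup \mathcal{X}_{\beta(i)}^{(\alpha(i))}$ and pushing each property through the two layers of the construction (the clique-reduction layer of Lemma~\ref{lem-cliquebranch} and the polynomial-expansion layer of Lemma~\ref{lem-mu=0}). The main work is bookkeeping; there is no genuine obstacle, since all the non-trivial structural work already lies in the two lemmas we are composing.

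First I would dispatch the size bound: by Lemma~\ref{lem-cliquebranch} we have $|S_{\alpha(i)}| \leq k$, and by Lemma~\ref{lem-mu=0} applied to $G_{\alpha(i)} \in \mathcal{G}(\tilde\eta,0,\rho)$ with $\tilde\eta = \eta(\tau+\gamma)^\mu$ we have $|\mathcal{X}_{\beta(i)}^{(\alpha(i))}| = (\tilde\eta\delta)^{O(1)} \cdot k = (\eta\tau\delta)^{O(1)} \cdot k$ (absorbing the constants $\mu,\gamma$ into the hidden $O(\cdot)$). Summing gives $|\mathcal{X}_i| = (\eta\tau\delta)^{O(1)} \cdot k$. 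For the treewidth bound, I would observe that every set in $\mathcal{X}_{\beta(i)}^{(\alpha(i))}$ is a subset of $V(G_{\alpha(i)}) = V(G) \setminus S_{\alpha(i)}$, so the Gaifman graph of $\mathcal{X}_i$ is the disjoint union of the Gaifman graph of $\mathcal{X}_{\beta(i)}^{(\alpha(i))}$ and $|S_{\alpha(i)}|$ isolated vertices coming from the singletons $\{v\}$. Isolated vertices do not change treewidth, so by Lemma~\ref{lem-mu=0} the treewidth of the Gaifman graph of $\mathcal{X}_i$ is $(\tilde\eta\delta)^{O(1)} \cdot k^\rho = (\tau\delta)^{O(1)} \cdot k^\rho$, again absorbing the fixed constants $\eta,\mu,\gamma$ into the $O(\cdot)$.

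For the equivalence, the forward direction uses that given any $\mathcal{F}$-hitting set $S$ of $G$ with $|S| \leq k$, Lemma~\ref{lem-cliquebranch} supplies an index $j \in [r]$ with $S_j \subseteq S$. Set $S' = S \setminus S_j \subseteq V(G_j)$. Then $G_j - S' = G - S$ is $\mathcal{F}$-free, so $S'$ is an $\mathcal{F}$-hitting set of $G_j$ with $|S'| \leq k$, and Lemma~\ref{lem-mu=0} gives some $\ell \in [t_j]$ such that $S'$ hits $\mathcal{X}_\ell^{(j)}$. Taking $i$ with $f(i) = (j,\ell)$, the set $S$ hits every singleton $\{v\}$ with $v \in S_j = S_{\alpha(i)}$ (since $S_j \subseteq S$) and hits every set in $\mathcal{X}_{\beta(i)}^{(\alpha(i))}$ (since $S \supseteq S'$), hence hits $\mathcal{X}_i$. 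Conversely, if $S$ hits $\mathcal{X}_i$ with $|S| \leq k$, then the singletons force $S_{\alpha(i)} \subseteq S$, and since every set of $\mathcal{X}_{\beta(i)}^{(\alpha(i))}$ lies in $V(G) \setminus S_{\alpha(i)}$, the restriction $S'' = S \setminus S_{\alpha(i)}$ still hits $\mathcal{X}_{\beta(i)}^{(\alpha(i))}$. By Lemma~\ref{lem-mu=0}, $S''$ is an $\mathcal{F}$-hitting set of $G_{\alpha(i)}$, so $G - S = G_{\alpha(i)} - S''$ is $\mathcal{F}$-free and $S$ is an $\mathcal{F}$-hitting set of $G$. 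If any step in this chain feels delicate, it is only the observation that sets in $\mathcal{X}_{\beta(i)}^{(\alpha(i))}$ avoid $S_{\alpha(i)}$, but this is immediate from the construction since those sets are subsets of $V(G_{\alpha(i)})$.
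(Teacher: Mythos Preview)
Your proof is correct and follows essentially the same approach as the paper's own proof: both derive the size and treewidth bounds by noting that the singletons $\{v\}$ for $v \in S_{\alpha(i)}$ contribute only isolated vertices to the Gaifman graph (since the sets in $\mathcal{X}_{\beta(i)}^{(\alpha(i))}$ lie in $V(G_{\alpha(i)}) = V(G)\setminus S_{\alpha(i)}$), and both establish the equivalence by composing Lemma~\ref{lem-cliquebranch} with Lemma~\ref{lem-mu=0} via the identity $G - S = G_{\alpha(i)} - (S \cap V(G_{\alpha(i)}))$. Your use of $S' = S \setminus S_j$ is the same as the paper's $S \cap V(G_j)$ since $S_j \subseteq S$.
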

\begin{proof}
The bound for the sizes of $\mathcal{X}_1,\dots,\mathcal{X}_t$ follows directly from the fact that $|S_i| \leq k$ for all $i \in [r]$ and $|\mathcal{X}_j^{(i)}| = (\eta\tau\delta)^{O(1)} \cdot k$ for all $i \in [r]$ and $j \in [t_i]$.
Note that the Gaifman graph of each $\mathcal{X}_i$ is the Gaifman graph of $\mathcal{X}_{\beta(i)}^{(\alpha(i))}$ together with some isolated vertices corresponding to the vertices in $S_{\alpha(i)}$, because the sets in $\mathcal{X}_{\beta(i)}^{(\alpha(i))}$ are all disjoint from $S_{\alpha(i)}$.
Therefore, the Gaifman graphs of $\mathcal{X}_i$ and $\mathcal{X}_{\beta(i)}^{(\alpha(i))}$ have the same treewidth, which is
\begin{equation*}
    (\tilde{\eta} \delta)^{O(1)} \cdot k^\rho = (\tau \delta)^{O(1)} \cdot k^{\rho},
\end{equation*}
by Lemma~\ref{lem-mu=0}.
To verify the second condition, we first consider the ``if'' direction.
Suppose $S$ hits $\mathcal{X}_i$ for some $i \in [t]$.
Then $S_{\alpha(i)} \subseteq S$ and $S \cap V(G_{\alpha(i)})$ hits $\mathcal{X}_{\beta(i)}^{(\alpha(i))}$.
By Lemma~\ref{lem-mu=0}, $S \cap V(G_{\alpha(i)})$ is an $\mathcal{F}$-hitting set of $G_{\alpha(i)}$.
Therefore, $S = S_{\alpha(i)} \cup (S \cap V(G_{\alpha(i)}))$ is an $\mathcal{F}$-hitting set of $G$, since $G-S = G_{\alpha(i)} - (S \cap V(G_{\alpha(i)}))$.
Next, consider the ``only if'' direction.
Suppose $S$ is an $\mathcal{F}$-hitting set of $G$.
By Lemma~\ref{lem-cliquebranch}, there exists $\alpha \in [r]$ such that $S_\alpha \subseteq S$.
Then $S \cap V(G_\alpha)$ is an $\mathcal{F}$-hitting set of $G_\alpha$, since $G-S = G_\alpha - (S \cap V(G_\alpha))$.
Now by Lemma~\ref{lem-mu=0}, $S \cap V(G_\alpha)$ hits $\mathcal{X}_{\beta}^{(\alpha)}$ for some $\beta \in [t_\alpha]$.
Let $i \in [t]$ be the unique instance such that $f(i) = (\alpha,\beta)$.
Then $S$ hits $\mathcal{X}_i = \{\{v\}: v \in S_{\alpha}\} \cup \mathcal{X}_{\beta}^{(\alpha)}$, because $S_\alpha \subseteq S$ and $S \cap V(G_\alpha)$ hits $\mathcal{X}_{\beta}^{(\alpha)}$.
\end{proof}

Set $\tau = k^\varepsilon$ and $\delta = k^{\varepsilon'}$ for sufficiently small (compared to $\mathcal{F}$, $\gamma$, $\eta$, $\mu$, and $\rho$) constants $\varepsilon,\varepsilon' > 0$ such that $\varepsilon'$ is much larger than $\varepsilon$.
Then one can easily verify from the bounds we achieved above that $t = 2^{k^{1-\Omega(1)}}$ and $|\mathcal{X}_i| = k^{O(1)}$ for all $i \in [t]$.
Also, the time complexity for constructing $\mathcal{X}_1,\dots,\mathcal{X}_t$ is bounded by $n^{k^{1-\Omega(1)}}$, and the Gaifman graphs of $\mathcal{X}_1,\dots,\mathcal{X}_t$ have treewidth $k^{1-\Omega(1)}$, since $\rho < 1$.
Therefore, the constant $c<1$ in Theorem~\ref{thm-subtwbranch} exists.
This completes the proof of Theorem~\ref{thm-subtwbranch}, which further implies Corollary~\ref{cor-branching}.



\section{Subgraph hitting algorithms} \label{sec-subhit}

Given Corollary~\ref{cor-branching}, to prove Theorem~\ref{thm-main}, we only need an efficient algorithm for solving general (weighted) hitting-set instances whose Gaifman graphs have bounded treewidth.

\begin{lemma} \label{lem-treeDP}
    Given a collection $\mathcal{X}$ of sets and a function $w: \bigcup_{X \in \mathcal{X}} X \rightarrow \mathbb{R}_{\geq 0}$, a minimum-weight subset $S \subseteq \bigcup_{X \in \mathcal{X}} X$ under the weight function $w$ satisfying that $|S| \leq k$ and $S$ hits $\mathcal{X}$ can be computed in $2^{O(\mathbf{tw}(G^*))} \cdot (n+|\mathcal{X}|)$ time, where $n = |\bigcup_{X \in \mathcal{X}} X|$ and $G^*$ is the Gaifman graph of $\mathcal{X}$.
\end{lemma}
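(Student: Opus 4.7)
The plan is to run a standard bounded-treewidth dynamic program on a tree decomposition of $G^*$. First I would invoke the constant-factor approximation of Bodlaender et al.\ to compute, in time $2^{O(\mathbf{tw}(G^*))}\cdot n$, a tree decomposition $(T,\beta)$ of $G^*$ of width $w=O(\mathbf{tw}(G^*))$, and then convert it into a nice tree decomposition with $O(w\cdot n)$ introduce/forget/join nodes in time $O(w\cdot n)$. The structural fact that drives the algorithm is that every $X\in\mathcal{X}$ forms a clique in $G^*$ by definition of the Gaifman graph, so by the clique-containment property of tree decompositions there exists a node $t_X\in V(T)$ with $X\subseteq\beta(t_X)$. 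A single sweep over $T$ (e.g.\ placing $X$ at the highest bag containing it) assigns a witness $t_X$ to every $X$ in total time $O(w\cdot(n+|\mathcal{X}|))$.

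The DP would then compute, for every node $t$ and every pair $(A,s)$ with $A\subseteq\beta(t)$ and $0\le s\le k$, a value $\mathrm{OPT}[t][A][s]$ defined as the minimum of $w(S)$ over all $S\subseteq V_t$ (the vertices introduced in the subtree rooted at $t$) satisfying $S\cap\beta(t)=A$, $|S|=s$, and $S\cap X\ne\emptyset$ for every $X\in\mathcal{X}$ with $t_X$ lying in the subtree rooted at $t$ (with $\mathrm{OPT}[t][A][s]=+\infty$ if no such $S$ exists). The transitions at introduce, forget, and join nodes are the textbook ones: an introduce node either places the new vertex $v$ into $A$ (adding $w(v)$ and incrementing $s$) or not; a forget node takes the minimum over the two extensions of $A$; a join node combines the two children's tables by requiring them to agree on $A$ and adding the sizes after subtracting $|A|$ to avoid double counting. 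The only non-standard ingredient is the constraint-processing step: whenever $t$ is the witness $t_X$ of some $X\in\mathcal{X}$, we kill every state with $A\cap X=\emptyset$ by setting $\mathrm{OPT}[t][A][s]=+\infty$; since $X\subseteq\beta(t_X)$, this test is well defined and costs $O(2^w)$ per constraint.

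The answer is then read off at the root as $\min_{A,s\le k}\mathrm{OPT}[\mathrm{root}][A][s]$, with the corresponding $S$ recovered by standard back-pointers. Correctness is a routine induction on the tree decomposition using the fact that every constraint $X$ is enforced at exactly one node $t_X$, and this enforcement is legitimate because the property ``$S\cap X\ne\emptyset$'' only depends on $S\cap\beta(t_X)$. For the running time, each of the $O(wn)$ nodes contributes $2^{O(w)}$ work to the introduce/forget/join updates, giving $2^{O(w)}\cdot n$, and each of the $|\mathcal{X}|$ constraints contributes $2^{O(w)}$ work at its witness, giving $2^{O(w)}\cdot|\mathcal{X}|$; together this yields the claimed $2^{O(\mathbf{tw}(G^*))}\cdot(n+|\mathcal{X}|)$ bound (the $s$-coordinate of the table contributes only a polynomial factor in $k\le n$, which is absorbed by the exponential in $\mathbf{tw}$ in the regime used by Theorem~\ref{thm-main}).

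The only real obstacle is the bookkeeping needed to obtain an $(n+|\mathcal{X}|)$ dependence rather than the more naive $n\cdot|\mathcal{X}|$ that would arise from checking each constraint at every bag that happens to contain it. This is precisely why the witnesses $t_X$ are computed once up front: each constraint is then enforced at exactly one node of $T$, and the rest of the DP is oblivious to $\mathcal{X}$.
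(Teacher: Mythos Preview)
Your proof is correct and follows a direct dynamic-programming route, whereas the paper takes a reduction route: it adds one new vertex $x_i$ adjacent to all of $X_i$ for each $X_i\in\mathcal{X}$ to obtain a graph $H$, argues that $\mathbf{tw}(H)\le\mathbf{tw}(G^*)+1$ by attaching one new leaf bag $X_i\cup\{x_i\}$ per constraint to a tree decomposition of $G^*$, and then invokes a black-box algorithm for \textsc{Weighted Subset Dominating Set} on $(H,\{x_i\},k)$ with infinite weights on the $x_i$'s. The two arguments are close in spirit: the paper's attached leaf bags play exactly the role of your witness nodes $t_X$, and the domination constraint at $x_i$ is your ``kill states with $A\cap X_i=\emptyset$'' step in disguise. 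The paper's version is shorter because it defers the DP details to a citation; your version is self-contained and makes explicit the point that each constraint needs to be processed at only one bag, which is what keeps the dependence on $|\mathcal{X}|$ additive rather than multiplicative.

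One remark on the running time: the factor coming from the $s$-coordinate (and the quadratic convolution at join nodes) gives a $\mathrm{poly}(k)$ overhead that you acknowledge is not absorbed by the lemma's stated bound but is harmless for Theorem~\ref{thm-main}. The paper's proof has the same issue, hidden inside the cited \textsc{Subset Dominating Set} algorithm, so your honesty here is not a defect relative to the original.
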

\begin{proof}
Let ${\mathcal X}=\{X_1,\ldots,X_{\ell}\}$.
Each $X_i$ forms a clique in $G^*$, and thus $|X_i| \leq \mathbf{tw}(G^*)+1$.
So we can construct $G^*$ in $\mathbf{tw}^2(G^*) \cdot \ell$ time by considering every $X_i$ and add the edges $(u,v)$ for $u,v \in X_i$.
Let $H$ be the graph constructed from $G^*$ as follows.
For each $i\in [\ell]$, we add a vertex $x_i$ and add edges between $x_i$ and vertices in $X_i$. That is, $V(H)=V(G^*)\cup \{x_i: i\in [\ell]\}$ and $E(H)=E(G^*)\cup \{(x_i,y): i\in [\ell] \text{ and } y\in X_i\}$.
Notice that $G^*$ is an induced subgraph of $H$.
Observe that a subset $S \subseteq V(G^*)$ hits $\mathcal{X}$ iff $S$ dominates all vertices in $\{x_i: i\in [\ell]\}$ in $H$.
We claim that $\mathbf{tw}(H)\leq 1+ \mathbf{tw}(G^*)$.
To see this, let $(T,\beta)$ be a tree decomposition of $G^*$ of width $\mathbf{tw}(G^*)$.
Since $X_i$ is a clique in $G^*$ for each $i\in [\ell]$, there is a node $t_i \in T$ such that $X_i \subseteq \beta(t_i)$.
We add to $T$ a new node $t_i'$ with an edge connecting to $t_i$ whose bag is $\beta(t_i') = X_i \cup \{x_i\}$, for every $i \in [\ell]$.
Let the new tree be $T'$, and it is clear that $(T',\beta)$ is a tree decomposition of $H$ of width at most  $\mathbf{tw}(G^*)+1$.
Next, we define a weight function $w':V(H) \rightarrow \mathbb{R}_{\geq 0}$ as follows.
For each $x\in V(H)$, we set $w'(x)=w(x)$ if $x\in V(G^*)$, and set $w'(x)= \infty$ otherwise.
We solve the {\sc Weighted Subset Dominating Set} instance $(H,w',\{x_i:i \in [\ell]\},k)$, which aims to compute a minimum-weight set $S \subseteq V(H)$ under the weight function $w'$ such that $|S| \leq k$ and $S$ dominates all vertices in $\{x_i:i \in [\ell]\}$~\cite{CyganFKLMPPS15}.
If the weight of $S$ is not infinity, then $S \subseteq V(G^*)$ and it is a minimum-weight set satisfying that $|S| \leq k$ and $S$ hits $\mathcal{X}$.
Otherwise, we can conclude that $\mathcal{X}$ has no hitting set of size at most $k$.
\end{proof}

Now we are ready to prove Theorem~\ref{thm-main}.
Given an instance $(G,w,k)$ of \textsc{Weighted $\mathcal{F}$-Hitting} where $G \in \mathcal{G} \subseteq \mathcal{G}(\eta,\mu,\rho)$ for $\rho < 1$, we simply compute the collections $\mathcal{X}_1,\dots,\mathcal{X}_t$ in Corollary~\ref{cor-branching} for the graph $G$.
This step takes $2^{O(k^c)} \cdot n + O(m)$ time and $t = 2^{O(k^c)}$, where $c<1$ is the constant in the corollary.
The size of each $\mathcal{X}_i$ is bounded by $k^{O(1)}$.
Then we use Lemma~\ref{lem-treeDP} to compute a minimum-weight (under the weight function $w$) set $S_i \subseteq V(G)$ satisfying that $|S_i| \leq k$ and $S_i$ hits $\mathcal{X}_i$ for each $i \in [t]$, which takes $2^{O(k^c)}$ time, because the treewidth of the Gaifman graph of $\mathcal{X}_i$ is $O(k^c)$ by Corollary~\ref{cor-branching} and $|\mathcal{X}_i| = k^{O(1)}$.
As $t = 2^{O(k^c)}$, the overall time cost for computing $S_1,\dots,S_t$ is $2^{O(k^c)} \cdot n + O(m)$.
Finally, we return the set $S_i$ with the minimum total weight among $S_1,\dots,S_t$.
We claim that $S_i$ is an optimal solution of the instance $(G,w,k)$.
First, as $S_i$ hits $\mathcal{X}_i$, it is an $\mathcal{F}$-hitting set of $G$, by the first condition in Corollary~\ref{cor-branching}.
Also, $|S_i| \leq k$ by our construction.
To see the optimality of $S_i$, let $S^* \subseteq V(G)$ be an optimal solution of $(G,w,k)$.
By Corollary~\ref{cor-branching}, $S^*$ hits $\mathcal{X}_j$ for some $j \in [t]$.
So we have
\begin{equation*}
    \sum_{v \in S_i} w(v) \leq \sum_{v \in S_j} w(v) \leq \sum_{v \in S^*} w(v).
\end{equation*}
Therefore, $S_i$ is an optimal solution.
This completes the proof of Theorem~\ref{thm-main}.

\section{Conclusion and open problems} \label{sec-conclusion}

In this paper, we studied the (\textsc{Weighted}) \textsc{$\mathcal{F}$-Hitting} problem, in which we are given a (vertex-weighted) graph $G$ together with a parameter $k$, and the goal is to compute a set $S \subseteq V(G)$ of vertices (with minimum total weight) such that $|S| \leq k$ and $G - S$ does not contain any graph in $\mathcal{F}$ as a subgraph.
We gave a general framework for designing subexponential FPT algorithms for \textsc{\textsc{Weighted} $\mathcal{F}$-Hitting} on a large family of graph classes that admit ``small'' separators relative to the graph size and the maximum clique size.
Such graph classes include all classes of polynomial expansion and many important classes of geometric intersection graphs, on which subexponential algorithms are widely studied.
The algorithms obtained from our framework runs in $2^{O(k^c)} \cdot n + O(m)$ time, where $n = |V(G)|$ and $m = |E(G)|$.
The technical core of our framework is a subexponential branching algorithm that reduces an instance of \textsc{$\mathcal{F}$-Hitting} (on the aforementioned graph classes) to $2^{O(k^c)}$ instances of the general hitting-set problem, where the Gaifman graph of each instance has treewidth $O(k^c)$, for some constant $c < 1$.

We now propose several open problems for future research.
First, as mentioned in the introduction, one can study the ``induced'' variant of the \textsc{$\mathcal{F}$-Hitting} problem, called \textsc{Induced $\mathcal{F}$-Hitting}, which aims to delete $k$ vertices from an input graph $G$ such that the resulting graph does not contain any $F \in \mathcal{F}$ as an \textit{induced} subgraph.
It is interesting to ask whether \textsc{Induced $\mathcal{F}$-Hitting} can be solved in $2^{O(k^c)} \cdot n^{f(\mathcal{F})}$ time for $c<1$ and some function $f$ on the graph classes considered in this paper.
For more restrictive graph classes (such as graph classes of polynomial expansion), one can even seek for algorithms with running time $2^{O(k^c)} \cdot n$ for $c < 1$.
Second, the constant $c$ in the running time $2^{O(k^c)} \cdot n + O(m)$ of our algorithms is only slightly smaller than 1.
How to further improve the running time would be a natural problem.
Finally, one can investigate the problem with parameters other than the solution size $k$, such as the treewidth $t$ (or other width parameters) of the input graph $G$.
Although Cygan et al.~\cite{cygan2017hitting} showed that solving \textsc{$\mathcal{F}$-Hitting} requires $2^{t^{f(\mathcal{F})}} \cdot n$ time for some function $f$, this hardness result was proved on general graphs.
An interesting open question here is whether \textsc{$\mathcal{F}$-Hitting} on the graph classes considered in this paper can be solved in $2^{O(t^c)} \cdot n$ time for a constant $c$ \textit{independent} of $\mathcal{F}$, or even subexponential time in $t$.

\bibliographystyle{plainurl}
\bibliography{my_bib}

\begin{thebibliography}{10}

\bibitem{bai2019improved}
Zongwen Bai, Jianhua Tu, and Yongtang Shi.
\newblock An improved algorithm for the vertex cover ${P}_3$ problem on graphs of bounded treewidth.
\newblock {\em Discrete Mathematics \& Theoretical Computer Science}, 21(Discrete Algorithms), 2019.

\bibitem{balasubramanian1998improved}
R~Balasubramanian, Michael~R Fellows, and Venkatesh Raman.
\newblock An improved fixed-parameter algorithm for vertex cover.
\newblock {\em Information Processing Letters}, 65(3):163--168, 1998.

\bibitem{berthe2024kick}
Ga{\'e}tan Berthe, Marin Bougeret, Daniel Gon{\c{c}}alves, and Jean-Florent Raymond.
\newblock Kick the cliques.
\newblock In {\em 19th International Symposium on Parameterized and Exact Computation (IPEC 2024)}, pages 13--1. Schloss Dagstuhl--Leibniz-Zentrum f{\"u}r Informatik, 2024.

\bibitem{betzler2012bounded}
Nadja Betzler, Robert Bredereck, Rolf Niedermeier, and Johannes Uhlmann.
\newblock On bounded-degree vertex deletion parameterized by treewidth.
\newblock {\em Discrete Applied Mathematics}, 160(1-2):53--60, 2012.

\bibitem{DBLP:journals/arscom/BoliacCL04}
Rodica Boliac, Kathie Cameron, and Vadim~V. Lozin.
\newblock On computing the dissociation number and the induced matching number of bipartite graphs.
\newblock {\em Ars Comb.}, 72, 2004.

\bibitem{bougeret2024kernelization}
Marin Bougeret, Bart~MP Jansen, and Ignasi Sau.
\newblock Kernelization dichotomies for hitting subgraphs under structural parameterizations.
\newblock In {\em 51st International Colloquium on Automata, Languages and Programming}, 2024.

\bibitem{bougeret2019much}
Marin Bougeret and Ignasi Sau.
\newblock How much does a treedepth modulator help to obtain polynomial kernels beyond sparse graphs?
\newblock {\em Algorithmica}, 81(10):4043--4068, 2019.

\bibitem{DBLP:journals/dam/BrauseK17}
Christoph Brause and Rastislav Krivos{-}Bellus.
\newblock On a relation between $k$-path partition and k-path vertex cover.
\newblock {\em Discret. Appl. Math.}, 223:28--38, 2017.

\bibitem{brevsar2013vertex}
Bo{\v{s}}tjan Bre{\v{s}}ar, Marko Jakovac, J{\'a}n Katreni{\v{c}}, Gabriel Semani{\v{s}}in, and Andrej Taranenko.
\newblock On the vertex $k$-path cover.
\newblock {\em Discrete Applied Mathematics}, 161(13-14):1943--1949, 2013.

\bibitem{DBLP:journals/amc/BresarKKS19}
Bostjan Bresar, Tim Kos, Rastislav Krivos{-}Bellus, and Gabriel Semanisin.
\newblock Hitting subgraphs in ${P}_4$-tidy graphs.
\newblock {\em Appl. Math. Comput.}, 352:211--219, 2019.

\bibitem{buss1993nondeterminism}
Jonathan~F Buss and Judy Goldsmith.
\newblock Nondeterminism within {P}.
\newblock {\em SIAM Journal on Computing}, 22(3):560--572, 1993.

\bibitem{Cai96}
Leizhen Cai.
\newblock Fixed-parameter tractability of graph modification problems for hereditary properties.
\newblock {\em Inf. Process. Lett.}, 58(4):171--176, 1996.
\newblock \href {https://doi.org/10.1016/0020-0190(96)00050-6} {\path{doi:10.1016/0020-0190(96)00050-6}}.

\bibitem{DBLP:journals/mp/CameronH06}
Kathie Cameron and Pavol Hell.
\newblock Independent packings in structured graphs.
\newblock {\em Math. Program.}, 105(2-3):201--213, 2006.

\bibitem{vcerveny2024kernels}
Radovan {\v{C}}erven{\`y}, Pratibha Choudhary, and Ond{\v{r}}ej Such{\`y}.
\newblock On kernels for $d$-path vertex cover.
\newblock {\em Journal of Computer and System Sciences}, 144:103531, 2024.

\bibitem{DBLP:conf/mfcs/CervenyS19}
Radovan {\v{C}}erven{\`y} and Ond{\v{r}}ej Such{\`y}.
\newblock Faster {FPT} algorithm for 5-path vertex cover.
\newblock In {\em 44th International Symposium on Mathematical Foundations of Computer Science, {MFCS} 2019, August 26-30, 2019, Aachen, Germany}, pages 32:1--32:13, 2019.

\bibitem{vcerveny2023generating}
Radovan {\v{C}}erven{\`y} and Ond{\v{r}}ej Such{\`y}.
\newblock Generating faster algorithms for $d$-path vertex cover.
\newblock In {\em International Workshop on Graph-Theoretic Concepts in Computer Science}, pages 157--171. Springer, 2023.

\bibitem{chandran2004refined}
L~Sunil Chandran and Fabrizio Grandoni.
\newblock Refined memorisation for vertex cover.
\newblock In {\em Parameterized and Exact Computation: First International Workshop, IWPEC 2004, Bergen, Norway, September 14-17, 2004. Proceedings 1}, pages 61--70. Springer, 2004.

\bibitem{chang20141}
Maw-Shang Chang, Li-Hsuan Chen, Ling-Ju Hung, Yi-Zhi Liu, Peter Rossmanith, and Somnath Sikdar.
\newblock An ${O}^*(1.4658^n)$-time exact algorithm for the maximum bounded-degree-1 set problem.
\newblock In {\em Proceedings of the 31st Workshop on Combinatorial Mathematics and Computation Theory}, pages 9--18, 2014.

\bibitem{2018moderately}
Maw-Shang Chang, Li-Hsuan Chen, Ling-Ju Hung, Yi-Zhi Liu, Peter Rossmanith, and Somnath Sikdar.
\newblock Moderately exponential time algorithms for the maximum bounded-degree-1 set problem.
\newblock {\em Discrete Applied Mathematics}, 251:114--125, 2018.

\bibitem{chen2001vertex}
Jianer Chen, Iyad~A Kanj, and Weijia Jia.
\newblock Vertex cover: further observations and further improvements.
\newblock {\em Journal of Algorithms}, 41(2):280--301, 2001.

\bibitem{DBLP:journals/tcs/ChenKX10}
Jianer Chen, Iyad~A. Kanj, and Ge~Xia.
\newblock Improved upper bounds for vertex cover.
\newblock {\em Theor. Comput. Sci.}, 411(40-42):3736--3756, 2010.

\bibitem{CyganFKLMPPS15}
Marek Cygan, Fedor~V. Fomin, Lukasz Kowalik, Daniel Lokshtanov, D{\'{a}}niel Marx, Marcin Pilipczuk, Michal Pilipczuk, and Saket Saurabh.
\newblock {\em Parameterized Algorithms}.
\newblock Springer, 2015.
\newblock \href {https://doi.org/10.1007/978-3-319-21275-3} {\path{doi:10.1007/978-3-319-21275-3}}.

\bibitem{cygan2017hitting}
Marek Cygan, D{\'a}niel Marx, Marcin Pilipczuk, and Micha{\l} Pilipczuk.
\newblock Hitting forbidden subgraphs in graphs of bounded treewidth.
\newblock {\em Information and Computation}, 256:62--82, 2017.

\bibitem{DemaineFHT05}
Erik~D. Demaine, Fedor~V. Fomin, Mohammad~Taghi Hajiaghayi, and Dimitrios~M. Thilikos.
\newblock Fixed-parameter algorithms for (\emph{k}, \emph{r})-center in planar graphs and map graphs.
\newblock {\em {ACM} Trans. Algorithms}, 1(1):33--47, 2005.
\newblock \href {https://doi.org/10.1145/1077464.1077468} {\path{doi:10.1145/1077464.1077468}}.

\bibitem{demaine2005subexponential}
Erik~D. Demaine, Fedor~V. Fomin, Mohammad~Taghi Hajiaghayi, and Dimitrios~M. Thilikos.
\newblock Subexponential parameterized algorithms on bounded-genus graphs and \emph{H}-minor-free graphs.
\newblock {\em J. {ACM}}, 52(6):866--893, 2005.
\newblock \href {https://doi.org/10.1145/1101821.1101823} {\path{doi:10.1145/1101821.1101823}}.

\bibitem{DBLP:conf/isaac/DessmarkJL93}
Anders Dessmark, Klaus Jansen, and Andrzej Lingas.
\newblock The maximum k-dependent and f-dependent set problem.
\newblock In {\em Algorithms and Computation, 4th International Symposium, {ISAAC} '93, Hong Kong, December 15-17, 1993, Proceedings}, pages 88--98, 1993.

\bibitem{drange2016computational}
P{\aa}l~Gr{\o}n{\aa}s Drange, Markus Dregi, and Pim van’t Hof.
\newblock On the computational complexity of vertex integrity and component order connectivity.
\newblock {\em Algorithmica}, 76(4):1181--1202, 2016.

\bibitem{dvovrak2012forbidden}
Zden{\v{e}}k Dvo{\v{r}}{\'a}k, Archontia~C Giannopoulou, and Dimitrios~M Thilikos.
\newblock Forbidden graphs for tree-depth.
\newblock {\em European Journal of Combinatorics}, 33(5):969--979, 2012.

\bibitem{dvovrak2013testing}
Zden{\v{e}}k Dvo{\v{r}}{\'a}k, Daniel Kr{\'a}l, and Robin Thomas.
\newblock Testing first-order properties for subclasses of sparse graphs.
\newblock {\em Journal of the ACM (JACM)}, 60(5):1--24, 2013.

\bibitem{lokshtanov2023hitting}
Zden{\v{e}}k Dvo{\v{r}}{\'a}k, Daniel Lokshtanov, Fahad Panolan, Saket Saurabh, Jie Xue, and Meirav Zehavi.
\newblock Efficient approximation for subgraph-hitting problems in sparse graphs and geometric intersection graphs.
\newblock {\em arXiv preprint arXiv:2304.13695}, 2023.

\bibitem{DorakN2016}
Zden\v{e}k Dvo\v{r}\'{a}k and Sergey Norin.
\newblock Strongly sublinear separators and polynomial expansion.
\newblock {\em SIAM Journal on Discrete Mathematics}, 30(2):1095--1101, 2016.
\newblock \href {https://arxiv.org/abs/https://doi.org/10.1137/15M1017569} {\path{arXiv:https://doi.org/10.1137/15M1017569}}, \href {https://doi.org/10.1137/15M1017569} {\path{doi:10.1137/15M1017569}}.

\bibitem{eiben2022lossy}
Eduard Eiben, Diptapriyo Majumdar, and MS~Ramanujan.
\newblock On the lossy kernelization for connected treedepth deletion set.
\newblock In {\em International Workshop on Graph-Theoretic Concepts in Computer Science}, pages 201--214. Springer, 2022.

\bibitem{ErlebachJS05}
Thomas Erlebach, Klaus Jansen, and Eike Seidel.
\newblock Polynomial-time approximation schemes for geometric intersection graphs.
\newblock {\em {SIAM} J. Comput.}, 34(6):1302--1323, 2005.
\newblock \href {https://doi.org/10.1137/S0097539702402676} {\path{doi:10.1137/S0097539702402676}}.

\bibitem{DBLP:journals/jcss/FellowsGMN11}
Michael~R. Fellows, Jiong Guo, Hannes Moser, and Rolf Niedermeier.
\newblock A generalization of {N}emhauser and {T}rotter's local optimization theorem.
\newblock {\em J. Comput. Syst. Sci.}, 77(6):1141--1158, 2011.

\bibitem{fiorini2018approximability}
Samuel Fiorini, R~Krithika, NS~Narayanaswamy, and Venkatesh Raman.
\newblock Approximability of clique transversal in perfect graphs.
\newblock {\em Algorithmica}, 80(8):2221--2239, 2018.

\bibitem{fomin2019decomposition}
Fedor~V Fomin, Daniel Lokshtanov, Fahad Panolan, Saket Saurabh, and Meirav Zehavi.
\newblock Decomposition of map graphs with applications.
\newblock In {\em 46th International Colloquium on Automata, Languages, and Programming (ICALP 2019)}. Schloss-Dagstuhl-Leibniz Zentrum f{\"u}r Informatik, 2019.

\bibitem{FominLPSZ19}
Fedor~V. Fomin, Daniel Lokshtanov, Fahad Panolan, Saket Saurabh, and Meirav Zehavi.
\newblock Finding, hitting and packing cycles in subexponential time on unit disk graphs.
\newblock {\em Discret. Comput. Geom.}, 62(4):879--911, 2019.
\newblock \href {https://doi.org/10.1007/s00454-018-00054-x} {\path{doi:10.1007/s00454-018-00054-x}}.

\bibitem{DBLP:conf/soda/FominLS12}
Fedor~V. Fomin, Daniel Lokshtanov, and Saket Saurabh.
\newblock Bidimensionality and geometric graphs.
\newblock In {\em Proceedings of the Twenty-Third Annual {ACM-SIAM} Symposium on Discrete Algorithms, {SODA} 2012, Kyoto, Japan, January 17-19, 2012}, pages 1563--1575, 2012.

\bibitem{FominLS18}
Fedor~V. Fomin, Daniel Lokshtanov, and Saket Saurabh.
\newblock Excluded grid minors and efficient polynomial-time approximation schemes.
\newblock {\em J. {ACM}}, 65(2):10:1--10:44, 2018.
\newblock \href {https://doi.org/10.1145/3154833} {\path{doi:10.1145/3154833}}.

\bibitem{gajarsky2017kernelization}
Jakub Gajarsk{\`y}, Petr Hlin{\v{e}}n{\`y}, Jan Obdr{\v{z}}{\'a}lek, Sebastian Ordyniak, Felix Reidl, Peter Rossmanith, Fernando~S{\'a}nchez Villaamil, and Somnath Sikdar.
\newblock Kernelization using structural parameters on sparse graph classes.
\newblock {\em Journal of Computer and System Sciences}, 84:219--242, 2017.

\bibitem{ganian2021structural}
Robert Ganian, Fabian Klute, and Sebastian Ordyniak.
\newblock On structural parameterizations of the bounded-degree vertex deletion problem.
\newblock {\em Algorithmica}, 83(1):297--336, 2021.

\bibitem{goldmann2021parameterized}
Lito~Julius Goldmann.
\newblock {\em Parameterized Complexity of Modifying Graphs to be Biclique-free}.
\newblock PhD thesis, Institute of Software, 2021.

\bibitem{groshaus2009cycle}
Marina Groshaus, Pavol Hell, Sulamita Klein, Loana~Tito Nogueira, and F{\'a}bio Protti.
\newblock Cycle transversals in bounded degree graphs.
\newblock {\em Electronic Notes in Discrete Mathematics}, 35:189--195, 2009.

\bibitem{gross2013survey}
Daniel Gross, Monika Heinig, Lakshmi Iswara, L~William Kazmierczak, Kristi Luttrell, John~T Saccoman, and Charles Suffel.
\newblock A survey of component order connectivity models of graph theoretic networks.
\newblock {\em WSEAS Transactions on Mathematics}, 12:895--910, 2013.

\bibitem{gupta2019losing}
Anupam Gupta, Euiwoong Lee, Jason Li, Pasin Manurangsi, and Micha{\l} W{\l}odarczyk.
\newblock Losing treewidth by separating subsets.
\newblock In {\em Proceedings of the Thirtieth Annual ACM-SIAM Symposium on Discrete Algorithms}, pages 1731--1749. SIAM, 2019.

\bibitem{har2017approximation}
Sariel Har-Peled and Kent Quanrud.
\newblock Approximation algorithms for polynomial-expansion and low-density graphs.
\newblock {\em SIAM Journal on Computing}, 46(6):1712--1744, 2017.

\bibitem{impagliazzo2001problems}
Russell Impagliazzo, Ramamohan Paturi, and Francis Zane.
\newblock Which problems have strongly exponential complexity?
\newblock {\em Journal of Computer and System Sciences}, 63(4):512--530, 2001.

\bibitem{kardovs2011computing}
Franti{\v{s}}ek Kardo{\v{s}}, J{\'a}n Katreni{\v{c}}, and Ingo Schiermeyer.
\newblock On computing the minimum 3-path vertex cover and dissociation number of graphs.
\newblock {\em Theoretical Computer Science}, 412(50):7009--7017, 2011.

\bibitem{kumar20172lk}
Mithilesh Kumar and Daniel Lokshtanov.
\newblock A $2lk$ kernel for $l$-component order connectivity.
\newblock In {\em 11th International Symposium on Parameterized and Exact Computation (IPEC 2016)}. Schloss Dagstuhl-Leibniz-Zentrum fuer Informatik, 2017.

\bibitem{le2022greedy}
Hung Le and Cuong Than.
\newblock Greedy spanners in euclidean spaces admit sublinear separators.
\newblock In {\em Proceedings of the 2022 Annual ACM-SIAM Symposium on Discrete Algorithms (SODA)}, pages 3287--3310. SIAM, 2022.

\bibitem{lee2017partitioning}
Euiwoong Lee.
\newblock Partitioning a graph into small pieces with applications to path transversal.
\newblock In {\em Proceedings of the Twenty-Eighth Annual ACM-SIAM Symposium on Discrete Algorithms}, pages 1546--1558. SIAM, 2017.

\bibitem{lewis1980node}
John~M Lewis and Mihalis Yannakakis.
\newblock The node-deletion problem for hereditary properties is np-complete.
\newblock {\em Journal of Computer and System Sciences}, 20(2):219--230, 1980.

\bibitem{DBLP:conf/soda/LokshtanovMRSZ21}
Daniel Lokshtanov, Pranabendu Misra, M.~S. Ramanujan, Saket Saurabh, and Meirav Zehavi.
\newblock {FPT}-approximation for {FPT} problems.
\newblock In {\em Proceedings of the 2021 {ACM-SIAM} Symposium on Discrete Algorithms, {SODA} 2021, Virtual Conference, January 10 - 13, 2021}, pages 199--218, 2021.

\bibitem{lokshtanov2022subexponential}
Daniel Lokshtanov, Fahad Panolan, Saket Saurabh, Jie Xue, and Meirav Zehavi.
\newblock Subexponential parameterized algorithms on disk graphs (extended abstract).
\newblock In {\em Proceedings of the 2022 Annual ACM-SIAM Symposium on Discrete Algorithms (SODA)}, pages 2005--2031. SIAM, 2022.

\bibitem{lokshtanov2023framework}
Daniel Lokshtanov, Fahad Panolan, Saket Saurabh, Jie Xue, and Meirav Zehavi.
\newblock A framework for approximation schemes on disk graphs.
\newblock In {\em 34rd Annual ACM-SIAM Symposium on Discrete Algorithms}. SIAM, 2023.

\bibitem{DBLP:journals/ipl/LozinR03}
Vadim~V. Lozin and Dieter Rautenbach.
\newblock Some results on graphs without long induced paths.
\newblock {\em Inf. Process. Lett.}, 88(4):167--171, 2003.

\bibitem{matula1983smallest}
David~W Matula and Leland~L Beck.
\newblock Smallest-last ordering and clustering and graph coloring algorithms.
\newblock {\em Journal of the ACM (JACM)}, 30(3):417--427, 1983.

\bibitem{miller1997separators}
Gary~L Miller, Shang-Hua Teng, William Thurston, and Stephen~A Vavasis.
\newblock Separators for sphere-packings and nearest neighbor graphs.
\newblock {\em Journal of the ACM (JACM)}, 44(1):1--29, 1997.

\bibitem{DBLP:journals/jco/MoserNS12}
Hannes Moser, Rolf Niedermeier, and Manuel Sorge.
\newblock Exact combinatorial algorithms and experiments for finding maximum $k$-plexes.
\newblock {\em J. Comb. Optim.}, 24(3):347--373, 2012.

\bibitem{nevsetvril2008grad}
Jaroslav Ne{\v{s}}et{\v{r}}il and Patrice~Ossona De~Mendez.
\newblock Grad and classes with bounded expansion i. decompositions.
\newblock {\em European Journal of Combinatorics}, 29(3):760--776, 2008.

\bibitem{nevsetvril2008grad2}
Jaroslav Ne{\v{s}}et{\v{r}}il and Patrice~Ossona De~Mendez.
\newblock Grad and classes with bounded expansion ii. algorithmic aspects.
\newblock {\em European Journal of Combinatorics}, 29(3):777--791, 2008.

\bibitem{DBLP:journals/dam/NishimuraRT05}
Naomi Nishimura, Prabhakar Ragde, and Dimitrios~M. Thilikos.
\newblock Fast fixed-parameter tractable algorithms for nontrivial generalizations of vertex cover.
\newblock {\em Discret. Appl. Math.}, 152(1-3):229--245, 2005.

\bibitem{DBLP:journals/dam/OrlovichDFGW11}
Yury~L. Orlovich, Alexandre Dolgui, Gerd Finke, Valery~S. Gordon, and Frank Werner.
\newblock The complexity of dissociation set problems in graphs.
\newblock {\em Discret. Appl. Math.}, 159(13):1352--1366, 2011.

\bibitem{pilipczuk2011problems}
Micha{\l} Pilipczuk.
\newblock Problems parameterized by treewidth tractable in single exponential time: A logical approach.
\newblock In {\em International Symposium on Mathematical Foundations of Computer Science}, pages 520--531. Springer, 2011.

\bibitem{santhanam2012limits}
Rahul Santhanam and Srikanth Srinivasan.
\newblock On the limits of sparsification.
\newblock In {\em Automata, Languages, and Programming: 39th International Colloquium, ICALP 2012, Warwick, UK, July 9-13, 2012, Proceedings, Part I 39}, pages 774--785. Springer, 2012.

\bibitem{DBLP:journals/jcss/ShachnaiZ17}
Hadas Shachnai and Meirav Zehavi.
\newblock A multivariate framework for weighted {FPT} algorithms.
\newblock {\em J. Comput. Syst. Sci.}, 89:157--189, 2017.

\bibitem{tsur2019path}
Dekel Tsur.
\newblock $l$-path vertex cover is easier than $l$-hitting set for small $l$.
\newblock {\em arXiv preprint arXiv:1906.10523}, 2019.

\bibitem{DBLP:journals/dam/Tsur21}
Dekel Tsur.
\newblock An ${O}^*(2.619^k)$ algorithm for 4-path vertex cover.
\newblock {\em Discret. Appl. Math.}, 291:1--14, 2021.

\bibitem{DBLP:journals/dam/TuJ16}
Jianhua Tu and Zemin Jin.
\newblock An {FPT} algorithm for the vertex cover ${P}_4$ problem.
\newblock {\em Discret. Appl. Math.}, 200:186--190, 2016.

\bibitem{DBLP:journals/jco/TuWYC17}
Jianhua Tu, Lidong Wu, Jing Yuan, and Lei Cui.
\newblock On the vertex cover ${P}_3$ problem parameterized by treewidth.
\newblock {\em J. Comb. Optim.}, 34(2):414--425, 2017.

\bibitem{van2006better}
Erik~Jan van Leeuwen.
\newblock Better approximation schemes for disk graphs.
\newblock In {\em Scandinavian Workshop on Algorithm Theory}, pages 316--327. Springer, 2006.

\bibitem{williamson2011design}
David~P Williamson and David~B Shmoys.
\newblock {\em The design of approximation algorithms}.
\newblock Cambridge university press, 2011.

\bibitem{DBLP:journals/jcss/Xiao17a}
Mingyu Xiao.
\newblock Linear kernels for separating a graph into components of bounded size.
\newblock {\em J. Comput. Syst. Sci.}, 88:260--270, 2017.

\bibitem{DBLP:journals/jcss/Xiao17}
Mingyu Xiao.
\newblock On a generalization of nemhauser and trotter's local optimization theorem.
\newblock {\em J. Comput. Syst. Sci.}, 84:97--106, 2017.

\bibitem{DBLP:conf/tamc/XiaoK17}
Mingyu Xiao and Shaowei Kou.
\newblock Kernelization and parameterized algorithms for 3-path vertex cover.
\newblock In {\em Theory and Applications of Models of Computation - 14th Annual Conference, {TAMC} 2017, Bern, Switzerland, April 20-22, 2017, Proceedings}, pages 654--668, 2017.

\bibitem{xiao2017exact}
Mingyu Xiao and Hiroshi Nagamochi.
\newblock Exact algorithms for maximum independent set.
\newblock {\em Information and Computation}, 255:126--146, 2017.

\bibitem{zhu2009colouring}
Xuding Zhu.
\newblock Colouring graphs with bounded generalized colouring number.
\newblock {\em Discrete Mathematics}, 309(18):5562--5568, 2009.

\end{thebibliography}

\appendix

\section{Important instances of {\sc $\mathcal{F}$-Hitting}}

In this section, we summarize some important instances of the \textsc{$\mathcal{F}$-Hitting} problem that are studied independently in the literature and provide their background.

\paragraph{Vertex cover.}
The simplest instance of \textsc{$\mathcal{F}$-Hitting} is the \textsc{Vertex Cover} problem, where $\mathcal{F} = \{K_2\}$.
{\sc Vertex Cover} is widely considered to be the most extensively studied problem in parameterized complexity, and is often considered a testbed for the introduction of new notions and techniques~\cite{CyganFKLMPPS15}.
For {\sc Vertex Cover} on general graphs and on bounded degree graphs, there has been long races to achieve the best possible running time; see~\cite{balasubramanian1998improved,
buss1993nondeterminism,chandran2004refined,chen2001vertex,
DBLP:journals/tcs/ChenKX10,xiao2017exact} for a few of the papers.
The best known parameterized algorithm the bound $1.28^k\cdot n^{O(1)}$~\cite{DBLP:journals/tcs/ChenKX10}.
Furthermore, it is well-known that {\sc Vertex Cover} admits a kernel with $2k$ vertices~\cite{chen2001vertex}.
The kernel, together with the standard $2^{O(w)} n^{O(1)}$-time DP algorithm for {\sc Vertex Cover} on graphs of treewidth $w$, directly implies a subexponential FPT algorithm for {\sc Vertex Cover} on graph classes that admit sublinear balanced separators.
Indeed, graphs with sublinear balanced separators have treewidth sublinear in the number of vertices, and therefore applying the DP algorithm to the linear kernel takes $2^{o(k)} n^{O(1)}$ time.
Based on this, it is not difficult to achieve the result of Theorem~\ref{thm-main} for {\sc Vertex Cover}.
Of course, this approach does not work for a general \textsc{$\mathcal{F}$-Hitting} problem, because both pieces --- the linear kernel and the DP algorithm with exponential running time in the treewidth --- are missing when going beyond {\sc Vertex Cover}.

\paragraph{Path transversal.}
In the \textsc{Path transversal} problem, the goal is to hit all paths in the graph of certain length.
Thus, this is a class of \textsc{$\mathcal{F}$-Hitting} instances where $\mathcal{F} = \{P_\ell\}$ for some $\ell \in \mathbb{N}$.
{\sc Vertex Cover} is a special case of {\sc Path Transversal} when $\ell = 2$.
In some literature, {\sc Path Transversal} is also known as \textsc{Vertex $\ell$-Path Cover}.
The problem was intensively studied in parameterized complexity: various parameterized algorithms \cite{bai2019improved,DBLP:conf/mfcs/CervenyS19,vcerveny2023generating,DBLP:journals/jcss/ShachnaiZ17,tsur2019path,DBLP:journals/dam/Tsur21,DBLP:journals/dam/TuJ16,DBLP:journals/jco/TuWYC17} and kernels \cite{vcerveny2024kernels,DBLP:journals/jcss/FellowsGMN11,DBLP:journals/jcss/Xiao17,DBLP:conf/tamc/XiaoK17} were developed.
Specifically, the case $\ell = 3$ received much attentions.
It was known that this case is solvable in polynomial time on various graph classes~\cite{DBLP:journals/arscom/BoliacCL04,DBLP:journals/dam/BrauseK17,DBLP:journals/amc/BresarKKS19,DBLP:journals/mp/CameronH06,DBLP:journals/ipl/LozinR03,DBLP:journals/dam/OrlovichDFGW11} such as $P_5$-free graphs~\cite{DBLP:journals/dam/BrauseK17}, and on general graphs, various exponential-time algorothms were developed~\cite{chang20141,2018moderately,kardovs2011computing,xiao2017exact}.
Furthermore, bidimesnionality theory yields subexponential FPT algorithms for {\sc Path Transversal} on planar graphs (or more generally, minor-free graphs), unit-disk graphs, and map graphs~\cite{demaine2005subexponential,fomin2019decomposition,FominLPSZ19,DBLP:conf/soda/FominLS12}. 

\paragraph{Short cycle hitting.}
In the \textsc{Short Cycle Hitting} problem, the goal is to hit all cycles in the graph of certain length(s).
Thus, this is a class of \textsc{$\mathcal{F}$-Hitting} instances where $\mathcal{F} = \{C_\ell\}$ for some $\ell \in \mathbb{N}$ or more generally, $\mathcal{F} = \{C_{\ell_1},\dots,C_{\ell_r}\}$ for $\ell_1,\dots,\ell_r \in \mathbb{N}$.
The restriction of the case where $k=3$ (i.e., when we aim to hit triangles), known as \textsc{Triangle Hitting}, admits a subexponential FPT algorithm on disk graphs~\cite{lokshtanov2022subexponential}, and thus also on planar graphs and unit-disk graphs.
Furthermore, the complexity of {\sc Short Cycle Hitting} parameterized by treewidth was studied in~\cite{pilipczuk2011problems}, and a dichotomy concerning its membership in P was given in~\cite{groshaus2009cycle}.

\paragraph{Degree modulator.}
In the {\sc Degree Modulator} problem, the goal is to delete vertices to reduce the maximum degree of the graph to a certain number.
Thus, this is a class of \textsc{$\mathcal{F}$-Hitting} instances where $\mathcal{F} = \{K_{1,\ell+1}\}$ for some $\ell \in \mathbb{N}$.
{\sc Vertex Cover} is a special case of {\sc Degree modulator} when $\ell = 0$.
FPT algorithms, kernels, and W[1]-hardness results for {\sc Degree modulator} (sometimes referred to by a different name) parameterized by $\ell$, the solution size, or a structural measure such as treewidth of the graph were given in~\cite{betzler2012bounded,DBLP:conf/isaac/DessmarkJL93,DBLP:journals/jcss/FellowsGMN11,ganian2021structural,DBLP:conf/soda/LokshtanovMRSZ21,DBLP:journals/jco/MoserNS12,DBLP:journals/dam/NishimuraRT05}.

\paragraph{Clique hitting and biclique hitting.}
These are classses of \textsc{$\mathcal{F}$-Hitting} instances where $\mathcal{F} = \{K_\ell\}$ and $\mathcal{F} = \{K_{\ell,\ell'}\}$ for some $\ell,\ell' \in \mathbb{N}$, respectively.
Parameterized algorithms (and also approximation algorithms) for \textsc{Clique Hitting} on perfect graphs were given in~\cite{fiorini2018approximability}.
Very recently, Berthe et al.~\cite{berthe2024kick} gave subexponential parameterized algorithms for \textsc{Clique Hitting} on a wide family of graph classes.
The parameterized complexity of \textsc{Biclique Hitting} was studied in~\cite{goldmann2021parameterized}.

\paragraph{Component order connectivity.}
In the \textsc{Component Order Connectivity} problem, the goal is to delete vertices to make each connected component of the graph having size smaller than or equal to a certain number.
Thus, this is a class of \textsc{$\mathcal{F}$-Hitting} instances where $\mathcal{F}$ consists of all connected graphs of $\ell$ vertices (or equivalently, all trees of $\ell$ nodes) for some $\ell \in \mathbb{N}$.
{\sc Vertex Cover} is a special case of \textsc{Component Order Connectivity} when $\ell=2$.
Drange et al.~\cite{drange2016computational} provided an FPT algorithm and a polynomial kernel for~\textsc{Component Order Connectivity}, and in addition studied the complexity of the problem on special graph classes. 
Improved kernels were given by \cite{kumar20172lk,DBLP:journals/jcss/Xiao17a}.
Bidimesnionality theory yields subexponential FPT algorithms for \textsc{Component Order Connectivity} on planar graphs (or more generally, minor-free graphs), unit-disk graphs, and map graphs~\cite{demaine2005subexponential,fomin2019decomposition,FominLPSZ19,DBLP:conf/soda/FominLS12}.
Additional results can be found in the survey~\cite{gross2013survey}.

\paragraph{Treedepth modulator.}
In the \textsc{Treedepth Modulator} problem, the goal is to delete vertices to reduce the treedepth of the graph to a certain number.
Thus, this is a class of \textsc{$\mathcal{F}$-Hitting} instances where $\mathcal{F}$ consists of all minimal graphs whose treedepth is $\ell+1$ for some $\ell \in \mathbb{N}$ (it was known that $\mathcal{F}$ is finite~\cite{dvovrak2012forbidden}).
The computation and utility of treedepth modulators were studied from the perspective of parameterized complexity (see, e.g., \cite{bougeret2019much,eiben2022lossy,gajarsky2017kernelization}).

\end{document}